\newtheorem{theorem}{Theorem}[section]
\newtheorem{lemma}[theorem]{Lemma}
\newtheorem{prop}[theorem]{Proposition}
\newtheorem{coro}[theorem]{Corollary}
\theoremstyle{definition}
\newtheorem{definition}[theorem]{Definition}
\newtheorem{remark}[theorem]{Remark}
\newtheorem{example}[theorem]{Example}
\newtheorem{assumption}[theorem]{Assumption}
\newcommand{\RR}{\mathbb{R}}
\newcommand{\bbS}{\mathbb{S}}
\newcommand{\CC}{\mathbb{C}}
\newcommand{\NN}{\mathbb{N}}
\newcommand{\TT}{\mathbb{T}}
\newcommand{\ZZ}{\mathbb{Z}}
\newcommand{\QQ}{\mathbb{Q}}
\newcommand{\GG}{\mathbb{G}}
\newcommand{\Gdual}{\widehat{\mathbb{G}}}
\newcommand{\dd}{\,{\rm d}}
\newcommand{\Bragg}{\mathcal{S}}
\newcommand{\ev}{\mathcal{E}}
\newcommand{\loops}{\mathcal{Z}}
\newcommand{\alg}{\mathcal{P}}
\newcommand{\cF}{\mathcal{F}}
\newcommand{\cN}{\mathcal{N}}
\newcommand{\cA}{\mathcal{A}}
\newcommand{\cHH}{\mathcal{H}}
\newcommand{\cM}{\mathcal{M}}
\newcommand{\cZ}{\mathcal{Z}}
\newcommand{\cC}{\mathcal{C}}
\newcommand{\cS}{\mathcal{S}}
\newcommand{\difm}{\omega}
\newcommand{\haarG}{l_\GG}
\newcommand{\bk}{{\bf{k}}}
\newcommand{\bl}{{\bf{l}}}
\newcommand{\ts}{\hspace{0.5pt}}
\newcommand{\Ghat}{\widehat{\GG}}
\newcommand{\charF}{{\bf 1}}
\DeclareMathOperator{\supp}{supp}
\DeclareMathOperator{\vol}{vol}
\DeclareMathOperator{\len}{len}
\newcommand{\Hm}[1]{\leavevmode{\marginpar{\tiny%
$\hbox to 0mm{\hspace*{-0.5mm}$\leftarrow$\hss}%
\vcenter{\vrule depth 0.1mm height 0.1mm width \the\marginparwidth}%
\hbox to
0mm{\hss$\rightarrow$\hspace*{-0.5mm}}$\\\relax\raggedright #1}}}
\begin{document}

\title[Pure point diffraction]{stationary processes with pure point diffraction}

\author{Daniel Lenz}
\address{Mathematisches Institut, Fakult\"at f\"ur Mathematik und Informatik, Friedrich-Schiller-Universit\"at Jena,
D- 07743 Jena, Germany}
\email{daniel.lenz@uni-jena.de}
\urladdr{http://www.tu-chemnitz.de/mathematik/analysis/dlenz}

\author{Robert V.\ Moody}
\address{Department of Mathematics and Statistics,
University of Victoria, \newline
\hspace*{12pt}Victoria, BC, V8W3P4, Canada}
\email{rmoody@uvic.ca}
\date{\today}
\thanks{RVM thanks the Natural Sciences and Engineering Council of Canada
for its support of this work.}

\begin{abstract} We consider the construction and classification of some new mathematical objects, called ergodic spatial stationary processes, on locally compact Abelian groups, which provide a natural and very general setting for studying diffraction and the famous inverse problems associated with it. In particular we can construct complete families of solutions to the inverse problem from any given pure point measure that is chosen to be the diffraction.
In this case these processes can be classified by the dual of the group of relators based on the set of Bragg peaks, and this gives a solution to the homometry problem for pure point diffraction.

An ergodic spatial stationary process consists of a measure theoretical dynamical system
and a mapping linking it with the ambient space in which diffracting density is supposed to
exist. After introducing these processes we study their general properties and link pure point diffraction to almost periodicity.

Given a pure point measure we show how to construct from it and a given set of phases a corresponding ergodic spatial stationary process. In fact we do this in two separate ways, each of which sheds its own light on the nature of the problem. The first construction can be seen as an elaboration of the Halmos--von Neumann theorem, lifted from the domain of dynamical systems to that of stationary processes. The second is a Gelfand construction obtained by defining a suitable Banach algebra out of the putative eigenfunctions of the desired dynamics.
\end{abstract}

\maketitle

\section*{Outline}
This paper is concerned with mathematics of diffraction. More specifically  we are interested
in the famous inverse problem for diffraction: given something that is putatively the diffraction
of something, what are all the somethings that could have produced this diffraction.

Diffraction has been a mainstay in crystallography for almost a hundred years.\footnote{The Knipping-Laue experiment establishing x-ray diffraction was
carried out in 1912.} With the proliferation of extraordinary new materials with varying degrees of order and disorder, the importance of diffraction in revealing internal structure continues to be central. The precision, complexity, and variety of modern diffraction images is striking, see for instance the recent review article \cite{Withers}. Nonetheless, in spite of many advances, the fundamental question of diffraction, the inverse problem of deducing physical structure from diffraction images, remains as challenging as ever.

The discovery of aperiodic tilings and quasicrystals revived interest in the mathematics of diffraction, particularly since the types of diffraction images generated by such structures -- essentially or exactly pure point diffraction together with symmetries not occurring in ordinary crystals --
had not been foreseen by mathematicians, crystallographers, or materials research scientists.\footnote{Indeed, the discovery of quasicrystals by Shechtman  via diffraction experiments   met with substantial disbelief for a period of about two years.
His determination to communicate it to a skeptical scientific community  is  particularly stressed  by the committee awarding him the  Nobel prize in Chemistry in 2011 for this discovery. }
Starting with \cite{Hof1},
 the resulting
new techniques created in order to understand mathematical diffraction have been considerable.
Among them we mention the introduction of ideas from the harmonic analysis of translation bounded measures \cite{BM,BL}, ergodic dynamical systems \cite{Bell,Radin0,Radin,RW}, and stochastic point processes \cite{Gouere,XR, BBM}.

The diffraction of a structure is defined as the Fourier transform of its autocorrelation. The latter is a positive-definite measure in the ambient space of the structure -- typically $\RR^3$, but mathematically any locally compact Abelian
group $\GG$ -- and the diffraction is then a centrally symmetric positive measure in the corresponding Fourier reciprocal space  -- either $\widehat{\RR^3} \simeq \RR^3$ or generally $\widehat{\GG}$ as the case may be.

In this paper we examine three basic questions that arise from this theory. The first question
asks whether {\em every} positive centrally symmetric  measure is actually the diffraction of something. The second asks about the classification of the all various
`somethings' that have this given diffraction. The third asks what kinds of structures
the `somethings' can be. In the case of pure point diffraction (when the diffraction measure
is a pure point measure) we can give a satisfactory answer to these problems.

We will be in the setting of locally compact
Abelian groups. The difficulties of the questions, at least as we examine them here,  have little to do with the generality of the setting. They are just as hard for $\RR^3$. Indeed the second question is a very old
one in crystallography, called the {\em homometry problem} and is part of {\em the} fundamental problem of diffraction theory, namely how to unravel the nature of a structure that has created a given diffraction pattern.

Our first question immediately raises the third question. Exactly what do we mean by a structure that can diffract? Typically diffractive structures are conceived of as discrete measures, for instance point measures describing the positions of atoms, vertices in tilings, etc., weighted by appropriate scattering strengths, or as continuous measures describing the (scattering) distribution of the material structure in space. More recently, starting with the work of Gou\'{e}r\'{e}  \cite{Gouere0, Gouere} there has been a
shift to discussing the diffraction of certain point processes, which in effect are random point measures, and various distortions of these (see e.g. \cite{BBM,XR,LM}). In this paper we are led to introduce new  structures which we call a {\em spatial stationary processes}. We shall show that these always lead
to autocorrelation and diffraction measures and contain the typical theories of diffraction
discussed above\footnote{There is a technical requirement about the existence
of second moments.}. In terms of these we get an attractive answer our three questions, in the case of pure point diffraction. However, the answer to the third question is interesting, because it does not describe the underlying structure explicitly, but rather in terms of the ways that it appears to compactly supported continuous functions on the ambient space. In this respect it resembles the description of measures given by the Riesz representation theorem or the theory of distributions with their spaces of test functions.
In our case, the convergence conditions of measures do not seem to be generally valid. In this respect there is much to be learned about the full scope of which
structures can diffract.

Let $\GG$ be a locally compact Abelian group and $\Gdual$ its dual group.
We treat these groups in additive notation. We begin with an abstract notion
of a stationary process for $\GG$ and then shift our attention to the more concrete notion of
a spatial stationary process for $\GG$ (\S\ref{Stochasticprocesses} and \S\ref{spatial}).  As for terminology let us note that outside of the discussion of usual stochastic processes, the word `stochastic' is not used in our discussion of processes and stationary processes.

A spatial stationary process basically  consists of a $\GG$-invariant mapping
\[ N: C_c(\GG) \longrightarrow L^2(X,\mu) \]
where $(X,\mu)$ is a probability space on which there is a measure preserving
action of $\GG$.

The idea here derives directly from the theory
of stochastic point processes, and it is perhaps useful to discuss this briefly.
We follow \cite{Karr} here. A stochastic point
process, say in $\RR^d$, is a random variable $M$ on a probability space
$(\Omega, \cF, P)$ with values in the space $\cM_p$ of point measures
$\xi$ on $\RR^d$ for which $\xi(B) \in \ZZ_{\ge 0}$ for all bounded subsets $B$ of $\RR^d$.
These are point measures and can be written in the form $\xi = \sum_{j=1}^K \delta_{x_j}$
where $K$ can be a positive integer or infinity and the points $x_j$ are (not necessarily distinct)
points of $\RR^d$. Usually the space $\cM_p$ is far larger than
needed. For one thing it allows points in $\RR^d$ to appear with multiplicities and for another
it does not require any minimal separation between distinct points (a hard core condition) that
would usually be imposed in crystallography or in tiling theories. We assume then that we are
really only interested in some subset
$X \subset \cM_p$ of permissible outcomes for the point process, and that almost surely
the point process produces measures in $X$. Typically we wish $X$ to be invariant under translation
since the position of our point sets is not particularly relevant, and we assume that the law $\mu$
of the process (the probability measure $\mu$ induced on $X$ by the random variable)
is invariant under translation (i.e. the process is stationary).

Now the point is this. If $f\in C_c(\RR^d)$ then we obtain a random variable on $(\Omega, \cF, P)$
by $\omega \mapsto N_f(\xi):= \sum_{j=1}^K f(x_j)$, where $\xi = M(\omega) \in X$ is the measure indicated above.
In effect $N$ can be viewed as a mapping from $C_c(\RR^d)$ into mappings on $X$. Under
mild assumptions $N:C_c(\RR^d) \longrightarrow L^2(\RR^d,\mu)$. As is often the case with
random variables, we ignore the underlying probability space $(\Omega, \cF, P)$
since we have everything we need from the probability space $(X,\mu)$ and the mapping $N$.
Then $N$ itself can be referred to as the stochastic point process.

In this paper $\RR^d$ is replaced by any locally compact Abelian group $\GG$.
Also $X$ is not assumed to have any particular
form (e.g. a set of point sets, or a set of measures). We do not wish to assume
any particular mathematical structure on the outcomes of the processes we are studying. Our point
of view is that we will start with a measure in $\Gdual$ which we would like to show
is the diffraction of something. But we do not know in advance what sort of mathematical
objects could produce this diffraction (point sets, continuous distributions of density, Schwartz
distributions, etc.).  We derive information
about $X$ only from the behaviour of the functions $N(F)$, $F\in C_c(\GG)$, upon it.
This is, physically, rather  appropriate. In a physical situation one determines
the structure of a physical object by measurements upon it and the results of measurements
are ultimately all that one can know about it.

We are particularly interested in pure point diffraction. However, the concept of stationary processes
seems interesting in its own right and we develop the theory of these a bit, independent of the question of pure point diffraction. It is even useful initially to  ignore the fact that the resulting process is {\it{spatial}},
that is, it is supposed to result from some structure in $\GG$.
So at the beginning we often only require that we have a $\GG$-invariant mapping $N:C_c(\GG) \longrightarrow
\cHH$ where $\cHH$ is a Hilbert space with conjugation -- we call these
stationary processes. Assuming existence of second moments, we can then, for  each such process, find an `autocorrelation measure' on $\GG$ and a
`diffraction measure' on $\Gdual$. The latter is always a centrally symmetric positive
measure,  and through it we can define continuous and pure point diffraction. We show how pure point and continuous diffraction are tied
to notions of almost periodicity of $N$. Also we show how to decompose a sum of spatial processes into sub-processes which are respectively pure point and continuous.

Coming to the pure point part of the paper, \S\ref{Purepoint}, we assume given a centrally symmetric positive  pure point  measure $\difm$ on $\Gdual$, which can be expressed as a  Fourier transform.
We wish to create a probability space $(X,\mu)$ on which there is an ergodic action of $\GG$ and a continuous stationary process $N: C_c(\GG) \longrightarrow L^2(X,\mu)$ whose diffraction is $\difm$.

Not surprisingly phase factors play a crucial role in this since
it is phase information that is lost in the process of diffraction. There is a natural combinatorial type of group $Z$
generated by the positions of the Bragg peaks. This group can be seen as an abeleanized homotopy group of an associated Cayley graph.
It is the dual group of $Z$, or the dual group of a natural factor $\cZ$ of this group,
that classifies all the spatial processes on $\GG$ with diffraction $\difm$, the
classification being up to isomorphism or up to translational isomorphism respectively.

We call these phase factors {\em phase forms}. Background on phase forms is given in  Section \ref{Cycles}. In \S\ref{CyclestoStochasticprocesses} we show how to
use $\difm$ and a phase form $a^*$ to construct a spatial process whose diffraction
is $\difm$. Uniqueness is already shown in \S\ref{Uniqueness}.
How all these ingredients give a solution to the homometry problem is discussed in
\S\ref{Homometry}.

 Our solution to the pure point problem is in some sense an elaboration of the famous
Halmos-von Neumann result about realizing ergodic pure point dynamical systems in terms of the character theory of compact Abelian groups. There the objective was to
classify pure pointedness at the level of the spectrum. In our case we wish to classify the
diffraction, which amounts to not only knowing the spectrum but also the
intensity of diffraction at each point of the spectrum. Our approach to the construction
of a spatial process is to realize $X$ as the dual to the discrete group generated
by the set of Bragg peaks, but at the same time to include all the phase information which derives from the Bragg intensities. In this context we mention the recent work of Robinson \cite{Rob3}  on how to realize systems with a given group of eigenvalues via so called cut-and-project schemes.

In  \S\ref{compactG} we take a closer look at the situation that the underlying group is compact. In this case all processes can be realized as measure processes
under some weak assumption. We then specialize even further and in \S\ref{simplePeriodic}
discuss some results of Gr\"unbaum and Moore
on rational diffraction from one dimensional periodic structures. These pertain to the simple case that $\GG = \ZZ/N\ZZ$, but even so, the results are interesting and not easy to obtain.

Finally in \S\ref{altConstruction} we sketch out a second approach to the construction of a spatial
process, this time based on Gel'fand theory.

Summarizing:
\begin{itemize}
\item Each pure point ergodic spatial process gives rise to a pure point diffraction  measure $\difm$ and a phase form $a^*$.

\item Spatial processes with the same pure point diffraction measure and the same phase form are naturally isomorphic up to translation.

 \item The set of all possible phase forms associated with a given pure point measure $\difm$ form an Abelian group $\cZ(\difm)$, and for each choice of
 $a^* \in \cZ(\difm)$, there is a ergodic spatial process with diffraction $\difm$ and
 phase form $a^*$.

\end{itemize}

The paper has two main features, after the introduction of stationary and spatial
processes: one is a development of a general theory of these types of processes
and the other is a detailed study of pure point diffraction and examples of how it
looks in special cases. Readers interested primarily in the pure point theory can skip
sections $3.3, 3.4 ,4,5$ on first reading if they so wish.

%%%%%%%%%%%%%%%%%%%%%%%%%%%%%%%%
\section{Notation} \label{notation}
%%%%%%%%%%%%%%%%%%%%%%%%%%%%%%%%%
In this section we gather some notation used throughout.

\bigskip

$\NN:= \{1,2, \dots, \}$. Let $\GG$ be a locally compact abelian group.
We let $C_c(\GG)$ denote the space of compactly supported complex valued continuous
functions on $\GG$. For compact $K \subset \GG$, $C_K(\GG)$
is the subspace of elements of $C_c(\GG)$ whose support is in $K$. For $F\in C_c(\GG)$, $\widetilde F \in C_c(\GG)$ is defined
by $\widetilde F(x) = \overline{F(-x)}$. The translation action of $\GG$ on itself is
denoted by $\tau$: $\tau(t)(x) = t +x$ for all $t,x \in \GG$. This action determines, in the usual way, an action on $C_c(\GG)$.
By $l_\GG$ we denote the (fixed) Haar measure on $\GG$.

The convolution of two functions $F,G\in C_c (\GG)$ is defined to be the function $F\ast G\in C_c (\GG)$ given by
$$F\ast G (t) = \int_\GG F( t- s ) G (s) d l_\GG (s).$$

The dual group of $\GG$ is denoted by $\Ghat$ and the \textit{Fouriert ransform} of an $F\in C_c (\GG)$ is denoted by $\widehat{F}$ i.e.
$$\widehat{F} (\gamma) = \int_\GG \overline{(\gamma,t)} F(t) d l_\GG (t).$$

A measure $\nu$ on $\GG$ is called \textit{transformable} if there exists a measure $\sigma$  on   $\Ghat$ with
$$ \int_{\Ghat} |\widehat{F}|^2 d\sigma = \int_\GG F\ast \widetilde{F} d\nu$$
for all $F\in C_c (\GG)$.
In this case $\sigma$ is uniquely determined  and called the \textit{Fourier transform of $\nu$}. In such a situation we call $\sigma$ \textit{backward transformable}. Unlike all other notation introduced in this section the concept of backward transformability is not standard. However, it is clearly very useful in our context as it is exactly the property shared by the diffraction measures we investigate.

A  sequence $\{A_n\}$ of compact subsets
of $\GG$ is called a \textit{van Hove sequence}
if for every compact $K\subset \GG$
\begin{equation} \label{van-hove}
    \lim_{n\to\infty} \frac{l^{}_\GG
    (\partial^K A_n)}
    {l^{}_\GG (A_n)}  \; = \; 0.
\end{equation}
Here,
for compact $A,K$, the
``$K$-boundary''  $\partial^K A $  of $A$ is defined as
$$ \partial^K A := ((A + K)\setminus A^\circ) \cup
    (\overline{\GG\setminus A} - K) \cap A.$$
The existence of van Hove
sequences for all $\sigma$-compact locally compact Abelian groups is shown in
\cite[p.~249]{Martin}, see also Section 3.3 and Theorem (3.L)
of \cite[Appendix]{Tempelman}.

\section{Stationary processes introduced}\label{Stochasticprocesses}
In this section we introduce our concept of stationary process on the locally compact Abelian group $\GG$. To do so we first discuss some background on unitary representations.
 The concept of stationary process is somewhat more general than needed in the remainder of the paper. The reason for this is twofold: On  the one hand this general treatment does not lead to more complicated proofs but rather makes proofs more transparent. On the other hand for a the study of mixed spectra this general concept may prove to be especially  useful.

\bigskip

Let $\GG$ be a locally compact abelian group and let $\cHH$ be a Hilbert space with
inner product $\langle\cdot | \cdot \rangle$. We assume linearity in the first variable
and conjugate linearity in the second.
A strongly continuous unitary representation of the group $\GG$ on $\cHH$ is a map $T$ from $\GG$ into the bounded  operators on $\cHH$ such that $r \mapsto \langle T_r f, T_r f\rangle$ is constant on $\GG$ for any $f\in \cHH$,    $T_0$ is the identity on $\cHH$,   $T_{t+s} = T_t T_s$  holds for all $s,t\in \GG$  and  $t\mapsto T_t f$ is continuous for any $f\in \cHH$. Then, obviously each  $T_s$ is unitary (i.e. bijective and norm preserving).
 An $f\in \cHH$ is then  called an {\it eigenfunction\/} of $T$ with
{\it eigenvalue\/} $\xi\in \Ghat$ if $T_t f = \overline{(\xi,t)} f$ for every
$t\in \GG$.  The closed subspace of $\cHH$ spanned by all eigenfunctions is denoted by $\cHH_p$.  The representation $T$ has {\it pure point spectrum} if  $\cHH= \cHH_p$, or equivalently if $\cHH$  has  an orthonormal basis consisting of eigenfunctions \footnote{Often the eigenfunction condition is written
as $T_t f = (\xi,t) f$ for eigenvalue $\xi\in \Ghat$. For our purposes things work
out more smoothly by formulating it with the conjugate value.}.

By Stone's theorem, compare \cite[Sec.~36D]{Loomis}, there exists a projection-valued measure
\begin{equation*}
  E_T\! : \; \mbox{Borel sets of $\Ghat$} \; \longrightarrow \;
  \mbox{projections on $\cHH$}
\end{equation*}
with
\begin{equation} \label{spectralmeasure}
  \langle f \mid T_t f \rangle \; = \;
  \int_{\Ghat} (\xi, t) \dd \rho^{}_f (\xi)\,
\end{equation}
for all $t\in\GG$ and $f\in \cHH$,
where $\rho^{}_f$ is the measure on $\Ghat$ defined by
$\rho^{}_f (B) := \langle f \mid E_T (B)f\rangle$. Then $\rho^{}_f$ is called the \textit{spectral measure} of $f$. It is the unique measure on $\Gdual$ satisfying \eqref{spectralmeasure}.

A mapping $\overline{(\cdot)}$ of a Hilbert space $(\cHH, \langle \cdot \mid \cdot \rangle)$ into itself is said to be a {\it conjugation} of $\cHH$ if
\begin{itemize}
\item[{(i)}] $\overline{(\cdot)}$ is conjugate-linear and $\overline{\overline{\xi}} = \xi$;
\item[{(ii)}] for all $\xi, \psi \in \cHH$, $\langle \overline{\xi} \mid \overline{\psi}\rangle
= \langle \psi \mid \xi\rangle$.
\end{itemize}

\begin{definition} Let $\GG$ be a locally compact abelian group. A {\it stationary process}
or $\cHH$-{\it stationary process} on $\GG$ is a   triple $\cN=(N, \cHH, T)$ consisting of a Hilbert space $\cHH$ with
conjugation, a strongly continuous unitary representation $T$ of $\GG$ on $\cHH$ and a linear
 $\GG$-map $N: C_c (\GG)\longrightarrow \cHH$ such that for all
$F\in C_c (\GG)$ $N(\overline F) = \overline{N(F)}$.  A stationary  process is called {\it ergodic} if the eigenspace of $T$ for the eigenvalue $0$ is one-dimensional.
\footnote{The origins of the definition lie in the theory of stochastic point processes,
hence the name. We will often simply write {\em process}
instead of {\em stationary process}. We do not use the word {\em stochastic}, although
our definition allows for processes that are stochastic in nature.}
\end{definition}

\medskip

The set of all $\cHH$-stationary processes on $\GG$ forms a real vector space in the obvious
way by taking linear combinations of mappings.  There is also a canonical notion of isomorphism between stochastic processes: The processes $(N_1, \cHH_1, T_1)$ and $(N_2,\cHH_2, T_2)$ are called \textit{isomorphic} if there exists an invertible unitary map $U : \cHH_1\longrightarrow \cHH_2$ which intertwines $T_ 1$ and $T_2$ and satisfies $N_2 = U N_1$.

 \medskip

We will be interested in the spectral theory of stationary processes. An important notion is given next.

\begin{definition} A stationary process $\cN= (N,\cHH,T)$ is said to have {\it pure point spectrum} if the
representation $T$ of $\GG$ on $\cHH$ has pure point spectrum.
\end{definition}

\bigskip

\begin{definition} A stationary process $\cN=(N,\cHH,T)$ is called a {\it spatial
stationary process} if there is a probability space $(X,\mu)$ and an action
$T: \GG \times X  \longrightarrow X, (t,x)\mapsto t\cdot x$ such that $\mu$ is invariant
under the $\GG$-action on $X$ through $T$, $\cHH = L^2(X,\mu)$ with the usual inner product $\langle f \mid g\rangle = \int_X f \overline{g} d\mu$,
and the action of $T$ is extended to an action on $\cHH$ by $T_t f(\cdot) = f((-t)(\cdot))$ for all $t\in\GG$, $f\in L^2(X,\mu)$. The conjugation
here is the natural one: complex conjugation of functions on $X$.  We write
$\cN=(N,X,\mu,T)$. The stationary
spatial process is called {\it full} if the algebra generated by  functions of the
form $\psi \circ N(F)$, $\psi \in C_c(\CC)$, $F\in C_c(\GG)$, is dense in
$L^2(X,\mu)$.
\end{definition}

\begin{remark}
In some sense the assumption of fullness is only a convention. More precisely, given any spatial stationary process we may create a full process with essentially the same properties by a variant of Gelfand construction. Details are  given below in Theorem~ \ref{fullProcess}.
\end{remark}

For the sake of economy, we have used the same notation, $T$, for the action of $\GG$ on $X$
and the corresponding action of $\GG$ on $\cHH := L^2(X,\mu)$. We often omit the word `stationary' in the sequel, but we will always understand
it to be in force.

The definition of stationary
processes implicitly defines them as {\em real} processes: $N(F)$ is real for all real-valued
functions $F$ on $\GG$ in the sense that $\overline{N(F)} = N(F)$. In the case of
spatial processes, all real-valued functions on $\GG$ are mapped by $N$ to real-valued functions on $X$. One could relax the conditions to allow complex-valued processes, but we shall not do that
here.

\bigskip

\begin{definition}\label{definition-isomorphism}
Two stationary spatial processes $\cN = (N,X,\mu,T)$ and $\cN'=(N',X',\mu',T')$ are {\it  spatially  isomorphic} if there is an invertible unitary mapping
$M : L^2 (X,\mu)\longrightarrow L^2 (X',\mu')$ with
\begin{itemize}
\item $M\circ N  = N'$;
\item $M \circ T_t = T_t' \circ M$ for all $t\in \GG$;
\item $M(f g) = M(f) M(g)$ for all $f,g\in L^\infty (X,\mu)$ and $M^{-1} (f' g') = M^{-1} (f') M^{-1} (g')$ for all $f',g'\in L^\infty (X',\mu')$.
\end{itemize}
The map $M$ is then called spatial isomorphism between the processes.
\end{definition}
The isomorphism class of $\cN= (N,X,\mu,T)$ is denoted by $[\cN]=[(N,X,\mu,T)]$.

\begin{remark} (a) The first two points of the definition just say that the processes are isomorphic. The special requirement for spatial isomorphy is the third point. As shown in the next proposition, the spatial isomorphism are exactly those isomorphisms which are compatible with the $L^\infty$ module structure of $L^2$ in the sense that $M(f g) = M(f) M(g)$ for all $f\in L^\infty$ and $g\in L^2$.

(b)  Our definition of spatial  isomorphism parallels the
notion of spectral isomorphism for measure preserving group actions
on probability spaces. Our conditions on $M$ imply that spatial  isomorphism implies conjugacy in the sense
that there is a measure isomorphism between the Borel sets of $X$ and
those of $X'$ \cite{Walters}, Theorem~ 2.4. If in addition $X,X'$ are complete separable metric spaces then this can be improved to being an isomorphism between $X$ and $X'$,
that is a bijective intertwining mapping between two subsets of full measure
in $X$ and $X'$ respectively \cite{Walters}, Theorem~2.6.
\end{remark}

\begin{prop} Let $\cN = (N,X,\mu,T)$ and $\cN'=(N',X',\mu',T')$ be spatially isomorphic processes with spatial isomorphy $M$. Then, $M$ maps $L^{\infty} (X,\mu)$ into $L^\infty (X',\mu')$ and $M(f g) = M(f) M(g)$ holds for all $f\in L^{\infty} (X,\mu)$ and $g\in L^2 (X,\mu)$.  The corresponding statements hold for $M^{-1}$ as well.
\end{prop}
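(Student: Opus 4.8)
The plan is to use only two structural features of $M$: that it is a complex-linear Hilbert space isomorphism (hence a surjective isometry preserving inner products), and that it is multiplicative on $L^\infty(X,\mu)$. From these I would derive, in turn, that $M$ fixes the constant function, that it sends real functions to real functions, that it preserves positivity, and finally that it is sup-norm contractive; the boundedness claim and the extended multiplicativity then follow formally. Note that the intertwining conditions on $N$ and $T$ play no role here.

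First I would check $M(1)=1$. Setting $f=g=1$ in the multiplicativity relation gives $M(1)=M(1)^2$, so $M(1)$ solves $t=t^2$ pointwise almost everywhere and is therefore the indicator $\charF_A$ of some measurable $A\subseteq X'$. Since $M$ is an isometry and $\mu,\mu'$ are probability measures, $\mu'(A)=\langle M(1)\mid M(1)\rangle=\langle 1\mid 1\rangle=1$, forcing $A=X'$ and $M(1)=1$. A consequence I will reuse is that $M$ preserves integrals: for $h\in L^\infty(X,\mu)$ one has $\int_{X'}M(h)\dd\mu'=\langle M(h)\mid M(1)\rangle=\langle h\mid 1\rangle=\int_X h\dd\mu$.

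The main obstacle is showing that $M$ intertwines the two complex conjugations, i.e.\ that $M(g)$ is real whenever $g\in L^\infty(X,\mu)$ is real. Fixing such a $g$, I would test against an arbitrary $f\in L^\infty(X,\mu)$. Unitarity gives $\langle f\mid g\rangle=\langle M(f)\mid M(g)\rangle=\int_{X'}M(f)\overline{M(g)}\dd\mu'$, whereas using that $g$ is real together with integral-preservation and multiplicativity gives $\langle f\mid g\rangle=\int_X fg\dd\mu=\int_{X'}M(fg)\dd\mu'=\int_{X'}M(f)M(g)\dd\mu'$. Subtracting shows $\int_{X'}M(f)\,\operatorname{Im}M(g)\dd\mu'=0$ for every $f\in L^\infty(X,\mu)$, so the real function $\operatorname{Im}M(g)$ is orthogonal to $M(L^\infty(X,\mu))$. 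Because $L^\infty$ is dense in $L^2$ and $M$ is a surjective isometry, $M(L^\infty(X,\mu))$ is dense in $L^2(X',\mu')$, whence $\operatorname{Im}M(g)=0$. Complex-linearity then yields $M(\overline f)=\overline{M(f)}$ for all $f\in L^\infty(X,\mu)$.

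The remaining steps are routine. For $g\in L^\infty(X,\mu)$ with $g\ge0$, writing $g=h^2$ with $h=\sqrt g\in L^\infty(X,\mu)$ real gives $M(g)=M(h)^2\ge0$, since $M(h)$ is real; thus $M$ preserves positivity. Applying this to $\|f\|_\infty^2-|f|^2\ge0$ and using $M(|f|^2)=M(f\overline f)=M(f)\overline{M(f)}=|M(f)|^2$, I get $|M(f)|^2\le\|f\|_\infty^2$ almost everywhere, so $\|M(f)\|_\infty\le\|f\|_\infty$ and $M$ maps $L^\infty(X,\mu)$ into $L^\infty(X',\mu')$. To extend multiplicativity, I would fix $f\in L^\infty(X,\mu)$ and compare the two maps $g\mapsto M(fg)$ and $g\mapsto M(f)M(g)$ on $L^2(X,\mu)$: both are bounded and linear (the first because multiplication by $f$ is bounded on $L^2$, the second because $M(f)\in L^\infty(X',\mu')$), and they agree on the dense subspace $L^\infty(X,\mu)$ by hypothesis, hence everywhere. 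This proves $M(fg)=M(f)M(g)$ for $f\in L^\infty(X,\mu)$ and $g\in L^2(X,\mu)$. Since $M^{-1}$ is itself a spatial isomorphism satisfying the same hypotheses, the corresponding statements for $M^{-1}$ follow by the identical argument.
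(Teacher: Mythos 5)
Your proof is correct, but it runs in the opposite direction from the paper's and through different tools, so it is worth comparing the two. The paper first extends the multiplicativity $M(fg)=M(f)M(g)$ from $L^\infty\times L^\infty$ to $f\in L^\infty$, $g\in L^2$ by a direct limit argument (approximating $g$ by bounded functions and passing to a.e.\ convergent subsequences, which costs nothing even though $M(f)$ is not yet known to be bounded), and only then deduces $M(f)\in L^\infty$: since $M$ is onto, multiplication by $M(f)$ is everywhere defined on $L^2(X',\mu')$, a multiplication operator always has closed graph, and the closed graph theorem makes it bounded, forcing $M(f)\in L^\infty(X',\mu')$. You reverse this order: you first establish boundedness, with the explicit bound $\|M(f)\|_\infty\le\|f\|_\infty$, by an algebraic and order-theoretic chain --- $M(1)=1$ from idempotency together with the probability normalization, conjugation-preservation by testing $\operatorname{Im}M(g)$ against the dense subspace $M(L^\infty(X,\mu))$, positivity via squares of real bounded functions, and finally applying $M$ to $\|f\|_\infty^2-|f|^2\ge 0$ --- after which the extension of multiplicativity is immediate, since two bounded operators agreeing on a dense subspace coincide. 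Each route buys something: the paper's argument is shorter and uses no order structure, at the price of invoking the closed graph theorem as an automatic-continuity device and yielding only qualitative boundedness of $M(f)$; yours avoids any Baire-category input and delivers extra structure for free, namely that $M$ is unital, $*$-preserving, positive and sup-norm contractive (in fact isometric on $L^\infty$, since the identical argument applies to $M^{-1}$, which by Definition~\ref{definition-isomorphism} satisfies the same hypotheses). Your approach does lean on $\mu$ and $\mu'$ being probability measures, so that constants lie in $L^2$, but the definition of a spatial process guarantees exactly that; and your observation that the intertwining with $N$ and $T$ plays no role is consonant with the paper, whose proof likewise never uses it.
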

\begin{proof} It suffices to consider the statements on $M$.  By assumption we have  that for $f,g\in L^\infty$ the product $M(f) M(g)$ belongs to $L^2$ and in fact equals $M (f g)$ (and similarly for $M^{-1}$). By a simple limit argument, we then see that for any $f\in L^\infty$ the equation   $M (f g) = M(f) M(g)$ must hold for any $g\in L^2$. This shows that the operator of  multiplication by $M(f)$ is defined on the whole of  $L^2$ (as $M$ is onto). Furthermore, the graph of every multiplication operator can easily be seen to be closed. Thus,  the closed graph theorem  gives that the operator of multiplication by $M(f)$  is a bounded operator on $L^2$. This in turn yields that $M(f)$ belongs to $L^\infty$.
\end{proof}

%%%%%%%%%%%%%%%%%%%%%%%%%%%%%%%
\section{Diffraction theory for stochastic processes}
%%%%%%%%%%%%%%%%%%%%%%%%%%%%%%%
  In this section we show how each stationary process with a second moment comes with a positive definite  measure $\gamma$  on $\GG$  called the autocorrelation, a positive measure  $\difm$ on $\Gdual$ called the diffraction measure,  and  the diffraction-to-dynamics map $\theta$. We then discuss how the autocorrelation comes about as a limit and characterize existence of a second moment.

\bigskip

\subsection{Autocorrelation  and diffraction}
Let a stationary process $\cN=(N,\cHH,T)$ be given.

 We say that $\cN$ has a \textit{second moment} (or is with second moment) if there is a measure $\mu^{(2)}:=\mu^{(2)}_N$
 on $C^{\RR}_c(\GG \times \GG)$ (necessarily unique if it exists) satisfying
\begin{equation}\label{2momentCondition}
\mu^{(2)}(F \otimes G) = \langle N(F) \mid N(G) \rangle \,
\end{equation}
for all real valued  $F,G\in C_c (\GG)$.
 Much of this paper is based on the assumption that the second moment measure exists.
 Since $N$ is real, so is $\mu^{(2)}$, and hence it is real valued
 \footnote{The $m$th moment of $\cN$ will be defined by
 $\mu^{(m)}(F_1\otimes \cdots \otimes F_m) = \int_X N(F_1) \cdots N(F_m) \, d \mu$
 which is different from what we get from the inner product $\langle \cdot \mid \cdot \rangle$
 when $m=2$ unless we restrict to real-valued functions. This is why we restrict to real-valued
 functions in this definition.}.   Furthermore $\mu^{(2)}$ is evidently
 translation bounded since $N$ is $\GG$-invariant.

\begin{prop} \label{def:gamma}
Let $\cN = (N,\cHH,T)$ be a process.
Then $\cN$ has a second moment if and only if  there is a measure $\gamma = \gamma_N$ on $\GG$
satisfying
\begin{equation}\label{acEquation}
\gamma(F * \tilde G) = \langle N(F) \mid N (G)\rangle
\end{equation}
for all $F,G \in C_c(\GG)$.
In the case that such a measure exists, it is unique and furthermore it is real (i.e. assigns real values to real valued functions), positive definite, and translation bounded.
\end{prop}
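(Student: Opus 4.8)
The plan is to route both directions of the equivalence through a single object: a Radon measure on $\GG\times\GG$ that is invariant under the diagonal $\GG$-action. The bridge between such a measure $\mu^{(2)}$ on $\GG\times\GG$ and a measure $\gamma$ on $\GG$ is the shear homeomorphism $\sigma\colon\GG\times\GG\to\GG\times\GG$, $\sigma(x,y)=(x-y,y)$, whose inverse is $\sigma^{-1}(u,y)=(u+y,y)$, and which conjugates the diagonal translation $(x,y)\mapsto(x+t,y+t)$ to translation in the second coordinate alone. Concretely I will show that the second moment measure and the autocorrelation are linked by $\mu^{(2)}=\sigma^{-1}_{*}(\gamma\otimes l_\GG)$, and that under this identity the relation $\mu^{(2)}(F\otimes G)=\gamma(F\ast\widetilde G)$ holds; the two implications then amount to reading this correspondence in the two directions.

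For the implication that a second moment yields $\gamma$, I would first observe that $\mu^{(2)}$ is diagonally invariant: since $N(\tau(t)F)=T_tN(F)$ and each $T_t$ is unitary, $\langle N(\tau(t)F)\mid N(\tau(t)G)\rangle=\langle T_tN(F)\mid T_tN(G)\rangle=\langle N(F)\mid N(G)\rangle$, so $\mu^{(2)}(\tau(t)F\otimes\tau(t)G)=\mu^{(2)}(F\otimes G)$, and this extends from elementary tensors to all of $C^{\RR}_c(\GG\times\GG)$ by density and continuity of the measure. Pushing forward by $\sigma$ turns this into invariance of $\sigma_{*}\mu^{(2)}$ under translation in the second variable, and the main step is then to factor such a measure as $\sigma_{*}\mu^{(2)}=\gamma\otimes l_\GG$ for a unique Radon measure $\gamma$ on $\GG$. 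Unwinding through the change of variables $\sigma^{-1}(u,y)=(u+y,y)$ together with the identity $F\ast\widetilde G(t)=\int_\GG F(t+u)\overline{G(u)}\dd l_\GG(u)$ gives
\[ \mu^{(2)}(F\otimes G)=\int_\GG\!\int_\GG F(u+y)G(y)\dd\gamma(u)\dd l_\GG(y)=\gamma(F\ast\widetilde G), \]
first for real $F,G$ (where $\overline G=G$) and then for all $F,G$ by sesquilinearity, since both $\gamma(F\ast\widetilde G)$ and $\langle N(F)\mid N(G)\rangle$ are linear in $F$ and conjugate-linear in $G$ and agree on real arguments.

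Conversely, given $\gamma$ satisfying \eqref{acEquation}, I simply set $\mu^{(2)}:=\sigma^{-1}_{*}(\gamma\otimes l_\GG)$; the same change of variables shows $\mu^{(2)}(F\otimes G)=\gamma(F\ast\widetilde G)=\langle N(F)\mid N(G)\rangle$ for real $F,G$, so $\cN$ has a second moment. Uniqueness of $\gamma$ follows since the span of $\{F\ast\widetilde G:F,G\in C_c(\GG)\}$ is dense in $C_c(\GG)$ in the inductive-limit topology -- letting $G$ run through a nonnegative symmetric approximate identity, $F\ast\widetilde G\to F$ uniformly with supports in a fixed compact set -- so a measure vanishing on all such functions vanishes. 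For the stated properties, positive definiteness is immediate from $\gamma(F\ast\widetilde F)=\langle N(F)\mid N(F)\rangle=\|N(F)\|^2\ge 0$; reality on real functions follows because for real $F,G$ the elements $N(F),N(G)$ are fixed by the conjugation, so by property (ii) of the conjugation $\langle N(F)\mid N(G)\rangle=\langle\overline{N(F)}\mid\overline{N(G)}\rangle=\langle N(G)\mid N(F)\rangle=\overline{\langle N(F)\mid N(G)\rangle}$ is real, and real functions are approximated by real combinations $F\ast\widetilde G$; translation boundedness either descends from that of $\mu^{(2)}$ noted above through the factorization, or follows from the general fact that positive definite measures are automatically translation bounded.

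I expect the main obstacle to be the factorization step, namely that a Radon measure on $\GG\times\GG$ invariant under translation in one coordinate is the product of a base measure with Haar measure on the fibres. The delicate point is the Weil quotient-integration phenomenon: a function $h\in C_c(\GG\times\GG)$ all of whose fibre integrals $\int_\GG h(u,y)\dd l_\GG(y)$ vanish must be annihilated by every such invariant measure, which relies on the freeness and properness of the coordinate action. Everything else is a direct change of variables, a density argument, or the elementary computation with the conjugation; the only other care needed is justifying the Fubini interchanges, which is harmless here because all functions involved are compactly supported and the measures are Radon.
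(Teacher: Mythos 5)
Your proposal is correct and follows essentially the same route as the paper's proof: the shear $(x,y)\mapsto(x-y,y)$ converting diagonal invariance of $\mu^{(2)}$ into invariance in one coordinate, the factorization of the pushed-forward measure as $\gamma\otimes l_\GG$, and the same arguments for positive definiteness, reality, translation boundedness (via \cite{BF}), and uniqueness via density of spans of $F\ast\widetilde G$. The only differences are matters of explicitness --- you spell out the converse as $\mu^{(2)}=\sigma^{-1}_{*}(\gamma\otimes l_\GG)$ where the paper says ``reverse all the steps,'' and you flag the Weil quotient-integration lemma behind the factorization --- which are welcome refinements, not a different method.
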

\begin{proof} It suffices to restrict attention to real-valued $F$ and $G$ in $C_c (\GG)$ as $N$ is invariant under a conjugation and hence both $F\ast \tilde{G}$ and $\langle N(F) \mid N (G)\rangle $ are linear in $F$ and antilinear in $G$.
Let $\psi := \mu^{(2)}$ above.
The assumption of stationarity
of $N$ shows that $\psi$ is invariant under simultaneous translation of both of its two variables.
 Define the mapping
\begin{eqnarray}
\GG \times \GG &\longrightarrow& \GG \times \GG\\
(x,y) &\mapsto & (x-y,y) =:(u,v) \nonumber
\end{eqnarray}
For $F_1,F_2 \in C_c^{\RR}(\GG)$ we define $H \in C_c(\GG\times \GG)$ by
$H(u,v) = H(x-y,y) = F_1(x)F_2(y)$. This change of variables defines a new measure
$\psi^*$ via
\[ \int_\GG \int_\GG F_1(x)F_2(y) \dd\psi(x,y) = \int_\GG \int_\GG H(u,v) \dd \psi^*(u,v) \,.\]

Since $(\tau_t x,\tau_t y) \leftrightarrow (u,\tau_t v)$, the translation invariance of $\psi$
translates to translation invariance of the second variable of $\psi^*$. Thus for
measurable sets $A,B \subset \GG$ we have
$\psi^*(A \times B) = \psi^*(A \times (t+B))$ for all $t\in \GG$. For $A$ fixed this is leads to a translational
invariant measure on $\GG$ which is hence a multiple $c(A)$ of Haar measure
 $\haarG$:
\[ \psi^*(A \times B) = c(A) \haarG(B) \,. \]

The mapping $A \mapsto c(A)$ is a measure on $\GG$, and this measure
is the reduction $\psi^{\mbox{red}}$ of $\psi$:
\[\psi^*(A\times B) = \psi^{\mbox{red}}(A)\, \haarG(B) \,.\]

Now for $H(u,v) = F_1(u+v)F_2(v)$,
\begin{eqnarray*}
\int\int H(u,v) \dd \psi^*(u,v) &=& \int\int F_1(u+v) F_2(v) \dd \psi^{\mbox{red}}(u) \dd \haarG(v)\\
&=&\int\left( \int F_1(u+v) F_2(v) \dd \haarG(v)\right) \dd \psi^{\mbox{red}}(u)\\
&=& \int (F_1*\tilde{F_2})(u) \dd \psi^{\mbox{red}}(u) \, ,
\end{eqnarray*}
so finally
\[ \psi(F_1 \otimes F_2) = \psi^{\mbox{red}}(F_1 *\tilde{F_2}) \,.\]
and  $\psi^{\mbox{red}}$ is the desired measure, which we now
denote by $\gamma$.

It is positive definite since $\gamma(F * \tilde F) = \mu^{(2)}(F \otimes F)
= ||(N(F)||^2\ge 0$. All positive definite measures are translation bounded \cite{BF}.
Since $\psi := \mu^{(2)}$ is a real measure, one sees that $\gamma$ is also
a real measure. The denseness of the set of functions $F*\tilde G$ in $C_c(\GG)$ shows
that $\gamma$ is unique.

We can reverse all the steps of this proof, and from the existence of the real positive
definite measure $\gamma$ satisfying \eqref{acEquation} derive the corresponding
second moment measure \eqref{2momentCondition}.
\end{proof}

Existence of a second moment immediately  implies some continuity properties of $N$.

\begin{coro}\label{continuity-N} Let $(N,\cHH,T)$ be a stationary process with  second moment. Let $1 \leq p,q \leq \infty$ be given with $1/p + 1/q = 1$. Then, for any compact $K\subset \GG$, there exists a $C_K\geq 0$ with
$$ |\langle N(F) \mid  N(G)\rangle |\leq C_K \|F\|_{L^p (\GG)} \| G\|_{L^q (\GG)}$$
for all $F,G\in C_c (\GG)$ with support contained in $K$. In particular the map $N : C_c (\GG)\longrightarrow L^2 (X,m)$ is continuous (with respect to the $L^2$-norm on $C_c(\GG)$).
Also $N:C_K(\GG) \longrightarrow  L^2 (X,m)$ is continuous with
respect to the sup norm on $C_K(\GG)$.
\end{coro}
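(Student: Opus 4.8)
The plan is to reduce the whole statement to the autocorrelation measure $\gamma = \gamma_N$ furnished by Proposition~\ref{def:gamma}. Since $\cN$ has a second moment, that proposition provides a translation bounded (hence locally finite) measure $\gamma$ on $\GG$ with
\[
\langle N(F) \mid N(G)\rangle = \gamma(F * \widetilde{G})
\]
for all $F,G\in C_c(\GG)$. Thus it suffices to bound $|\gamma(F * \widetilde{G})|$ by $C_K\,\|F\|_{L^p(\GG)}\|G\|_{L^q(\GG)}$ whenever $\supp F,\supp G\subseteq K$, and the two continuity assertions will then come out by specializing $p,q$.

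First I would pin down the support of the convolution. As $\widetilde{G}$ is supported in $-K$ and $F$ in $K$, the function $F*\widetilde{G}$ is supported in the compact set $K-K$; consequently
\[
|\gamma(F * \widetilde{G})| \le |\gamma|(K-K)\,\|F * \widetilde{G}\|_{\infty},
\]
and $|\gamma|(K-K)$ is finite because $\gamma$ is a Radon measure evaluated on a compact set (finiteness also follows a fortiori from translation boundedness). Next I would estimate the convolution pointwise. Writing
\[
(F * \widetilde{G})(t) = \int_\GG F(t - s)\overline{G(-s)}\, d l_\GG(s) = \int_\GG F(t + u)\overline{G(u)}\, d l_\GG(u),
\]
where the substitution $u=-s$ is legitimate because $\GG$ is abelian, hence unimodular, so Haar measure is inversion invariant. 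Applying H\"older's inequality with exponents $p,q$ and using translation invariance of $l_\GG$ gives $|(F * \widetilde{G})(t)| \le \|F\|_{L^p(\GG)}\|G\|_{L^q(\GG)}$ for every $t$, i.e. $\|F * \widetilde{G}\|_\infty \le \|F\|_{L^p(\GG)}\|G\|_{L^q(\GG)}$. Setting $C_K := |\gamma|(K-K)$ then yields the asserted inequality.

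Finally I would read off the two continuity statements by specialization. Taking $p=q=2$ and $F=G$ gives $\|N(F)\|^2 \le C_K\|F\|_{L^2(\GG)}^2$, so $N$ is $L^2$-continuous on each $C_K(\GG)$, which is exactly the stated continuity of $N$ on $C_c(\GG)$ for the $L^2$-norm. For the sup-norm assertion I would combine this with the elementary bound $\|F\|_{L^2(\GG)}^2 \le l_\GG(K)\,\|F\|_\infty^2$ valid for $F$ supported in $K$, obtaining $\|N(F)\| \le \bigl(C_K\,l_\GG(K)\bigr)^{1/2}\|F\|_\infty$, which is the continuity of $N : C_K(\GG)\longrightarrow L^2(X,m)$ with respect to the sup norm.

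The one genuinely analytic step is the pointwise H\"older bound on $F*\widetilde{G}$; everything else is bookkeeping with supports and the passage through $\gamma$. The only points requiring a little care are the finiteness of $C_K$, which rests on the local finiteness of the autocorrelation supplied by Proposition~\ref{def:gamma}, and the harmless inversion invariance of Haar measure used to rewrite the convolution as an inner product before applying H\"older.
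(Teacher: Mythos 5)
Your proposal is correct and follows essentially the same route as the paper's proof: express $\langle N(F)\mid N(G)\rangle$ as $\gamma(F*\widetilde G)$ via Proposition~\ref{def:gamma}, note that $F*\widetilde G$ is supported in $K-K$ with $\|F*\widetilde G\|_\infty \le \|F\|_{L^p(\GG)}\|G\|_{L^q(\GG)}$ by H\"older, take $C_K = |\gamma|(K-K)$ (finite by translation boundedness), and then specialize $p=q=2$ and use $\|F\|_{L^2(\GG)} \le \|F\|_\infty\, l_\GG(K)^{1/2}$ for the two continuity assertions. Your write-up merely spells out the H\"older step and the finiteness of $C_K$, which the paper leaves implicit.
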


\begin{proof} For $F,G\in C_c (\GG)$ with support contained in $K$ the support of $F\ast \widetilde{G}$ is contained in $K  - K$ and  $\|F\ast \widetilde{G}\|_\infty \leq \|F\|_{L^p (\GG)} \|G\|_{L^q (\GG)}$. As $|\langle N(F), N(G)\rangle | = |\gamma (F\ast \widetilde{G} ) |$, this easily gives the first statement. Now, the second statement follows by taking $p = q = 2$. Finally,
for $F\in C_K(\GG)$, $||F||_{L^2(\GG)} \le ||F||_\infty (l_{\GG}(K))^{1/2}$.
\end{proof}

The continuity property of the preceding corollary allows one to extend the map $N$. This is discussed next.

\begin{coro} \label{extendingN}
Let  $\cN= (N,\cHH,T)$ be a stationary process with second moment. The map  $N$ can be uniquely extended to the vector space of all measurable functions $f$ on $\GG$ all of which vanish outside some compact set and are square integrable with respect to the Haar measure
on $\GG$.
 This extension (again denoted by $N$) is $\GG$-equivariant and  satisfies
\[ \int_\GG F * \tilde F \dd \gamma = \langle N(F) | N(F) \rangle \, ,\]
meaning in particular that the integral exists and is finite. \end{coro}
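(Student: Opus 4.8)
The engine of the proof is the $L^2$-continuity estimate of Corollary~\ref{continuity-N}: taking $p=q=2$ there gives, for each compact $K\subset\GG$, a constant $C_K$ with $\|N(F)\|^2\le C_K\|F\|_{L^2(\GG)}^2$ for all $F\in C_K(\GG)$, so that $N$ is Lipschitz from $(C_K(\GG),\|\cdot\|_{L^2})$ into $\cHH$. The plan is to extend $N$ by continuity, the only subtlety being that the Lipschitz constant depends on the compact support, so the approximants must be kept inside one fixed compact set.

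First I would set up the approximation. Fix a compact symmetric neighbourhood $U$ of $0$ and let $f$ be measurable, square integrable, with $\supp f\subseteq K$. Choosing an approximate identity $(\phi_n)$ in $C_c(\GG)$ with $\supp\phi_n\subseteq U$, the functions $F_n:=f*\phi_n$ lie in $C_c(\GG)$, are all supported in the fixed compact set $K':=K+U$, and converge to $f$ in $L^2(\GG)$. Hence $(F_n)$ is $L^2$-Cauchy, and the estimate above, applied with the single compact $K'$, forces $(N(F_n))$ to be Cauchy in $\cHH$. I would then define $N(f):=\lim_n N(F_n)$. Well-definedness is checked by interleaving two admissible approximating sequences (whose supports lie in a common compact set) and noting that the limit depends only on the $L^2$-limit $f$; linearity is immediate, and uniqueness holds because any extension that is continuous along boundedly supported $L^2$-convergent sequences must assume precisely these values. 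For $\GG$-equivariance I would use that translation is an $L^2$-isometry: from $F_n\to f$ with supports in $K'$ one gets $\tau_t F_n\to\tau_t f$ with supports in $t+K'$, an admissible approximant for $\tau_t f$; passing to the limit in $N(\tau_t F_n)=T_t N(F_n)$ and using continuity of the unitary $T_t$ gives $N(\tau_t f)=T_t N(f)$.

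The substance of the last assertion is the identity $\gamma(f*\tilde f)=\langle N(f)\mid N(f)\rangle$, and here the key structural observation is that, although $f$ is only $L^2$, the function $f*\tilde f$ is continuous and compactly supported. Indeed $(f*\tilde f)(x)=\langle \tau_{-x}f\mid f\rangle_{L^2(\GG)}$, which is continuous in $x$ by strong continuity of translation in $L^2(\GG)$ and vanishes off the compact set $K-K$; thus $\gamma(f*\tilde f)$ is a genuine number. With $F_n=f*\phi_n$ the same formula gives $(F_n*\tilde F_n)(x)=\langle\tau_{-x}F_n\mid F_n\rangle$, and splitting $\langle\tau_{-x}F_n\mid F_n\rangle-\langle\tau_{-x}f\mid f\rangle$ into two inner products bounded by $\|F_n-f\|_{L^2}\|F_n\|_{L^2}$ and $\|f\|_{L^2}\|F_n-f\|_{L^2}$ shows $F_n*\tilde F_n\to f*\tilde f$ uniformly, with all supports inside the fixed compact $K'-K'$. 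Since $\gamma$ is translation bounded, $|\gamma|$ is finite on compacta, so $\gamma$ is continuous for uniform convergence with fixed compact support and $\gamma(F_n*\tilde F_n)\to\gamma(f*\tilde f)$. On the other hand Proposition~\ref{def:gamma} gives $\gamma(F_n*\tilde F_n)=\langle N(F_n)\mid N(F_n)\rangle\to\langle N(f)\mid N(f)\rangle$ by continuity of the inner product. Comparing limits yields the claimed identity and, in particular, that the integral exists and is finite, being equal to $\|N(f)\|^2$.

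I expect the main obstacle to be precisely this last step: recognizing that $f*\tilde f$ lands in $C_c(\GG)$ and that the mollified convolutions converge to it uniformly on one fixed compact set, which is exactly the input that lets the measure $\gamma$ (finite on compacta but not globally) be passed through the limit. Everything else is routine bookkeeping around the estimate of Corollary~\ref{continuity-N}.
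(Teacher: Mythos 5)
Your proof is correct and follows essentially the same route as the paper: extend $N$ by $L^2$-continuity (Corollary~\ref{continuity-N} with $p=q=2$) along approximating sequences confined to one fixed compact set, then pass the identity $\gamma(F*\tilde F)=\langle N(F)\mid N(F)\rangle$ through the limit via the uniform convergence $F_n*\tilde F_n \to f*\tilde f$ (Cauchy--Schwarz) with supports in a single compact set, using that $\gamma$ is finite on compacta. The only, immaterial, difference is the construction of the approximants: you mollify with an approximate identity, while the paper invokes density of $C_c(U)$ in $L^2(U,l_U)$ for an open relatively compact $U$ containing the support.
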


\begin{proof}  It suffices to show that $N$ can be uniquely extended to $L^2 (U, l_U)$ for any open $U$ in $\GG$ with compact closure.   Let such an $U$ be given.

Let $F\in L^2 (U, l_U)$ be given. As, $U$ is open, we can then find a sequence $\{F_n\}$ in $C_c (U)$ converging to $F$ in the sense of $L^2 (U, l_U)$.
By the previous corollary, we infer then that $\{N (F_n)\}$ must be a Cauchy-sequence, whose limit does not depend on the choice of the approximating sequence $\{F_n\}$. This shows that $N$ can be extended to $L^2 (U, l_U)$.

\smallskip

By construction of $N(F)$  and the definition of $\gamma$ we have furthermore
$$\langle N (F)\mid N(F)\rangle = \lim_{n\to \infty} \int_\GG F_n \ast \tilde F_n d\gamma.$$
 Moreover, a direct application of Cauchy-Schwartz inequality shows that $F_n \ast \tilde F_n$ converges to $F \ast \tilde F$ with respect to the supremum norm and has support contained in $\overline{U}- \overline{U}$. This easily gives that
 $$\lim_{n\to \infty} \int_\GG F_n \ast \tilde F_n d\gamma = \int_\GG F\ast \tilde F d\gamma.$$
This finishes the proof.
 \end{proof}

\begin{remark}
Note that any bounded measurable function with compact support belongs to $L^2 (U,d\,l_\GG)$ for any open  $U$ containing the support. Thus, $N$ can in particular be extended to bounded measurable functions with compact support.
\end{remark}

The above  discussion  shows that any stationary process with a second moment gives rise to a real positive definite measure $\gamma$. As $\gamma$ is positive definite it is transformable i.e. its  Fourier transform $\difm$ exists \cite{BF,SM}, and since $\gamma$ is a real measure, $\difm$
is centrally symmetric ($\difm(A) = \difm(-A)$
for all measurable sets $A$). The measures $\gamma$ and $\difm$ lie at the heart of our investigation.

\begin{definition} Let $(N,\cHH,T)$ be a stationary process with second moment. Then, $\gamma:=\gamma_N$ is called the {\it autocorrelation} of the stationary process. Its Fourier transform $\omega:=\omega_N:=\widehat{\gamma_N}$ is called the {\it diffraction} or {\it diffraction measure} of the stationary process.
\end{definition}

\begin{definition} A stationary process $(N,\cHH,T)$ with second moment is said to have {\it pure point diffraction}
(resp. {\it continuous diffraction}) if the diffraction measure $\difm$ is a pure point measure
(resp. continuous measure).
\end{definition}

The definition of $\difm$ yields that it is a centrally symmetric positive measure and is the unique measure  satisfying

\begin{equation}\label{omega-gamma}
 \int_{\Gdual} |\widehat F|^2 \dd\difm = \int_{\GG} F * \widetilde F \dd \gamma \, ,
 \end{equation}
or equivalently, by linearizing this,
\begin{equation*}
 \int_{\Gdual} \widehat F \overline{\widehat G} \dd \difm
= \int_\GG F*\tilde G \dd\gamma
\end{equation*}
for all $F,G \in C_c(\GG)$. Considering  the Hilbert space  $L^2(\Gdual, \difm)$ with the corresponding inner product
being written $\langle \cdot \mid \cdot \rangle_{\Gdual}$ and using the definition of $\gamma$, we obtain
for any $F,G \in C_c(\GG)$,
\begin{equation} \label{innerProduct}
\langle \widehat F \mid \widehat G \rangle_{\Gdual}
 = \int_{\Gdual} \widehat F \overline{\widehat G} \dd \difm\\
  =  \int_\GG F*\tilde G \dd\gamma\\
 = \langle N(F) \mid N(G) \rangle \,.
\end{equation}
This shows that the map $\widehat{F} \mapsto N(F)$ is an isometry.
We define an action $\hat T$ of $\GG$ on $L^2(\Gdual,\difm)$ by defining
$\hat T_t f$ for all $f\in L^2(\Gdual,\difm)$ through
\[\hat T_t f(k) = \overline{\langle k, t \rangle} f(k) \]
for all $t \in \GG$ and all $k\in \Gdual$.

\begin{remark} We note that if $N$ is a $\cHH$-stationary process on $\GG$
with second moment
then for all $c\in \RR$, $cN$ is another such process. The corresponding
autocorrelation and diffraction measures are then scaled by $|c|^2$.
\end{remark}

\subsection{The diffraction to dynamics map}\label{diff2dyn}
Any process with a second moment comes with an isometric $\GG$-map $\theta$  from $L^2 (\Gdual,\difm)$ to $\cHH$. This allows one to transfer certain questions from $\cHH$ to $L^2 (\Gdual,\difm)$. Also, it means that for certain eigenvalues there are canonical eigenfunctions available. These topics are studied next.

\begin{prop} \label{diffractiontodynamics} Let $\cN=(N,\cHH,T)$ be a stationary process with
second moment.
Then $\{\widehat{F}: F\in C_c (\GG)\}$ is dense in  $L^2 (\Gdual,\difm)$ and
there exists a unique isometric embedding
\begin{eqnarray}\label{difmToDyn}
\theta=\theta_N : L^2(\Gdual, \difm)\longrightarrow \cHH
 \end{eqnarray}
 with $\widehat{F}\mapsto N(F)$ for each $F\in C_c (\GG)$.
This mapping is a $\GG$-map and furthermore, for all $D\in L^2(\Gdual,\difm)$,
 \[\theta(\widetilde{D}) = \overline{\theta(D)}\,.\]
 \end{prop}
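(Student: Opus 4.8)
The plan is to deduce everything from the isometry identity \eqref{innerProduct} together with one genuinely analytic input, namely the density of $A := \{\widehat{F} : F\in C_c(\GG)\}$ in $L^2(\Gdual,\difm)$. Once density is in hand, the construction of $\theta$, its isometry, its $\GG$-equivariance, and the conjugation identity are all routine ``extend from a dense subspace'' arguments. I expect the density statement to be the one real obstacle.

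\emph{Density.} First I would record that $A$ is a self-adjoint subalgebra of $C_0(\Gdual)$: one has $\widehat{F}\,\widehat{G} = \widehat{F\ast G}$ and $\overline{\widehat{F}} = \widehat{\widetilde{F}}$, while $A$ separates the points of $\Gdual$ and vanishes at no point. Hence Stone--Weierstrass gives that $A$ is dense in $C_0(\Gdual)$ in the supremum norm. Let $M$ be the closure of $A$ in $L^2(\Gdual,\difm)$. Since every element of $C_0(\Gdual)$ acts as a bounded multiplication operator on $L^2(\Gdual,\difm)$ and $A$ is an algebra dense in $C_0(\Gdual)$, the space $M$ is invariant under multiplication by $C_0(\Gdual)$, and therefore so is $M^\perp$. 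Now take $\psi\in M^\perp$. For $u\in C_c(\Gdual)$ the function $u\psi$ again lies in $M^\perp$, and since $u$ has compact support and $\difm$ is a Radon measure (finite on compacta), $u\psi\,\difm$ is a \emph{finite} measure. The orthogonality relations $\langle \widehat{F}\mid u\psi\rangle_{\Gdual}=0$, rewritten via $\overline{\widehat{F}}=\widehat{\widetilde{F}}$, say that $\int_{\Gdual}\widehat{G}\,u\psi\dd\difm = 0$ for all $G\in C_c(\GG)$. Because $u\psi\,\difm$ is finite, Fubini applies and converts this into $\int_{\GG}G(s)\,\phi(s)\dd\haarG(s)=0$ for all $G$, where $\phi(s)=\int_{\Gdual}\overline{(\gamma,s)}\,u(\gamma)\psi(\gamma)\dd\difm(\gamma)$ is the continuous inverse transform of the finite measure $u\psi\,\difm$. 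Thus $\phi\equiv 0$, and uniqueness of the Fourier transform forces $u\psi\,\difm=0$, i.e.\ $u\psi=0$. Letting $u$ increase to $1$ yields $\psi=0$, so $M^\perp=\{0\}$ and $A$ is dense.

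\emph{Construction of $\theta$.} By \eqref{innerProduct} the assignment $\widehat{F}\mapsto N(F)$ is a well-defined linear isometry from $A$ into $\cHH$; well-definedness and the isometry property are the single identity $\|\widehat{F}\|_{\Gdual}^2=\|N(F)\|^2$ applied to differences. Since $A$ is dense, this map extends uniquely to an isometric embedding $\theta:L^2(\Gdual,\difm)\to\cHH$, and density forces uniqueness of any such extension.

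\emph{Equivariance and conjugation.} Both identities I would check on $A$ and then extend by continuity. Writing $F_t(x):=F(x-t)$ for the translate, a direct computation gives $\widehat{F_t}=\hat{T}_t\widehat{F}$; combined with the $\GG$-map property $N(F_t)=T_tN(F)$ this yields $\theta(\hat{T}_t\widehat{F})=T_t\theta(\widehat{F})$, and boundedness of $\hat{T}_t,T_t,\theta$ extends $\theta\hat{T}_t=T_t\theta$ to all of $L^2(\Gdual,\difm)$. For the conjugation the key computation is $\widetilde{\widehat{F}}=\widehat{\overline{F}}$, whence $\theta(\widetilde{\widehat{F}})=\theta(\widehat{\overline{F}})=N(\overline{F})=\overline{N(F)}=\overline{\theta(\widehat{F})}$, using the defining property $N(\overline{F})=\overline{N(F)}$ of a stationary process. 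The maps $D\mapsto\theta(\widetilde{D})$ and $D\mapsto\overline{\theta(D)}$ are continuous and conjugate-linear (the operation $\widetilde{(\cdot)}$ is isometric on $L^2(\Gdual,\difm)$ because $\difm$ is centrally symmetric), and they agree on the dense set $A$, hence everywhere. The only delicate point in the whole argument is the density: because $\difm$ is an infinite (merely translation bounded) measure, supremum-norm density in $C_0(\Gdual)$ does not by itself give $L^2(\difm)$-density, and the localization to a finite measure before invoking Fubini and Fourier uniqueness is exactly what repairs this.
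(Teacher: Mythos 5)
Your proposal is correct, and everything after the density lemma coincides with the paper's own proof: the isometric extension of $\widehat F\mapsto N(F)$ from the dense subspace via \eqref{innerProduct}, uniqueness by density, and the verification of equivariance and of $\theta(\widetilde D)=\overline{\theta(D)}$ on the generators through the identities $\widehat{T_tF}=\hat T_t\widehat F$ and $\widetilde{\widehat F}=\widehat{\overline F}$ are exactly the steps taken there. Where you genuinely diverge is the density of $\{\widehat F : F\in C_c(\GG)\}$ in $L^2(\Gdual,\difm)$, which is indeed the one substantive analytic point. The paper argues forwards: it first uses translation boundedness of $\difm$ to get $C_c(\Gdual)$ dense in $L^2(\Gdual,\difm)$, and then shows $C_c(\Gdual)$ lies in the closure $L$ of the $\widehat F$'s by \emph{two} successive Stone--Weierstrass approximations, each time exploiting that multiplication by a fixed element of $L^2(\difm)$ is bounded from $(C_0(\Gdual),\|\cdot\|_\infty)$ to $L^2(\difm)$: first $f\,\widehat{G_1}\cdots\widehat{G_m}\in L$ for all $f\in C_0(\Gdual)$, then $fg\in L$ for $f\in C_c(\Gdual)$, $g\in C_0(\Gdual)$, and finally $f=fg$ with $g\equiv 1$ on $\supp f$. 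You instead argue by duality: an element $\psi$ orthogonal to the closure $M$ remains orthogonal after multiplication by a cutoff $u\in C_c(\Gdual)$, because $M$ (hence $M^\perp$, by taking adjoints of the multiplication operators) is a $C_0(\Gdual)$-module; the localized measure $u\psi\,\difm$ is finite by Cauchy--Schwarz and local finiteness of $\difm$, the orthogonality relations say precisely that its Fourier--Stieltjes transform vanishes, and the uniqueness theorem for finite measures on locally compact Abelian groups forces $u\psi=0$. Both routes invoke Stone--Weierstrass for the self-adjoint algebra $\{\widehat F\}\subset C_0(\Gdual)$; beyond that, the paper's argument buys self-containedness (it needs only sup-norm estimates and denseness of $C_c(\Gdual)$ in $L^2$ of a translation bounded measure), while yours trades that for the standard but nontrivial injectivity of the Fourier transform on finite measures, gaining a shorter and more conceptual proof that, as a small bonus, uses only local finiteness of $\difm$ rather than its translation boundedness. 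Your closing diagnosis of why sup-norm density in $C_0(\Gdual)$ alone is insufficient is exactly right; the only point worth making explicit is that the final step ``letting $u$ increase to $1$'' should be read compact-by-compact, using inner regularity of the Radon measure $\difm$ on the finite-$\difm$-measure sets $\{|\psi|>\varepsilon\}$ to conclude $\psi=0$ in $L^2(\Gdual,\difm)$.
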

\begin{proof} We begin with the denseness statement.
 Let $L$ be the closure of $\{\widehat{F}: F\in C_c (\GG)\}$ in $L^2 (\Gdual,\difm)$.
 We wish to show that $L=L^2(\Gdual,\difm)$.
 As $\difm$ is a translation bounded measure, the space $C_c (\Gdual)$ is dense in $L^2 (\Gdual,\difm)$. Thus it suffices to show that $C_c (\Gdual)$ belongs to $L$.
As the convolutions of elements from $ C_c (\GG)$ belong to $C_c (\GG)$ as well,  we infer that for all $G_1,\ldots, G_m \in C_c (\GG)$
$$ \widehat{F_1}\cdots \widehat{F_n} \cdot \widehat{G_1}\cdots \widehat{G_m} \in L
$$
for all $F_i\in C_c (\GG)$.  Now  $\widehat{F_j}$ belongs to the set $C_0 (\Gdual)$ of continuous functions on $\Gdual$ vanishing at infinity. An application of the Stone-Weierstrass theorem then shows that
$$  f \widehat{G_1}\cdots \widehat{G_m}\in L$$
for each $f\in  C_0 (\Gdual)$. Fixing $f\in C_c (\Gdual)$ we can then use another application of the Stone-Weierstrass theorem, to infer that
$$ f g\in L$$
for all $f\in C_c (\Gdual)$ and $g \in C_0 (\Gdual)$. This yields  $C_c (\Gdual)\subset L$ as we intended to show.

The existence of $\theta$ now follows by defining it initially
by $\theta: \widehat F \mapsto N(F)$ for all $F \in C_c(\GG)$,
using the fact that it is an isometric mapping  by \eqref{innerProduct}, and then extending it
to the $L^2$-closure $L$ which is $L^2(\Gdual,\difm)$.

As for the second statement, first we note that the $\GG$-action on $L^2(\Gdual, \difm)$ is
designed to make the mapping $\hat F \mapsto F$ into a $\GG$-map, while
$F\mapsto N(F)$ is already a $\GG$-map, and second, for all $F\in C_c(\GG)$,
$\widetilde{\widehat{F}} = \widehat{\overline{F}}$, from which
\[\theta\left(\widetilde{\widehat{F}}\right) = \theta( \widehat{\overline{F}})
= N({\overline{F}}) = \overline{N(F)} = \overline{\theta(\widehat F)}\,.\]
We finish using the denseness of the first part.
\end{proof}

\begin{definition} The map $\theta=\theta_N$ associated to a stationary process $\cN=(N,\cHH,T)$ with second moment is called the {\it diffraction to dynamics map}.
\end{definition}

The relevance of $\theta$ for spectral theoretic considerations comes from the following consequence of the previous proposition.

\begin{theorem} \label{theta-as-spectralmeasure}  Let $\cN=(N,\cHH,T)$ be a stationary process with
second moment and $\theta$ the associated diffraction to dynamics map. Then, for any $H\in L^2 (\Gdual,\difm)$ the spectral measure $\rho^{}_{\theta (H)}$ is given by
$$ \rho^{}_{\theta (H)} = |H|^2 \difm.$$
\end{theorem}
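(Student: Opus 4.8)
The plan is to exploit the fact that the spectral measure $\rho_{\theta(H)}$ is \emph{uniquely} characterized by Stone's formula \eqref{spectralmeasure}, so it suffices to verify that the candidate measure $|H|^2\difm$ satisfies
\[ \langle \theta(H) \mid T_t \theta(H)\rangle = \int_{\Gdual} (\xi,t)\,|H(\xi)|^2 \dd\difm(\xi) \]
for every $t\in\GG$. The candidate is legitimate: $|H|^2\difm$ is a finite measure, its total mass being $\int_{\Gdual} |H|^2\dd\difm < \infty$ because $H\in L^2(\Gdual,\difm)$. Thus matching the right-hand side above is all that is required, and no further positivity or finiteness checks are needed.

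First I would reduce the left-hand side to a computation inside $L^2(\Gdual,\difm)$. By Proposition~\ref{diffractiontodynamics} the map $\theta$ is an isometric $\GG$-map intertwining $\hat T$ and $T$, so $T_t\theta(H) = \theta(\hat T_t H)$ and therefore
\[ \langle \theta(H) \mid T_t \theta(H)\rangle = \langle \theta(H) \mid \theta(\hat T_t H)\rangle = \langle H \mid \hat T_t H\rangle_{\Gdual}, \]
where the last equality uses that $\theta$ preserves inner products. This transports the entire question to the concrete Hilbert space $L^2(\Gdual,\difm)$, where the action $\hat T$ is given by an explicit formula.

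Next I would compute this inner product directly from the definition $\hat T_t H(k) = \overline{\langle k,t\rangle}\, H(k)$, keeping the convention that $\langle\cdot\mid\cdot\rangle_{\Gdual}$ is conjugate-linear in its second slot:
\[ \langle H \mid \hat T_t H\rangle_{\Gdual} = \int_{\Gdual} H(k)\,\overline{\overline{\langle k,t\rangle}\,H(k)}\,\dd\difm(k) = \int_{\Gdual} \langle k,t\rangle\,|H(k)|^2\,\dd\difm(k). \]
This is precisely the asserted right-hand side (here $(\xi,t)$ and $\langle k,t\rangle$ denote the same pairing of $\Gdual$ with $\GG$). Comparing with \eqref{spectralmeasure} and invoking the uniqueness of the measure satisfying it then yields $\rho_{\theta(H)} = |H|^2\difm$.

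The computation is short, and I expect no analytic obstacle, since the intertwining and isometry of $\theta$ are already in hand and the candidate is automatically a finite measure. The only point genuinely demanding care is the bookkeeping of complex conjugates: one must combine the correct inner-product convention (linear in the first variable, conjugate-linear in the second) with the \emph{conjugate} character $\overline{\langle k,t\rangle}$ appearing in the definition of $\hat T$, so that the two conjugations on the $\hat T_t H$ factor cancel and leave the character $\langle k,t\rangle$ paired, in the unconjugated form demanded by \eqref{spectralmeasure}, against the measure $|H|^2\difm$.
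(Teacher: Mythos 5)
Your proposal is correct and is essentially the paper's own proof: the paper likewise uses that $\theta$ is an isometric $\GG$-map to reduce $\langle \theta(H)\mid T_t\theta(H)\rangle$ to $\int_{\Gdual}(t,\xi)\,|H|^2\dd\difm$ and then invokes the uniqueness of the measure in \eqref{spectralmeasure}. Your explicit bookkeeping of the conjugations (and the observation that $|H|^2\difm$ is automatically finite) simply spells out what the paper calls ``a short calculation.''
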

\begin{proof} As $\theta$ is a $\GG$-map and an isomorphism a short calculation gives
$$ \langle \theta (H) \mid  T_t \theta (H)\rangle = \langle \theta (H)\mid \theta ( (t,\cdot) H) \rangle = \int_{\Gdual} (t,\xi) |H|^2 d \difm.$$
Thus, the (inverse) Fourier transform  of $|H|^2 \difm$ is given by $t\mapsto \langle \theta (H) \mid  T_t \theta (H)\rangle$.
By the characterization of the spectral measure in \eqref{spectralmeasure} the theorem follows.
\end{proof}

From the previous theorem  it is clear  that if $\cHH$ has pure
point spectrum then $\difm$ is discrete and the diffraction is pure point. However,
it is not clear that pure point diffraction implies pure point spectrum but as we shall
see, it is so: each implies the other in the case of spatial processes.

\medskip

The diffraction to dynamics map connects pure point diffraction to eigenvectors
of the $\GG$ action on $\cHH$. Later, we shall see that in the case of
spatial processes, this allows us to make a complete correspondence between
pure point diffraction and pure point dynamics.

For a  pure point diffractive stationary processes with diffraction measure $\difm$ we define
the set of its {\it atoms} by
$$\Bragg =\Bragg(\difm) :=\{k\in \Gdual: \difm (\{k\})  \ne 0\} = \{k\in \Gdual: \difm (\{k\}) > 0\} .$$
The set $\Bragg$ is often called the {\it Bragg spectrum} of the process, and
its elements are sometimes called {\it Bragg peaks}\footnote{Sometimes, the term Bragg peak
is also taken to mean both the position $k$ of the atom and its intensity $\difm(\{k\})$.}.

In the sequel we will often omit the brackets when dealing with one element sets of the form $\{k\}$. In particular, we will set $ \difm(k):=\difm(\{k\})$
for $k\in \Gdual$.

Let $1_k$ be the characteristic function of the set $\{k\} \subset \Gdual$ i.e. $1_k(k')$ is
$1$ or $0$ according as $k'$ equals $k$ or not. It is easy to see that these functions
are the only possible eigenfunctions for our action of $\GG$ on $L^2(\Gdual, \difm)$.
 Define $f_k$, $k\in \Bragg$, by $f_k = \theta(1_{{k}})$.
Then each $f_k$ is an eigenfunction in
$\cHH$ for the eigenvalue $k$, in the sense that $T_t f_k = \overline{\langle k,t\rangle}f_k$.

\medskip

\begin{lemma}\label{fminusk} Let $\cN=(N,\cHH,T)$ be a stationary process with
second moment. Then,
for all $k \in \Bragg$, $-k \in \Bragg$,
and $f_{-k} = \overline{f_k}$ and $||f_{-k}|| = ||f_k|| =\difm(k)^{1/2}$.
\end{lemma}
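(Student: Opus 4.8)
The plan is to read off all three assertions directly from the two structural properties of the diffraction-to-dynamics map $\theta$ established in Proposition~\ref{diffractiontodynamics}: that $\theta$ is an isometry of $L^2(\Gdual,\difm)$ into $\cHH$, and that it intertwines the tilde operation with the conjugation, i.e. $\theta(\widetilde D) = \overline{\theta(D)}$. Everything reduces to applying these to the indicator functions $1_k$, together with the central symmetry of $\difm$.

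First I would dispose of the membership claim $-k\in\Bragg$. This is immediate from the central symmetry of the diffraction measure noted just after Corollary~\ref{extendingN}: since $\difm(\{-k\}) = \difm(\{k\})$ and $k\in\Bragg$ means $\difm(\{k\})\neq 0$, we obtain $\difm(\{-k\})\neq 0$, whence $-k\in\Bragg$ and in particular $f_{-k}$ is defined.

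The norm identities come from isometry. Since $\theta$ is isometric and $f_k = \theta(1_k)$, I compute $\|f_k\|^2 = \|1_k\|_{L^2(\Gdual,\difm)}^2 = \int_{\Gdual} |1_k|^2 \dd\difm = \difm(\{k\}) = \difm(k)$, giving $\|f_k\| = \difm(k)^{1/2}$. The equality $\|f_{-k}\| = \|f_k\|$ then follows either from the same computation applied to $-k$ combined with central symmetry, or, more cleanly, from the relation $f_{-k} = \overline{f_k}$ below together with the fact that a conjugation preserves norms, since the second conjugation axiom with $\psi = \xi$ yields $\langle \overline{\xi} \mid \overline{\xi}\rangle = \langle \xi \mid \xi\rangle$.

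The one computation requiring a small check is the conjugation identity $f_{-k} = \overline{f_k}$. The key observation is that the tilde operation on functions on $\Gdual$, given by $\widetilde D(k') = \overline{D(-k')}$, sends the indicator $1_k$ to $1_{-k}$: indeed $\widetilde{1_k}(k') = \overline{1_k(-k')}$ equals $1$ precisely when $-k' = k$, i.e. when $k' = -k$, so $\widetilde{1_k} = 1_{-k}$. Granting this, the identity follows at once from the second property of $\theta$, namely $f_{-k} = \theta(1_{-k}) = \theta(\widetilde{1_k}) = \overline{\theta(1_k)} = \overline{f_k}$. I do not expect any genuine obstacle here; the lemma is essentially a bookkeeping consequence of Proposition~\ref{diffractiontodynamics}, and the only point demanding care is fixing the tilde convention on $\Gdual$ correctly so that $\widetilde{1_k} = 1_{-k}$, which is the sole nontrivial identification in the argument.
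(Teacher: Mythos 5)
Your proposal is correct and follows essentially the same route as the paper's own proof: central symmetry of $\difm$ for $-k\in\Bragg$, the identity $\theta(\widetilde{1_k})=\overline{\theta(1_k)}$ together with $\widetilde{1_k}=1_{-k}$ for the conjugation statement, and the isometry of $\theta$ for the norm computation. Your explicit verification that $\widetilde{1_k}=1_{-k}$ is a detail the paper leaves implicit, but otherwise the arguments coincide.
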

\begin{proof} Since $\difm$ is centrally symmetric, $\difm(\{-k\}) = \difm(\{k\})$,
which gives the first statement. For the second, using Prop. \ref{diffractiontodynamics}
we have $f_{-k} = \theta(1_{-k}) = \theta(\widetilde{1_k})  =
\overline{\theta(1_k)} = \overline{f_k}$. Then
$||f_{-k}||^2 = \langle f_{-k} \mid f_{-k}\rangle = \langle f_{k} \mid f_{k}\rangle =
||f_k||^2 = \langle 1_k, 1_k\rangle = \difm(k)$.
\end{proof}

We can use the preceding considerations to compute the map $N$ in the case of pure point diffraction.

\begin{prop} \label{eigenExpansionOfN(F)}
Let $(N,\cHH,T)$ be a stationary process with pure point diffraction and  associated Bragg peaks $\Bragg$. Let $\theta$ be the associated diffraction to dynamics map and $f_k = \theta (1_k)$, $k\in \Bragg$. Then
$$ N(F)=\sum_{k\in \Bragg} \widehat{F} (k) f_k$$
for all $F\in C_c (\GG)$.
\end{prop}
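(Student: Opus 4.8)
The goal is to prove the eigenfunction expansion $N(F)=\sum_{k\in\Bragg}\widehat F(k)f_k$ for all $F\in C_c(\GG)$, where $f_k=\theta(1_k)$. My plan is to exploit the isometry $\theta:L^2(\Gdual,\difm)\longrightarrow\cHH$ from Proposition~\ref{diffractiontodynamics}, which satisfies $\theta(\widehat F)=N(F)$. Since everything reduces to an identity inside the Hilbert space $\cHH$ and $\theta$ is an isometric embedding, it suffices to establish the corresponding identity
\[ \widehat F \;=\; \sum_{k\in\Bragg}\widehat F(k)\,1_k \]
in $L^2(\Gdual,\difm)$ and then apply $\theta$, using continuity (boundedness) of $\theta$ to pass it through the sum.

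First I would observe that because the process has pure point diffraction, the measure $\difm$ is a pure point measure supported precisely on $\Bragg$, so $L^2(\Gdual,\difm)$ decomposes as the orthogonal direct sum of the one-dimensional subspaces spanned by the $1_k$, $k\in\Bragg$. Concretely, $\{1_k:k\in\Bragg\}$ is an orthogonal family with $\|1_k\|^2=\difm(k)$ (as already noted in Lemma~\ref{fminusk}), and it is an orthogonal \emph{basis} of $L^2(\Gdual,\difm)$ since any function agreeing $\difm$-almost everywhere is determined by its values on the atoms. Hence for any $H\in L^2(\Gdual,\difm)$ we have the expansion $H=\sum_{k\in\Bragg}H(k)\,1_k$, convergent in $L^2(\Gdual,\difm)$. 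Applying this to $H=\widehat F$ gives the displayed identity in $L^2(\Gdual,\difm)$.

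Next I would apply $\theta$. Since $\theta$ is a bounded linear map (being an isometry) and $\theta(1_k)=f_k$ by definition, continuity lets me interchange $\theta$ with the convergent sum:
\[ N(F)=\theta(\widehat F)=\theta\!\left(\sum_{k\in\Bragg}\widehat F(k)\,1_k\right)
=\sum_{k\in\Bragg}\widehat F(k)\,\theta(1_k)=\sum_{k\in\Bragg}\widehat F(k)\,f_k, \]
with convergence in the norm of $\cHH$. This yields the claimed formula.

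The main point requiring care is the justification that $\{1_k\}_{k\in\Bragg}$ is a complete orthogonal system, and in particular that $\widehat F\in L^2(\Gdual,\difm)$ genuinely has the stated $L^2$-expansion with $\sum_{k}|\widehat F(k)|^2\difm(k)=\|\widehat F\|^2<\infty$; this is where pure pointedness of $\difm$ is essential, as it guarantees there is no continuous part to which $\widehat F$ could contribute. I expect this completeness step to be the only genuine obstacle, everything else being a formal consequence of $\theta$ being an isometric $\GG$-map. Note also that the sum is in general infinite (when $\Bragg$ is infinite), so the convergence is taken in the Hilbert space norm rather than pointwise; once the orthogonal-basis statement is in hand, the boundedness of $\theta$ handles this automatically.
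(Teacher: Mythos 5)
Your proposal is correct and follows exactly the paper's own route: the paper's proof is simply the terse version of your argument, noting $N(F)=\theta(\widehat F)$ and that ``obviously'' $\widehat F=\sum_{k\in\Bragg}\widehat F(k)\,1_k$ in $L^2(\Gdual,\difm)$, from which the claim follows. The details you supply --- completeness of the orthogonal family $\{1_k\}_{k\in\Bragg}$ because the pure point measure $\difm$ is carried by its atoms, square-summability $\sum_k|\widehat F(k)|^2\difm(k)=\|\widehat F\|^2<\infty$, and continuity of the isometry $\theta$ to pass through the $L^2$-convergent sum --- are precisely what the paper leaves implicit.
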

\begin{proof} By definition of $\theta$ we have $N(F) = \theta(\widehat{F})$. Obviously,  $\widehat{F} = \sum_{k\in \Bragg} \widehat{F} (k) 1_k$ and the claim follows.
\end{proof}

%%%%%%%%%%%%%%%%%%%%%%%%%%%%%%%%%%%%%%
\subsection{Spatial stationary processes:  Two-point correlation} \label{spatial}
%%%%%%%%%%%%%%%%%%%%%%%%%%%%%%%%%%%%%%

In this section  we show how the autocorrelation can be given a meaning that agrees with
`classical' two-point correlation associated to stationary point processes.
\bigskip

Assume  that we are given an ergodic spatial stationary process $\cN= (N,X,\mu,T)$.
As discussed in Corollary  \ref{extendingN} we can assume that $N$ is defined on all measurable bounded functions with compact support.
We also assume that $\GG$ has a countable basis of topology and is hence metrizable. Let  $\{A_n\}_{n=1}^\infty$ be a van Hove sequence for $\GG$ (as discussed in  Section \ref{notation}). We assume that
this is fixed once and for all.

\begin{definition} \label{acDef} For $\xi \in X$ we define the {\em two-point correlation or  autocorrelation} of $\xi$
at any $F \in C_c(\GG)$ as
\[\lim_{B \to 0} \frac{1}{\haarG(B)} \left( \lim_{n\to\infty} \frac{1}{\haarG(A_n)}
\int_{A_n}
(N(1_B) N(F))(T_{-x} \xi) \dd \haarG(x)\right) \,\]
whenever the limit exists.
\end{definition}

This needs some comments: The limits in the definition are taken in the order indicated: first the inner and then the outer.
$B$ is an open (or measurable) neighbourhood of $0$ in $\GG$. The statement
$B\to 0$ means that we take a nested descending sequence $\{B_m\}$ of such neighbourhoods, all within some fixed compact set $K$, and that $\haarG(B_m) \to 0$. We are using the notation $\dd x$ to stand for the longer $\dd \haarG(x)$.
The definition requires that we give meaning to $N(1_B)$ as a measurable
function on $\GG$. This uses the extension of $N$ from $C_c (\GG)$ to $L^2$-functions with compact support given in Corollary  \ref{extendingN}.

The intuition behind the definition is as follows. The two-point correlation  at
$\xi \in X$ for $F\in C_c(\GG)$
should look something like
\[ \lim_{n\to\infty} \frac{1}{\haarG(A_n)} \int_{A_n}\int_{A_n}  \xi(x)\xi(y) F(-x+y) \dd x
\dd y
= \lim_{n\to\infty} \frac{1}{\haarG(A_n)} \int_{A_n} \xi(x)
\int_\GG \xi(y) (T_x F)(y) \dd y
\dd x  \, ,\]
where the right hand side arises by using the usual trick from van Hove sequences and the compactness of the support
of $F$. (That is, the difference of the two sides of the equation is due to the difference
between $-A_n$ and $-A_n+ \supp(F)$, which by the van Hove assumption is irrelevant in the limit.)  Of course in our case $\xi$ is not a function of $x$ and the integrands
do not make sense. But the inner integral on the right-hand side is what should
be $N(T_x F)(\xi) = N(F)(T_{-x} \xi)$ and we can rewrite this `autocorrelation' as
\[\lim_{n\to\infty} \frac{1}{\haarG(A_n)} \int_{A_n} (T_{-x}\xi)(0) N(F)(T_{-x} \xi)
\dd l_\GG(x)  \, .\]
The term $(T_{-x}\xi)(0)$ has no meaning. But Palm theory tells
us how to go around this. We instead average over a small neighbourhood $B$
of $0$, and this brings us to Definition~\ref{acDef}.

\smallskip

\begin{theorem} \label{acThm}  Let $\GG$ be a locally compact group whose topology has a countable basis and let $(N,X,\mu,T)$ be an ergodic spatial stationary process on
$\GG$ with second moment.
The two-point correlation of $\xi \in X$ exists $\mu$-almost surely and
its value at $F\in C_c(\GG)$ is $\gamma(F)$.
\end{theorem}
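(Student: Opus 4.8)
The plan is to evaluate the two nested limits of Definition~\ref{acDef} in succession. Fix $F \in C_c(\GG)$ and a relatively compact neighbourhood $B$ of $0$. By the extension of $N$ in Corollary~\ref{extendingN} the function $N(1_B)$ is a well-defined element of $L^2(X,\mu)$, so that $g_B := N(1_B)\,N(F)$ lies in $L^1(X,\mu)$ by Cauchy--Schwarz. The inner expression in Definition~\ref{acDef} is then exactly the ergodic average $\haarG(A_n)^{-1}\int_{A_n} g_B(T_{-x}\xi)\dd\haarG(x)$ of the observable $g_B$ along the orbit of $\xi$. Since $\GG$ has a countable basis it is $\sigma$-compact and amenable, and since the process is ergodic the $\GG$-action on $(X,\mu)$ is ergodic; a pointwise ergodic theorem for measure-preserving actions of amenable groups (applied to $\{-A_n\}$, which is again van Hove) therefore yields, for $\mu$-almost every $\xi$, convergence of the inner limit to $\int_X g_B \dd\mu$.

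Next I would identify this integral. Writing $\int_X ab\dd\mu = \langle a \mid \overline b\rangle$ and using $\overline{N(F)} = N(\overline F)$ from the definition of a stationary process, one gets $\int_X N(1_B) N(F)\dd\mu = \langle N(1_B)\mid N(\overline F)\rangle$. The defining property \eqref{acEquation} of $\gamma$ turns this into $\gamma(1_B * \widetilde{\overline F})$, and since $\widetilde{\overline F}$ is the reflection $F^{\vee}$, where $F^{\vee}(x):=F(-x)$, the inner limit equals $\gamma(1_B * F^{\vee})$, a number now independent of $\xi$. For the outer limit, a direct computation gives $\haarG(B)^{-1}(1_B * F^{\vee})(t) = \haarG(B)^{-1}\int_{B-t} F(u)\dd\haarG(u)$, which converges to $F(-t) = F^{\vee}(t)$ as $B = B_m \to 0$; by uniform continuity of $F$ this convergence is uniform, and all the functions involved have support in one fixed compact set. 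Since $\gamma$ is translation bounded by Proposition~\ref{def:gamma}, we have $|\gamma(h)| \le C\,\|h\|_\infty$ for $h$ supported in that compact set, so $\gamma$ commutes with the uniform limit and the outer limit equals $\gamma(F^{\vee})$.

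It then remains to reconcile $\gamma(F^{\vee})$ with the asserted value $\gamma(F)$. Because $\gamma$ is real and positive definite it satisfies the Hermitian identity $\gamma(\widetilde{H}) = \overline{\gamma(H)}$; combining this with $\widetilde F = \overline{F^{\vee}}$ and the reality of $\gamma$ gives $\gamma(F^{\vee}) = \gamma(F)$, as required. To arrange a single exceptional null set valid for all $F$ simultaneously, I would run the ergodic theorem over the countable family $\{B_m\}$ together with a countable subset $\{F_j\}$ that is dense in $C_c(\GG)$ for uniform convergence on compacta; the continuity estimate of Corollary~\ref{continuity-N}, combined with Cauchy--Schwarz applied uniformly in $n$, then transfers the conclusion from the $F_j$ to every $F\in C_c(\GG)$ on one set of full measure.

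The main obstacle is the pointwise ergodic theorem itself. Van Hove sequences are F{\o}lner, but a general F{\o}lner sequence need not support a pointwise (as opposed to mean) ergodic theorem, so the delicate point is to pass to a tempered subsequence in the sense of Lindenstrauss---or to invoke a version engineered specifically for van Hove sequences---and then to verify that this thinning leaves both nested limits unchanged. Everything else, namely the reductions via \eqref{acEquation}, the approximate-identity estimate, and the Hermitian symmetry of $\gamma$, is routine once the almost-sure convergence of the inner average is in hand.
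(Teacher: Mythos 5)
Your proposal is correct and follows essentially the same route as the paper's own proof: an almost-everywhere ergodic theorem for the inner average, identification of the mean $\int_X N(1_B)N(F)\dd\mu$ with $\gamma(1_B * \widetilde{F})$ via the defining property \eqref{acEquation}, an approximate-identity estimate using uniform continuity of $F$ and translation boundedness of $\gamma$ for the limit $B \to 0$, the symmetry $\gamma(F^{\vee}) = \gamma(F)$ of the real positive-definite measure $\gamma$, and a countable-density argument to obtain one exceptional null set for all $F$. The one place you diverge is in flagging the pointwise ergodic theorem along van Hove (F{\o}lner) sequences as delicate and proposing a tempered-subsequence remedy, whereas the paper simply invokes ``the Birkhoff theorem''; your caution is legitimate (and arguably more careful than the original), though note that passing to a tempered subsequence by itself yields convergence only along that subsequence, so to prove the theorem as stated, for the fixed sequence $\{A_n\}$, one still needs a pointwise theorem valid for the given averaging sets.
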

\begin{proof} The function
$(N(1_B)N(F))(T_{-x} \xi)$ is  measurable as a function of $x\in \GG$ and the Birkhoff
theorem says that almost surely
\begin{equation} \label{BET}
\lim_{n\to\infty}\frac{1}{\haarG(A_n)} \int_{A_n} (N(1_B) N(F))(T_{-x} \xi) \,\dd x
= \int_X N(1_B)N(F) \dd \mu \, ,
\end{equation}
meaning that the limit will exist and equal the right hand side. We shall prove that

\begin{equation}\label{Bto0}
\lim_{B \to 0} \frac{1}{\haarG(B)}\int_X N(1_B)N(F) \dd \mu
= \gamma(F) \, ,
\end{equation}
proving that the definition of Definition~\ref{acDef} works almost
surely in $\xi$ for each $F \in C_c(\GG)$ and that $\gamma$ does have
the meaning of an autocorrelation.

This has to be made to work simultaneously for all $F$ in $C_c(\GG)$, which will be shown in the usual way from a countable basis of $C_c(\GG)$. This will prove
Theorem ~\ref{acThm}.

\smallskip

Here are the details:  It is enough to prove \eqref{Bto0} for real-valued
$F$. Using \eqref{innerProduct}  we have for all
$F,G\in C^\RR_c(\GG)$,
\begin{equation*} \label{linearization}
\int_\GG F*\tilde G \dd \gamma = \langle N(F) \mid N(G) \rangle \,.
\end{equation*}

 Let $F\in C_c^\RR(\GG)$ and choose $\epsilon >0$.  Let $B$ be a measurable subset of $\GG$ with compact
 closure. Then by Prop.~\ref{extendingN}, $N(B):= N(1_B)$ is defined and is
 a measurable $L^2$-function on $X$. By linearization
 we have
\begin{eqnarray*}
\int_X N(1_B)N(F) \dd \mu  &=& \int_\GG 1_B *\tilde F \dd\gamma
\nonumber\\
&=& \int_\GG \int_\GG 1_B(x)\tilde F(y-x) \dd x \dd \gamma(y)
=\int_\GG \int_\GG 1_B(x) F(x-y) \dd x \dd \gamma(y) \, .
\end{eqnarray*}
Since $F$ is uniformly continuous on $\GG$, for any sufficiently small
neighbourhood $B$ of $0$, $|F(x-y) - F(-y)| <\epsilon$ for all  $x\in B$ and
for all $y \in \GG$. Then
\[
\left|\int_\GG 1_B(x) F(x-y) \dd x -\haarG(B) F(-y) \right|
\le \int_\GG 1_B(x) | F(x-y) - F(-y)| \dd x < \epsilon\, \haarG(B)\]
for all $y \in \GG$, so
\begin{eqnarray*}
\frac{1}{\haarG(B)}&&\left|\int_X N(B)N(F) \dd \mu
- \haarG(B) \int_\GG F(-y) \dd\gamma(y) \right| \\
&=&\frac{1}{\haarG(B)} \left| \int_\GG \int_\GG 1_B(x) F(x-y) \dd x \dd \gamma(y) - \haarG(B) \int_\GG F(-y) \dd\gamma(y) \right| \\
&\le& \frac{1}{\haarG(B)} \int_\GG \int_\GG 1_B(x) | F(x-y) - F(-y)| \dd x
\dd\gamma(y)\\
& \le& \frac{1}{\haarG(B)} \epsilon \,\haarG(B)
|\gamma|(-\supp(F) +B).
\end{eqnarray*}
Since $\gamma(y) = \gamma(-y)$ ($\gamma$ is positve-definite and real, Prop.~\ref{def:gamma}),
we see that as $B \to 0$ (and correspondingly $\epsilon \to 0$)
we have a proof of \eqref{Bto0}. Together with \eqref{BET} we have
the desired interpretation of Def.~\ref{acDef} as the autocorrelation
at $\xi$ and its almost sure equality with $\gamma(F)$, and Prop.~\ref{acThm}
is proved.
\end{proof}

\begin{remark} This type of result was first established for certain uniformly discrete point processes on Euclidean space  by Hof \cite{Hof1} (see \cite{Martin} for the case of general locally compact abelian groups). This was then extended to rather general point processes by Gou\'{e}r\'{e}  \cite{Gouere} and to certain measure processes by Baake / Lenz  \cite{BL}. A unified treatment was then given by Lenz / Strungaru  \cite{LS}. Our result contains all these results (provided the underlying process is real).
\end{remark}

\subsection{A second glance at second moments}
The discussion above has shown that existence of second moments has strong consequences. It implies  existence of
 the diffraction to dynamics map by Proposition \ref{diffractiontodynamics} and  further  continuity properties
 by Corollary \ref{continuity-N}.  It turns out that a converse of  sorts holds. This is investigated in this
 section. Along the way, we will also show that continuity of $N$ implies an intertwining property of the map $N$ and this will be crucial for our considerations.

\medskip

Any process $(N, \cHH, T)$ gives rise to two representations of $C_c (\GG)$ (and in fact even of $L^1 (\GG)$): The representation $L$ lives on $L^2 (\GG)$ and acts by
$$ L_G F = G\ast F$$
for $F,G\in C_c (\GG)$.
The representation $T$ (extending the action $T$ and therefore denoted by the same letter) acts by
$$ T_G f := \int G (t) T_t f \, dl_\GG (t) $$
for $G\in C_c (\GG)$ and $f\in \cHH$.  Continuity of $N$ now yields an intertwining property:

\begin{lemma} \label{intertwiner} Let $(N, \cHH, T)$ be a process with a continuous $N$. Then,
$$N\circ L_G = T_G \circ N$$
for all $G\in C_c (\GG)$.
\end{lemma}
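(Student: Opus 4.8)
The plan is to compute both sides applied to a fixed $F\in C_c(\GG)$, recognising the convolution $L_G F = G*F$ as an integral of translates of $F$ to which the $\GG$-equivariance of $N$ can be applied termwise.

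First I would rewrite the convolution in terms of the translation action $\tau$. Writing $(\tau_s F)(t) = F(t-s)$, the defining formula for convolution gives
\[
(G*F)(t) = \int_\GG G(s)\,F(t-s)\dd l_\GG(s) = \int_\GG G(s)\,(\tau_s F)(t)\dd l_\GG(s),
\]
so that $G*F = \int_\GG G(s)\,\tau_s F\dd l_\GG(s)$, now read as an identity of elements of $C_c(\GG)$ with a $C_c(\GG)$-valued integrand. Since $s$ ranges over the compact set $\supp G$ and $F$ has compact support, all of the translates $\tau_s F$ occurring here are supported in the fixed compact set $K := \supp F + \supp G$; thus the whole computation takes place inside $C_K(\GG)$. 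Applying $N$, pushing it through the integral, and then using that $N$ is a $\GG$-map (so $N(\tau_s F) = T_s N(F)$) gives
\[
N(G*F) = \int_\GG G(s)\,N(\tau_s F)\dd l_\GG(s) = \int_\GG G(s)\,T_s N(F)\dd l_\GG(s) = T_G N(F),
\]
the last equality being the definition of $T_G$. As $F\in C_c(\GG)$ was arbitrary this is the claimed identity $N\circ L_G = T_G\circ N$.

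The one substantive point -- and the step I expect to be the main obstacle -- is justifying the passage of $N$ through the integral. I would treat $\int_\GG G(s)\,\tau_s F\dd l_\GG(s)$ as a Bochner integral: the map $s\mapsto \tau_s F$ is continuous from $\GG$ into $(C_K(\GG), \|\cdot\|_\infty)$ by uniform continuity of $F$, and $G$ is continuous with compact support, so the integrand is continuous, compactly supported and $C_K(\GG)$-valued, and its Bochner integral equals $G*F$. The hypothesis that $N$ is continuous means in particular that $N$ restricted to $C_K(\GG)$ is a bounded linear map into $\cHH$ (sup-norm convergence within $C_K(\GG)$ forces $L^2(\GG)$-convergence, and $N$ is $L^2$-continuous). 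Since bounded linear operators commute with Bochner integrals, this yields $N(G*F) = \int_\GG G(s)\,N(\tau_s F)\dd l_\GG(s)$.

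Alternatively, and avoiding Bochner integration entirely, I would approximate the integral by Riemann sums $\sum_i G(s_i)\,\tau_{s_i}F\,\Delta_i$, all lying in $C_K(\GG)$, which converge to $G*F$ in sup norm and hence in $L^2(\GG)$; applying the continuous map $N$ and letting the mesh tend to zero produces $\sum_i G(s_i)\,T_{s_i}N(F)\,\Delta_i \to T_G N(F)$, the convergence of the right-hand side being guaranteed by the strong continuity of $t\mapsto T_t N(F)$. Either route delivers the interchange, and the identity follows; by continuity of $N$ it then extends from $C_c(\GG)$ to all compactly supported square-integrable $F$ as well.
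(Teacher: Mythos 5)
Your proof is correct and follows essentially the same route as the paper's: both rewrite $G*F$ as the vector-valued integral $\int_\GG G(s)\,T_s F\,\dd l_\GG(s)$, pass $N$ through the integral by linearity and continuity, and invoke $\GG$-equivariance to obtain $T_G N(F)$. The only difference is that you carefully justify the interchange (via Bochner integration or Riemann sums in $C_K(\GG)$, using that sup-norm convergence there implies $L^2$-convergence), a step the paper dispatches in a single sentence.
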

\begin{proof} We have
\[N(L_G (F)) = N(G \ast F) = N\left(\int_\GG G(t)T_t F \,d\,l_\GG(t)\right)\,.\]
By the linearity and continuity of $N$,  it commutes with taking integrals and
\[N\left(\int_\GG G(t)T_t F dl_\GG(t)\right) = \int_\GG  G(t) N(T_t F)\,  dl_\GG(t) =\int_\GG G (t) T_t N(F) dl_\GG (t) \]
holds, which finishes the proof.
\end{proof}

We will need a somewhat stronger continuity property of $N$.

\begin{definition} Let $1\leq p,q\leq \infty$ with $1/p + 1/q = 1$ be given.  The process $(N, \cHH, T)$  is said to be weakly $(p,q)$-continuous if for any
compact $K\subset \GG$ there exists a $C_K$ with
$$|\langle N(F) \mid N(G)\rangle| \leq C_K \|F \|_{L^p (\GG) } \|G\|_{L^q (\GG)}$$
for all $F,G\in C_c (\GG)$ with support contained in $K$ (see Cor.~\ref{continuity-N}).
\end{definition}
\begin{remark} (a)  Note that continuity of $N$ is exactly weak $(2,2)$-continuity.

(b) By standard interpolation theory, we can conclude that weak $(2,2)$ continuity together with weak $(1,\infty)$ continuity implies
weak $(p,q)$ continuity for all $p,q$ with $1\leq p,q\leq \infty $ and $1/p + 1/q = 1$.
\end{remark}

\begin{theorem} Let $(N, \cHH, T)$ be a stochastic process.  Then, the following assertions are equivalent:

\begin{itemize}
\item[(i)] $N$ has a second moment.

\item[(ii)] $N$ is weakly $(p,q)$-continuous for all $1\leq p,q\leq \infty$ with $1/p + 1/q = 1$.

\end{itemize}

In this case $N$ is continuous and  has the intertwining property  that
$$ N \circ L_G (F) = T_G N (F)$$
for all $F,G\in C_c (\GG)$.
\end{theorem}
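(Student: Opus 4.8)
The plan is to prove the two implications separately and then read off the final ``in this case'' assertions. The implication (i)$\Rightarrow$(ii) is already contained in Corollary~\ref{continuity-N}, which asserts exactly weak $(p,q)$-continuity for every conjugate pair $p,q$ once a second moment exists, so nothing new is needed there. The real content is (ii)$\Rightarrow$(i). By Proposition~\ref{def:gamma} it suffices to produce a measure $\gamma$ on $\GG$ with $\gamma(F\ast\tilde G)=\langle N(F)\mid N(G)\rangle$ for all $F,G\in C_c(\GG)$. Since a measure is precisely a linear functional on $C_c(\GG)$ that is bounded by the supremum norm on each $C_K(\GG)$, I would first define $\gamma_0$ on the subspace $V\subset C_c(\GG)$ spanned by the convolutions $F\ast\tilde G$ via $\gamma_0(F\ast\tilde G):=\langle N(F)\mid N(G)\rangle$, and then establish the crucial estimate that $\gamma_0$ factors through $H:=F\ast \tilde G$ with a local sup-norm bound.

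To set this up I would note that weak $(2,2)$-continuity is just continuity of $N$, so Lemma~\ref{intertwiner} (the intertwining $N\circ L_G=T_G\circ N$) is available. Fix an approximate identity $(u_\alpha)$ consisting of even, nonnegative functions with $\int u_\alpha\,dl_\GG=1$ and supports shrinking to $0$. The heart of the argument is the identity
\[
\lim_\alpha\,\langle N(u_\alpha)\mid N(\widetilde H)\rangle=\gamma_0(H)
\qquad(H\in V).
\]
To see this for a single $H=F\ast\tilde G$ one has $\widetilde H=\tilde F\ast G$, and using $N(\tilde F\ast G)=T_{\tilde F}N(G)$, the adjoint relation $T_{\tilde F}^{\ast}=T_{F}$, and then $T_{F}N(u_\alpha)=N(F\ast u_\alpha)$, the pairing collapses to $\langle N(F\ast u_\alpha)\mid N(G)\rangle$. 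Since $F\ast u_\alpha\to F$ uniformly with supports in a fixed compact set, continuity of $N$ gives $N(F\ast u_\alpha)\to N(F)$ in $\cHH$, so this tends to $\langle N(F)\mid N(G)\rangle$; summing over the terms of a general $H\in V$ yields the displayed limit.

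The payoff comes from weak $(1,\infty)$-continuity: for $\alpha$ small enough that all supports lie in a fixed compact $K$, $|\langle N(u_\alpha)\mid N(\widetilde H)\rangle|\le C_K\|u_\alpha\|_{L^1}\|\widetilde H\|_{\infty}=C_K\|H\|_{\infty}$, uniformly in $\alpha$ because $\|u_\alpha\|_{L^1}=1$. Passing to the limit in the previous paragraph gives $|\gamma_0(H)|\le C_K\|H\|_\infty$; in particular $\gamma_0(H)$ depends only on $H$ and not on the chosen representation of $H$ as a sum of convolutions, so $\gamma_0$ is well defined. Since $H_0\ast u_\alpha\in V$ for even $u_\alpha$ and converges to $H_0$ uniformly with controlled support, $V$ is dense in $C_c(\GG)$ in the inductive limit topology, and the local sup-norm bound lets me extend $\gamma_0$ to a Radon measure $\gamma$ with $\gamma(F\ast\tilde G)=\langle N(F)\mid N(G)\rangle$. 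Proposition~\ref{def:gamma} then furnishes the second moment, proving (i), while the closing assertions are immediate: continuity of $N$ is weak $(2,2)$-continuity, and the intertwining property is Lemma~\ref{intertwiner}. I expect the main obstacle to be exactly this factorization-plus-bound step: positive-definiteness of $\gamma_0$ (which holds for free, since $\gamma_0(F\ast\tilde F)=\|N(F)\|^2\ge 0$) guarantees only a positive-definite functional, which in general is a distribution rather than a measure, so it is the weak $(1,\infty)$ estimate, funnelled through the approximate-identity identity above, that does the essential work of upgrading it to a genuine measure.
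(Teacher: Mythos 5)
Your proof is correct, and it reaches the measure $\gamma$ by a genuinely different construction than the paper's, even though both arguments run on the same two ingredients: the weak $(1,\infty)$ estimate and the intertwining property of Lemma~\ref{intertwiner} funnelled through an approximate identity (and both dispose of (i)$\Rightarrow$(ii) and the closing assertions identically, via Corollary~\ref{continuity-N} and weak $(2,2)$-continuity). The paper fixes $G\in C_c(\GG)$ and dualizes in the $F$-variable: weak $(1,\infty)$-continuity exhibits the functional $F\mapsto\langle N(F)\mid N(G)\rangle$ as integration against some $H_G\in L^\infty$ (the would-be $\gamma\ast G$), the intertwining property gives $H_{G_1\ast G_2}=G_1\ast H_{G_2}$, which is continuous, a uniform approximate-unit limit (using $\|H_{D\ast G}-H_G\|_\infty\le C\|D\ast G-G\|_\infty$) upgrades every $H_G$ to a continuous function, and only then is $\gamma$ defined through $\gamma\ast G(x)=H_G(x)$. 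You instead define $\gamma_0$ directly on the span of convolutions and get the local sup-norm bound by placing the approximate identity in the first slot of the inner product; in the paper's notation your identity $\gamma_0(H)=\lim_\alpha\langle N(u_\alpha)\mid N(\widetilde H)\rangle$ amounts to evaluating $H_{\widetilde H}$ at the single point $0$, so you bypass the $L^\infty$-duality step and the continuity upgrade entirely, at the price of having to check well-definedness of $\gamma_0$ on the convolution span --- which your limit formula delivers for free, since its right-hand side depends only on $H$ and not on the representation. Your route is shorter and needs only the Riesz representation theorem at the end; the paper's buys the extra information that $\gamma\ast G$ is a genuine continuous function for every $G\in C_c(\GG)$. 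Your closing diagnosis is also exactly right: positive-definiteness of $\gamma_0$ comes for free but in general yields only a distribution, and it is precisely the weak $(1,\infty)$ bound --- used by you along the approximate identity, by the paper through $L^\infty=(L^1)^\ast$ --- that upgrades it to a measure.
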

\begin{proof} (i)$\Longrightarrow$ (ii): This follows immediately from Corollary \ref{continuity-N}.

\smallskip

(ii)$\Longrightarrow$ (i): As $N$  is weakly $(2,2)$ continuous, it is continuous and the intertwining property follows from the previous lemma.
In the remaining part of the proof we will only consider real valued functions. By Proposition \ref{def:gamma}, it suffices to show
existence of a measure $\gamma$ on $\GG$ with
$$\langle N(F) \mid N(G)\rangle = \gamma (F\ast \tilde G)$$
for all $F,G\in C_c (\GG)$. A short calculation shows that for such a $\gamma$ the equality
$$ \langle N(F) \mid N(G)\rangle = \int F(y) (\gamma \ast G) (y) d\haarG(y)$$
 must hold for all $F,G\in C_c (\GG)$. The idea is now to `reverse' this reasoning to conclude existence of $\gamma$. The main issue
is to show continuity of the object $H_G$ replacing the not yet defined $\gamma \ast G$. This is shown using the intertwining property. Here are the details:
We choose an arbitrary compact $K$ in $G$ and assume without loss of generality that $0\in K$. Let $U$ be an open neighborhood of $0$ with compact closure and set
$K_1 $ to be the closure of $K + U$. Set $L = K_1 - K_1 + K_1 - K_1$.

As $N$ is weak $(1,\infty)$ continuous, and $L^\infty (L)$ is the dual of $L^1 (L)$, we can find  to each $G\in C_c (\GG)$ with support in
 $K_1$ a function $H_G$ in $L^\infty (L)$ with
 $$ \langle N(F) \mid N(G)\rangle = \int F(y) H_G (y) d\haarG (y)$$
 for all $F\in C_c (\GG)$ with support in $L$ and
\begin{equation}\label{star2.9}  \|H_G\|_\infty \leq C \|G\|_\infty, \end{equation}
 where $C = C (L)$. As $N$ has the intertwining property, a short argument shows that
 $$H_{G_1\ast G_2} (x) = \int G_1 (y) H_{G_2} (x - y) d\haarG (y)$$
for $x\in K$ for all $G_2 \in C_c (\GG)$ with support contained in $K$ and all $G_1\in C_c (\GG)$ with support contained in $U$. This gives
in particular, that $H_{G_1\ast G_2}$ is a continuous function (on $K$) for all such $G_1, G_2$. Now
\eqref{star2.9} yields
$$\| (H_{D \ast G} - H_G)\|_\infty = \| H_{D\ast G- G} \|_\infty \leq C \| D \ast G - G\|_\infty.$$
Taking an approximate unit for $D$, we wee that $H_G$ can be approximated uniformly by continuous functions. This shows continuity of $H_G$. Moreover, by $\eqref{star2.9}$  again
the map
$$G \mapsto H_G (x)$$
is continuous for each $x$. Thus, we can indeed define the measure $\gamma$ with
$$\gamma \ast G (x) = H_G (x).$$
By construction $\gamma$ has the desired properties.
\end{proof}

\section{Almost periodicity}\label{Almost}
Almost periodicity is closely linked to nature of diffraction via the Fourier transform. This concept allows one to compare properties of the autocorrelation and the diffraction measure. In particular, it can be used to characterize   pure point and continuous diffraction. The considerations of this section play a role in subsequent parts of the paper: They are used in Section \ref{pc:split}  to decompose an arbitrary  processes into a part  with pure point diffraction and a part with continuous diffraction, and in Section \ref{Purepoint} to prove the equivalence of pure point diffraction spectrum and pure point spectrum.  The material of this section derives from \cite{GA} and, in a more accessible account \cite{SM}, and  from  \cite{LR,LS} as well. For further studies of aspects of  almost periodicity in our context we refer to \cite{Gouere,Sol,Lag}.

\bigskip

Almost periodicity is most commonly defined by a compactness condition: $f \in C_u(\GG)$ is almost periodic if the translation orbit  $\GG.f$ of $f$ is compact. Here $C_u(\GG)$
is the space of uniformly continuous $\CC$-valued functions on $\GG$. The key point,
however, is in which topology is the translation orbit to be compact? In the case of the
sup-norm topology the concept defines {\it strong almost periodicity}. In the case
that the topology is defined by the family of semi-norms induced by the set of
all continuous linear functionals on $C_u(\GG)$, it is called {\it weak almost periodicity}.
Strong almost periodicity coincides with H.~Bohr's original concept of almost periodicity:
$f\in C_u(\GG)$ is strongly periodic if and only if for every $\epsilon >0$ the set
of $t\in \GG$ for which $||T_t f - f||_{\sup} < \epsilon$ is relatively dense. It is harder
to get an intuitive feel for weak almost periodicity. Fortunately what one needs to know
about it is fairly straightforward:

\begin{itemize}
\item all positive definite functions on $\GG$ are weakly almost periodic;
\item for every weakly almost periodic function $f$ the {\it mean} of $f$ defined as
\[ \lim_{n\to\infty} \frac{1}{\haarG(A_n)} \int_{A_n} f(x+t) \dd\haarG(t) \,\]
exists for any van Hove (more generally F\o lner) sequence $\{A_n\}$ in $\GG$ and any $x\in \GG$ and
its value is independent of both $x$ and the choice of van Hove sequence.
\end{itemize}

One says that a weakly almost periodic function $f$ is {\it null weakly almost periodic} if its mean is $0$. These concepts lift to measures in a simple way: a Borel measure  $\phi$ on $\GG$ is called {\it strongly} (resp. {\it weakly}, {\it null weakly}) {\it almost periodic} if $f*\phi$ is a strongly (resp. weakly, null weakly) almost periodic function
for every $f\in C_c(\GG)$.

\smallskip

Positive definite measures on $\GG$ turn out to be
weakly almost periodic and are also, as we have noted already, Fourier transformable.

\begin{theorem} \label{thm-GA} {\rm (Gil de Lamadrid, Argabright \cite{GA})}
Every weakly almost periodic measure $\phi$ is uniquely expressible as the sum of a strongly
almost periodic measure and null weakly almost periodic measure. If the measure $\phi$ is
Fourier transformable then so too are the strong and null weak components, and
furthermore the Fourier transforms of these components are the pure point and continuous parts of $\widehat\phi$.  \qed
\end{theorem}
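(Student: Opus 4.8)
The plan is to prove the decomposition first at the level of functions, then transport it to measures by convolution, and finally read off the Fourier transforms. Throughout I would use the unique invariant mean $M$ on the space $\mathrm{WAP}(\GG)$ of weakly almost periodic functions (its existence being guaranteed by the van Hove averaging recalled above) together with the fact, via Grothendieck's double-limit criterion, that $\mathrm{WAP}(\GG)$ is a translation-invariant closed subalgebra of $C_u(\GG)$ on which the translations $\{T_t\}$ act by operators with relatively weakly compact orbits.

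First I would establish the function-level splitting $\mathrm{WAP}(\GG) = \mathrm{SAP}(\GG) \oplus \mathrm{WAP}_0(\GG)$. The cleanest route is the de Leeuw--Glicksberg theory of the enveloping semigroup: let $\Sigma$ be the closure of $\{T_t : t \in \GG\}$ in the weak operator topology on $\mathrm{WAP}(\GG)$. Weak almost periodicity makes $\Sigma$ a compact \emph{semitopological} semigroup; by Ellis's theorem its unique minimal two-sided ideal is a compact topological group whose identity is an idempotent $e \in \Sigma$. The operator $P := e$ is then a projection, and one checks that $P\,\mathrm{WAP}(\GG)$ consists exactly of the reversible vectors, i.e.\ the strongly almost periodic functions, while $(1-P)\,\mathrm{WAP}(\GG)$ is the flight ideal $\mathrm{WAP}_0(\GG)$. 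Concretely $Pf$ is the $M$-orthogonal projection onto the closed span of the characters, so its Fourier--Bohr coefficients $a_\chi(f) = M(f\,\overline{\chi})$ match those of $f$; the complementary part $f_0 = f - Pf$ then has all Fourier--Bohr coefficients $0$, which is the null condition, and this yields uniqueness since a strongly almost periodic function whose mean against every character vanishes is $0$.

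Next I would lift this to measures. Given a weakly almost periodic measure $\phi$, the map $C_c(\GG) \to \mathrm{WAP}(\GG)$, $f \mapsto (f * \phi)_s$, is linear, translation-equivariant, and continuous; an argument of the same flavour as the reconstruction of $\gamma$ in Proposition~\ref{def:gamma} shows it is given by convolution with a uniquely determined measure $\phi_s$, and one sets $\phi_0 := \phi - \phi_s$. By construction $f * \phi_s$ is strongly almost periodic and $f * \phi_0 = (f*\phi)_0$ is null for every $f$, so $\phi_s$ is strongly and $\phi_0$ null weakly almost periodic; uniqueness descends from the function-level uniqueness together with the density of the functions $f * \phi$.

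Finally, assuming $\phi$ transformable, I would identify the transforms. The bridge is the relation between atoms of $\widehat{\phi}$ and Fourier--Bohr coefficients: for each $\chi \in \Gdual$ one has $\widehat{f*\phi}(\{\chi\}) = \widehat{f}(\chi)\,\widehat{\phi}(\{\chi\})$, and the left side is the $\chi$-th Fourier--Bohr coefficient of $f*\phi$, computed by $M$. Hence $\phi_s$, carrying exactly the Fourier--Bohr data, is transformable with $\widehat{\phi_s} = (\widehat{\phi})_{pp}$, while $\phi_0$, with all atoms removed, is transformable with $\widehat{\phi_0} = (\widehat{\phi})_{cont}$, using that the pure point and continuous parts of a transformable measure are separately transformable. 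I expect the main obstacle to be the interchange of topologies in the middle step: proving that the weak-operator projection $P$ actually lands in the \emph{norm}-compact-orbit (strongly almost periodic) functions and, dually, that the flight part contributes no atoms to the transform. This is exactly where Ellis's theorem and the double-limit criterion do the real work, and where one must take care that $\GG$ need not be compactly generated.
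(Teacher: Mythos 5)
First, a point of orientation: the paper itself contains no proof of this theorem --- it is imported from Gil de Lamadrid and Argabright \cite{GA} and stated without proof. So your proposal cannot be measured against an in-paper argument; what it does is reconstruct, correctly in outline, the proof strategy of the cited source: the Eberlein/de Leeuw--Glicksberg splitting $\mathrm{WAP}(\GG)=\mathrm{SAP}(\GG)\oplus\mathrm{WAP}_0(\GG)$ at the function level, the lift to measures via $f\mapsto (f\ast\phi)_s$, and the identification of the transforms through Fourier--Bohr coefficients. Your lifting step is sound but glossed: the Riesz-type bound you need is that the de Leeuw--Glicksberg projection $P$ is a sup-norm contraction (because $Pf$ lies in the norm-closed convex hull of the translates of $f$), combined with translation boundedness of weakly almost periodic measures; given that, the reconstruction argument in the style of Proposition~\ref{def:gamma} indeed produces the measure $\phi_s$, and uniqueness descends by an approximate-identity argument from the function-level uniqueness.

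There are two soft spots, one serious. The serious one is your final step, ``using that the pure point and continuous parts of a transformable measure are separately transformable'': for the direction you need --- that $(\widehat\phi)_{pp}$ is itself the transform of a measure, and of $\phi_s$ in particular --- this is essentially the statement being proven, so as written the last paragraph is circular. The standard repair routes through \emph{finite} measures: for $f\in C_c(\GG)$, transformability makes $|\widehat f|^2\,\widehat\phi$ a finite measure whose inverse transform is $f\ast\widetilde f\ast\phi$; for a finite measure $\sigma$ the atoms are absolutely summable, so the inverse transform of $\sigma_{pp}$ is an absolutely convergent character sum, hence strongly almost periodic, while Wiener's theorem, $M(|h|^2)=\sum_\chi|\sigma(\{\chi\})|^2$, shows the inverse transform of $\sigma_c$ is null (via Cauchy--Schwarz for the mean). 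Matching this with $f\ast\widetilde f\ast\phi_s+f\ast\widetilde f\ast\phi_0$ and evaluating at $0$ gives transformability of $\phi_s$ with $\widehat{\phi_s}=(\widehat\phi)_{pp}$, and $\phi_0$ follows by subtraction; only then does your ``no atoms versus all atoms'' bookkeeping close the argument. The minor soft spot is definitional: the paper defines ``null weakly almost periodic'' by $M(f)=0$, under which the uniqueness assertion literally fails (a nontrivial character, or the measure $\chi\,\dd l_\GG$, is simultaneously strongly almost periodic and mean-zero); the correct condition, which your vanishing-of-all-Fourier--Bohr-coefficients characterization silently and correctly encodes, is $M(|f|)=0$, as in \cite{GA}. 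With these two repairs your sketch is a faithful, correct reconstruction of the source's proof.
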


These considerations can be applied to strongly continuous unitary representations and hence to processes as well. This is discussed next.  The crucial connection is given by the following (well known) lemma.

\begin{lemma} \label{Fft-is-wap}  Let $T$ be a strongly continuous unitary representation of $\GG$ on $\cHH$ then for any $f \in \cHH$, the function $t\mapsto \langle f \mid T_t f\rangle =:F_f (t) $ is positive definite and hence weakly almost periodic and Fourier transformable. Its Fourier transform is the spectral measure $\rho^{}_f$.
 \end{lemma}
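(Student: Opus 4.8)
The plan is to verify each claimed property of $F_f(t) = \langle f \mid T_t f\rangle$ in turn, since the lemma bundles together three assertions: positive definiteness, weak almost periodicity (with Fourier transformability), and the identification of the Fourier transform with the spectral measure $\rho_f$. First I would establish positive definiteness directly from the defining property of a strongly continuous unitary representation. Recall that a continuous function $F$ on $\GG$ is positive definite if for every finite collection $t_1,\dots,t_n \in \GG$ and complex scalars $c_1,\dots,c_n$ one has $\sum_{i,j} c_i \overline{c_j} F(t_i - t_j) \geq 0$. Using $T_{t_i - t_j} = T_{t_i} T_{t_j}^{-1} = T_{t_i} T_{-t_j}$ together with unitarity, I would compute
\[
\sum_{i,j} c_i \overline{c_j} \langle f \mid T_{t_i - t_j} f\rangle = \sum_{i,j} c_i \overline{c_j} \langle T_{-t_i} f \mid T_{-t_j} f\rangle = \Big\langle \sum_i c_i T_{-t_i} f \;\Big|\; \sum_j c_j T_{-t_j} f\Big\rangle = \Big\| \sum_i c_i T_{-t_i} f\Big\|^2 \geq 0,
\]
where the rearrangement uses that $T$ is a unitary representation so that $\langle T_r g \mid T_r h\rangle = \langle g \mid h\rangle$ and $T_{t_i - t_j} = T_{-t_j} T_{-t_i}^{-1}$ in the appropriate sense; I would be careful to fix the algebra so that the shifts land correctly. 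Continuity of $F_f$ follows from strong continuity of $t \mapsto T_t f$ and continuity of the inner product.

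Next I would invoke the structural facts already recorded in this section. The excerpt states plainly that all positive definite functions on $\GG$ are weakly almost periodic, and that positive definite measures (and in particular functions) are Fourier transformable; so once positive definiteness of $F_f$ is in hand, weak almost periodicity and Fourier transformability come for free by citing those bullet points. This disposes of the middle claim with essentially no further work.

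The substantive remaining step is the identification of $\widehat{F_f}$ with the spectral measure $\rho_f$. For this I would appeal to Stone's theorem as stated in the excerpt, equation~\eqref{spectralmeasure}, which gives
\[
\langle f \mid T_t f\rangle = \int_{\Ghat} (\xi,t) \dd \rho_f(\xi)
\]
for the projection-valued measure $E_T$ and $\rho_f(B) = \langle f \mid E_T(B) f\rangle$. This exhibits $F_f$ as exactly the (inverse) Fourier transform of the positive measure $\rho_f$. I would then match this against the paper's convention for the Fourier transform of a measure to conclude that $\rho_f$ is the Fourier transform of the weakly almost periodic measure $F_f \, l_\GG$, i.e.\ that $\rho_f = \widehat{F_f}$ in the sense used here.

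The main obstacle I anticipate is purely one of bookkeeping with conventions rather than genuine difficulty: the paper fixes a specific sign convention for characters in the Fourier transform (the $\widehat{F}(\gamma) = \int_\GG \overline{(\gamma,t)} F(t) \dd l_\GG(t)$ convention) and for the eigenvalue relation (using the conjugate character $\overline{(\xi,t)}$), and Stone's theorem is quoted with $(\xi,t)$ rather than its conjugate. I would need to check that these conventions are mutually consistent so that $\rho_f$ really is $\widehat{F_f}$ and not $\widehat{\widetilde{F_f}}$ or a reflection thereof; this amounts to confirming that the transform of the measure agrees with the spectral measure on the nose rather than up to the central symmetry $\xi \mapsto -\xi$. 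Since $F_f$ is positive definite its transform is a positive measure in either convention, so the only real care needed is to ensure the mass is placed at the correct points of $\Ghat$, which is a direct comparison of the two displayed integral formulas.
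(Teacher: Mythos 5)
Your proof is correct and takes essentially the same route as the paper: the identical unitarity computation $\sum_{i,j} c_i\overline{c_j}\,F_f(t_i-t_j)=\bigl\|\sum_i c_i T_{-t_i}f\bigr\|^2\ge 0$ for positive definiteness, then citing the section's stated facts that positive definite functions are weakly almost periodic and transformable, and finally reading off $\widehat{F_f}=\rho_f$ from Stone's theorem \eqref{spectralmeasure}. If anything you are more careful than the paper, whose displayed formula contains a typo (it writes $F_f(t_l)$ where $T_{-t_l}f$ is meant inside the norm) and which dismisses the remaining claims with ``follow from the above discussion,'' silently passing over the convention/reflection check that you rightly flag as the only delicate point.
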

\begin{proof} As $T$ is unitary, we have  for any $t_1,\ldots, t_n\in \GG$ and $c_1,\ldots, c_n\in \CC$
$$ \sum_{k,l=1}^n c_k \overline{c_l} F_f (t_k - t_l) = \|\sum_{l=1}^n c_l F_f (t_l)\|^2 \geq 0$$
and $F_f$ is shown to be positive definite. The remaining statements follow from the above discussion.
\end{proof}

Combining the previous lemma and the previous theorem, we infer the following corollary.

\begin{coro} Let $T$ be a strongly continuous unitary representation of $\GG$ on $\cHH$ then for any $f \in \cHH$, the function $t\mapsto \langle f \mid T_t f\rangle$ is the sum of a strongly almost periodic function and a null weakly almost periodic function. This strongly almost periodic function is given by
the (inverse) Fourier transform of the pure point part of the spectral measure $\rho^{}_f$ and the null weakly almost periodic function is given by the (inverse) Fourier transform of the continuous part of the spectral measure $\rho^{}_f$.
\end{coro}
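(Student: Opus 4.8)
The plan is to read the asserted decomposition essentially off Theorem~\ref{thm-GA} via the identification supplied by Lemma~\ref{Fft-is-wap}, adding the one quantitative input that makes everything finite. First I would invoke Lemma~\ref{Fft-is-wap} to record that $F_f(t) := \langle f \mid T_t f\rangle$ is positive definite, hence weakly almost periodic and Fourier transformable, with Fourier transform the spectral measure $\rho_f$. The extra observation I need at the outset is that $\rho_f$ is a \emph{finite} positive measure: since $\rho_f(B) = \langle f \mid E_T(B) f\rangle$, we get $\rho_f(\Gdual) = \langle f \mid f\rangle = \|f\|^2 < \infty$. This finiteness is what will make the atomic part of the transform yield a genuinely strongly almost periodic function.

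Next I would split the spectral measure into its atomic and continuous parts, $\rho_f = \rho_f^{\mathrm{pp}} + \rho_f^{\mathrm{c}}$, and take inverse Fourier transforms, writing $F_f = F_s + F_0$, where $F_s$ is the inverse Fourier transform of $\rho_f^{\mathrm{pp}}$ and $F_0$ that of $\rho_f^{\mathrm{c}}$; by linearity and uniqueness of the Fourier transform this sum reproduces $F_f$. To see that $F_s$ is strongly almost periodic I would exploit finiteness: the atomic part is $\rho_f^{\mathrm{pp}} = \sum_{k} \rho_f(\{k\})\,\delta_k$ with $\sum_k \rho_f(\{k\}) \le \|f\|^2 < \infty$, so $F_s(t) = \sum_k \rho_f(\{k\})\,(k,t)$ converges uniformly (Weierstrass $M$-test, using $|(k,t)| = 1$) and is therefore a uniform limit of trigonometric polynomials, i.e. a Bohr, equivalently strongly, almost periodic function.

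For $F_0$ I would argue that it is weakly almost periodic, being the difference $F_f - F_s$ of two weakly almost periodic functions, and that its Fourier transform $\rho_f^{\mathrm{c}}$ is by construction continuous, hence carries no atoms; in particular $\rho_f^{\mathrm{c}}(\{0\}) = 0$, so averaging $F_0(t) = \int (\xi,t)\,\dd\rho_f^{\mathrm{c}}(\xi)$ over a van Hove sequence and using that the mean of a character $(\xi,\cdot)$ is $1$ for $\xi = 0$ and $0$ otherwise shows the mean of $F_0$ vanishes, so $F_0$ is null weakly almost periodic. Finally, the uniqueness clause of Theorem~\ref{thm-GA} identifies $F_s$ and $F_0$ as precisely the strongly and null weakly almost periodic components of $F_f$, whose Fourier transforms are, respectively, the pure point and continuous parts of $\rho_f$, which is the assertion.

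The step I expect to require the most care is the passage between the function-level statement of the corollary and the measure-level formulation of Theorem~\ref{thm-GA}: one must either apply the theorem to the weakly almost periodic measure $F_f\,\haarG$ (whose Fourier transform is $\rho_f$) and then check that its strong and null weak components are again of the form $(\text{function})\cdot\haarG$, or invoke the function version of the Argabright--Gil de Lamadrid decomposition directly. Closely tied to this is the point that the null weak component is pinned down not merely by the vanishing of its mean --- a strongly almost periodic function such as a single nonconstant character also has mean zero --- but by the \emph{continuity} of its spectral measure; it is the Fourier identification in Theorem~\ref{thm-GA}, and not the mean-zero condition in isolation, that makes the decomposition $F_f = F_s + F_0$ unique and forces $F_s, F_0$ to be the inverse transforms of $\rho_f^{\mathrm{pp}}$ and $\rho_f^{\mathrm{c}}$.
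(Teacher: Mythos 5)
Your proof is correct, but it takes a genuinely more constructive route than the paper, whose entire proof is the single sentence ``combining the previous lemma and the previous theorem'': once Lemma~\ref{Fft-is-wap} shows $t\mapsto\langle f\mid T_t f\rangle$ is positive definite, hence weakly almost periodic and Fourier transformable with transform $\rho_f$, the paper simply quotes the Gil de Lamadrid--Argabright decomposition of Theorem~\ref{thm-GA}, whose uniqueness and Fourier-identification clauses deliver both components at once. You instead build the decomposition by hand: splitting $\rho_f=\rho_f^{\mathrm{pp}}+\rho_f^{\mathrm{c}}$, exploiting the finiteness $\rho_f(\Gdual)=\|f\|^2$ --- a feature of spectral measures not shared by the general transformable measures to which Theorem~\ref{thm-GA} applies, and precisely why your shortcut works here --- to get uniform convergence of the atomic series $\sum_k\rho_f(\{k\})\,(k,t)$ and hence Bohr almost periodicity, and verifying the vanishing mean of the continuous part by F\o lner averaging (where the interchange of mean and integral is dominated convergence over the finite measure $\rho_f^{\mathrm{c}}$, and the vanishing mean of a nontrivial character uses the F\o lner property). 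What your approach buys is self-containedness: since the corollary asserts the \emph{existence} of a decomposition with the stated Fourier identification, and your construction produces exactly that, you never actually need the uniqueness clause of Theorem~\ref{thm-GA} --- so the measure-versus-function passage you worry about in your last paragraph can be sidestepped entirely rather than carefully negotiated. One caveat: with the paper's stated convention that null weak almost periodicity means vanishing mean, your mean computation suffices; under the stricter Gil de Lamadrid--Argabright convention $M(|F_0|)=0$ it would not quite, but since $F_0$ is itself positive definite (being the transform of the positive measure $\rho_f^{\mathrm{c}}$), Wiener's lemma $M(|F_0|^2)=\sum_\xi \rho_f^{\mathrm{c}}(\{\xi\})^2=0$ closes that gap immediately. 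The paper's route, by contrast, buys brevity and uniformity with the framework used throughout Section~\ref{Almost}.
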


Of particular relevance  is the question whether the spectral measures are pure point. Here, the following holds as shown in \cite{LS}, Lemma 2.1.

\begin{lemma}\label{characterizationpp}
Let $T$ be a strongly continuous unitary representation of $\GG$ on $\cHH$.
 Then, the following assertions are equivalent for $f\in \cHH$:

\begin{itemize}

\item[(i)] The map $G\longrightarrow \cHH$, $t\mapsto T_t f$, is almost periodic in the sense that for any $\varepsilon >0$ the set $\{t \in \GG : \|T_t f - f \|\leq \varepsilon\}$ is relatively dense in $\GG$.

\item[(ii)] The hull $\{T_t f: t\in \GG\}$ is relatively compact.

\item[(iii)] The  map $G\longrightarrow \CC$, $t\mapsto \langle f \mid T_t f \rangle $, is strongly almost periodic.

\item[(iv)] $\rho_f$ is a pure point measure.

\item[(v)]  $f$ belongs to $\cHH_p$.
\end{itemize}
\end{lemma}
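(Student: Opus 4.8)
The plan is to prove the equivalence of the five conditions $(i)$ through $(v)$ in Lemma \ref{characterizationpp} by establishing a cycle of implications together with a few direct equivalences, leaning on Lemma \ref{Fft-is-wap} to translate between the spectral measure $\rho_f$ and the positive-definite function $F_f(t) = \langle f \mid T_t f\rangle$. The cleanest route is to show $(i) \Leftrightarrow (ii)$ as a standard metric-space fact, then $(iii) \Leftrightarrow (iv)$ via Fourier analysis, then $(iv) \Leftrightarrow (v)$ through the spectral calculus, and finally to close the loop by connecting the orbit condition $(ii)$ with the spectral condition $(iv)$ or $(v)$.

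First I would dispose of $(i) \Leftrightarrow (ii)$. The implication $(i) \Rightarrow (ii)$ is the observation that relative density of the $\varepsilon$-almost-periods means the orbit can be covered by finitely many $\varepsilon$-balls centered at translates over a compact set, giving total boundedness, hence relative compactness in the complete space $\cHH$. The converse $(ii) \Rightarrow (i)$ follows because a relatively compact orbit is totally bounded, and the continuity of $t \mapsto T_t f$ together with the cocycle identity $T_t f - T_s f = T_s(T_{t-s} f - f)$ (using unitarity of $T_s$) upgrades a finite $\varepsilon$-net into relative density of almost-periods. Next, for $(iii) \Leftrightarrow (iv)$, I would invoke Lemma \ref{Fft-is-wap}: $F_f$ is weakly almost periodic with Fourier transform $\rho_f$. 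By Theorem \ref{thm-GA}, $F_f$ is strongly almost periodic exactly when its null weak component vanishes, which happens exactly when the continuous part of $\rho_f$ is zero, i.e. when $\rho_f$ is pure point. The equivalence $(iv) \Leftrightarrow (v)$ is handled by the spectral theorem: $f \in \cHH_p$ means $f$ lies in the closed span of eigenfunctions, which correspond under the projection-valued measure $E_T$ to the atoms of $\Gdual$; one checks that $\rho_f(B) = \langle f \mid E_T(B) f\rangle$ is supported on atoms precisely when $f$ is orthogonal to the continuous subspace of $E_T$, i.e. when $f \in \cHH_p$.

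The main obstacle, and the only genuinely non-formal step, is closing the cycle by linking the orbit/almost-periodicity conditions $(i),(ii),(iii)$ to the spectral conditions $(iv),(v)$. The natural bridge is $(ii) \Leftrightarrow (iii)$ or $(ii) \Leftrightarrow (v)$. For $(v) \Rightarrow (ii)$ I would use that if $f = \sum_k c_k e_k$ is an $\cHH$-convergent sum over an orthonormal eigenbasis, then $T_t f = \sum_k c_k \overline{\langle \xi_k, t\rangle} e_k$ stays within a fixed compact set, because the map $t \mapsto (\overline{\langle \xi_k, t\rangle})_k$ lands in a product of circles and one controls the tail uniformly by $\sum_{k > K} |c_k|^2$; this is the Bohr-compactification argument showing almost periodicity of such generalized trigonometric series. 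For the reverse, $(iii) \Rightarrow (v)$ follows by combining $(iii) \Rightarrow (iv)$ already established with $(iv) \Rightarrow (v)$. The delicate point throughout is that these are statements at the level of a single vector $f$ rather than the whole representation, so I must be careful that the spectral measure $\rho_f$ faithfully records the pure-pointedness of $f$; fortunately the cyclic subspace generated by $f$ is unitarily equivalent to $L^2(\Gdual, \rho_f)$ with $T$ acting by multiplication by $\overline{\langle \cdot, t\rangle}$, which reduces the whole lemma to the transparent multiplication-operator model and makes each implication a direct computation. Since this lemma is cited from \cite{LS}, Lemma 2.1, I would in fact simply record that the equivalences follow from the preceding discussion, Lemma \ref{Fft-is-wap}, and Theorem \ref{thm-GA}, and refer the reader there for the details of the orbit-compactness step.
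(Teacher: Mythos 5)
The paper itself gives no proof of Lemma~\ref{characterizationpp}: it states the result ``as shown in \cite{LS}, Lemma 2.1'', so your closing remark --- record that it follows from Lemma~\ref{Fft-is-wap}, Theorem~\ref{thm-GA} and the discussion, and refer to \cite{LS} --- coincides exactly with the paper's actual treatment. Judged as a standalone argument, most of your sketch is sound: (i)$\Leftrightarrow$(ii) via total boundedness and the identity $T_tf-T_sf=T_s(T_{t-s}f-f)$; (iii)$\Leftrightarrow$(iv) via Lemma~\ref{Fft-is-wap} and Theorem~\ref{thm-GA} (here one also needs injectivity of the Fourier transform on the relevant class, which is harmless since $\rho_f$ is a finite positive measure and the null weakly almost periodic component is its inverse transform, vanishing iff $\rho_f^{\,c}(\Gdual)=0$); (iv)$\Leftrightarrow$(v) via the cyclic-subspace model $L^2(\Gdual,\rho_f)$, in which eigenvectors are precisely multiples of indicators of atoms; and (v)$\Rightarrow$(ii) via the tail estimate on $f=\sum_k c_k e_k$. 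All of these steps are correct as described.

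There is, however, a genuine gap in how you assemble them. The implications you actually establish are $(i)\Leftrightarrow(ii)$, $(iii)\Leftrightarrow(iv)\Leftrightarrow(v)$, and $(v)\Rightarrow(ii)$ --- and nothing more. You offer ``(iii)$\Rightarrow$(v)'' as closing the cycle, but that implication is already internal to the spectral block $\{(iii),(iv),(v)\}$ and carries no bridge from the orbit block $\{(i),(ii)\}$ into it; you appear to be tacitly grouping (iii) with (i) and (ii) without ever proving $(i)\Rightarrow(iii)$ or $(ii)\Rightarrow(iii)$. As written, the conditions split into two blocks connected only in the direction spectral $\Rightarrow$ orbit, so the full equivalence is unproven. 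The repair is a one-line Cauchy--Schwarz estimate: for all $s,t\in\GG$,
\[
\bigl| \langle f \mid T_{s+t} f\rangle - \langle f \mid T_s f\rangle \bigr|
= \bigl| \langle f \mid T_s (T_t f - f)\rangle \bigr|
\le \|f\| \, \| T_t f - f\| \, ,
\]
uniformly in $s$, so every $\varepsilon$-almost-period of $t\mapsto T_tf$ is a $\|f\|\varepsilon$-almost-period of $t\mapsto\langle f\mid T_tf\rangle$ in the Bohr sense; this gives $(i)\Rightarrow(iii)$, and then $(i)\Rightarrow(iii)\Rightarrow(iv)\Rightarrow(v)\Rightarrow(ii)\Rightarrow(i)$ closes the cycle. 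With this single addition your argument is complete.
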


The preceding equivalence allows one to show some stability properties of $\cHH_p$. This is discussed next (see \cite{LS} as well for a proof of (a) of the following theorem).
There, we say that a \textit{measure $\mu$ is supported on a measurable set $S$}  if $\mu$ of the complement of $S$ is zero.

\begin{theorem}\label{stabilitypp}
Let $T$ be a strongly continuous unitary representation of $\GG$ on
$\cHH$.

(a)  Let $C : \cHH \longrightarrow \cHH$ be continuous with $T_t C f =
C T_t f$ for each $t\in G$ and $f\in \cHH$.  Then, $C$ maps $\cHH_p$ into
$\cHH_p$. If $f$ belongs to $\cHH_p$ and $\rho_f$ is supported on the subgroup $S$
of $\Gdual$, then so is $\rho_{Cf}$.

(b)  Let $M : \cHH\times \cHH\longrightarrow \cHH$ be continuous with $T_t M (f,g) = M (T_t f, T_t g)$. Then, $M $ maps $\cHH_p \times \cHH_p$ into $\cHH_p$.
\end{theorem}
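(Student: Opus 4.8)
The plan is to run everything through the compactness criterion of Lemma~\ref{characterizationpp}, namely the equivalence of $f\in\cHH_p$ with relative compactness of the hull $\{T_t f : t\in\GG\}$ (conditions (v) and (ii)). For the first assertion of (a), suppose $f\in\cHH_p$, so that $\overline{\{T_t f:t\in\GG\}}$ is compact. Since $C$ is continuous, its image under $C$ is again compact and contains the hull $\{T_t(Cf):t\in\GG\}=\{C(T_t f):t\in\GG\}$ of $Cf$ (here we use $T_tC=CT_t$). A subset of a compact set in a metric space is relatively compact, so the hull of $Cf$ is relatively compact and $Cf\in\cHH_p$ by Lemma~\ref{characterizationpp}. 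Part (b) is identical in spirit: if $f,g\in\cHH_p$ then $\overline{\{T_tf:t\in\GG\}}\times\overline{\{T_tg:t\in\GG\}}$ is compact, hence so is its image under the continuous map $M$, and this image contains $\{M(T_tf,T_tg):t\in\GG\}=\{T_tM(f,g):t\in\GG\}$; thus the hull of $M(f,g)$ is relatively compact and $M(f,g)\in\cHH_p$.

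The only substantial point is the support refinement in (a): if in addition $\rho_f$ is supported on the subgroup $S\subseteq\Gdual$, then so is $\rho_{Cf}$. I would encode the support condition as a factorization through a compact group. Let $S_d$ denote $S$ equipped with the discrete topology, let $K:=\widehat{S_d}$ be its (compact) dual group, and let $\kappa:\GG\to K$ be the continuous homomorphism with dense range given by $\kappa(t)(\xi)=(\xi,t)$ for $\xi\in S$. Writing $f$ in an orthogonal eigenexpansion $f=\sum_{\xi\in S}c_\xi e_\xi$ (possible since $\rho_f$ is pure point and supported on $S$), we get $T_tf=\sum_{\xi\in S}c_\xi\,\overline{(\xi,t)}\,e_\xi$, which depends on $t$ only through $\kappa(t)$. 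Splitting off an $\ell^2$-small tail of the expansion shows that $t\mapsto T_tf$ is uniformly continuous in $\kappa(t)$, so it factors as $T_tf=\Phi(\kappa(t))$ for a continuous map $\Phi:K\to\cHH$.

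Composing with $C$ and using $T_tC=CT_t$ gives $T_t(Cf)=C(T_tf)=\Psi(\kappa(t))$ with $\Psi:=C\circ\Phi:K\to\cHH$ continuous, so the hull map of $Cf$ again factors through $K$. To finish I would compute eigenprojections by means: for $\eta\in\Gdual$ the component of $Cf$ at eigenvalue $\eta$ is $P_\eta(Cf)=\lim_n\haarG(A_n)^{-1}\int_{A_n}(\eta,t)\,T_t(Cf)\dd\haarG(t)$. Testing against an arbitrary $v\in\cHH$ and writing $\psi_v:=\langle v\mid\Psi(\cdot)\rangle\in C(K)$, this becomes the mean over $\GG$ of $t\mapsto(\eta,t)\,\psi_v(\kappa(t))$. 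By Pontryagin duality the characters of $K$ are exactly the elements of $S$, and the van Hove mean of a character $(\mu,t)$ on $\GG$ is $1$ if $\mu=0$ and $0$ otherwise; since this mean agrees with Haar integration over $K$ on all characters, it agrees on all of $C(K)$. Expanding $\psi_v$ in characters $(\sigma,\cdot)$ with $\sigma\in S$ and using orthogonality then forces the mean to vanish whenever $\eta\notin S$ (as $\eta+\sigma\neq 0$ for every $\sigma\in S$). Hence $P_\eta(Cf)=0$ for $\eta\notin S$, and since $Cf\in\cHH_p$ this says precisely that $\rho_{Cf}$ is supported on $S$.

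The main obstacle is entirely in this last refinement; the two $\cHH_p$-preservation statements are immediate from the fact that continuous images of compact sets are compact. Within the refinement the delicate points are the uniform continuity that yields the continuous factorization $\Phi$ through $K$ (controlled by discarding an $\ell^2$-small tail of the eigenexpansion) and the identification of the $\GG$-mean with Haar integration over $K$ along the dense homomorphism $\kappa$ (reduced to agreement on characters together with Stone--Weierstrass and continuity of the mean). I would also remark that when $C$ happens to be linear and bounded there is a shorter route: $C$ then lies in the commutant of $\{T_t\}$ while the spectral projections $E_T(\cdot)$ lie in the von Neumann algebra they generate, so $C$ commutes with $E_T(\cdot)$ and $E_T(\Gdual\setminus S)\,Cf=C\,E_T(\Gdual\setminus S)f=0$, giving the claim at once.
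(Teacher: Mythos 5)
Your proof of the two $\cHH_p$-preservation statements is exactly the paper's argument: both rest on the equivalence (ii)$\Leftrightarrow$(v) of Lemma~\ref{characterizationpp}, the intertwining relations, and the fact that continuous images of (relatively) compact sets are relatively compact. The real difference lies in the support refinement of (a): the paper does not prove it at all, but simply cites \cite{LS}, whereas you supply a complete, self-contained argument — factoring the orbit map through the compact group $K=\widehat{S_d}$ via an $\ell^2$-tail estimate, composing with $C$, and then killing the component $E_T(\{\eta\})Cf$ for $\eta\notin S$ by averaging characters, using that $\eta+\sigma\neq 0$ for all $\sigma\in S$ together with the vanishing of van Hove means of nontrivial characters and Stone--Weierstrass on $C(K)$. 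This is correct, and it is in the spirit of the argument the paper outsources. Two features are worth making explicit. First, your mean-ergodic identification $P_\eta(Cf)=\lim_n \haarG(A_n)^{-1}\int_{A_n}(\eta,t)\,T_t(Cf)\dd\haarG(t)$ is most cleanly justified by invoking the already-proved first part of (a): since $Cf\in\cHH_p$ one can split off a finite part of its eigenexpansion and control the tail uniformly in $t$, avoiding any appeal to a general mean ergodic theorem; also, van Hove sequences exist only for $\sigma$-compact $\GG$ (see \S\ref{notation}), so in full generality one should average over a F{\o}lner net, which changes nothing in the computation. Second, your main argument uses only continuity of $C$, not linearity — this matters, because the intended application in Corollary~\ref{coro-stabilitypp} is to the nonlinear map $f\mapsto\phi\circ f$, for which your von Neumann algebra shortcut ($C$ in the commutant of $\{T_t\}$, hence commuting with $E_T(\cdot)$) is unavailable; that shortcut is nevertheless a correct and genuinely faster route in the linear bounded case.
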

\begin{proof} The statements on $C$ and $M$ are   clear from the equivalence of (v) and (ii) in the Lemma~\ref{characterizationpp} and the continuity and intertwining property of $C$ and $M$. The statement on the support in (a) is proven in \cite{LS}.
\end{proof}

\medskip

We now further restrict attention to processes with second moment. Then,  Theorem \ref{thm-GA}  can directly be applied to give the following result (which is well known from  \cite{BM,LS,MS, Gouere}).

\begin{prop} A stationary process $(N,\cHH,T)$ with second moment is pure point diffractive if and only if its autocorrelation is strongly almost periodic. It has continuous diffraction if and only if the autocorrelation is null weakly almost periodic. \qed
\end{prop}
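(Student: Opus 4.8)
The plan is to reduce everything to the Gil de Lamadrid--Argabright decomposition (Theorem~\ref{thm-GA}) applied to the autocorrelation measure $\gamma$. The key observation is that $\gamma$ is a positive definite measure, hence weakly almost periodic, and its Fourier transform is precisely the diffraction measure $\difm$. So the desired statement amounts to translating the dichotomy ``$\difm$ pure point / continuous'' into ``$\gamma$ strongly almost periodic / null weakly almost periodic'' via the correspondence between a transformable measure and its transform under the Fourier transform.

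First I would recall that $\gamma$ is positive definite and backward transformable with $\widehat{\gamma} = \difm$, so by the discussion preceding Theorem~\ref{thm-GA}, $\gamma$ is a weakly almost periodic measure. Applying Theorem~\ref{thm-GA}, I would write $\gamma = \gamma_s + \gamma_0$ uniquely as a sum of a strongly almost periodic measure $\gamma_s$ and a null weakly almost periodic measure $\gamma_0$; moreover, since $\gamma$ is Fourier transformable, both components are transformable and their transforms are exactly the pure point and continuous parts of $\difm$, that is $\widehat{\gamma_s} = \difm_{pp}$ and $\widehat{\gamma_0} = \difm_c$.

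From here the two equivalences follow almost immediately. The diffraction $\difm$ is a pure point measure precisely when $\difm_c = 0$, which by injectivity of the Fourier transform on transformable measures is equivalent to $\gamma_0 = 0$, i.e.\ to $\gamma = \gamma_s$ being strongly almost periodic. Symmetrically, $\difm$ is a continuous measure precisely when $\difm_{pp} = 0$, equivalently $\gamma_s = 0$, i.e.\ $\gamma = \gamma_0$ is null weakly almost periodic. This gives both claims.

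The only point requiring genuine care is the justification that the uniqueness in the Gil de Lamadrid--Argabright decomposition transfers faithfully under the Fourier transform, so that $\gamma_0 = 0 \iff \difm_c = 0$ and $\gamma_s = 0 \iff \difm_{pp} = 0$; this rests on the uniqueness of the Fourier transform of a transformable measure (so that vanishing of a transform forces vanishing of the measure). This is exactly the content packaged into the second sentence of Theorem~\ref{thm-GA}, so I expect no real obstacle here -- the main work has already been done in establishing that $\gamma$ is weakly almost periodic and Fourier transformable, which follows from Proposition~\ref{def:gamma} and the standard fact that positive definite measures enjoy both properties. The proof is therefore a direct application of Theorem~\ref{thm-GA}, which is why the statement can reasonably be left with a \qed.
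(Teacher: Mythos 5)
Your proof is correct and takes essentially the same approach the paper intends: the proposition is stated with an immediate \qed\ precisely because it is a direct consequence of Theorem~\ref{thm-GA} applied to the positive definite (hence weakly almost periodic and Fourier transformable) autocorrelation $\gamma$, exactly as you argue. Your extra care about injectivity -- that a vanishing Fourier transform forces the (transformable) component measure to vanish, via the denseness of the span of the functions $F*\widetilde{G}$ in $C_c(\GG)$ already used in Proposition~\ref{def:gamma} -- is just the small detail the paper leaves implicit.
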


We can then combine the above results to obtain the following.

\begin{prop}  For any stationary process $(N,\cHH,T)$ with second moment
the mapping \;$t \mapsto \langle N(F)\mid T_t N(F)\rangle$ is
a weakly almost periodic function on $\GG$ for all $F \in C_c(\GG)$.
The process  $(N,\cHH,T)$
then has pure point (resp. continuous) diffraction if and only if \;$t \mapsto \langle N(F)\mid T_t N(F)\rangle$ is
a strongly (resp. null weakly) almost periodic function on $\GG$ for all $F \in C_c(\GG)$.
Furthermore, the diffraction is pure point if and only if for all $F\in C_c(\GG)$,  $t \mapsto T_t N(F)$ is a strongly almost periodic function on $\GG$ with respect to the norm on
$\cHH$.
\end{prop}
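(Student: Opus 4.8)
The plan is to reduce all three assertions to a single computation of the spectral measure of $N(F)$. First I would note that for each fixed $F \in C_c(\GG)$ the element $f := N(F)$ lies in $\cHH$, so Lemma~\ref{Fft-is-wap} applies verbatim: the function $t \mapsto \langle N(F) \mid T_t N(F)\rangle$ is positive definite, hence weakly almost periodic, which already establishes the first claim, and that lemma moreover identifies its Fourier transform as the spectral measure $\rho_{N(F)}$. The key computation is then to evaluate this spectral measure. Since $N(F) = \theta(\widehat F)$ by the definition of $\theta$ in Proposition~\ref{diffractiontodynamics}, Theorem~\ref{theta-as-spectralmeasure} gives immediately that $\rho_{N(F)} = |\widehat F|^2 \difm$. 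Everything else will be read off from this identity.

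For the second assertion I would invoke the corollary following Lemma~\ref{Fft-is-wap}: the strongly almost periodic part of $t \mapsto \langle N(F) \mid T_t N(F)\rangle$ is the inverse Fourier transform of the pure point part of $\rho_{N(F)}$, and its null weakly almost periodic part is the inverse Fourier transform of the continuous part. Consequently the function is strongly (resp.\ null weakly) almost periodic precisely when $\rho_{N(F)} = |\widehat F|^2 \difm$ is a pure point (resp.\ continuous) measure. It thus remains only to pass from these ``windowed'' statements, holding for each individual $F$, to the corresponding global statement about $\difm$ itself, namely that $|\widehat F|^2 \difm$ is pure point for every $F$ if and only if $\difm$ is pure point, and similarly in the continuous case.

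The hard part will be exactly this density argument, which is where the genuine content lies. Writing $\difm = \difm_{pp} + \difm_c$ for the decomposition into pure point and continuous parts and observing that multiplication by the bounded continuous function $|\widehat F|^2$ preserves each type, one has $(|\widehat F|^2 \difm)_{pp} = |\widehat F|^2 \difm_{pp}$ and $(|\widehat F|^2 \difm)_c = |\widehat F|^2 \difm_c$. Hence $|\widehat F|^2 \difm$ is pure point for all $F$ if and only if $\int_{\Gdual} |\widehat F|^2 \dd \difm_c = 0$ for all $F$, i.e.\ $\widehat F = 0$ in $L^2(\Gdual,\difm_c)$ for all $F$. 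Since $\difm_c \le \difm$, the inclusion $L^2(\Gdual,\difm) \hookrightarrow L^2(\Gdual,\difm_c)$ is norm-nonincreasing with dense range (both containing $C_c(\Gdual)$ densely), so the family $\{\widehat F : F \in C_c(\GG)\}$, which is dense in $L^2(\Gdual,\difm)$ by Proposition~\ref{diffractiontodynamics}, maps to a dense subset of $L^2(\Gdual,\difm_c)$. Its vanishing therefore forces $L^2(\Gdual,\difm_c) = \{0\}$, that is $\difm_c = 0$, so $\difm$ is pure point; the converse is trivial, and the continuous case is identical with the roles of $\difm_{pp}$ and $\difm_c$ interchanged.

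Finally, for the third assertion I would apply Lemma~\ref{characterizationpp} to $f = N(F)$: the equivalence of its conditions (i)/(ii) with (iv) says that $t \mapsto T_t N(F)$ is almost periodic in the norm of $\cHH$ exactly when $\rho_{N(F)} = |\widehat F|^2 \difm$ is pure point. Quantifying over all $F \in C_c(\GG)$ and reusing the density argument of the previous paragraph shows this holds for every $F$ if and only if $\difm$ is pure point, completing the proof.
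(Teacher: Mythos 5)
Your proposal is correct, but it takes a genuinely different route for the central (second) assertion. The paper argues at the level of the autocorrelation measure $\gamma$ on $\GG$: it uses the identity $G\ast\widetilde{F}\ast\gamma(t)=\langle N(G)\mid T_t N(F)\rangle$ together with Theorem~\ref{thm-GA} (pure point diffraction iff $\gamma$ is strongly almost periodic), and in the converse direction it linearizes, obtains almost periodicity of all convolutions $G'\ast\widetilde{F}\ast\gamma$, and recovers the almost periodicity of the measure $\gamma$ itself by an approximate-unit argument (citing \cite{SM}, Corollary~9). You instead work entirely on the Fourier side: the identity $\rho^{}_{N(F)}=|\widehat{F}|^2\difm$ (which is Theorem~\ref{theta-as-spectralmeasure} with $H=\widehat{F}$, and is the same formula \eqref{ift} the paper derives later inside Theorem~\ref{ergodic+pure point}) reduces everything to the question of when $|\widehat{F}|^2\difm$ is pure point (resp.\ continuous) for all $F$, and you settle that by an $L^2$-density argument: denseness of $\{\widehat{F}\}$ in $L^2(\Gdual,\difm)$ from Prop.~\ref{diffractiontodynamics}, pushed through the norm-nonincreasing, dense-range map $L^2(\Gdual,\difm)\to L^2(\Gdual,\difm_c)$, which is sound (one could shortcut it by noting that the denseness proof in Prop.~\ref{diffractiontodynamics} uses only translation boundedness and so applies to $\difm_c$ directly). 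What your route buys is the elimination of the measure-level almost periodicity machinery for $\gamma$ and of the external approximate-unit reference, plus a uniform treatment of the third assertion via Lemma~\ref{characterizationpp}(i)$\Leftrightarrow$(iv) and the same density argument, where the paper instead pairs (iii)$\Leftrightarrow$(i) with its already-proved second assertion. What the paper's route buys is that it proves more: it establishes the strong/null-weak almost periodicity of $\gamma$ itself, which is conceptually central to its treatment and is reused elsewhere, whereas your argument bypasses $\gamma$ entirely. One point you use implicitly: passing from ``the function is strongly (resp.\ null weakly) almost periodic'' to ``$\rho^{}_{N(F)}$ is pure point (resp.\ continuous)'' requires the uniqueness of the decomposition of a weakly almost periodic function into strongly and null weakly almost periodic parts, together with injectivity of the inverse Fourier transform on finite measures; both are standard, and the paper itself invokes the same uniqueness in the proof of Theorem~\ref{split}, so this is a gap in exposition only, not in substance.
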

\begin{proof} The first statement follows from Lemma \ref{Fft-is-wap}. We now turn to the second statement.  By Theorem \ref{thm-GA} and the definition of the diffraction spectrum we have pure point (resp. continuous) diffraction if and only if $\gamma$ is strongly  (resp null weakly) almost periodic. Now,
  from Proposition \ref{def:gamma} and a  straightforward calculation we obtain  for all $F,G \in C^\RR_c(\GG)$
and  all $t \in \GG$,
\begin{equation} \label{conv2N}
 G  * \widetilde{F} * \gamma(t) =  \gamma(G *\widetilde{T_t F})
=\langle N(G) \mid  T_t N(F)\rangle\,.
\end{equation}
With $G$ set equal to $F$ we then get the required almost periodicity properties of $t \mapsto \langle N(F) \mid  T_t N(F)\rangle$ from the corresponding almost periodicity properties of  $\gamma$, by
our definition of these concepts. In the reverse direction, upon linearizing the
expressions $\langle N(F) \mid  T_t N(F)\rangle$ we obtain \eqref{conv2N}. From the almost periodicity properties of $t\mapsto  \langle N(G) \mid  T_t N(F)\rangle$ we then obtain
the desired almost periodicity properties  for all convolutions of the form $G^{'} \ast \widetilde{F} \ast \gamma (-t)$.
An approximate unit argument allows us to get the strong (resp. null weak)
almost periodicity of $\gamma$ from this, see \cite{SM}, Corollary  9. Here, we use that these almost periodicity properties are stable under uniform convergence.

\smallskip

The last statement is a consequence of Lemma \ref{characterizationpp}.
This finishes the proof.
\end{proof}

We finish this section with a discussion of stability properties of almost periodicity in the case of spatial processes.  We will need the following proposition on continuity of composition with a fixed function. The proposition  is not hard to prove  and can be found in  \cite{LS}.

\begin{prop}\label{continuity-of-phi}  Let $\{h_n\}$ be a sequence in $L^2 (X,\mu)$ converging to $h\in L^2 (X,\mu)$. Let $\phi$ be a continuous bounded function from the complex numbers to the complex numbers. Then, $\phi \circ h_n $ converges to $\phi \circ h$.
\end{prop}

Now, we turn to the following consequence of Theorem \ref{stabilitypp} (see \cite{LS} as well).

\begin{coro} \label{coro-stabilitypp}
Let  $(N,\cHH,T)$ be a spatial process with $\cHH = L^2 (X,\mu)$.  Let $\phi, \psi : \CC\longrightarrow \CC$ be bounded continuous functions and $f,g\in \cHH_p$ be given. Then, the following holds:

(a) The function  $\phi \circ f$ belongs to $\cHH_p$ and the spectral measure $\rho^{}_{\phi \circ f}$ is supported in the subgroup $S$ of $\Ghat$  if $\rho^{}_f$ is supported in this subgroup.

(b) The function $(\phi \circ f) (\psi \circ g)$ belongs to $\cHH_p$. If both $\rho^{}_f$ and $\rho^{}_g$ are supported in the subgroup $S$ of $\Ghat$  then so is $\rho^{}_{(\phi \circ f)(\psi \circ g)}$.
\end{coro}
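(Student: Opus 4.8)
The plan is to realize both $\phi\circ f$ and $(\phi\circ f)(\psi\circ g)$ as images of $f,g$ under continuous $\GG$-equivariant maps and then invoke Theorem~\ref{stabilitypp}, treating the support assertion in (b) by a separate approximation argument. For (a), I would set $C\colon\cHH\to\cHH$, $C(h)=\phi\circ h$. Since $\phi$ is bounded and $(X,\mu)$ is a probability space, $\phi\circ h\in L^\infty(X,\mu)\subseteq\cHH$, so $C$ is well defined, and its continuity is exactly Proposition~\ref{continuity-of-phi}. The intertwining property is immediate from the definition of the action, namely
\[ (T_t(\phi\circ h))(x)=\phi\bigl(h((-t)\cdot x)\bigr)=\phi\bigl((T_t h)(x)\bigr), \]
so $T_t\circ C=C\circ T_t$. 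Thus $C$ satisfies the hypotheses of Theorem~\ref{stabilitypp}(a), which yields at once that $\phi\circ f\in\cHH_p$ and that $\rho_{\phi\circ f}$ is supported in $S$ whenever $\rho_f$ is. This settles (a).

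For (b) I would set $M\colon\cHH\times\cHH\to\cHH$, $M(h_1,h_2)=(\phi\circ h_1)(\psi\circ h_2)$. Boundedness of $\phi,\psi$ again places the image in $L^\infty(X,\mu)\subseteq\cHH$, and the intertwining identity $T_t M(h_1,h_2)=M(T_t h_1,T_t h_2)$ follows exactly as above. The only point needing the boundedness hypothesis is the joint continuity of $M$: writing the difference of two products as a telescoping sum and using $\|\phi\circ h\|_\infty\le\|\phi\|_\infty$ and $\|\psi\circ h\|_\infty\le\|\psi\|_\infty$, one obtains for $h_1^{(n)}\to h_1$, $h_2^{(n)}\to h_2$
\[ \|M(h_1^{(n)},h_2^{(n)})-M(h_1,h_2)\|\le\|\phi\|_\infty\,\|\psi\circ h_2^{(n)}-\psi\circ h_2\|+\|\psi\|_\infty\,\|\phi\circ h_1^{(n)}-\phi\circ h_1\|, \]
whose right-hand side tends to $0$ by Proposition~\ref{continuity-of-phi}. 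Hence Theorem~\ref{stabilitypp}(b) applies and $(\phi\circ f)(\psi\circ g)\in\cHH_p$.

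The remaining, and only delicate, point is the support statement in (b), about which Theorem~\ref{stabilitypp}(b) says nothing. By part (a) the factors $u:=\phi\circ f$ and $v:=\psi\circ g$ lie in $\cHH_p\cap L^\infty(X,\mu)$ with $\rho_u,\rho_v$ supported in $S$, so it suffices to show that a product of two bounded elements of $\cHH_p$ with spectral measures in the subgroup $S$ again has spectral measure in $S$. The mechanism is that the product of an eigenfunction for eigenvalue $k$ and one for eigenvalue $l$ is an eigenfunction for $k+l$, since $T_t(hh')=\overline{\langle k,t\rangle}\,\overline{\langle l,t\rangle}\,hh'$, and $k+l\in S+S=S$ because $S$ is a subgroup. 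To pass from eigenfunctions to $u,v$ themselves I would approximate by Bochner--Fej\'er sums: using that $t\mapsto T_t u$ is almost periodic (Lemma~\ref{characterizationpp}), there are finite eigenfunction expansions $u_P=\sum_j c_j u_{k_j}$ with $k_j\in\supp\rho_u\subseteq S$ and $u_P\to u$ in $\cHH$, and likewise a net $v_Q\to v$. The decisive feature is that each $u_P$ is an \emph{average} $u_P=\int_\GG k_P(t)\,T_t u\,\dd\haarG(t)$ with $k_P\ge 0$ and $\int_\GG k_P\,\dd\haarG=1$, so the pointwise bound $|(T_t u)(x)|=|u((-t)\cdot x)|\le\|u\|_\infty$ gives $\|u_P\|_\infty\le\|u\|_\infty$, and similarly $\|v_Q\|_\infty\le\|v\|_\infty$. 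The products $u_Pv_Q$ are then finite sums of eigenfunctions with eigenvalues in $S+S=S$, hence have spectral measure supported in $S$, and they are uniformly bounded by $\|u\|_\infty\|v\|_\infty$, so the telescoping estimate above gives $u_Pv_Q\to uv$ in $\cHH$. Since $\{h\in\cHH:\rho_h(\Ghat\setminus S)=0\}=\ker E_T(\Ghat\setminus S)$ is closed (as $E_T(\Ghat\setminus S)$ is continuous), the limit $uv$ inherits spectral support in $S$, completing (b).

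The main obstacle is precisely this last step: producing approximants $u_P,v_Q$ that are simultaneously finite eigenfunction expansions (so their products have controlled spectrum), convergent in $\cHH$, \emph{and} uniformly bounded (so that the products converge in $L^2$). Ordinary spectral partial sums give the first two properties but not the third; it is the Bochner--Fej\'er averaging that supplies the uniform bound, and this is where the boundedness of $\phi$ and $\psi$ is genuinely used a second time.
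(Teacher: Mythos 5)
Your treatment of (a) and of the first assertion in (b) coincides with the paper's: the paper likewise introduces $C(h)=\phi\circ h$ and $M(h_1,h_2)=(\phi\circ h_1)(\psi\circ h_2)$, obtains their continuity from Proposition~\ref{continuity-of-phi} (your telescoping estimate is the implicit content of the paper's one-line appeal to that proposition), notes the intertwining with $T$, and applies Theorem~\ref{stabilitypp}. For the support statement in (b) you also agree with the paper on the reduction: since $\cHH_p^S:=\{h\in\cHH_p:\rho_h(\Ghat\setminus S)=0\}$ is closed, it suffices to show that $\cHH_p^S\cap L^\infty(X,\mu)$ is closed under products. Where you diverge is in how the approximants are produced. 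The paper does it with almost no machinery: it approximates $a$ by a finite sum $a'$ of \emph{cut-off} (hence bounded) eigenfunctions with eigenvalues in $S$ --- the cut-off $e^{(m)}$ of an eigenfunction is again an eigenfunction, because $|T_t e|=|e|$ --- accepts that $\|a'\|_\infty$ may be large, and compensates by choosing the second approximant $b'$ with $\|b-b'\|_2\le\varepsilon/\|a'\|_\infty$; the estimate $\|ab-a'b'\|_2\le\|a'\|_\infty\|b-b'\|_2+\|b\|_\infty\|a-a'\|_2$ then closes the argument. So the obstacle you correctly identified --- spectral partial sums need not be uniformly bounded --- is dissolved by an asymmetric ordering of the two choices rather than by manufacturing uniformly bounded approximants, which is what your Bochner--Fej\'er route buys at the price of heavier input.

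On that route there is one step you must repair. The formula $u_P=\int_\GG k_P(t)\,T_tu\,\dd\haarG(t)$ with $k_P\ge0$ and $\int_\GG k_P\,\dd\haarG=1$ cannot yield a \emph{finite} eigenfunction expansion on a noncompact $\GG$: this average is $T_{k_P}u$ in the paper's notation, and its effect is to multiply the coefficient of each eigenvalue $k$ by $\widehat{k_P}(k)$. Since $\widehat{k_P}$ is continuous with $\widehat{k_P}(0)=1$, it is nonzero on a whole neighbourhood of $0$; whenever $\supp\rho_u$ is dense there --- the typical aperiodic case, where $S=\langle\Bragg\rangle$ is dense in $\Ghat$ --- infinitely many coefficients survive, the claim that $u_Pv_Q$ is a finite sum of eigenfunctions fails, and the argument becomes circular, the factors being again mere elements of $\cHH_p^S\cap L^\infty$. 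The genuine Bochner--Fej\'er construction replaces the Haar integral by a Bohr \emph{mean} $u_P=M_t\bigl[K_P(t)\,T_tu\bigr]$ with $K_P$ a nonnegative trigonometric polynomial of mean one (equivalently, convolution over the Bohr compactification, to which $t\mapsto T_tu$ extends continuously since its orbit is relatively compact by Lemma~\ref{characterizationpp}). This does deliver finite sums with frequencies in $\supp\rho_u\subseteq S$, the contraction $\|u_P\|_\infty\le\|u\|_\infty$, and $L^2$-convergence, so your argument then goes through; but the existence of such kernels for Hilbert-space-valued almost periodic maps on a general locally compact abelian group is itself a nontrivial classical theorem that you would need to construct or cite, whereas the paper's cut-off trick uses only the spatial structure already assumed.
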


\begin{proof} By the previous proposition the maps $C: \cHH\longrightarrow \cHH$, $f\mapsto \phi \circ f$ and $M:\cHH\times \cHH\longrightarrow \cHH, M(f,g) :=(\phi \circ f) (\psi \circ g)$ are continuous. They obviously commute with the action of $\GG$. Both (a) and the first statement of (b)  follow from Theorem \ref{stabilitypp}. It remains to show the second statement of (b): Let $\cHH^{S}_p$ be the subspace of $\cHH_p$ consisting  of elements whose spectral measure is supported on $S$. Then, it is not hard to see that $\cHH^{S}_p$ is a closed subspace of $\cHH$.
By (a) and the assumption on $f$ and $g$, the spectral measures of  both $\phi\circ f$ and $\phi \circ g$  belong to $\cHH^{S}_p$. Moreover, $\phi\circ f$ and $\phi \circ g$ are bounded functions. Thus, it suffices to show that $\cHH^{S}_p \cap L^\infty (X,\mu)$ is an algebra. This can be shown by mimicking the proof of Lemma 1 in  \cite{BL} (see \cite{LMS1} as well). For the convenience of the reader we sketch a proof:

Note first that every
eigenfunction can be approximated  by bounded eigenfunctions via a simple cut-off
procedure viz if  $e$ is an eigenfunction, then  for an arbitrary $m>0$,
the function
\begin{equation} \label{cutoff}
   e^{(m)} (x) \; := \; \begin{cases} e(x ),
           & |e(x )|\le m \\ 0 \, ,
           & \text{otherwise}
           \end{cases}
\end{equation}
is again an eigenfunction (to the same eigenvalue). Evidently, the $e^{(m)}$ converge to $e$ in $L^2$ as
$m\rightarrow\infty$. Now, let  non-zero functions  $a,b\in \cHH^{S}_p\cap L^\infty (X,\mu)$ be given and choose $\varepsilon>0$ arbitrarily. As
 $a$ belongs to  $\cHH^S_p$, there exists
a finite linear combination $a^{'}= \sum a_i e_i$ of eigenfunctions to eigenvalues in $S$ with
\begin{equation*}
   \|a  - \sum a_i e_i \|_2 \; \le \; \frac{ \varepsilon}{\|b\|_\infty}\, .
\end{equation*}
By the conclusion after  Eq.~(\ref{cutoff}), we can
assume that all $e_i$ are
bounded functions. Thus, in particular, $\|a^{'}\|_\infty < \infty$.

Similarly, we can choose another finite linear combination $b^{'} = \sum b_j
e_j$ of bounded eigenfunctions to eigenvalues in $S$ with
\begin{equation*}
   \|b - \sum b_j e_j\|_2  \; \le \;
   \frac{\varepsilon}{\|a^{'}\|_\infty}  \, .
\end{equation*}
Then,
\begin{equation*}
   \| a b- a^{'} b^{'}\|_2 \; \le \; \|a^{'}\|_\infty \, \| b - b^{'}\|_2 +
   \|b\|_\infty \, \|a - a^{'} \|_2  \; \le \; 2\hspace{0.5pt} \varepsilon.
\end{equation*}
The   product of bounded eigenfunctions to eigenvalues in $S$  is again a bounded
eigenfunction to an eigenvalue in $S$ (as $S$ is a group) and  the claim follows.
\end{proof}

\section{Decomposing  processes, the  pure  point-continuous split, and the
Gelfand construction}\label{pc:split}

In this section we first discuss  how any stationary process splits
into two subprocesses, one of which has pure point spectrum and one of which has continuous spectrum. If the original process has a second moment then so have the two subprocesses  and their diffraction measure will be pure point and purely continuous respectively (Theorem \ref{split}).  If furthermore the original process is spatial, then its  pure point part  is canonically isomorphic to a spatial pure point process (Theorem \ref{prop-decomposition}). This last result is based on
the Gelfand construction of commutative Banach algebras.

\bigskip

Whenever $T$ is a strongly continuous unitary representation of $\GG$ on the Hilbert space $\cHH$  and $U$ is a $T$ invariant closed  subspace of $\cHH$ with  corresponding orthogonal projection $P_U$ we can form the restriction $T_{U}$ of $T$ to $U$ and this will be a strongly continuous unitary representation as well. The orthogonal complement $U^\perp$ of $U$ in $\cHH$ is also $T$ invariant and we can form the restriction $T_{U^\perp}$ as well.
 We will now focus on the  special decomposition into the  continuous and its point part. More precisely, we  can define the continuous Hilbert space $\cHH_c$ and the point Hilbert space $\cHH_p$ by
$$  \cHH_p:=\{f\in \cHH : \rho_f \mbox{ is a pure point measure}\},\;\: \cHH_c:=\{f\in \cHH : \rho_f  \mbox{ is continuous}\}. $$
Then, both $\cHH_p$ and $\cHH_c$ are closed $T$-invariant with
$$\cHH=\cHH_p \oplus \cHH_c.$$
With the projections $P_c$ and $ P_p$ onto $\cHH_p$ and $\cHH_c$ we then have $ P_c \oplus P_p = id$ and $P_* T_t = T_t P_* =: T_t^*$ for $*=p,c$. This gives the decomposition of the representation $T$ as $T = T_p \oplus T_c$ with $T_*:= P_* T$, $* = p,c$. Obviously, all spectral measures associated to $T_p$ are purely discrete and all spectral measures associated to $T_c$ are purely continuous.  Accordingly, we call $T_p $ the \textit{pure point part  of the representation $T$} and $T_c$ the \textit{continuous part of the representation $T$}.

\smallskip

 We now turn to the case  that we are given a stationary process $(N,\cHH,T)$.
 Of course, $T$ is then a strongly continuous unitary representation of $\GG$ on $\cHH$.  So, any $T$ invariant subspace $U$ gives rise to a decomposition of $T$. Moreover, the $\GG$-invariance of $N$ then implies that $N$ can as well  be decomposed into $N_U := P_U N$ and $N_{U^\perp} = P_{U^\perp} N$. Accordingly, the process $\cN$ can be decomposed in to the two processes $\cN_U := (N_U, U, T_U)$ and $\cN_{U^\perp} := (N_{U^\perp}, U^\perp, T_{U^\perp})$ and we write $\cN = \cN_U \oplus \cN_{U^\perp}$.

 \smallskip

 As just discussed,  $T$  can  be decomposed into its pure point part and its continuous part.  into $N = N_p \oplus N_c$ with $N_*=P_* N$, $*=p,c$.  Hence, any process $\cN:= (N,\cHH,T)$ can be decomposed into the two processes $\cN_p:= (N_p,\cHH_p,T_p)$ and $\cN_c:= (N_c,\cHH_c,T_c)$ which are called the \textit{pure point part} and the \textit{continuous part} of $\cN$ respectively.

 Now, assume that the  original process has a second moment and associated autocorrelation measure $\gamma$ and diffraction measure $\omega = \widehat{\gamma}$. By Theorem \ref{thm-GA},  this $\gamma$ can be decomposed uniquely into its strongly almost periodic part $\gamma_{sap}$ and its null weakly almost periodic part $\gamma_{0wao}$ and the Fourier transform of $\gamma_{sap}$ is the pure point part $\omega_p$ of $\omega$ and the Fourier transform of $\gamma_{0wap}$ is the continuous part $\omega_c$ of $\omega$. It turns out that these are just the autocorrelation  and diffraction measures of the pure point part and the continuous part of $\cN$. This is the content of the next theorem.

 \begin{theorem} \label{split}  Let $\cN = (N, \cHH, T)$ be a stationary process with a second moment and associated autocorrelation $\gamma$, diffraction measure $\omega$ and diffraction to dynamics map $\theta$. Let $ \gamma_{sap}$ and $\gamma_{0wap}$ be the  strongly almost periodic part and the  null weakly almost periodic part  of $\gamma$  and $\omega_p$ and $\omega_c$ the respective Fourier transforms. Then,  the point part $\cN_p$  of $\cN$ has a second moment with autocorrelation given by $\gamma_{sap}$ and diffraction measure given by $\omega_p$ and the continuous part $\cN_c$ of $\cN$ with autocorrelation given by  $\gamma_{0wap}$ and diffraction measure given by $\omega_c$.
 \end{theorem}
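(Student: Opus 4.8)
The plan is to show that the diffraction to dynamics map $\theta$ carries the atomic/diffuse splitting of $L^2(\Gdual,\omega)$ onto the pure point/continuous splitting $\cHH = \cHH_p \oplus \cHH_c$, and then to read off the autocorrelations of $\cN_p$ and $\cN_c$ by a direct isometry computation. First I would identify the two projected pieces of $N(F)$. Recall $N(F) = \theta(\widehat F)$ and that, by Theorem~\ref{theta-as-spectralmeasure}, $\rho_{\theta(H)} = |H|^2\omega$ for every $H \in L^2(\Gdual,\omega)$. Write $\omega = \omega_p + \omega_c$; since $\omega_p$ and $\omega_c$ are mutually singular there is a Borel set $A \subseteq \Gdual$ with $\charF_A\,\omega = \omega_p$ and $\charF_{\Gdual\setminus A}\,\omega = \omega_c$. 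For $H$ supported on $A$ the measure $|H|^2\omega = |H|^2\omega_p$ is pure point, whence $\theta(H)\in\cHH_p$; for $H$ supported off $A$ the measure $|H|^2\omega = |H|^2\omega_c$ is continuous, whence $\theta(H)\in\cHH_c$. Applying this to $\widehat F = \widehat F\,\charF_A + \widehat F\,\charF_{\Gdual\setminus A}$ and using that $\cHH = \cHH_p\oplus\cHH_c$ is an orthogonal direct sum, uniqueness of the components gives
\[ N_p(F) = P_p N(F) = \theta(\widehat F\,\charF_A), \qquad N_c(F) = P_c N(F) = \theta(\widehat F\,\charF_{\Gdual\setminus A}). \]

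Next I would compute the two-point inner products. Since $\theta$ is an isometry (Proposition~\ref{diffractiontodynamics}) and $\charF_A\,\omega = \omega_p$,
\[ \langle N_p(F) \mid N_p(G)\rangle = \int_{\Gdual} \widehat F\,\overline{\widehat G}\,\charF_A \dd\omega = \int_{\Gdual} \widehat F\,\overline{\widehat G}\dd\omega_p, \]
and likewise $\langle N_c(F)\mid N_c(G)\rangle = \int_{\Gdual}\widehat F\,\overline{\widehat G}\dd\omega_c$. By Theorem~\ref{thm-GA} the measures $\gamma_{sap}$ and $\gamma_{0wap}$ are Fourier transformable with transforms $\omega_p$ and $\omega_c$, so the polarized form of the transformability relation turns these integrals over $\Gdual$ into integrals over $\GG$, yielding
\[ \langle N_p(F)\mid N_p(G)\rangle = \gamma_{sap}(F*\tilde G), \qquad \langle N_c(F)\mid N_c(G)\rangle = \gamma_{0wap}(F*\tilde G) \]
for all $F,G\in C_c(\GG)$. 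By Proposition~\ref{def:gamma} this is exactly the statement that $\cN_p$ and $\cN_c$ have second moments with autocorrelations $\gamma_{sap}$ and $\gamma_{0wap}$; their diffraction measures are then, by definition, the Fourier transforms $\omega_p$ and $\omega_c$.

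The main obstacle is the first step: correctly identifying $P_p N(F)$ and $P_c N(F)$ with $\theta(\widehat F\,\charF_A)$ and $\theta(\widehat F\,\charF_{\Gdual\setminus A})$. Everything hinges on the fact that $\theta$ respects the two splittings, which in turn rests on the spectral identity $\rho_{\theta(H)} = |H|^2\omega$ of Theorem~\ref{theta-as-spectralmeasure} together with the elementary observation that multiplying a pure point (resp.\ continuous) measure by a bounded measurable function preserves pure pointedness (resp.\ continuity). Once this is in hand, the remaining computations are routine applications of the isometry of $\theta$ and the transformability relations, and no further estimates are needed.
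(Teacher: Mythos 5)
Your proof is correct, and it takes a genuinely different route from the paper's. The paper never identifies $P_p N(F)$ explicitly: it decomposes the weakly almost periodic function $t\mapsto \gamma(F\ast\tilde F_t)$ in two ways --- as $\gamma_{sap}(F\ast\tilde F_t)+\gamma_{0wap}(F\ast\tilde F_t)$ on one hand, and, using orthogonality of $P_p$ and $P_c$, as $\langle P_pN(F)\mid T_tP_pN(F)\rangle + \langle P_cN(F)\mid T_tP_cN(F)\rangle$ on the other; it then observes that the first summand is strongly almost periodic (the inverse Fourier transform of the pure point measure $\rho_{P_pN(F)}$) and the second null weakly almost periodic, invokes the uniqueness clause of Theorem~\ref{thm-GA}, and sets $t=0$. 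You instead push the splitting through the diffraction-to-dynamics map: the spectral identity $\rho_{\theta(H)}=|H|^2\omega$ of Theorem~\ref{theta-as-spectralmeasure} together with the mutual singularity of $\omega_p$ and $\omega_c$ gives the explicit formulas $P_pN(F)=\theta(\charF_A\,\widehat F)$ and $P_cN(F)=\theta(\charF_{\Gdual\setminus A}\,\widehat F)$, after which the isometry of $\theta$ and the polarized transformability relation from Theorem~\ref{thm-GA} finish the computation. Your route buys more: it in effect proves the decomposition $\theta=\theta_p\oplus\theta_c$, which the paper merely asserts after the theorem without proof; the paper's route is leaner in that it needs no carrier set and no spectral identity, only the uniqueness of the sap/0wap splitting of weakly almost periodic functions. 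Two cosmetic points: the word ``bounded'' in your final paragraph is unnecessary (any measure absolutely continuous with respect to a pure point, resp.\ continuous, measure is again pure point, resp.\ continuous; in any case $\widehat F\in C_0(\Gdual)$, so $\charF_A\,\widehat F$ is automatically bounded), and if one is fussy about exhibiting a Borel carrier $A$ for $\omega_p$ when $\Gdual$ is not second countable, the same argument can be phrased through the orthogonal decomposition $L^2(\Gdual,\omega)=L^2(\Gdual,\omega_p)\oplus L^2(\Gdual,\omega_c)$, which the paper itself uses immediately after the theorem.
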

\begin{proof} By  Proposition \ref{def:gamma} it suffices to show  that
\begin{equation} \label{toshow}  \gamma_{sap} (F\ast \tilde G) = \langle P_p N (F)\mid P_p N (G)\rangle\:\;\mbox{and}\;\: \gamma_{0wap} (F\ast \tilde G) = \langle P_c N(F)\mid P_c N(G)\rangle \end{equation}
for all $F,G\in C_c (\GG)$.   By linearity it suffices to consider the case  $F = G$.
 Let $F_t$ be the function $T_t F = F(\cdot- t)$  and consider the function $t \mapsto \gamma (F\ast  \tilde F_t)$.
We will decompose this function in a strongly almost periodic part and a null weakly almost periodic part in two ways and then conclude the desired statement from the uniqueness of the decomposition. Here are the details:
  By definition of $\gamma_{sap}$ and $\gamma_{0wap}$  we can decompose this function as
 $$ \gamma (F \ast  \tilde F_t) = \gamma_{sap} (F \ast  \tilde F_t ) + \gamma_{0wap} ( F \ast  \tilde F_t)$$
 with a strongly almost periodic function $t \mapsto \gamma_{sap} (F \ast  \tilde  F_t )$ and a null weakly almost periodic function $t\mapsto \gamma_{0wap} ( F\ast  \tilde  F_t)$. On the other hand by definition of $\gamma$ and the fact that $P_c$ and $P_p$ are orthogonal we also obtain the decomposition
 $$ \gamma (F  \ast \tilde F_t) = \langle N(F) \mid T_t N(F )\rangle =  \langle P_p N(F) \mid T_t P_p N(F )\rangle + \langle P_c N(F)\mid T_t P_c N(F)\rangle.$$
 Now, $t\mapsto \langle P_p N(F) \mid T_t P_p N(F)\rangle = \int (t,\xi) d \rho_{P_p N(F)}$ is strongly almost periodic as it is the inverse Fourier transform of a pure point measure and $t\mapsto \langle P_c N(F) \mid T_t P_c N(F)\rangle = \int (t,\xi) d \rho_{P_c N(F)}$ is null weakly almost periodic as it is the inverse Fourier transform of a continuous measure.  From Theorem \ref{thm-GA} we conclude that
 $$ \gamma_{sap} (F \ast  \tilde  F_t ) = \langle P_p N(F) \mid T_t P_p N(F )\rangle \;\:\mbox{and}\;\: \gamma_{0wap} ( F \ast  \tilde  F_t) = \langle P_c N(F)\mid T_t P_c N(F)\rangle$$
 for all $t\in \GG$. Setting $t = 0$ we obtain \eqref{toshow}.
\end{proof}

In the situation of the previous theorem we can easily see that
the diffraction to dynamics map $\theta : L^2 (\Gdual, \omega)\longrightarrow \cHH$ can be decomposed as
\begin{equation}
\theta = \theta_p \oplus \theta_c.
\end{equation}
Here,  $\theta_p$ and $\theta_c$  are the diffraction to dynamics maps of $\cN_p$ an $\cN_c$ respectively and  and the natural  decompositions $L^2 (\Gdual, \omega) = L^2 (\Gdual, \omega_p) \oplus L^2 (\Gdual, \omega_c)$  and $\cHH = \cHH_p \oplus \cHH_c$ are implicitly used.

\bigskip

We now restrict attention to spatial processes.  In this situation  we can induce subprocesses via subalgebras of $L^\infty$. This is discussed next. We will use the main result of Gelfand theory that   any commutative $C^\ast$-algebra $\cA$ with a unit is canonically isomorphic to the algebra $C(X_\cA)$ of continuous functions on the compact space $X_\cA$ given by the maximal ideals of $\cA$. We will denote the space of all continuous bounded complex valued functions on the complex plane by $C_b (\CC)$.

\begin{prop} \label{prop-inducing}  Let $\cN:=(N,\cHH,T)$ be a spatial process with Hilbert space  $\cHH=L^2 (X,\mu)$. Let $\cA$ be a $\GG$-invariant subalgebra of $L^\infty (X,\mu)$ which is closed under complex conjugation, contains the constant functions and is closed with respect to the sup norm.  Let $U = U (\cA)$ be the subspace of $\cHH$ generated by $\cA$ and  let $X_\cA$ the maximal ideal space of $\cA$.
 Then, $X_\cA$ can be equipped in a unique way with a $\GG$ action and a $\GG$-invariant measure $\mu_\cA$ such that the canonical Gelfand  isomorphism
 $J:C(X_\cA)\longrightarrow \cA$ extends to an unitary $\GG$-map (also called $J$)
$$J: L^2 (X_\cA,\mu_\cA)\longrightarrow U.$$
The map $J$ is compatible with the algebraic structure in that it satisfies
\begin{itemize}
\item  $J( fg) = J(f) J(g)$ for all $f\in L^\infty (X_\cA,\mu_\cA) $ and $g\in L^2 (X_\cA, \mu_\cA)$ and

\item $J(\phi (f)) = \phi ( J(f))$ for all $f\in L^2$ and for all continuous bounded
$\phi:\CC \longrightarrow \CC $.
\end{itemize}
The subspace  $U$ is a $\GG$-invariant subspace and $\cN$ can be decomposed as
$$\cN = \cN_U \oplus \cN_{U^\perp},$$
where $\cN_U$ is isomorphic via $J$ to the spatial process $\cN_\cA$ with $N_\cA = J^{-1} P_U N$, $\cHH_\cA = L^2 (X_\cA,\mu_\cA)$. If $\cN$ is ergodic, so is $\cN_U$.
\end{prop}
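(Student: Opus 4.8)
The plan is to use that $\cA$, being closed in the sup-norm, closed under conjugation, and containing the constants, is a commutative unital $C^*$-algebra, so that Gelfand theory identifies it with $C(X_\cA)$ through the isomorphism $J$. First I would transport the dynamics and a measure to $X_\cA$. Since $\cA$ is $\GG$-invariant, each $T_t$ restricts to a $*$-automorphism of $\cA$ (translation of functions preserves products, conjugation, and the sup-norm), and by Gelfand duality it corresponds to a homeomorphism of $X_\cA$; concretely $(t\cdot\chi)(a):=\chi(T_{-t}a)$ for a character $\chi\in X_\cA$, which one checks is a left $\GG$-action rendering $J$ equivariant for the pullback action on $C(X_\cA)$. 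This action is forced by equivariance, hence unique. For the measure, the state $a\mapsto\int_X a\dd\mu$ on $\cA\cong C(X_\cA)$ produces, by the Riesz representation theorem, a unique regular probability measure $\mu_\cA$ on $X_\cA$ with $\int_{X_\cA}f\dd\mu_\cA=\int_X J(f)\dd\mu$ for $f\in C(X_\cA)$; its $\GG$-invariance is immediate from the $\GG$-invariance of $\mu$ and the equivariance of $J$.

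Next I would extend $J$ to an isometry of $L^2$-spaces. For $f\in C(X_\cA)$, using that $J$ is a $*$-homomorphism and the defining property of $\mu_\cA$,
\[ \|J(f)\|_{L^2(X,\mu)}^2=\int_X|J(f)|^2\dd\mu=\int_X J(|f|^2)\dd\mu=\int_{X_\cA}|f|^2\dd\mu_\cA=\|f\|_{L^2(X_\cA,\mu_\cA)}^2 , \]
so $J$ is isometric on the dense subspace $C(X_\cA)\subset L^2(X_\cA,\mu_\cA)$ and extends uniquely to an isometry into $L^2(X,\mu)$. As $J$ carries $C(X_\cA)$ onto $\cA$, whose $L^2$-closure is by definition $U$, this extension is a unitary onto $U$, and its $\GG$-equivariance passes to the closure.

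The hard part will be the algebraic compatibility, which is what upgrades the unitary $\GG$-map $J$ to an identification respecting the module structure. I would first check that $J$ sends $L^\infty(X_\cA,\mu_\cA)$ into $L^\infty(X,\mu)$: approximating $f\in L^\infty$ by $f_n\in C(X_\cA)$ with $\|f_n\|_\infty\le\|f\|_\infty$ and $f_n\to f$ in $L^2$, the elements $J(f_n)\in\cA$ obey $\|J(f_n)\|_\infty=\|f_n\|_\infty$ and converge in $L^2$ to $J(f)$, so on an almost-everywhere convergent subsequence $|J(f)|\le\|f\|_\infty$. The identity $J(fg)=J(f)J(g)$ holds on $C(X_\cA)$ since $J$ is multiplicative there; for fixed $f\in C(X_\cA)$ it extends to all $g\in L^2$ because multiplication by $J(f)\in\cA$ is a bounded operator that maps $U$ into $U$; and then to $f\in L^\infty$ by the approximation just described together with a dominated-convergence argument passing $f_ng\to fg$ and $J(f_n)J(g)\to J(f)J(g)$ in $L^2$. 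The identity $J(\phi(f))=\phi(J(f))$ holds on $C(X_\cA)$ because a $*$-isomorphism intertwines the continuous functional calculus, and extends to $f\in L^2$ by approximating in $L^2$ and invoking the continuity of composition with the fixed bounded continuous $\phi$ from Proposition~\ref{continuity-of-phi} on both $L^2(X_\cA,\mu_\cA)$ and $L^2(X,\mu)$, together with the $L^2$-continuity of $J$. Finally $J(\overline f)=\overline{J(f)}$ since $J$ is a $*$-map.

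With these properties the remaining assertions follow quickly. The subspace $U$ is $T$-invariant because $\cA$ is $\GG$-invariant and $U$ is its $L^2$-closure, so the decomposition of a process along a $T$-invariant subspace introduced at the beginning of this section applies, giving $\cN=\cN_U\oplus\cN_{U^\perp}$ with $N_U=P_U N$. The induced representation on $L^2(X_\cA,\mu_\cA)$ is strongly continuous, being conjugate via $J$ to $T_U$, so $\cN_\cA=(N_\cA,X_\cA,\mu_\cA,T)$ with $N_\cA=J^{-1}N_U$ is a genuine spatial process; the map $J$ then intertwines $N_\cA$ with $N_U$ and the two $\GG$-actions and respects products and conjugation, so it identifies $\cN_U$ with $\cN_\cA$ compatibly with the $L^\infty$-module structure as in Definition~\ref{definition-isomorphism}. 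For ergodicity, the eigenspace of $T_U$ for the eigenvalue $0$ consists of the $T$-invariant vectors lying in $U$; if $\cN$ is ergodic these are contained in $\CC 1$, and since the constants already lie in $\cA\subset U$ this eigenspace is exactly $\CC 1$, hence one-dimensional. Thus $\cN_U$, and with it $\cN_\cA$, is ergodic.
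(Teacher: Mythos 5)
Your proposal is correct and takes essentially the same route as the paper: the paper likewise treats the construction of the $\GG$-action, the measure $\mu_\cA$, and the unitary extension of the Gelfand isomorphism as routine, and settles the two nontrivial points --- the multiplicativity $J(fg)=J(f)J(g)$ and the compatibility $J(\phi(f))=\phi(J(f))$ --- by the same bounded $L^2$-approximation argument and the same closedness argument resting on Proposition~\ref{continuity-of-phi}, with ergodicity handled exactly as you handle it (invariant vectors in $U$ reduce to the constants, which lie in $\cA\subset U$). Your write-up simply fills in the steps the paper declares ``rather straightforward'' (Gelfand duality for the dynamics, Riesz representation for $\mu_\cA$, the isometry computation) in greater detail.
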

\begin{proof} All statements are  rather straightforward up to the compatibility of $J$ with the algebraic structure and the ergodicity.

\smallskip

 This compatibility with algebraic structure will be discussed next: By definition, $J$ is an algebra isomorphism from $C(X_\cA)$ and $\cA$. In particular, $J( f g) = J(f) J(g)$ for $f,g\in C (X_\cA)$. By a simple approximation argument, we then obtain  $J( fg) = J(f) J(g)$ for all
 $f\in L^\infty(X_\cA)$ and $g\in L^2(X_\cA)$. Here, we use that $ \{f g_n\}$ converges to $\{f g\}$ in $L^2((X_\cA)$  whenever $\{g_n\}$ converges to $g$ in $L^2(X_\cA)$ and $f$ is bounded.  This shows that $J$ is compatible with multiplication.

 As $J$ is an isomorphism of algebras, we also have $J (q (f)) = q( J(f))$ for any polynomial $q$ and any $f\in C (X_\cA)$. By a Stone/Weierstrass type argument, this gives $J (\phi (f)) = \phi ( J(f))$ for any $\phi \in C_b (\CC)$ and $f\in C(X_\cA)$. Consider now the set
 $$\cC :=\{f\in L^2 (X_\cA, \mu_\cA) : \phi (J(f)) = J(\phi (f))\;\mbox{for all $\phi \in C_b (\CC)$}\}.$$
 Then, $ C(X_\cA)  \subset\cC$ by what we have just shown. Moreover,
    $\cC$ is closed in $L^2$ by Proposition \ref{continuity-of-phi}. Thus,
 $\cC$ must agree with $L^2 (X_\cA, \mu_\cA)$ as $C(X_\cA)$ is dense in $L^2 (X_\cA,\mu_\cA)$
(as $\{\phi (h_n)\}$ converges to $\phi (h)$ whenever $\{h_n\}$ converges to $h$).

%By a further approximation argument, we then obtainb$J(\phi (f)) = \phi J(f)$ for all $f\in L^2$ and  and continuous bounded $\phi:\CC \longrightarrow  \CC$.  This approximation argument uses that for a sequence $\{h_n\}$ in an $L^2$ space converging to $h$ with respect to the $L^2$ norm, one has $\{\phi(h_n)\}$ converges to $\phi (h)$ for any continuous bounded $\phi$  (see \cite{LS}).

\smallskip

It remains to show the statement on ergodicity. By definition ergodicity means that the eigenspace of $T$ to the eigenvalues $1$ is one-dimensional. This is obviously stable under the above constructions.
\end{proof}

\begin{definition} The process $\cN_\cA$ constructed in  the previous proposition is called the subprocess of $\cN$ induced by $\cA$.
\end{definition}

\begin{theorem} \label{fullProcess}
Let $\cN:=(N,\cHH,T)$ be a spatial process with Hilbert space  $\cHH=L^2 (X,\mu)$. Let $\cA$ be the closure with respect to the supremum norm of the  subalgebra of $L^\infty (X,\mu)$ generated by $\psi_1\circ N (F_1)\ldots \psi_n \circ N(F_n)$ with $n\in \NN$, $F_j \in C_c (\GG)$ and $\psi_j$ continuous bounded functions from $\CC$ to $\CC$. Let $U = U (\cA)$ be the subspace of $\cHH$ generated by $\cA$. Then $U$ contains the range of $N$ and the subprocess $\cN_U$ is isomorphic to the  full spatial process $\cN_\cA$. The process $\cN_\cA$ is ergodic if $\cN$ is ergodic. Moreover,
$$\langle N_\cA (F) \mid N_{\cA} (G)\rangle = \langle N(F) \mid  N(G)\rangle$$
holds for all $F,G\in C_c (\GG)$. In particular, $\cN$ has a second moment if and only if $\cN_\cA$ has a second moment  and their autocorrelations agree in this case.
\end{theorem}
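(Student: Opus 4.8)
The plan is to obtain Theorem~\ref{fullProcess} as an application of Proposition~\ref{prop-inducing}, once I check that the specific algebra $\cA$ meets the hypotheses there and that the range of $N$ already lies inside $U=U(\cA)$. First I would verify that $\cA$ is admissible: it is a sup-norm closed subalgebra of $L^\infty(X,\mu)$ by construction; it contains the constants (take a constant symbol $\psi_j$); it is closed under complex conjugation, since $\overline{\psi\circ N(F)}=\bar\psi\circ N(F)$ with $\bar\psi$ again bounded continuous and conjugation sup-norm continuous; and it is $\GG$-invariant, because $T_t(\psi\circ N(F))=\psi\circ T_t N(F)=\psi\circ N(T_t F)$ by the $\GG$-equivariance of $N$, so $T_t$ merely permutes the generators. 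Thus Proposition~\ref{prop-inducing} applies and supplies the probability space $(X_\cA,\mu_\cA)$, the unitary $\GG$-map $J:L^2(X_\cA,\mu_\cA)\longrightarrow U$ compatible with products and with composition by bounded continuous functions, the decomposition $\cN=\cN_U\oplus\cN_{U^\perp}$, the isomorphism $\cN_U\cong\cN_\cA$ via $J$, and the ergodicity of $\cN_U$ (hence of $\cN_\cA$) whenever $\cN$ is ergodic.

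The one genuinely analytic point, and the step I expect to be the main obstacle, is showing that $U$ contains the range of $N$, i.e.\ $N(F)\in U$ for every $F\in C_c(\GG)$. The difficulty is that $N(F)$ corresponds to the unbounded symbol $\psi=\mathrm{id}$, which is not among the bounded $\psi_j$ generating $\cA$, so $N(F)$ is not literally a generator. To circumvent this I would use a cut-off argument: let $\psi_m\in C_b(\CC)$ be the radial truncation that equals the identity on the disc $\{|z|\le m\}$ and has modulus $m$ outside it. Then $\psi_m\circ N(F)\in\cA\subset U$, and since $N(F)\in L^2(X,\mu)$ is finite $\mu$-almost everywhere, $\psi_m\circ N(F)\to N(F)$ pointwise a.e.\ with $|\psi_m\circ N(F)|\le|N(F)|$; dominated convergence then yields $\psi_m\circ N(F)\to N(F)$ in $L^2$. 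As $U$ is closed, $N(F)\in U$. This is the only place where one must argue about convergence rather than manipulate formal identities.

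Once $N(F)\in U$, the remaining assertions are formal. Since $P_U N(F)=N(F)$, the induced map satisfies $N_\cA(F)=J^{-1}P_U N(F)=J^{-1}N(F)$, equivalently $J N_\cA(F)=N(F)$. Using the compatibility of $J$ with composition and products from Proposition~\ref{prop-inducing}, one gets $J(\psi_1\circ N_\cA(F_1)\cdots\psi_n\circ N_\cA(F_n))=\psi_1\circ N(F_1)\cdots\psi_n\circ N(F_n)$ for all bounded continuous $\psi_j$; hence $J$ carries the algebra generated by the $\psi_j\circ N_\cA(F_j)$ onto the subalgebra of $\cA$ that generates $U$, which is sup-norm dense in $\cA$ and therefore $L^2$-dense in $U$. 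Since $J$ is a unitary isomorphism onto $U$, this preimage algebra is dense in $L^2(X_\cA,\mu_\cA)$, so $\cN_\cA$ is full (a further cut-off shows the $C_c(\CC)$ symbols of the fullness definition yield the same closure as the $C_b(\CC)$ symbols used here). For the inner products, unitarity of $J$ gives $\langle N_\cA(F)\mid N_\cA(G)\rangle=\langle J^{-1}N(F)\mid J^{-1}N(G)\rangle=\langle N(F)\mid N(G)\rangle$ for all $F,G\in C_c(\GG)$.

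Finally, the second-moment statement follows immediately from Proposition~\ref{def:gamma}: a process has a second moment exactly when there is a measure $\gamma$ with $\gamma(F*\tilde G)=\langle N(F)\mid N(G)\rangle$, and since the two inner-product functionals coincide by the previous paragraph, the same candidate $\gamma$ serves $\cN$ if and only if it serves $\cN_\cA$. Thus $\cN$ has a second moment iff $\cN_\cA$ does, and in that case their autocorrelations are literally the same measure $\gamma$, completing the proof.
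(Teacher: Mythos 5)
Your proposal is correct and takes essentially the same route as the paper: the paper's own (very terse) proof also reduces everything to Proposition~\ref{prop-inducing} after noting that the range of $N$ is contained in $U$ --- a fact it declares ``obvious'' --- and your radial truncation with dominated convergence is precisely the argument behind that claim, while the inner-product and second-moment statements follow formally just as you say. Your additional care in checking the hypotheses of Proposition~\ref{prop-inducing} and in reconciling the $C_c(\CC)$ symbols of the fullness definition with the $C_b(\CC)$ symbols generating $\cA$ fills in details the paper leaves implicit, and both arguments are sound.
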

\begin{proof} By definition of $\cA$, the range of $N$ is obviously contained in the subspace $U$. Thus, $P_U N = N$ and therefore
$$ \langle N_U (F)\mid  N_U (G)\rangle = \langle N (F) \mid  N(G)\rangle$$
holds for all $F,G\in C_c (\GG)$. Given this all statements of the theorem follow immediately from the previous proposition.
\end{proof}

\begin{remark} The theorem says that to any (ergodic) stationary process $\cN$ we can find an (ergodic) stationary subprocess  $\cN_U$ which is full and has the same second moment. In the situation of the theorem fullness of the original process $\cN$ then just means that $\cN = \cN_{U}$.
\end{remark}

Now let  $(N,\cHH,T)$ be a spatial process with $\cHH=L^2 (X,\mu)$. Let
$V \subset L^2(X,\mu)$ be the closure of the linear span of products of the form
$$ \psi_1\circ N_p (F_1)\ldots \psi_n \circ N_p (F_n) ,\; n\in \NN,  \psi_j \in C_b (\CC), F_j\in C_c (\GG).  $$
Then, $V$ is a closed  invariant subspace of $L^2(X,\mu)$.  Let $P_V$ be the orthogonal projection onto $V$. Then, $P_V$ commutes with $T$.  Then,  $(N_V, V, T_V)$ with $N_V =P_V N$ and $T_V = P_V T = T P_V$ is a stationary process. We call $(N_V,V,T_V)$ the {\it augmented  point part} of the stationary process $(N,\cHH,T)$.  Let $V^\perp$ be the orthogonal complement of $V$ in $\cHH$. Then, $V^\perp$ gives rise to a stationary process $(N_{V^\perp},V^\perp, T_{V^\perp})$ and $N$ is the sum of these processes.

\begin{theorem} \label{prop-decomposition}  Let $\cN:=(N,\cHH,T)$ be a spatial process with Hilbert space  $\cHH=L^2 (X,\mu)$ and associated augmented point part $\cN_V:=(N_V, V, T_V)$.  Then the following holds:

\begin{itemize}
\item $\cN = \cN_V \oplus \cN_{V^\perp}$;
\item $\cN_V$ is isomorphic to a full spatial process;
\item $\langle P_V N(F)\mid  P_V N(G) \rangle = \langle P_p N(F)\mid P_p N(G)\rangle$ and $\langle P_{V^\perp} N(F)\mid P_{V^\perp} N(G)\rangle = \langle P_c N(F)\mid P_c N(G)\rangle$  hold for all $F,G\in C_c (\GG)$.
\end{itemize}
If $\cN$ has a second moment with autocorrelation $\gamma$ and diffraction $\omega$  furthermore the following is true:

\begin{itemize}
\item  $\cN_V$ has a second moment with autocorrelation   $\gamma_{sap}$ and diffraction measure  given by the pure point part of $\omega$.
\item   $\cN_{V^\perp}$ has a second moment with autocorrelation $\gamma_{0wap}$  and the diffraction measure given by  the continuous part of $\omega$.
\end{itemize}

\end{theorem}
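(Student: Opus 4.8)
The plan is to show that the augmented point part $V$ is squeezed between the range of $N_p := P_p N$ and the pure point subspace $\cHH_p$, so that $P_V$ and $P_p$ agree on the range of $N$; everything else then follows from Theorem \ref{split} together with Proposition \ref{prop-inducing}. Concretely, I would first establish the two containments
$$ N_p(C_c(\GG)) \;\subseteq\; V \;\subseteq\; \cHH_p. $$
The right-hand inclusion comes from the algebra-stability of pure-pointedness: each factor $\psi_j \circ N_p(F_j)$ is bounded and lies in $\cHH_p$ by Corollary \ref{coro-stabilitypp}(a) (since $N_p(F_j) \in \cHH_p$), and an inductive use of Corollary \ref{coro-stabilitypp}(b) shows each generating product lies in $\cHH_p$; as $\cHH_p$ is closed, $V \subseteq \cHH_p$. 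The left-hand inclusion needs a truncation, since $N_p(F)$ is generally unbounded: with the bounded continuous functions $\psi_m$ equal to the identity on the disk of radius $m$ and radial projection outside it, each $\psi_m \circ N_p(F)$ is a generating factor (so lies in $V$) and $\psi_m \circ N_p(F) \to N_p(F)$ in $L^2(X,\mu)$ by dominated convergence; closedness of $V$ gives $N_p(F) \in V$.

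These two containments yield the key identity $P_V N = P_p N$. Indeed, writing $N(F) = N_p(F) + N_c(F)$ with $N_p(F) \in V$ and $N_c(F) \in \cHH_c = \cHH_p^{\perp} \subseteq V^{\perp}$, one gets $P_V N(F) = N_p(F) = P_p N(F)$ and hence $P_{V^\perp} N(F) = N_c(F) = P_c N(F)$. The inner-product identities in the third bullet are then immediate. Moreover, since $N_V := P_V N = P_p N$ and $N_{V^\perp} := P_{V^\perp} N = P_c N$ take exactly the values of the pure point and continuous parts of $\cN$, the second-moment assertions reduce to Theorem \ref{split}: that theorem gives $\langle P_p N(F) \mid P_p N(G)\rangle = \gamma_{sap}(F * \tilde G)$ and $\langle P_c N(F) \mid P_c N(G)\rangle = \gamma_{0wap}(F * \tilde G)$, so Proposition \ref{def:gamma} furnishes the autocorrelations $\gamma_{sap}, \gamma_{0wap}$ and therefore the diffractions $\omega_p, \omega_c$ for $\cN_V, \cN_{V^\perp}$.

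For the direct-sum decomposition I would check that $V$ is $T$-invariant, so that the general splitting along a closed invariant subspace described before the theorem applies: the spatial action satisfies $T_t(\psi \circ h) = \psi \circ (T_t h)$, is multiplicative on products, and obeys $T_t N_p(F) = P_p T_t N(F) = N_p(T_t F)$ (as $P_p$ commutes with $T$ and $N$ is a $\GG$-map), so $T_t$ permutes the generators and preserves $V$. For fullness I would take $\cA$ to be the sup-norm closure of the algebra generated by $\{\psi \circ N_p(F): \psi \in C_b(\CC),\, F \in C_c(\GG)\}$, a $\GG$-invariant, conjugation-closed, unital, sup-norm-closed subalgebra of $L^\infty(X,\mu)$ whose $L^2$-closed span is precisely $V$. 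Then $U(\cA) = V$ and Proposition \ref{prop-inducing} gives $\cN_V = \cN_U \cong \cN_\cA$ via the Gelfand map $J$; since $J$ intertwines products and composition with bounded functions, $J^{-1}$ carries the dense family of generating products of $V$ to the corresponding products built from $N_\cA = J^{-1} P_V N$, and unitarity of $J$ shows these span a dense subspace of $L^2(X_\cA,\mu_\cA)$, i.e.\ $\cN_\cA$ is full, exactly as in Theorem \ref{fullProcess}.

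The main obstacle is the pair of containments above: the inclusion $V \subseteq \cHH_p$ rests essentially on Corollary \ref{coro-stabilitypp}, without which the algebra generated by the bounded pure point functions could leave $\cHH_p$ and destroy the identification $P_V N = P_p N$, while the inclusion $N_p(C_c(\GG)) \subseteq V$ requires the truncation argument to accommodate the unboundedness of $N_p(F)$. Once these are in hand, the remaining assertions are bookkeeping built on Theorems \ref{split} and \ref{fullProcess} and Proposition \ref{prop-inducing}.
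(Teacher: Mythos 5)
Your proof is correct and follows essentially the same route as the paper: identify $V$ as the $L^2$-span of the sup-norm closed algebra $\cA$ generated by the $\psi\circ N_p(F)$, invoke the Gelfand machinery (Proposition \ref{prop-inducing}, i.e.\ the content of Theorem \ref{fullProcess} applied to the auxiliary process $(N_p,\cHH,T)$) for fullness, use Corollary \ref{coro-stabilitypp} to get $V\subseteq\cHH_p$ and hence $P_V N=P_p N$, and then read off the moment statements from Theorem \ref{split}. Your only additions are welcome explicit details the paper glosses over, namely the truncation argument showing $N_p(C_c(\GG))\subseteq V$ and the verification that $T_t$ permutes the generators so that $V$ is invariant.
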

\begin{proof} Let $\cN_p^{'} = (N_p, \cHH, T)$.
Let the algebra $\cA$ be the closure with respect to the sup-norm of the linear span of products of the form
$$ \psi_1\circ N_p (F_1)\ldots \psi_n \circ N_p (F_n) ,\; n\in \NN,  \psi_j \in C_b (\CC), F_j\in C_c (\GG).  $$
Then, $V$ is just the subspace generated by $\cA$.

We can then apply the previous theorem to $\cN_p^{'}$ and the algebra $\cA$. This gives a decomposition $\cN_p^{'} = \cN_{V}^{'} \oplus \cN_{V^\perp}^{'}$, where $\cN_{V}^{'}$ is isomorphic to a full spatial process and
$$ \langle N_V^{'}  (F), N_V^{'}  (G)\rangle = \langle N_p (F), N_p(G)\rangle = \langle P_p N (F)\mid P_p N (G)\rangle $$
holds for all $F,G\in C_c (\GG)$. Here, the last equality follows by definition of $N_p$ as $P_p N$.

\smallskip

It remains to show that $P_V N (F) = P_p N(F)$ (then all remaining statements follow easily). To do so, we first note that the range of  $N_p = P_p N $ is contained in the range of $U$ by construction. Moreover, by general principles the algebra $\cA$ and hence the subspace $V$  is contained in the pure point part of $T$ \cite{BL,LMS1}. Thus, $P_V N = P_p N$ holds.

Similar remarks apply to $N_c$. Note that $P_{V^\perp} N = N - P_{V}N = N - P_p N = P_c N$.
\end{proof}

The considerations of this section and in particular the preceding theorem  have the following consequence: whenever we are given a spatial process we can  restrict attention to its point part and obtain a full spatial process with pure point spectrum. In particular, the whole theory developed below for  full  spatial processes with pure point spectrm  will apply to the point part of any spatial process.

\section{Spatial processes with pure point diffraction}\label{Purepoint}
In this section we discuss spatial processes with pure point diffraction. It is these processes that we will be able to classify. We show that for these processes the two notions of pure pointedness of the spectrum defined above viz pure point diffraction spectrum  and pure point dynamical spectrum actually agree.  As noted in Theorem \ref{prop-decomposition}, any spatial process with second moment can be  decomposed into a full  spatial process with pure point diffraction and a general stationary process with continuous diffrction. Thus, the results of this section apply to the corresponding parts of any spatial process.

\bigskip

Let $(N,X,\mu,T)$ be a stationary process with pure point diffraction as discussed in Section \ref{Stochasticprocesses}. Thus, its diffraction measure $\difm$ is a pure point measure and the  set of its atoms denoted by $\Bragg=\Bragg (\difm)$ is given by
$$\Bragg =\{k\in \Gdual: \difm (k) >0\}.$$
Let $$\theta: L^2 (\Gdual,\difm)\longrightarrow L^2 (X,m)$$
be the associated diffraction to dynamics map and $f_k := \theta (1_k)$, $k\in \Bragg$.

While the subspace $\theta (L^2 (\Gdual, \difm)$ of $L^2 (X,\mu)$ may be  small in some sense, in another sense it controls the whole Hilbert space $L^2 (X,\mu)$ due to our assumption of pure point spectrum. A precise version is given next.

\begin{theorem} \label{ergodic+pure point}
Let $\cN = (N,X,\mu,T)$ be a  spatial stationary process, which is full and possesses a second moment. Then, the following assertions are equivalent:
\begin{itemize}
\item[(i)] The diffraction measure $\difm$ of $(N,X,\mu,T)$ is a pure point measure.
\item[(ii)] The representation $T$ has pure point spectrum.
\end{itemize}
In this case, the group $\ev$ of eigenvalues of $T$ is generated by the set
$\Bragg(\difm)$,  and any eigenfunction is a (multiple of a) product of eigenfunctions of the form $f_k$, $k\in \Bragg(\difm)$.
\end{theorem}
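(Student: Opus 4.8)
I would treat the equivalence first and then the two structural assertions, drawing throughout on the stability results of Corollary~\ref{coro-stabilitypp} and the spectral computation of Theorem~\ref{theta-as-spectralmeasure}. For the equivalence, the implication (ii)$\Rightarrow$(i) is immediate: by Theorem~\ref{theta-as-spectralmeasure} the spectral measure of $\theta(H)$ equals $|H|^2\difm$, so if $T$ has pure point spectrum then $|\widehat F|^2\difm$ is a pure point measure for every $F\in C_c(\GG)$, and since $\widehat F$ can be made nonzero at any prescribed point of $\Gdual$ this forces $\difm$ itself to be pure point. For (i)$\Rightarrow$(ii) I would argue that fullness pushes the eigenfunctions $f_k=\theta(1_k)$ up to the whole of $L^2(X,\mu)$. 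Concretely, Proposition~\ref{eigenExpansionOfN(F)} gives $N(F)=\sum_{k\in\Bragg}\widehat F(k)f_k$, so each $N(F)$ lies in $\cHH_p$; Corollary~\ref{coro-stabilitypp}(a) then places every $\psi\circ N(F)$ with $\psi\in C_b(\CC)$ in $\cHH_p$, and Corollary~\ref{coro-stabilitypp}(b) shows that finite products of such functions remain in $\cHH_p$. As $\cHH_p$ is closed and, by fullness, the algebra generated by the $\psi\circ N(F)$ is dense in $L^2(X,\mu)$, I conclude $\cHH_p=L^2(X,\mu)$, i.e. $T$ has pure point spectrum.

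For the eigenvalue group, write $S:=\langle\Bragg\rangle$ for the subgroup of $\Gdual$ generated by $\Bragg$. The inclusion $S\subseteq\ev$ follows because each $f_k$ is an eigenfunction with eigenvalue $k$, the product $f_{k_1}\cdots f_{k_m}$ is an eigenfunction with eigenvalue $k_1+\cdots+k_m$, and $\Bragg$ is symmetric by Lemma~\ref{fminusk}; hence every element of $S$ is realized as such a sum and is an eigenvalue. For the reverse inclusion $\ev\subseteq S$ I would track supports of spectral measures: by Theorem~\ref{theta-as-spectralmeasure}, $\rho_{f_k}=\difm(k)\,\delta_k$ is supported on $S$, and $\rho_{N(F)}=|\widehat F|^2\difm$ is supported on $\Bragg\subseteq S$; Corollary~\ref{coro-stabilitypp} then keeps the spectral measures of all $\psi\circ N(F)$ and of their finite products supported on $S$. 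Since the subspace $\cHH_p^S$ of elements whose spectral measure is supported on $S$ is closed (as established within the proof of Corollary~\ref{coro-stabilitypp}) and contains a dense algebra by fullness, every $f\in L^2(X,\mu)$ has $\rho_f$ supported on $S$; applying this to an eigenfunction of eigenvalue $\lambda$ gives $\rho_f=\|f\|^2\delta_\lambda$, whence $\lambda\in S$. Thus $\ev=S=\langle\Bragg\rangle$.

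For the structure of the eigenfunctions I would invoke ergodicity, as the label and the standing setting of this section dictate. For any eigenfunction $f$ the modulus $|f|$ is $T$-invariant, hence constant a.e., so eigenfunctions have constant modulus; in particular $|f_k|=\difm(k)^{1/2}$ a.e. by Lemma~\ref{fminusk}. Given $\lambda\in\ev=S$, write $\lambda=k_1+\cdots+k_m$ and set $p:=f_{k_1}\cdots f_{k_m}$; then $p$ is an eigenfunction of eigenvalue $\lambda$ of constant modulus $\prod_i\difm(k_i)^{1/2}>0$, hence $p\neq0$. If $g$ is any eigenfunction of eigenvalue $\lambda$, then $g/p$ is $T$-invariant and lies in $L^2$ because $1/|p|$ is bounded, so by ergodicity $g/p$ is constant and $g$ is a scalar multiple of $p$. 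This yields the asserted description of eigenfunctions as multiples of products of the $f_k$.

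The main obstacle is the (i)$\Rightarrow$(ii) direction together with the support bookkeeping for $\ev\subseteq S$: both rest on the fact that $\cHH_p$, and more delicately the support-restricted subspace $\cHH_p^S\cap L^\infty(X,\mu)$, is an algebra closed under bounded continuous functional calculus. This is exactly the content imported from Corollary~\ref{coro-stabilitypp}, and it is what converts the single-generator statement $N(F)\in\cHH_p$ into control of the whole (full) Hilbert space. Establishing the nonvanishing of the products $p$ in the final clause is the other delicate point, and it is precisely there that ergodicity enters.
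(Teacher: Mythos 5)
Your proof is correct and follows essentially the same route as the paper: the identity $\rho_{N(F)}=|\widehat F|^2\difm$, the stability results of Corollary~\ref{coro-stabilitypp} combined with fullness to force $\cHH_p=L^2(X,\mu)$ and to confine spectral supports to $\langle\Bragg\rangle$, and ergodicity (implicit in the paper's standing setting, as you correctly note) to get constant-modulus eigenfunctions, nonvanishing products, and multiplicity one. Your treatment merely makes explicit a few steps the paper leaves terse, such as the covering argument in (ii)$\Rightarrow$(i) and the quotient argument $g/p$ for uniqueness of eigenfunctions.
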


\begin{proof} We mimick the argument of \cite{LS} (see \cite{BL, LMS1} as well). By definition of the spectral measures $\rho_f$  given in \eqref{spectralmeasure} we have
$$ \langle N(F) \mid T_t N(F)\rangle = \int_{\Gdual}  (\gamma,t) d\rho_{N(F)} (\gamma)$$
for all $F\in C_c (\GG)$. Moreover, by definition  of $\difm$ and $\gamma$ and the fact that $N$ is a $\GG$-mapping  we find
$$\langle N(F) \mid T_t N(F)\rangle = \gamma (F* \widetilde{F(T_{-t} \cdot)}) = \int_{\Gdual} (\gamma,t) |\widehat{F}|^2 d\omega (\gamma)$$
first for real-valued  $F\in C_c (\GG)$ and then by linearity for all $F\in C_c (\GG)$.
Taking inverse Fourier transforms we infer
\begin{equation}\label{ift}
\rho_{N(F)} = |\widehat{F}|^2 d\difm\;\:
  \end{equation}
for all $F\in C_c (\GG)$.

\smallskip

 The implication $(ii)\Longrightarrow (i)$ is now clear from \eqref{ift}. (We also saw this
 earlier from the properties of $\theta$.)  We next show $(i)\Longrightarrow (ii)$: As the diffraction measure is a pure point measure, the spectral measures associated to $N(F)$ are pure point measures (supported on $\Bragg$) by \eqref{ift}. Thus, by part (a) of Corollary \ref{coro-stabilitypp}, the spectral measures of $\psi \circ N (F)$ for $\psi \in C_c (\CC)$, $F\in C_c (\GG)$, are then also pure point measures supported on the group generated by $\Bragg$. By the fullness assumption and part (b) of Corollary \ref{coro-stabilitypp}  we then infer that $T$ has pure point spectrum with eigenvalues contained in the group generated by $\Bragg$.

Now, multiplying eigenfunctions $f_k$, $k\in \Bragg$, we obtain eigenfunctions for arbitrary elements in the group generated by $\Bragg$. This shows that the group generated by $\Bragg$ is indeed the group of eigenvalues. At the same time it gives an eigenfunction for each eigenvalue. By ergodicity the multiplicity of each eigenvalue is one and these eigenfunctions are all eigenfunctions.
\end{proof}

\begin{remark}  (a) There is quite some history to Theorem \ref{ergodic+pure point}: The implication $(ii)\Longrightarrow (i)$ is sometimes known as Dworkin argument. It  was first  established  by Dworkin \cite{Dworkin} with later extensions   by Hof \cite{Hof1} and Schlottmann \cite{Martin} (see \cite{Rob2,Sol2} for remarkable applications as well). In special situations equivalence was then shown by Lee/Moody/Solomyak in \cite{LMS1}. This was then generalized in various directions by Gou\'{e}r\'{e} \cite{Gouere},  Baake/Lenz \cite{BL}, Deng/Moody \cite{XR}, and Lenz/Strungaru \cite{LS} (the latter result containing all earlier ones).  The above result contains all these  results (provided the underlying process is real).

(b) The equivalence between diffraction and dynamical spectrum cannot hold for the complete spectrum, as discussed  by van Enter/Mieskicz \cite{EM}. Indeed,  \cite{EM} gives an example of    mixed spectrum where  the pure point part of the dynamical spectrum is completely missing in the diffraction (up to the constant eigenfunction).
\end{remark}

Due to the previous theorem we do not need to distinguish between pure point diffraction and pure point dynamical spectrum when dealing with full processes with second moments. This suggests the following definition.

\begin{definition} \label{def-pure-point-process} A spatial stationary  process $ (N,X,\mu,T)$ is called  \textit{pure point}  if it is full, possesses a second moment and its diffraction measure is a pure point measure.
\end{definition}

If $0 \in \Bragg$ in the situation of Prop.~\ref{eigenExpansionOfN(F)}
then by Lemma~\ref{fminusk} $f_0$ is a real function belonging to the
$k=0$-eigenspace of $L^2(X,m)$ and $||f_0||= \difm(0)^{1/2}$.
Since constant function ${\it 1}_X: \xi \mapsto 1$
for all $\xi \in X$ is an eigenfunction for $k=0$ and the dynamical system is ergodic,
it spans the $k=0$-eigenspace of $L^2(X,m)$. Thus  $f_0 = \pm \difm(0)^{1/2}{\it 1}_X$.
Now, we have already pointed out that if $N$ is a stationary process then so is
any non-zero real multiple of it. In particular $\pm N$ are stationary processes
and one of them has the corresponding function $f_0 = \difm(0)^{1/2}{\it 1}_X$. It causes
unnecessary awkwardness later on in the discussion of phase forms to deal with the case $f_0 = -\difm(0)^{1/2}{\it 1}_X$, so wish
to normalize $N$ to avoid this situation:

\bigskip

\begin{assumption}
\label{0-eigenfunction assumption}
\begin{center}
\fbox{\parbox[c]{115mm}
{\centerline{In the pure point ergodic case
 we shall always assume that}
 \centerline{$f_0 = \difm(0)^{1/2}{\it 1}_X$.}}}
\end{center}
  \end{assumption}

\medskip

\section{Relators and associated phase forms}\label{Cycles}
As mentioned already, our aim is to describe all stationary processes with a given pure point diffraction measure $\omega$. This description will be given in terms  of  a set of objects called {\it relators}  arising out of the Bragg spectrum of $\omega$. While the later sections need the material of this section, it can be read independently of the previous ones.

\bigskip

Let $\Gdual$ be a locally compact abelian group.
Let $\Bragg \subset \Gdual$ be given with $\Bragg = -\Bragg$ and
let $\ev$ be the subgroup of $\Gdual$ generated by $\Bragg$.
We shall give this group the discrete topology, and in order to keep this
clear we shall usually write $\ev_d$ instead of $\ev$.

In the sequel we often use $m$-tuples $(k_1, \dots, k_m) \in \Bragg^m$ (for various $m$) of elements of $\Bragg$ or of $\ev_d$ and we often simply denote these by bold letters ${\bf k}$. If ${\bf k} = (k_1, \dots, k_m), {\it l} = (l_1, \dots, l_n)$ then
 ${\bf k}{\bf l}$ is the concatenation
 $$ {\bf{ k l}} = (k_1, \dots, k_m,l_1, \dots, l_n)$$
and
$$[{\bf k}] := k_1 + \dots + k_m \in \ev_d.$$
 In this way $\bbS:= \bigcup_{m=0}^\infty \Bragg^m$ becomes
a monoid under concatenation, with the empty $0$-tuple $\emptyset$ as the identity element.

%We shall be concerned with the Cayley graph  $\graph = \graph(\Bragg,\ev_d)$ of $\ev_d$ with respect to the generator $\Bragg$. More precisely, we  define this graph $\graph$  as follows: the vertices are the points of
%$\ev_d$ and an unordered pair $\{v,w\}$ of vertices is joined by an edge if and only if the difference $v-w\in \Bragg$.

A {\it relator} is any $m$-tuple ${\bf k} = (k_1, \dots, k_m) \in \Bragg^m$ with
$$[{\bf k}]= k_1 + \dots + k_m =0.$$
 The empty tuple $\empty$ is also taken to be a relator. Thus the relators are a
subset of $\bbS$.
Let
\begin{eqnarray*}
Z_n & := & \{{\bf k} = (k_1, \dots, k_n) \in \Bragg^n\,:\, [{\bk}] =0 \} \,, n\ge 1 \,;\\
Z_0 &:= & \{\emptyset\}\,; \\
Z& := & \bigcup_{n=0}^\infty Z_n \, .
\end{eqnarray*}
Note that $Z_1= \{(0)\}$ if $0\in \Bragg$, otherwise it is empty.

Concatenations of relators are relators and this makes $Z$ a submonoid of $\bbS$.

We introduce an equivalence relation on $\bbS$ by transitive extension
of the three rules:
\begin{itemize}
\item[(R1)] ${\bf k} \sim {\bf l}$ if ${\bf l}$ is a permutation of the symbols of ${\bf k}$;
\item[(R2)] if ${\bf k}= (k_1, \dots, k_m) \in \bbS$ and $0\in \Bragg$ then $(k_1, \dots, k_m, 0) \sim (k_1, \dots, k_m)$ (along with the first rule, this means that when $0\in \Bragg$, zeros can be dropped or added wherever they appear);
\item[(R3)]  ${\bf k} \sim {\bf l}$ if ${\bf l}$ can be obtained from ${\bf k}$ by inserting removing pairs $\{k,-k\}$, $k\in \Bragg$.
\end{itemize}

The second item here is related to our Assumption ~\ref{0-eigenfunction assumption},
see below.

It is easy to see that ${\bf k} \sim {\bf l}\,,\, {\bf k'} \sim {\bf l'} \Rightarrow
{\bf k}{\bf l} \sim {\bf k'}{\bf l'}$, so multiplication descends from $\bbS$ to $\bbS^\sim:= \bbS/\sim$,
whereupon $\bbS^\sim$ becomes an Abelian group under this multiplication with
$\emptyset^\sim$ as the identity element. Note that if $0 \in \Bragg$ then
$0^\sim = \emptyset^\sim$. We give $\bbS^\sim$ the discrete topology, so in particular it is a locally compact Abelian group.

Since the sum $[{\bf k}] = k_1 +\dots + k_m$ of an element of $\Bragg^m$ is constant on the entire
equivalence class ${\bf k}^\sim$, we see that $Z$ is the union of all equivalence classes
with component sum equal to $0$, and we obtain $\cZ := Z/\sim$ as a subgroup of $\bbS^\sim$.
In addition, we see that there is a surjective homomorphism
$$\phi: \bbS^\sim \longrightarrow \ev_d,\;\mbox{with}\:\; {\bf k} \mapsto [{\bf k}]$$
 and its kernel is precisely $\cZ$.

\begin{definition} The group $\cZ= \cZ(\Bragg)$ is called the {\it relator group}. The elements of its dual group
$\widehat{\cZ}$ are called {\it phase forms}. Thus a
 phase form is a group homomorphism
\[ a^*: \loops \longrightarrow U(1)\]
(the latter being the unit circle in $\CC$).
\end{definition}

\begin{remark} (a)  It is useful to think of relators in terms of the cycle structure of the Cayley graph of the group of eigenvalues with respect to the generators $\cS$. In this way the group $\cZ$ could then be seen as a a kind of abelianized homotopy group or  a first cohomology group of the graph.  This works in the following way:
We begin with $\ev$ and the set  $\Bragg$ which satisfies $\Bragg = -\Bragg$
and generates $\ev$. The Cayley graph has vertices $\ev$ and edges consisting of all pairs
$(x, x+k)$ where $x\in \ev$ and $k \in \Bragg$. Since for each edge $(x, x+k)$ we have the
reverse edge $(x+k, x+k-k=x)$, we can think of the graph as being non-directed. Since $\Bragg$
generates $\ev$, the graph is connected. Each
${\bf k} = (k_1, \dots, k_m)$ can be thought of as an edge sequence. Given any $x\in \ev$
we obtain a path $x,x+k_1,x+k_1+k_2, \dots, x+k_1+ \cdots + k_m$ from it. This path is a cycle  if
and only if
${\bf k}$ is a relator, that is, $[{\bf k}]=0$.

There is a sort of homotopy of edge sequences, which is generated by rules corresponding
to (R1), (R2), (R3) above. The first amounts to saying that for all $x\in \ev$ and for
all $k,l\in\Bragg$, the paths $x,x+k,x+k+l$
and $x,x+l, x+l+k$ are homotopic; the second says that if $0\in \Bragg$ then
the path $x, x+0$ is homotopic with the empty path from $x$; and the third
says that the path $x,x+k, x+k+(-k)$ is homotopic to the empty path from $x$. The
group $\cZ$ may be thought of as a sort of homotopy group for cycles originating
(and terminating) at $0$.

(b) It should be noted that the homotopy group $\cZ$ may be infinite even when the
group $\GG$, and hence the associated graph, is finite. We shall see this in the examples
below.
\end{remark}

The phase forms play a central role in homometry problem. We note that they depend
only on $\Bragg$, not on the actual values of the diffraction measure $\difm$. We shall find it convenient
to sometimes abuse the notation and write things like $a^*(k_1,\dots, k_m)$ where
we mean $a^*((k_1,\dots, k_m)^\sim)$ when using the phase form $a^*$.

Of the various ${\bf k} = (k_1, \dots, k_m )$ that can represent a given element in
$\cZ$ there is (at least) one of minimal length $n$. This minimal
length is denoted by $\len({\bf k})$, and is called the {\it reduced length} of ${\bf k}$. Here, of course, the length of $(k_1,\ldots, k_m)$ is given by $m$.

Define
$\cZ_n$ to be the set of elements which have reduced length less than or equal to $n$.
We have
\begin{eqnarray*}
\cZ &=& \bigcup_{n=0}^\infty \cZ_n,\\
 \cZ_n \cZ_p &\subset& \cZ_{n+p} \quad\mbox{ for all non-negative integers n and p.}
\end{eqnarray*}

We let $\cF(\Bragg)$ be the Abelian group (with discrete topology) generated by symbols
$e(k)$, $k\in \Bragg$, subject only to the relations\footnote{The second of these relations comes from the underlying assumption that we are dealing with {\em real}  stationary processes. If this condition were dropped then these relations would also be dropped.} $e(0) =1$ (if $0\in \Bragg$) and $e(-k)e(k)=1$ for all $k\in \Bragg$.

\begin{prop}\label{F(S) homomorphism} The mapping
$\varphi: e(k) \mapsto (k)^\sim$ for all $k\in\Bragg$ extends to an isomorphism
$\varphi: \cF(\Bragg) \longrightarrow \bbS^\sim$.
\end{prop}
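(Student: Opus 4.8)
The plan is to exhibit an explicit two-sided inverse for $\varphi$ and thereby establish the isomorphism directly, rather than arguing injectivity abstractly. Since $\cF(\Bragg)$ is defined by a presentation, to extend $\varphi$ to a homomorphism $\cF(\Bragg) \to \bbS^\sim$ it suffices, by the universal property of an Abelian group given by generators and relations, to check that the proposed images $(k)^\sim$ of the generators $e(k)$ satisfy the defining relations of $\cF(\Bragg)$ in $\bbS^\sim$. Because $\bbS^\sim$ is Abelian there is no constraint coming from commutativity; we need only that $(0)^\sim = \emptyset^\sim$ when $0\in\Bragg$ (which is exactly the content of rule (R2), as the text already notes) and that $(-k)^\sim (k)^\sim = (-k,k)^\sim = \emptyset^\sim$ for every $k\in\Bragg$ (which is rule (R3)). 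Granting these, $\varphi$ extends to a well-defined group homomorphism.

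Next I would build the candidate inverse. On the free monoid $\bbS$ I set $\psi(k_1,\dots,k_m) := e(k_1)\cdots e(k_m)\in\cF(\Bragg)$, which is visibly a monoid homomorphism from $\bbS$ into the multiplicative monoid of the group $\cF(\Bragg)$. The key step is to verify that $\psi$ is constant on $\sim$-equivalence classes, so that it descends to $\bbS^\sim$. As $\sim$ is the transitive extension of (R1)--(R3), it is enough to check invariance under each rule separately: invariance under (R1) holds because $\cF(\Bragg)$ is Abelian; invariance under (R2) holds because $e(0)=1$ in $\cF(\Bragg)$; and invariance under (R3) holds because $e(-k)e(k)=1$ in $\cF(\Bragg)$. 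These are precisely the defining relations of $\cF(\Bragg)$, so $\psi$ descends to a homomorphism $\psi:\bbS^\sim\to\cF(\Bragg)$.

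Finally I would confirm that $\varphi$ and $\psi$ are mutually inverse. On generators, $\psi(\varphi(e(k))) = \psi((k)^\sim) = e(k)$, so $\psi\circ\varphi=\mathrm{id}$; and for any class $(k_1,\dots,k_m)^\sim$ in $\bbS^\sim$ we have $\varphi(\psi((k_1,\dots,k_m)^\sim)) = \varphi(e(k_1)\cdots e(k_m)) = (k_1)^\sim\cdots(k_m)^\sim = (k_1,\dots,k_m)^\sim$, so $\varphi\circ\psi=\mathrm{id}$. Surjectivity of $\varphi$ is already visible from the second computation and injectivity from the first. I do not expect any genuine obstacle: both $\varphi$ and $\psi$ are forced by their values on generators, and the only real content is the observation that the two presentations --- generators-and-relations for $\cF(\Bragg)$ on one side, concatenation modulo (R1)--(R3) for $\bbS^\sim$ on the other --- encode exactly the same relations. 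The one bookkeeping point to keep straight is that $e(0)=1$ is imposed (and rule (R2) applies) only when $0\in\Bragg$, so both constructions must be read with that proviso; once this is noted the two sets of relations match term for term.
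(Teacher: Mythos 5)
Your proof is correct, and it handles the injectivity half of the statement by a genuinely different route than the paper. The paper obtains $\varphi$ exactly as you do, by checking that the defining relations of $\cF(\Bragg)$ hold among the classes $(k)^\sim$, but it then proves injectivity by a kernel analysis: it asserts that $(k_1,\dots,k_m)^\sim = \emptyset^\sim$ forces the non-zero entries to cancel in pairs $k,-k$ (with zeros harmless when $0\in\Bragg$), and concludes that the corresponding word $e(k_1)\cdots e(k_m)$ is already trivial in $\cF(\Bragg)$. That cancellation claim is a combinatorial normal-form statement about the equivalence relation $\sim$, which the paper states without a detailed argument (implicitly an induction on the generating rules (R1)--(R3)). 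You sidestep it entirely by descending the monoid homomorphism $\psi:\bbS\to\cF(\Bragg)$ through the relations --- where well-definedness reduces to the three one-line checks you list, each matching a defining relation of $\cF(\Bragg)$ --- and then verifying $\psi\circ\varphi = \mathrm{id}$ and $\varphi\circ\psi = \mathrm{id}$ on generators and on classes respectively. Your version is the standard ``two presentations of the same group'' argument: it is more mechanical and avoids any unproved cancellation lemma, at the cost of not yielding the normal-form information the paper's kernel analysis provides (namely an explicit description of which tuples are $\sim$-equivalent to the empty tuple). One small point worth making explicit in your write-up: elements of $\cF(\Bragg)$ a priori involve inverses $e(k)^{-1}$, but since $\Bragg = -\Bragg$ the relation $e(-k)e(k)=1$ gives $e(k)^{-1} = e(-k)$, so every element is a product of generators; this is what makes ``check on generators'' suffice for $\psi\circ\varphi=\mathrm{id}$ and makes your computation of $\varphi\circ\psi$ cover all of $\bbS^\sim$.
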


 \begin{proof} The existence of $\varphi$ as a surjective homomorphism is clear since $\bbS^\sim$ is the Abelian
 group freely generated by the symbols $(k)^\sim$ subject only to the relations defining
 $\cF(\Bragg)$. These correspond precisely to  $(k)^\sim (-k)^\sim = (k,-k)^\sim = \emptyset^\sim$
 and $(0)^\sim = \emptyset^\sim$ when $0\in \Bragg$ which hold in $\bbS^\sim$.

 A typical element $e(k_1) \dots e(k_m)$ of $\cF(\Bragg)$ is in the kernel of $\varphi$ if and
 only if $(k_1)^\sim \dots (k_m)^\sim = (k_1, \dots, k_m)^\sim  = \emptyset$. This happens only if the non-zero components
 of $(k_1, \dots, k_m)$ cancel in pairs $k,-k$. Then also the corresponding terms
 in $e(k_1) \dots e(k_m)$ cancel in pairs. Since also $e(0) =1$ in $\cF(\Bragg)$
 when $0 \in \Bragg$, we see that $e(k_1) \dots e(k_m) $ reduces to the identity element.
\end{proof}

We find it convenient to identify $\cF(\Bragg)$ and $\bbS^\sim$ through Prop.~\ref{F(S) homomorphism}. Having done this we obtain from $\phi:\bbS^\sim \longrightarrow \ev_d$, defined above, the homomorphism
$$\varphi: \cF(\Bragg) \longrightarrow \ev_d,\:\;\mbox{with}\;\: e(k) \mapsto k \,,$$
whose kernel can be identified with $\cZ$.

Thus, we have the following
 exact sequence of groups (all given the discrete topology)
\begin{equation}\label{exactSeq1}
1 \longrightarrow \cZ  \longrightarrow \cF(\Bragg)
\longrightarrow \ev_d \longrightarrow 1.
\end{equation}
Dualization then gives the exact sequence
\begin{equation}\label{exactSeq2}
1 \longleftarrow \widehat{\cZ}   \longleftarrow \widehat{\cF(\Bragg) }
\longleftarrow \TT \longleftarrow 1,
\end{equation}
where $\TT := \widehat\ev_d $ is a compact Abelian group
by  Pontryagin duality.  Note that we get surjectivity from the continuity of
the mappings and the
compactness of these groups. In fact, surjectivity of the map $\widehat{\cF(\Bragg)} \longrightarrow \widehat{\cZ}$ is crucial for our subsequent considerations.

\begin{definition}
The mapping $a : \Bragg \longrightarrow U(1)$
satisfies the $m$-{\it moment condition} if
\begin{equation} \label{mMoment}
a(k_1)a(k_2) \dots a(k_{m}) =1
\end{equation}
whenever $k_1, \dots, k_m \in \Bragg$ and $k_1 + \cdots + k_m = 0$.
\end{definition}

We shall see the reason for calling these moment conditions in
\S\ref{Special}. Notice that the first moment condition is empty (and so trivially holds!) unless $0 \in \Bragg$,
in which case it says that $a(0) = 1$. The second moment condition is equivalent to the statement that $a(-k) = \overline{a(k)}$ for all $k \in \Bragg$. If $0\in \Bragg$
then the second moment condition alone gives $a(0)a(0)=1$, so $a(0) = \pm 1$. See
Remark~\ref{0-eigenfunction assumption}. If the first moment condition is not empty and holds then $a(0)=1$ and then the $m$-moment condition includes
the $n$-moment condition for all $n<m$ since additional slots can be filled with
zeros.

Obviously, any mapping $a: \Bragg \longrightarrow U(1)$ satisfying the first and second moment conditions determines a character $\cF(\Bragg)$, i.e. an element
of $\widehat{\cF(\Bragg)}$, via $e(k) \mapsto a(k)$ for all
$k\in \Bragg$. Conversely every character of $\cF(\Bragg)$ clearly determines
a mapping $a$ satisfying the first and second moment conditions. (Notice that
when $0\in\Bragg$ then $e(0)$ is the identity element of $\cF(\Bragg)$ and also
$a(0) = 1$.)
In the
sequel we will use these two concepts interchangeably and use
the symbol $a$ both as a mapping on $\Bragg$ and as the corresponding
character on $\cF(\Bragg)$.
When thought of as elements of $\widehat{\cF(\Bragg)}$ they
are called {\it elementary phase forms}.

By restriction any elementary phase form $a$ determines a phase form $a^*$, that is, an element
of $\widehat{\cZ}$. The exact sequence \eqref{exactSeq2} shows that every phase form arises
by restriction from an elementary phase form and two elementary phase forms restrict
to the same phase form if and only their ratio
is in $\TT$, that is to say, if and only if their ratio is a character on
$\ev_d$.
This is the essential part of determining equivalence of pure point processes.

\begin{prop} \label{characterCondition}
\begin{itemize}
\item[{(i)}] Any mapping
$a: \Bragg \longrightarrow U(1)$ satisfying the first and second moment conditions
determines a unique elementary phase form and every elementary phase form
arises in this way.
\item[{(ii)}]
Any phase form is the restriction of an elementary phase form. The ratio of any two elementary phase forms giving the same phase form is an element of \,$\TT$.
\item[{(iii)}] An elementary phase form satisfies the $m$-moment conditions
for $m= 1,\dots,n$ if and only it kills $\cZ_n$.
\item[{(iv)}] An elementary phase form $a$ determines a character on $\ev_d$,
i.e. lies in $\TT$,
if and only if it satisfies all $m$-moment conditions, $m =1,2, \dots$.
\end{itemize}
\end{prop}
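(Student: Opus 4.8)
The plan is to dispatch the four parts in order, using the exact sequences \eqref{exactSeq1} and \eqref{exactSeq2} together with the identification $\cF(\Bragg)\cong\bbS^\sim$ from Prop.~\ref{F(S) homomorphism}; the genuine content sits in (iii), while the remaining parts are formal consequences. For (i) I would note that the defining relations of $\cF(\Bragg)$ — namely $e(0)=1$ (when $0\in\Bragg$) and $e(-k)e(k)=1$ for all $k$ — translate under $e(k)\mapsto a(k)$ into exactly the first and second moment conditions $a(0)=1$ and $a(-k)a(k)=1$. Hence, by the universal property of an Abelian group presented by generators and relations, a map $a:\Bragg\to U(1)$ extends to a (necessarily unique) character of $\cF(\Bragg)$ if and only if it respects these relations, i.e. satisfies the first two moment conditions; conversely, restricting any character of $\cF(\Bragg)$ to the generators yields such a map. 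This is really just a restatement of the discussion immediately preceding the proposition.

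For (ii), the first assertion is precisely the surjectivity of $\widehat{\cF(\Bragg)}\to\widehat{\cZ}$ recorded after \eqref{exactSeq2}, which holds by compactness/Pontryagin duality (equivalently, characters of $\cZ$ extend to $\cF(\Bragg)$ since $U(1)$ is divisible). For the second assertion, suppose two elementary phase forms $a,b$ restrict to the same element of $\widehat{\cZ}$; then $a/b$ is trivial on $\cZ$ and so lies in the kernel of $\widehat{\cF(\Bragg)}\to\widehat{\cZ}$. By exactness of \eqref{exactSeq2} this kernel equals the image of $\TT=\widehat{\ev_d}$, whence $a/b\in\TT$.

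Part (iii) is the crux and the only place requiring care. Using $\cF(\Bragg)\cong\bbS^\sim$, evaluating an elementary phase form $a$ on the class $(k_1,\dots,k_m)^\sim$ yields exactly the product $a(k_1)\cdots a(k_m)$. Now a relator of length $m$ is by definition a tuple with $k_1+\cdots+k_m=0$, and its class lies in $\cZ_m\subseteq\cZ_n$ as soon as $m\le n$; conversely, every element of $\cZ_n$ has, by definition of $\len$, a representative $(k_1,\dots,k_j)$ with $j\le n$, which — being a representative of an element of $\cZ$ — is a relator, so $\sum_i k_i=0$. Thus "$a$ kills $\cZ_n$" unwinds to "$a(k_1)\cdots a(k_j)=1$ for every relator of length $j\le n$", which is exactly the conjunction of the $m$-moment conditions for $m=1,\dots,n$. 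I expect the sole point needing attention is the bookkeeping between reduced length and representative length, namely that $\len\le n$ is the same as admitting a representative of length $\le n$; this is immediate from the definition of $\len$.

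Finally (iv) follows by passing to the limit. By exactness of \eqref{exactSeq2}, an elementary phase form $a$ lies in $\TT$ (i.e. factors as a character through $\varphi:\cF(\Bragg)\to\ev_d$) if and only if it is trivial on the whole of $\cZ$. Since $\cZ=\bigcup_{n}\cZ_n$, triviality on $\cZ$ is equivalent to killing every $\cZ_n$, which by (iii) is in turn equivalent to satisfying the $m$-moment conditions for all $m=1,2,\dots$. This completes the chain and proves the proposition.
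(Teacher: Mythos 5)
Your proposal is correct and takes essentially the same route as the paper: the paper likewise dismisses (i) and (ii) as ``already done above'' (the discussion of the moment conditions as the defining relations of $\cF(\Bragg)$ and the exact sequences \eqref{exactSeq1}--\eqref{exactSeq2}), proves (iii) by observing that killing $\cZ_n$ amounts to $a(k_1)\cdots a(k_m)=1$ for all relators of length $m\le n$, and gets (iv) from the equivalence of killing $\cZ$ with $a\in\TT=\widehat{\ev_d}$. Your handling of the reduced-length bookkeeping in (iii) --- that any relator of length $m\le n$ represents a class in $\cZ_n$, and conversely any class in $\cZ_n$ has a relator representative of length $\le n$ --- is in fact slightly more explicit than the paper's one-line version of the same argument.
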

\begin{proof} Parts (i) and (ii) are already done above.

$\cZ_n$ is generated by all the ${\bf k} = (k_1, \dots, k_m)$
with $k_1 + \cdots + k_m =0$, $m \le n$. An elementary phase form $a$
kills the equivalence class of $\it k$ in $\cF(S)$ iff
\[a(k_1) \dots a(k_m) =1 \,. \]
This proves (iii). An elementary phase form $a$ kills all of $\cZ$ iff $a^*$ is trivial
which happens iff
$a \in \TT = \widehat{\ev_d}$, which gives (iv).
\end{proof}

Of particular interest to us is the case where the group $\cZ$ is generated by $\cZ_n$ for some $n\geq 2$. This can be understood on the level of phase forms in the following way.

\begin{lemma}\label{charChar} Let $n\ge2$. Then the following assertions are equivalent:
\begin{itemize}
\item[(i)]  $\langle \cZ_n \rangle = \cZ$.

\item[(ii)]  Any  mapping $a:S \longrightarrow U(1)$ satisfying  the $m$-moment conditions for $1\le m \le n$ extends to a character on $\ev_d$ (i.e. lies in $\TT$).
\end{itemize}

\end{lemma}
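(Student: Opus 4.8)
The plan is to translate statement (ii) entirely into the language of characters of $\cF(\Bragg)$ and then read the equivalence off from Pontryagin duality. Since $n \ge 2$, any $a \colon \Bragg \longrightarrow U(1)$ satisfying the $m$-moment conditions for $1 \le m \le n$ in particular satisfies the first and second moment conditions, so by Proposition~\ref{characterCondition}(i) it is exactly an elementary phase form, i.e.\ a character of $\cF(\Bragg)$. By Proposition~\ref{characterCondition}(iii), the requirement that $a$ satisfy the $m$-moment conditions for $1 \le m \le n$ is equivalent to $a$ killing $\cZ_n$; as $a$ is a homomorphism, this is the same as $a$ killing the subgroup $\langle \cZ_n \rangle$. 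By Proposition~\ref{characterCondition}(iv), $a$ extends to a character of $\ev_d$ (lies in $\TT$) precisely when $a$ kills all of $\cZ$. Thus (ii) is equivalent to the statement that every character of $\cF(\Bragg)$ trivial on $\langle \cZ_n \rangle$ is trivial on $\cZ$, and in this form the implication (i)$\Longrightarrow$(ii) is immediate, since if $\langle \cZ_n \rangle = \cZ$ the two triviality conditions coincide.

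For (ii)$\Longrightarrow$(i) I would argue by contraposition. Suppose $\langle \cZ_n \rangle \subsetneq \cZ$, so that the quotient $\cZ / \langle \cZ_n \rangle$ is a nontrivial discrete Abelian group. By Pontryagin duality it then carries a nontrivial character, and pulling this back along the quotient map $\cZ \longrightarrow \cZ / \langle \cZ_n \rangle$ yields $\chi \in \widehat{\cZ}$ that is trivial on $\langle \cZ_n \rangle$ but nontrivial on $\cZ$. Here the exact sequence~\eqref{exactSeq2} enters: the surjectivity of $\widehat{\cF(\Bragg)} \longrightarrow \widehat{\cZ}$, emphasized as crucial after~\eqref{exactSeq2}, allows me to extend $\chi$ to a character $a \in \widehat{\cF(\Bragg)}$. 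Since $a|_{\cZ} = \chi$, this $a$ kills $\langle \cZ_n \rangle$, hence by Proposition~\ref{characterCondition}(iii) satisfies the $m$-moment conditions for $1 \le m \le n$, yet it does not kill $\cZ$, so by Proposition~\ref{characterCondition}(iv) it does not lie in $\TT$. This contradicts (ii), completing the contrapositive.

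The only genuinely nonformal step is the extension of $\chi$ from $\cZ$ to $\cF(\Bragg)$, which is exactly the surjectivity in~\eqref{exactSeq2} the paper has already obtained from the compactness of the relevant groups and continuity; the remaining ingredient, that a nontrivial discrete Abelian group admits a nontrivial character, is standard Pontryagin duality. Equivalently, the whole argument can be packaged via annihilators: writing $H := \langle \cZ_n \rangle \le \cZ \le \cF(\Bragg)$, condition (ii) says precisely that $H^\perp = \cZ^\perp$ inside $\widehat{\cF(\Bragg)}$, whereupon the annihilator (double-duality) theorem forces $H = \cZ$. I expect this separation/extension step to be the only place where content beyond Proposition~\ref{characterCondition} is required.
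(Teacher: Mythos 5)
Your proof is correct and takes essentially the same route as the paper: you translate (ii) via Proposition~\ref{characterCondition} into the statement that every character of $\cF(\Bragg)$ killing $\langle \cZ_n \rangle$ also kills $\cZ$, and then conclude by Pontryagin duality that $\langle \cZ_n \rangle = \cZ$. The only difference is that where the paper simply cites the fact that characters distinguish distinct closed subgroups of the discrete group $\cF(\Bragg)$, you unpack that fact into its standard proof --- a nontrivial character on the quotient $\cZ/\langle \cZ_n\rangle$, pulled back and extended via the surjectivity in~\eqref{exactSeq2} --- which is a harmless elaboration rather than a different argument.
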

\begin{proof} (ii)$\Longrightarrow$ (i):  The fact that $a:S \longrightarrow U(1)$ satisfies the $m$-moment conditions for $1\le m \le n$ is equivalent to saying that the corresponding character $a$
on $\cF(S)$ kills  $\cZ_n$, or equivalently, kills the subgroup $\langle \cZ_n \rangle$
that it generates. But Prop.~\ref{characterCondition} says that $a$ extends to a character
on $\ev_d$ iff $a$ kills $\cZ$. Suppose that  fact that the mapping $a:S \longrightarrow U(1)$ satisfies the $m$-moment conditions for $1\le m \le n$ is enough to sufficient to guarantee that whenever $a$ kills $\langle \cZ_n \rangle$ it must also
kill $\cZ$. Then  $\langle \cZ_n \rangle=\cZ$
because $\langle \cZ_n \rangle \subset \cZ$, both are closed subgroups
of $\cF(\Bragg)$, and the Pontryagin duality theory says that the characters can distinguish
distinct closed subgroups.

The implication (i)$\Longrightarrow$ (ii) is clear.
\end{proof}

\medskip

The preceding considerations suggest to consider
$$n_0:=n_0 (\cZ):= \inf\{n\in \NN: \mbox{ $\cZ$ generated by $\cZ_n$}\}.$$
Here, the infimum of the empty set is defined  to be  $\infty$.

\begin{remark} \label{2m+1 theorem} If $\ev = \Bragg + \dots + \Bragg$ (with $r$ summands) then any relator can be written
as a product of relators of length at most $2r+1$ so $\cZ_{2r+1}$ generates $\cZ$. Thus, in this case $n_0 \leq 2 r + 1$ \cite{LM}.
\end{remark}

%We define $\widehat{\cZ_n}$ to be the subgroup of $\widehat {\cF(\Bragg)}$ satisfying
%the $m$-moment conditions for $m \le n$ ($n\ge2$). Note that $\cZ_n$ is NOT
%a group and so this notation is not to be confused with the (non-existent)
%character group of $\cZ_n$.

\smallskip

\begin{definition} The restriction of a  phase form or an elementary phase form  to $Z_m$ is called its $m$th-moment.
\end{definition}

\begin{prop} \label{mthMomentCharacterization}
Two elementary phase forms $a$ and $b$ give rise to the same $m$th moments if and only if their ratio $u =b/a$ satisfies the $m$th moment condition.
\end{prop}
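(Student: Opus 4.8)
The plan is to unwind the definitions and exploit multiplicativity of characters; the statement is then essentially immediate for each fixed $m$. First I would recall that, under the identification of $\cF(\Bragg)$ with $\bbS^\sim$ from Prop.~\ref{F(S) homomorphism}, an elementary phase form $a$ is a character of $\cF(\Bragg)$, so for any tuple $\mathbf{k} = (k_1,\dots,k_m) \in \Bragg^m$ its value on the class $\mathbf{k}^\sim = (k_1)^\sim \cdots (k_m)^\sim$ is the product $a(k_1)\cdots a(k_m)$ (here $a$ is read as the mapping on $\Bragg$). Consequently the $m$th moment of $a$, being the restriction of $a$ to $Z_m$, is precisely the function $Z_m \longrightarrow U(1)$ sending $\mathbf{k} \mapsto a(k_1)\cdots a(k_m)$, where by definition every element of $Z_m$ satisfies $k_1 + \cdots + k_m = 0$. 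The same description applies to $b$.

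Next I would observe that the ratio $u = b/a$, formed in the group $\widehat{\cF(\Bragg)}$, is again an elementary phase form, and as a mapping on $\Bragg$ it is given pointwise by $u(k) = b(k)/a(k)$ (division being legitimate since all values lie in $U(1)$). Since $a$, $b$ and $u$ are characters, for every $\mathbf{k} = (k_1,\dots,k_m) \in Z_m$ one has
$$ u(k_1)\cdots u(k_m) = \frac{b(k_1)\cdots b(k_m)}{a(k_1)\cdots a(k_m)}. $$
The conclusion now follows by inspection of this identity, yielding both implications at once. The $m$-moment condition for $u$ asserts exactly that $u(k_1)\cdots u(k_m) = 1$ for every tuple with $k_1 + \cdots + k_m = 0$, i.e. for every $\mathbf{k} \in Z_m$; by the displayed equation this holds precisely when $b(k_1)\cdots b(k_m) = a(k_1)\cdots a(k_m)$ for all such $\mathbf{k}$, which is the statement that $a$ and $b$ restrict to the same function on $Z_m$, that is, have the same $m$th moment.

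I expect no serious obstacle here, as the result is a definitional unwinding rather than a substantive argument. The only points requiring care are the bookkeeping between the multiplicative notation for characters on $\cF(\Bragg)$ and the additive relator condition $k_1 + \cdots + k_m = 0$ that cuts out $Z_m$, and the remark that $u = b/a$ is a genuine elementary phase form (so that ``the $m$-moment condition for $u$'' is meaningful); both are settled by the identification of Prop.~\ref{F(S) homomorphism} and the fact that $\widehat{\cF(\Bragg)}$ is a group.
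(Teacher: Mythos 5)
Your proof is correct and takes essentially the same route as the paper's: both unwind the definition of the $m$th moment as the restriction to $Z_m$, use multiplicativity of the characters $a$, $b$ on $\cF(\Bragg)$ to write equality of moments as $a(k_1)\cdots a(k_m) = b(k_1)\cdots b(k_m)$ for all $(k_1,\dots,k_m)$ with $k_1+\cdots+k_m=0$, and observe this is equivalent to $u(k_1)\cdots u(k_m)=1$, i.e.\ the $m$th moment condition for $u=b/a$. Your explicit remark that $u$ is again an elementary phase form (since $\widehat{\cF(\Bragg)}$ is a group) is left implicit in the paper but is the same bookkeeping.
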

\begin{proof} Elementary phase forms $a,b$ give rise to equal $m$th moments if and only if
$a(k_1, \dots, k_m) = b(k_1, \dots, k_m)$ whenever $k_1, \dots, k_m \in \Bragg$ with
$k_1 + \cdots + k_m =0$. However, being elementary phase forms, these equations can
be written as $a(k_1) \dots a(k_m) = b(k_1) \dots b(k_m)$, which leads to the equivalent statement
that
\[u(k_1) \dots u(k_m) =1 \]
whenever $k_1, \dots, k_m \in \Bragg$ with
$k_1 + \cdots + k_m =0$. This is the $m$th moment condition.
\end{proof}

Two elementary phase forms $a,b$ give rise to the same phase form $a^*$ if and only if their ratio is
an element of $\TT$ and so satisfies all the moment conditions. By Prop.~\ref{mthMomentCharacterization} this means that all their moments are the same, so that there is no ambiguity
in speaking of the $m$th {\it moments of phase forms}.

\begin{lemma}\label{momentsCondition} Assume  $\langle \cZ_n \rangle = \cZ$ for some $n\geq 2$. Then two phase forms agree if and only if they have the same $m$th moments for $m=1,\ldots, n$. In particular, two elementary phase forms restrict to the same phase form if and only if they have the same  $m$th moments for $m=1,\dots, n$.
\end{lemma}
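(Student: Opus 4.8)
The plan is to prove the phase-form statement first and then read off the elementary-phase-form version as an immediate corollary, since by construction the $m$th moment of an elementary phase form is literally the $m$th moment of the phase form it restricts to. So let $a^*,b^*$ be phase forms and set $w^*:=a^*(b^*)^{-1}\in\widehat{\cZ}$, so that $a^*=b^*$ precisely when $w^*$ is the trivial character on $\cZ$.

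The crux is a purely set-theoretic observation about $\cZ_n$. An element of $\cZ$ lies in $\cZ_n$ exactly when it has a representing relator of length at most $n$, so $\cZ_n$ is precisely the set of classes ${\bf k}^\sim$ with ${\bf k}\in\bigcup_{m=0}^n Z_m$. Since the $m$th moment of a phase form is, by definition, its restriction to $Z_m$ (evaluated through ${\bf k}\mapsto{\bf k}^\sim$), saying that $a^*$ and $b^*$ have equal $m$th moments for $m=1,\dots,n$ — that is, $a^*({\bf k}^\sim)=b^*({\bf k}^\sim)$ for every relator ${\bf k}$ of length at most $n$, the case $m=0$ being automatic — is the same as saying $w^*$ is trivial on the set $\cZ_n$. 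Now I would invoke that $w^*$ is a group homomorphism: it kills the set $\cZ_n$ if and only if it kills the subgroup $\langle\cZ_n\rangle$ it generates. Bringing in the hypothesis $\langle\cZ_n\rangle=\cZ$, this is equivalent to $w^*$ being trivial on all of $\cZ$, i.e. to $w^*=1$, i.e. to $a^*=b^*$. Chaining these equivalences gives the main statement.

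For the elementary version, let $a,b$ be elementary phase forms with induced phase forms $a^*,b^*$. By construction the $m$th moment of $a$ agrees with that of $a^*$ — both are given on $Z_m$ by ${\bf k}\mapsto a(k_1)\cdots a(k_m)=a^*({\bf k}^\sim)$ — and similarly for $b$. Hence $a$ and $b$ have equal $m$th moments for $m=1,\dots,n$ if and only if $a^*$ and $b^*$ do, which by the first part is equivalent to $a^*=b^*$, i.e. to $a$ and $b$ restricting to the same phase form. (Alternatively one can run this directly through the earlier propositions: equality of the $m$th moments of $a$ and $b$ is equivalent, by Prop.~\ref{mthMomentCharacterization}, to $u=b/a$ satisfying the $m$-moment conditions for $1\le m\le n$; by Prop.~\ref{characterCondition}(iii) this means $u$ kills $\cZ_n$, hence $\langle\cZ_n\rangle=\cZ$, hence $u$ kills $\cZ$, which again says $a^*=b^*$.)

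The argument is essentially bookkeeping assembled from results already in hand, so I do not expect a genuine obstacle. The one point I would be careful to state explicitly — and which does all the real work — is the identification of the set $\cZ_n$ with the classes of relators of length at most $n$, together with the elementary but essential fact that a character vanishing on a generating set vanishes on the subgroup it generates; it is only at this step that the hypothesis $\langle\cZ_n\rangle=\cZ$ enters, and everything else is routine unwinding of the definition of moments.
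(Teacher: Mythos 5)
Your proof is correct and follows essentially the same route as the paper, which simply declares the first statement ``clear'' and derives the second from it; you have merely written out the routine details (the identification of $\cZ_n$ with classes of relators of length at most $n$, and the fact that a character trivial on a generating set is trivial on $\langle \cZ_n\rangle = \cZ$). Your alternative run through Prop.~\ref{mthMomentCharacterization} and Prop.~\ref{characterCondition}(iii) is likewise exactly the machinery the paper has set up for this purpose, so there is no substantive difference in approach.
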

\begin{proof}  The first statement is  clear. The second statement is a direct consequence of the first statement.
%By the previous proposition and  Lemma~\ref{charChar} two elementary phase forms have the same $m$-the moments for $m=1,\ldots, n$ if and only if their quotient belongs to $\TT$.  By the exact sequence this Now, the statement follows from (ii) of Prop{prop} \ref{characterCondition}.
\end{proof}

\begin{remark} Any of the phase forms $a$ that we are interested in from the point of view
of diffraction satisfy the first and second moment conditions by assumption.
 The relevance of the previous lemma  is the following. If two stationary
 processes have  the same pure point diffraction and their associated group of relators satisfies  $\langle \cZ_n \rangle = \cZ$, then the processes
are isomorphic if they have the same $m$th moments for $m=1,\dots, n$ (see Cororllary~\ref{momentsCondition1} and Corollary~\ref{momentsCondition2} for precise statements).
\end{remark}

\section{From spatial stationary processes with pure point diffraction to phase factors }\label{StochasticprocessestoDiffraction}
In this section we assume that we are given an ergodic spatial stationary process with pure point diffraction.
We will discuss how this  process gives rise to a diffraction measure $\difm$, a set $\Bragg$ which generates  a group $\ev$,  canonical eigenfunctions $f_k$, $k\in \Bragg$, and
 a  phase form  $a^*$.
In the next section we will see that $(\omega,a^*)$ in some sense uniquely determines the process.

\bigskip

Let $(N,X,\mu,T)$ be an ergodic stationary spatial process with pure point diffraction. From the considerations of Section \ref{Stochasticprocesses} we then obtain the following:
The diffraction measure $\difm$  of $(N, X,\mu,T)$ is a pure point measure and the set of its atoms, which are called Bragg peaks,  is denoted by $\Bragg=\Bragg (\difm)$
i.e.
$$\Bragg = \Bragg(\difm) :=\{k\in \Gdual: \difm (k) >0\}$$
and satisfies $\Bragg = - \Bragg $.
We let $\ev$ be the subgroup of $\Gdual$ generated by $\Bragg$ and
let $\ev_d$ denote this group when it is given the discrete topology.
We are now clearly in the situation of the previous section. In particular, there is an associated  relator group $\cZ = \cZ (\difm)$ at our disposal.  Let $$\theta: L^2 (\Gdual,\difm)\longrightarrow L^2 (X,m)$$
be the associated diffraction to dynamics map and for  each $k \in \Gdual$, let
$1_k$ be the characteristic function on the set $\{k\}$. Let $U$ be the action
of $\GG$ acting on $L^2(\Gdual, \difm)$ defined by
$$U_t h (k) := (k,t) h (k)$$
and observe that whenever $\difm(k) \ne 0$, $1_k$
is a $k$ eigenfunction for this action,. It is easy to see that up to scaling these are the only eigenfunctions of $L^2(\Gdual, \difm)$.
%Now let us assume that $L^2(X,\mu)$ is pure point.
%Since $L^2(X,\mu)$ is pure point so $L$, and its
%spectrum must lie in the spectrum $\ev$ of $L^2(X,\mu)$.
%For
%$F\in C_c(\GG)$ its Fourier transform $\widehat F$ is in $L$ and
%hence can be written in the form
%\begin{equation}\label{FE in Gdual}
% \widehat{ F} = \sum_{k \in \ev} \widehat F(k) 1_k \,.
% \end{equation}
%In this equation the only relevant terms are those for which $k$ satisfies
%$\difm(k) \ne 0$. We denote the set of these by $\Bragg$.
By the orthogonality
of eigenfunctions we have,
$$\langle 1_k \mid 1_{k'} \rangle_{\Gdual} = \difm(k) \delta_{k,k'}$$
and $$\langle F \mid  G \rangle_{\Gdual} = \sum_{k\in \Bragg} F(k) G(k) \difm(k) $$
for all $F,G \in L^2(\Gdual,\difm)$.

To each element of $  \Bragg$ there exists a corresponding canonical eigenfunction. More specifically, we
define $f_k$, $k\in \Bragg$, by $f_k = \theta(1_k)$. Then each $f_k$ is an eigenfunction in
$L^2(X,\mu)$ for the eigenvalue $k$ and $||f_k||_2 = ||1_k||_{\Gdual}
= \difm(k)^{1/2}$. Since $|f_k|$ is a constant function (due to ergodicity)
and $\mu$ is a probability measure, $|f_k| = \difm(k)^{1/2}$.
Then, for all $k_1, \dots, k_m \in \Bragg$,
\[||f_{k_1} \dots f_{k_m}||^2 = \int_X f_{k_1} \dots f_{k_m}\overline{f_{k_1} \dots f_{k_m}}
\dd\mu = \int_X |f_{k_1}|^2  \dots |f_{k_m}|^2 \dd\mu = \difm(k_1) \dots \difm(k_m) \,.\]

At this point we invoke Assumption ~\ref{0-eigenfunction assumption}
which says that if $0\in\Bragg$ then $f_0 = \difm(0)^{1/2} {\it 1}_X$.

\smallskip

For  ${\bf k} = (k_1, \dots, k_m)$ with $k_j \in \Bragg$
and $[{\bf k}]=0$  the product  $f_{k_1}\ldots f_{k_m}$ is an eigenfunction to $0$ and hence (by ergodicity) a multiple of ${\it 1}_X$, which yields
\begin{equation}\label{phaseCreated}
f_{k_1} \dots f_{k_m} = a^*(k_1,\dots, k_m)\difm(k_1)^{1/2}\dots \difm(k_m)^{1/2}
\end{equation}
for some $a^*({\bf k}) = a^*(k_1,\dots, k_m) \in U(1)$.
In this way we have a mapping $a^*: Z = Z(\Bragg)\longrightarrow U(1)$.
It clearly respects the multiplication by concatenation on $Z$.
In view of our assumption on $f_0$ and
Lemma~\ref{fminusk} we have $f_{-k} = \overline{f_k}$, which shows this
$a^*$ is well-defined on the equivalence classes of $Z = Z(\Bragg)$
defined in \S\ref{Cycles}, and in this way we obtain a phase form
\begin{equation}\label{phaseFormCreated}
a^* :\cZ \longrightarrow U(1) \,.
\end{equation}

The preceding considerations show that any pure point stationary spatial process comes with a natural measure $\omega$ and a phase form $a^*$ as summarized in the following proposition.

\begin{prop}\label{fromsptopf} Each ergodic pure point stationary spatial process $(N,X,\mu,T)$
gives rise to a pair   $(\difm, a^*)$ consisting of a pure point measure $\difm$
on $\Gdual$  characterized by
$$ \int_{\Gdual} \widehat{F} \overline{\widehat{G}} d\difm = \langle N(F)\mid N(G)\rangle$$
for all $F,G\in C_c (\GG)$  and the phase form $a^*$ on the relator group $\cZ=\cZ(\difm)$ satisfying
$$ f_{k_1} \dots f_{k_m} = a^*(k_1,\dots, k_m)\difm(k_1)^{1/2}\dots \difm(k_m)^{1/2}.
$$
\end{prop}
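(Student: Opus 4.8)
The plan is to assemble the two components of the pair $(\difm, a^*)$ from material already in hand. The measure $\difm$ requires essentially no new work: since the process has a second moment, Proposition~\ref{def:gamma} furnishes the autocorrelation $\gamma$, its Fourier transform $\difm$ exists, and the characterizing identity $\int_{\Gdual}\widehat F\,\overline{\widehat G}\,d\difm = \langle N(F)\mid N(G)\rangle$ is precisely the content of \eqref{innerProduct}; pure pointedness of $\difm$ is built into the standing hypothesis that the process is pure point (Definition~\ref{def-pure-point-process}). Thus the substance of the argument lies entirely in producing the phase form $a^*$ and verifying that it is well defined on $\cZ$.

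For the phase form I would follow the discussion preceding the statement. Set $f_k := \theta(1_k)$ for $k\in\Bragg$; by Proposition~\ref{diffractiontodynamics} each $f_k$ is an eigenfunction of $T$ for the eigenvalue $k$, with $\|f_k\|_2 = \|1_k\|_{\Gdual} = \difm(k)^{1/2}$. Ergodicity forces the $T$-invariant function $|f_k|$ to be constant, so $|f_k| = \difm(k)^{1/2}$ $\mu$-almost everywhere. Because $T_t$ acts as an algebra homomorphism on functions, for any ${\bf k} = (k_1,\dots,k_m)$ the product $f_{k_1}\cdots f_{k_m}$ is an eigenfunction for the eigenvalue $[{\bf k}] = k_1 + \cdots + k_m$; each factor is bounded, so the product lies in $L^2(X,\mu)$. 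When ${\bf k}$ is a relator, i.e. $[{\bf k}]=0$, this product is $T$-invariant and hence, by ergodicity, a constant multiple of $1_X$ of modulus $\difm(k_1)^{1/2}\cdots\difm(k_m)^{1/2}$. Writing this constant as $a^*({\bf k})\,\difm(k_1)^{1/2}\cdots\difm(k_m)^{1/2}$ with $a^*({\bf k})\in U(1)$ defines a map $a^*\colon Z\to U(1)$, which visibly respects concatenation.

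The one point needing care, and the step I regard as the crux, is that $a^*$ descends to the quotient $\cZ = Z/\!\sim$, i.e. it is constant on the classes generated by (R1)--(R3). Relation (R1) is immediate, since pointwise multiplication of functions is commutative and permuting factors leaves the product unchanged. For (R3) I would invoke $f_{-k} = \overline{f_k}$ (Lemma~\ref{fminusk}), which gives $f_k f_{-k} = |f_k|^2 = \difm(k)$, a nonnegative real constant; inserting or deleting a pair $\{k,-k\}$ therefore multiplies the product by a positive scalar and leaves the phase unaltered. For (R2), the normalization of Assumption~\ref{0-eigenfunction assumption} guarantees $f_0 = \difm(0)^{1/2}1_X$ when $0\in\Bragg$, so adjoining or removing a zero multiplies by the positive constant $\difm(0)^{1/2}$ and again does not change the phase. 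Since these moves generate $\sim$, the map $a^*$ is well defined on $\cZ$; being multiplicative it is a homomorphism $\cZ\to U(1)$, i.e. a phase form, and the displayed identity $f_{k_1}\cdots f_{k_m} = a^*(k_1,\dots,k_m)\difm(k_1)^{1/2}\cdots\difm(k_m)^{1/2}$ holds by construction, which completes the proof.
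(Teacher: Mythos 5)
Your proposal is correct and follows essentially the same route as the paper, which proves the proposition through the discussion preceding it: $\difm$ via Proposition~\ref{def:gamma} and \eqref{innerProduct}, the eigenfunctions $f_k=\theta(1_k)$ with $|f_k|=\difm(k)^{1/2}$ by ergodicity, and the constant \eqref{phaseCreated} defining $a^*$ on relators. Your only addition is to spell out the verification of (R1)--(R3) for well-definedness on $\cZ$, which the paper compresses into a single sentence citing Assumption~\ref{0-eigenfunction assumption} and Lemma~\ref{fminusk}; your expanded check is accurate (note it also uses the central symmetry $\difm(-k)=\difm(k)$ on the right-hand side for (R3)).
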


Next we show that  $\difm$ and $a^\ast$ are   a complete set of invariants for the underlying process.

\section{Isomorphism of pure point processes}\label{Uniqueness}
Recall that we have introduced a notion of isomorphism between spatial  processes in \S\ref{Stochasticprocesses}.  In this section we show that two pure point ergodic  spatial stationary processes  are  isomorphic  (in the sense of Definition \ref{definition-isomorphism}) if they yield the same  diffraction measure $\omega$ and the same phase form $a^*$.

\bigskip

We remind the reader that our concept of a pure point spatial stationary process given in Definition \ref{def-pure-point-process} entails that the process in question is full and possesses a diffraction measure (which is pure point).

\begin{theorem}\label{uniquenesstheorem} Two pure point  ergodic stationary spatial  processes
  based on the same group $\GG$
 are isomorphic if and only if their
associated diffraction measures and phase forms are the same.
\end{theorem}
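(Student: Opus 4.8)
The plan is to prove both directions using the canonical eigenfunctions $f_k = \theta(1_k)$ and the expansion $N(F) = \sum_{k\in\Bragg}\widehat{F}(k)f_k$ from Proposition~\ref{eigenExpansionOfN(F)}, together with the structural facts from Theorem~\ref{ergodic+pure point}: for a pure point process $L^2(X,\mu)$ has pure point spectrum with eigenvalue group $\ev$, and by ergodicity every eigenspace is one--dimensional. Throughout I write the data of the second process $\cN'=(N',X',\mu',T')$ with primes ($\difm'$, $f'_k$, diffraction to dynamics map $\theta'$, phase form ${a'}^{*}$).

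For the easy (``only if'') direction I would suppose $M$ is a spatial isomorphism from $\cN$ to $\cN'$. Since $M$ is unitary and $M\circ N = N'$, we get $\langle N'(F)\mid N'(G)\rangle = \langle N(F)\mid N(G)\rangle$ for all $F,G\in C_c(\GG)$, and by the characterization of the diffraction in Proposition~\ref{fromsptopf} this forces $\difm = \difm'$. Because $M$ intertwines $T$ and $T'$ and $\theta'$ is built from $N'$ exactly as $\theta$ is from $N$, the identity $M\theta(\widehat F)=MN(F)=N'(F)=\theta'(\widehat F)$ on the dense set $\{\widehat F\}$ (inside $L^2(\Gdual,\difm)=L^2(\Gdual,\difm')$) gives $M\theta=\theta'$, whence $M(f_k)=M\theta(1_k)=\theta'(1_k)=f'_k$ for every $k\in\Bragg$. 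Applying the multiplicativity of $M$ to a relator $(k_1,\dots,k_m)$, using $M(1_X)=1_{X'}$ (the $0$--eigenspace is one--dimensional and $M$ is unitary and multiplicative) and $\difm=\difm'$ in \eqref{phaseCreated}, yields $a^*(k_1,\dots,k_m) = {a'}^{*}(k_1,\dots,k_m)$; thus the phase forms agree.

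For the substantive (``if'') direction I assume $\difm=\difm'$ and $a^*={a'}^{*}$. First I fix a set--theoretic section $w:\ev\to\cF(\Bragg)$ of the canonical map $\varphi:\cF(\Bragg)\to\ev_d$ with $w(k)=e(k)$ for $k\in\Bragg$ and $w(0)=1$. The assignment $e(k)\mapsto f_k/\difm(k)^{1/2}$ extends, using $e_{-k}=\overline{e_k}=e_k^{-1}$ and Assumption~\ref{0-eigenfunction assumption}, to a group homomorphism $h$ from $\cF(\Bragg)$ into the unit--modulus functions on $X$ whose restriction to $\cZ$ is exactly $a^*$. Setting $e_g := h(w(g))$ produces for each $g\in\ev$ a unit--modulus eigenfunction of eigenvalue $g$, and by Theorem~\ref{ergodic+pure point} with ergodicity $\{e_g\}_{g\in\ev}$ is an orthonormal basis of $L^2(X,\mu)$. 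Building $e'_g$ from $\cN'$ with the same section $w$, I define $M(e_g):=e'_g$ and extend linearly. Then $M$ is unitary, intertwines $T$ and $T'$ (matching eigenvalues), and since $M(f_k)=M(\difm(k)^{1/2}e_k)=\difm(k)^{1/2}e'_k=f'_k$ the expansion of $N(F)$ gives $M\circ N=N'$.

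The main obstacle is the multiplicativity condition of Definition~\ref{definition-isomorphism}. The starting observation is $e_{g_1}e_{g_2}=c(g_1,g_2)\,e_{g_1+g_2}$, where the constant $c(g_1,g_2)=a^*\big(w(g_1)w(g_2)w(g_1+g_2)^{-1}\big)$ depends only on the section and the phase form; since $a^*={a'}^{*}$ the very same cocycle governs $\cN'$, so $M$ is multiplicative on the $*$--algebra $\cA_0$ spanned by $\{e_g\}$. To promote this to all of $L^\infty$ I would first show $M$ preserves the supremum norm on $\cA_0$: from multiplicativity and $L^2$--isometry, $\int_{X'}|Mf|^{2n}\dd\mu' = \int_X|f|^{2n}\dd\mu$ for all $n$, and letting $n\to\infty$ gives $\|Mf\|_\infty=\|f\|_\infty$. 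A Kaplansky--type density argument, approximating any $f\in L^\infty(X)$ in $L^2$ by elements of $\cA_0$ of uniformly bounded supremum norm (via the cut--off procedure \eqref{cutoff} and Proposition~\ref{continuity-of-phi}), then shows $M$ maps $L^\infty(X)$ into $L^\infty(X')$ sup--norm isometrically, and a routine limit argument for products of uniformly bounded, $L^2$--convergent sequences upgrades multiplicativity from $\cA_0$ to all of $L^\infty$. The symmetric statement for $M^{-1}$ follows by exchanging the roles of $\cN$ and $\cN'$, completing the verification that $M$ is a spatial isomorphism.
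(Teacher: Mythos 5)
Your proof is correct, and it rests on the same pillars as the paper's own argument: the canonical eigenfunctions $f_k=\theta(1_k)$, the expansion of Proposition~\ref{eigenExpansionOfN(F)}, Theorem~\ref{ergodic+pure point} together with ergodicity (one-dimensional eigenspaces), and the identity \eqref{phaseCreated}. Your ``only if'' direction is essentially the paper's. In the converse you deviate at exactly the two delicate points, and the comparison is instructive. First, the paper defines $M$ directly on products, $f_{k_1}\cdots f_{k_m}\mapsto f'_{k_1}\cdots f'_{k_m}$, and must then verify well-definedness by checking that the linear dependences among such products --- which are governed by $\difm$ and the phase form --- are identical on both sides. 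You instead pick a section $w:\ev\to\cF(\Bragg)$ and the homomorphism $h$ with $h|_{\cZ}=a^*$, obtaining one canonical unimodular eigenfunction $e_g$ per eigenvalue; defining $M$ on the orthonormal basis $\{e_g\}$ makes well-definedness automatic, and multiplicativity on the spanned algebra $\cA_0$ becomes the statement that the structure constants $c(g_1,g_2)=a^*\bigl(w(g_1)w(g_2)w(g_1+g_2)^{-1}\bigr)$ depend only on $(w,a^*)$ --- a clean cocycle repackaging of the same computation. Second, for the passage to $L^\infty$ the paper's route is lighter: with a bounded $a\in\cA_0$ fixed, $M(ag)=M(a)M(g)$ extends to all $g\in L^2$ by continuity, and the remaining factor is then approximated in $L^2$, the product limits being compared in $L^1$ via Cauchy--Schwarz; no uniformly sup-bounded approximants are needed. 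Your route via $\|Mf\|_{L^{2n}}=\|f\|_{L^{2n}}$ (which tacitly uses $\overline{M(f)}=M(\overline{f})$ on $\cA_0$ --- true, since $\overline{e_g}$ is the same $(w,a^*)$-determined unimodular constant times $e_{-g}$ on both sides) is heavier but buys more: it exhibits $M$ as a sup-norm isometric $*$-isomorphism of $L^\infty(X,\mu)$ onto $L^\infty(X',\mu')$. One caveat: your parenthetical appeal to the cut-off \eqref{cutoff} and Proposition~\ref{continuity-of-phi} does not by itself produce uniformly bounded approximants \emph{inside} $\cA_0$, since cut-offs of elements of $\cA_0$ need not lie in $\cA_0$; what you need, and what does hold, is Kaplansky density proper: $\cA_0$ is a unital $*$-subalgebra of the abelian von Neumann algebra $L^\infty(X,\mu)$ whose $L^2$-closure is all of $L^2(X,\mu)$, hence it is weak-$*$ dense and its unit ball is strongly dense in the unit ball of $L^\infty(X,\mu)$.
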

\begin{proof}
Assume that the full processes  $(N,X,\mu,T)$ and $(N',X',\mu',T')$ are pure point ergodic  and
have the same diffraction measure $\difm$ and the same phase form $a^*$.
Set $\Bragg := \Bragg (\omega) = \Bragg (\omega')$. Let $f_k\in L^2 (X,\mu)$ and $f_k'\in L^2 (X',\mu')$, $k\in \Bragg$,  be the corresponding  natural eigenfunctions.  Thus,  $N (F) = \sum_{k\in \Bragg} \widehat{F} (k) f_k$ and $N' (F) = \sum_{k\in \Bragg} \widehat{F} (k) f_k'$
for all $F\in C_c(\GG)$. Moreover, by Theorem \ref{ergodic+pure point} products of the $f_k$ and the $f_k'$ respectively provide orthonormal bases consisting of eigenfunctions  of  $L^2 (X,\mu)$ and $L^2 (X',\mu')$ respectively.

\smallskip

We construct an isomorphism $M$ as follows:
We define $M$ to be the unique linear map with
$$f_{k_1}\ldots f_{k_m}\mapsto f_{k_1}'\ldots f_{k_m}'.$$
This mapping is well-defined, for suppose
that $f_{k_1}\ldots f_{k_m}$ and $f_{l_1}\ldots f_{l_n}$ are linearly
dependent. Then $k_1 + \cdots + k_m = l_1 + \cdots + l_n$,
so $k_1 +\cdots + k_m -  l_1 - \cdots -l_n =0$ and
\[f_{k_1}\ldots f_{k_m}f _{-l_1}\ldots f_{-l_n}= \difm(k_1)^{1/2}\ldots \difm(k_m)\difm(l_1)^{1/2}\ldots \difm(l_n)^{1/2} a^*(k_1, \dots, k_m,-l_1,\dots, -l_n) \,. \]
Using Lemma~\ref{fminusk} we obtain
\[f_{k_1}\ldots f_{k_m} = \difm(k_1)^{1/2}\ldots \difm(k_m)^{1/2}\difm(l_1)^{-1/2}\ldots \difm(l_n)^{-1/2} a^*(k_1, \dots, k_m,-l_1,\dots, -l_n)f _{l_1}\ldots f_{l_n}  \,. \]
By our assumptions we obtain the same relation when the $f$s are replaced
with $f'$s, so the same linear dependence occurs in $L^2(X',\mu')$.

As products of the $f_k$ and $f_k'$ respectively give an orthogonal basis of $L^2 (X,\mu)$ and $L^2 (X',\mu')$, and as
\[||f_{k_1} \dots f_{k_m}||^2 = \difm(k_1) \dots \difm(k_m) = ||f'_{k_1} \dots f'_{k_m}||^2\,.\]
our mapping is a unitary map. In particular it is continuous.

The definition of $M$ easily shows that it intertwines $T$ and $T'$
and maps $N(F)$ to $N' (F)$ for all $F\in C_c (\GG)$.  Moreover,
 by definition $M$ satisfies
$$ M(f_{k_1} \ldots f_{k_n} f_{l_1}\ldots f_{l_m}) = M ( f_{k_1} \ldots f_{k_n}) M ( f_{l_1}\ldots f_{l_m})$$
for all $k_1,\ldots, k_n\in \Bragg$ and $l_1,\ldots, l_m\in \Bragg$.  Note that both $f_{k_1} \ldots f_{k_n} $ and $M ( f_{k_1} \ldots f_{k_n}) =  f_{k_1}' \ldots f_{k_n}' $ are bounded functions. Thus, we can take
linear combinations (in $L^2$) and their  limits to obtain
$$ M(f_{k_1} \ldots f_{k_n} g) = M ( f_{k_1} \ldots f_{k_n}) M (g)$$
for all $g\in L^2 (X,\mu)$ and $k_1\ldots k_n\in \Bragg$. In particular,
$$ M(f_{k_1} \ldots f_{k_n} g) = M ( f_{k_1} \ldots f_{k_n}) M (g)$$
holds for all $g\in L^\infty (X,\mu)$ and $k_1\ldots k_n\in \Bragg$.  Another approximation argument now yields $M(f g) = M(f) M(g)$ for all $f,g\in L^\infty (X,\mu)$. This proves the isomorphism.

\smallskip

Conversely suppose that $M$ is an isomorphism between the pure point ergodic stationary
spatial processes $(N,X,\mu,T)$ and $(N',X',\mu',T')$ on $\GG$.
As $M$ is a unitary mapping, one sees from directly from the definitions that their correlation measures,
and hence their diffraction measures are the same (if  second moments exist), and then that the diffraction
to dynamics mappings are related by $\theta' = M\circ \theta$.  Then looking at the
eigenfunctions we have
$M(f_k) = M(\theta(1_k) = \theta'(1_k) = f'_k$. Since the $f_k$ are bounded functions,
the multiplicative property of $M$ gives
$M ( f_{k_1} \ldots f_{k_n}) =  f_{k_1}' \ldots f_{k_n}' $
for all $k_1, \dots, k_m \in \Bragg$. Then the definition of the phase form
\eqref{phaseCreated} and \eqref{phaseFormCreated} shows
that both processes have the same phase form.
 \end{proof}

\begin{coro} \label{momentsCondition1}
Let $\omega$ be a positive symmetric pure point measure on $\Gdual$ and $\cZ$ the associated group of relators.
If $\langle \cZ_n\rangle = \cZ$, then   two stationary spatial processes with diffraction $\omega$ are spatially  isomorphic if and only if the first $n$ moments of their phase forms agree.
\end{coro}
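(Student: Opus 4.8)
The plan is to derive this corollary directly from the uniqueness Theorem~\ref{uniquenesstheorem} together with the purely algebraic Lemma~\ref{momentsCondition}, so that essentially no new argument is required. Throughout, the two processes are understood to be pure point ergodic spatial processes in the sense of the preceding sections, so that each carries a well-defined phase form. First I would observe that the hypothesis that both processes have the same diffraction $\omega$ already discharges one of the two invariants appearing in Theorem~\ref{uniquenesstheorem}: the diffraction measures coincide by assumption, so the entire remaining content is to decide when the \emph{second} invariant, namely the phase form, coincides.

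Next I would invoke Theorem~\ref{uniquenesstheorem}, which asserts that two such processes on $\GG$ with equal diffraction measures are isomorphic precisely when their phase forms agree. A point worth recording here is that the isomorphism produced by that theorem is in fact a \emph{spatial} isomorphism in the sense of Definition~\ref{definition-isomorphism}: the map $M$ constructed in its proof is shown to satisfy $M(fg)=M(f)M(g)$ for all $f,g\in L^\infty$, which is exactly the extra condition distinguishing spatial isomorphy from ordinary isomorphy. Hence ``spatially isomorphic'' in the corollary may be used interchangeably with ``isomorphic'' in the theorem.

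The final step is to translate equality of phase forms into the moment statement. This is precisely Lemma~\ref{momentsCondition}: under the standing hypothesis $\langle\cZ_n\rangle=\cZ$, two phase forms agree if and only if their $m$th moments agree for $m=1,\dots,n$. Chaining the two equivalences then yields the assertion: the processes are spatially isomorphic iff their phase forms agree iff their first $n$ moments coincide. Both directions of the ``if and only if'' are handled at once by this chain.

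I do not anticipate any genuine obstacle, since the corollary is a packaging of two already-established results; the only care needed is bookkeeping. One should confirm that Lemma~\ref{momentsCondition}'s hypothesis $n\ge 2$ may be assumed without loss of generality: if $\langle\cZ_n\rangle=\cZ$ holds for some index it holds for every larger index because $\langle\cZ_n\rangle\subseteq\langle\cZ_{n+1}\rangle$, and the phase forms arising from our (real) processes automatically satisfy the first and second moment conditions, so enlarging $n$ to at least $2$ changes nothing about which processes are identified. With this minor remark in place, the two equivalences compose cleanly and the proof is complete.
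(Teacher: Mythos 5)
Your proposal is correct and follows exactly the paper's own route: the published proof is the one-line observation that the corollary is a direct consequence of Theorem~\ref{uniquenesstheorem} combined with Lemma~\ref{momentsCondition}, which is precisely your chain of equivalences. Your additional bookkeeping remarks (that the isomorphism in Theorem~\ref{uniquenesstheorem} is spatial, and that the hypothesis $n\ge 2$ of Lemma~\ref{momentsCondition} is harmless since the relevant phase forms satisfy the first and second moment conditions) are sound and, if anything, make explicit what the paper leaves implicit.
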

\begin{proof} This is a direct consequence of the previous theorem and Lemma \ref{momentsCondition}.
\end{proof}

\begin{remark} \label{automorphism}
(a) The above   theorem deals with isomorphism between two processes. We can also characterize the  automorphisms of a given pure point stationary spatial point processes with  Bragg peaks $\Bragg$ and eigenfunctions $f_k$, $k\in \Bragg$. These automorphisms are in one to one correspondence with elements from $\TT$ in the following way: Any character $\phi \in \TT$ yields an automorphism
of $(N,X,\mu,T)$ via $f_k \mapsto \phi(k)f_k$. Conversely any automorphism of $(N,X,\mu,T)$ must necessarily map map $f_k$ to a multiple  $\phi (k) f_k$ with $\phi (k)\in U(1)$. The arising function $\phi$ must then belong to $\TT$ as  \eqref{phaseCreated} gives
$\phi(k_1) \ldots \phi(k_m) =1$
whenever $k_1 + \cdots +k_m =0$.

(b) In the previous theorem, the isomorphism $M$ between isomorphic pure point stationary
spatial processes is not unique. However, if $M_1,M_2$ are two such isomorphisms then
$M:= M_2^{-1}M_1$ is an automorphism of the first, say $(N,X,\mu,T)$.
By (a) of this remark, we conclude that the isomorphisms are unique up to elements of $\TT$.
\end{remark}

\section{From phase forms to stationary processes with pure point spectrum: The torus approach}\label{CyclestoStochasticprocesses}
In this section we show how to  construct an ergodic full stationary process with a given pure point measure as its diffraction measure  $\omega$ and a given phase form $a^*$ as its associated phase form. This yields a canonical model realizing a given diffraction measure.

\bigskip

We assume that we are given a a locally compact Abelian group $\GG$, a pure point positive symmetric backward transformable measure measure $\difm$ on $\Gdual$. These data give rise to
$$\Bragg :=\{k\in \Gdual: \difm(k) >0\}$$
and $\ev = \langle \Bragg \rangle_{\rm group} \subset \Gdual$. There is an associated group of relators $\cZ$  as discussed in Section \ref{Cycles}.  We are also given $a^*\in\widehat{\cZ}$.

We are going to construct a stationary process corresponding to $(\difm, a^*)$.

From Prop.~\ref{characterCondition} we know  that there is an elementary phase form
$a:\cF(\Bragg) \longrightarrow U(1)$ (i.e. a character of $\cF(\Bragg)$) which restricts to $a^*$:
in other words, we can find $a$ so that
\[a^*(k_1,\dots, k_m) = a(k_1) \cdots a(k_m)\]
whenever $[{\bf k}]=0$.
(This $a$ is not unique, but the ratio of any two of such
elementary phase forms is a character on $\ev_d$.)
Thus, we can assume without loss of generality that $a^*$  comes from an elementary phase form $a \in \widehat{\cF(\Bragg)}$.
Since $\difm$ is backward transformable there is a measure $\gamma$ on
$\GG$ so that for all $F\in C_c(\GG)$ we have
\begin{equation} \label{eighteen}
\int_{\Gdual} |\widehat F|^2 \dd \omega = \int_{\GG} F*\tilde F \dd \gamma < \infty\, .
\end{equation}

We consider $\ev$ to have the induced topology from  $\Gdual$ and, as usual, let
$\ev_d$ denote the same group $\ev$ but with the discrete topology.
The dual of $\ev_d$
is a compact Abelian group $\TT$. We denote the Haar measure of total
volume $1$ on $\TT$ by $l_\TT$ and the counting measure on
$\ev_d$ by $l_d$.
Then, \eqref{eighteen} can be reformulated as saying that
$$ (k\mapsto \widehat{F} (k) \omega (k)^{1/2}) \in \ell^2 (\ev).$$
For $F \in C_c(\GG)$, define a
function $n_a(F)$ on $E_d$ by
\[n_a(F) = \sum_{k \in \Bragg} \widehat F(k) a(k) \difm(k)^{1/2} 1_{k} \]
where we take the positive square roots and $1_k$ is the function
on $\ev_d$ whose value at $k$ is $1$ and which takes the value $0$
everywhere else. We have
\begin{eqnarray*}
\int_{\ev_d} n_a(F) \overline{n_a(F)} \dd l_d &=& \sum_{k\in \Bragg}
n_a(F)(k) \overline{n_a(F)(k)} = \sum_{k\in\Bragg} |\widehat F(k)|^2 \difm(k)\\
&=& \int_{\Gdual} |\widehat F|^2\dd \difm = \int_\GG F*\tilde F \dd \gamma
< \infty \nonumber \, ,
\end{eqnarray*}
which implies in particular that $n_a(F) \in L^2(\ev_d, l_d)$.
The Fourier transform provides a  fundamental isomorphism between $L^2(\ev_d, l_d)$ and
$L^2(\TT, l_\TT)$ taking  $1_{-k}$ to the character defined by $k$
on $\TT$. We usually denote this character by $\chi_k$. Thus we obtain, by applying  the Fourier transform to $n_a (F) $,
\begin{equation}\label{def-na}
N_a(F) := \sum_{k \in \Bragg} \widehat F(k) a(k) \difm(k)^{1/2} \chi_k
\in L^2(\TT, l_\TT) \,
\end{equation}
with
\begin{equation}\label{normEquation}
\langle N_a(F) | N_a(F)\rangle =\int_\TT N_a(F) \overline{N_a(F)} \dd l_\TT =
\int_{\ev_d} n_a(F) \overline{n_a(F)} \dd l_d = \int_\GG F*\tilde F \dd \gamma \, .
\end{equation}
This provides us with the mapping
\begin{equation*}
 N_a \,:\, C_c(\GG) \longrightarrow L^2(\TT, l_\TT) \, ,
 \end{equation*}
and shows that for any compact subset $K \subset \GG$ we have
$||N_a(F)||_2 \le (|\gamma|(K+K))^{1/2} ||F||_{\sup}$ for all
$F\in C_c^\RR(\GG)$ with support inside $K$, which
shows that $N_a$ is continuous.

The continuous embedding of $\ev_d \rightarrow \Gdual$ defined by inclusion
leads to a dense homomorphism $\GG \rightarrow \TT$. Then each
function on $\TT$ ``restricts" to one on $\GG$ and in particular
we have an obvious meaning to the functions $t \mapsto \chi_k(t)$
for all $t\in \GG$ and for all $k \in \ev_d$.  We also obtain
a natural ergodic action of $\GG$ on $\TT$. For $t\in \GG$ and $\xi \in \TT$,
\[t.\xi : k \mapsto \chi_k(t) \xi(k)\]
for all $k \in \ev_d$.
For $F \in C_c(\GG)$ and $k\in \ev_d$
\[\widehat F(k) \overline{\chi_k(t)} = \int_\GG F(x) \overline{\chi_k(x)}\overline{\chi_k(t)}
\dd \haarG(x) = \int _\GG (T_t F)(x) \overline{\chi_k(x)} \dd \haarG (x) = \widehat{T_t F}(k) \,.\]
(Here we use the invariance of the measure $\haarG$.)

This leads by a straightforward calculation that $T_t N_a(F) = N_a(T_t F)$, which
shows that $N_a$ is a $\GG$-equivariant map.

 \smallskip

Thus given $(\difm,a^*)$ we obtain an ergodic  spatial
stationary process $(N_a,\TT,l_\TT)$.  The above considerations also show that $\difm$ is the diffraction measure associated to $N_a$.

We will show next that $N_a$ is full and has associated phase form $a^*$.  To do so we consider the map
$$\theta : L^2 (\Gdual, \difm)\longrightarrow L^2 (\TT,l_\TT), \theta (h)\mapsto \sum_{k\in\Bragg} h(k) a(k) \difm(k)^{1/2}\chi_k.$$
Then, $\theta$ is an isometry with $\theta (\widehat{F}) = N_a (F)$ for all
$F\in C_c(\GG)$. By uniqueness of the dynamics-to-diffraction map we infer that $\theta$ {\em is} the dynamics-to-diffraction map. This gives in particular, that
$\theta(1_k) = a(k) \difm(k)^{1/2}\chi_k$ for all $k\in \Bragg$. These are the functions
$f_k$ of the previous sections. From this equality and the fact that
$a$ is a character on $\cF(\Bragg)$ it follows easily that
 $a^*$ is the phase form associated to $N_a$ (see \eqref{phaseCreated}
 and \eqref{phaseFormCreated}).  Moreover, Proposition \ref{diffractiontodynamics} now shows that
  $a(k) \difm(k)^{1/2}\chi_k = \theta (1_k)$ belongs to the closure of the linear span of the $N(F)$, $F\in C_c (\GG)$, for any $k\in\Bragg$. As products of these $\chi_k$ form an orthonormal basis of $L^2 (\TT,l_\TT)$ we obtain fullness of the stationary process $N_a$. We summarize the conclusions of this section.

 \begin{prop} \label{inverseProblem} Let a pure point positive  symmetric backward transformable measure $\difm$ on $\GG$ be  given and $a^*$ be a phase form. Then for
 any choice of elementary phase form $a$ restricting to $a^*$, $(N_a,\TT,l_\TT,T)$ is an ergodic full stationary process on $\GG$ with  diffraction measure $\difm$ and phase form $a^*$.
  \end{prop}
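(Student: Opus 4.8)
The plan is to assemble the proof from the construction already carried out in this section, verifying in turn each clause of the claim: that $(N_a,\TT,l_\TT,T)$ is an ergodic spatial stationary process, that its diffraction is $\difm$, that it is full, and that its associated phase form is $a^*$.

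First I would confirm the stationary-process structure. The map $N_a$ of \eqref{def-na} is linear by inspection, and the identity \eqref{normEquation} shows both that $N_a(F)\in L^2(\TT,l_\TT)$ and, via the bound $\|N_a(F)\|_2\le(|\gamma|(K+K))^{1/2}\|F\|_{\sup}$, that $N_a$ is continuous; the $\GG$-equivariance $T_tN_a(F)=N_a(T_tF)$ is the computation indicated just before the proposition. The one point not spelled out in the running text is the conjugation compatibility $N_a(\overline F)=\overline{N_a(F)}$, which I would verify directly: using $\widehat{\overline F}(k)=\overline{\widehat F(-k)}$, reindexing $k\mapsto -k$ in \eqref{def-na}, and invoking the three facts $\difm(-k)=\difm(k)$ (central symmetry), $a(-k)=\overline{a(k)}$ (valid since $a$ is a character on $\cF(\Bragg)$, where $e(-k)e(k)=1$), and $\chi_{-k}=\overline{\chi_k}$, the sum transforms into $\overline{N_a(F)}$. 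Finally, the action $t.\xi\colon k\mapsto \chi_k(t)\xi(k)$ is translation in the compact group $\TT=\widehat{\ev_d}$ by the image of $t$ under the dense homomorphism $\GG\to\TT$; Haar measure $l_\TT$ is translation-invariant, so it is preserved, and translation by a dense subgroup of a compact Abelian group is ergodic, which gives ergodicity.

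Next I would read off the diffraction and phase form. By \eqref{normEquation} we have $\langle N_a(F)\mid N_a(F)\rangle=\int_\GG F*\tilde F\dd\gamma$ for all $F\in C_c(\GG)$, so by the uniqueness in Proposition~\ref{def:gamma} the autocorrelation of the process is exactly $\gamma$, and hence its diffraction measure is $\widehat\gamma=\difm$. For the phase form I would use the auxiliary isometry $\theta\colon L^2(\Gdual,\difm)\to L^2(\TT,l_\TT)$, $\theta(h)=\sum_{k\in\Bragg}h(k)a(k)\difm(k)^{1/2}\chi_k$, which satisfies $\theta(\widehat F)=N_a(F)$; by the uniqueness clause of Proposition~\ref{diffractiontodynamics} this $\theta$ is the diffraction-to-dynamics map, whence $f_k=\theta(1_k)=a(k)\difm(k)^{1/2}\chi_k$. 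Substituting these $f_k$ into \eqref{phaseCreated} and using that $a$ is a character, so that $a(k_1)\cdots a(k_m)=1$ and $\chi_{k_1}\cdots\chi_{k_m}=1$ whenever $[{\bf k}]=0$, recovers precisely the phase form $a^*$ of \eqref{phaseFormCreated}. Fullness then follows because each $\chi_k$ equals the nonzero scalar multiple $a(k)^{-1}\difm(k)^{-1/2}f_k$ of $f_k$, which lies in the closed span of $\{N_a(F):F\in C_c(\GG)\}$ by Proposition~\ref{diffractiontodynamics}; since finite products of the $\chi_k$ form an orthonormal basis of $L^2(\TT,l_\TT)$, the algebra they generate is dense.

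I expect no serious obstacle, since the heavy lifting is done by \eqref{normEquation} and the uniqueness statements of Propositions~\ref{def:gamma} and~\ref{diffractiontodynamics}. The only genuine computation is the conjugation identity $N_a(\overline F)=\overline{N_a(F)}$, and it is worth emphasizing that this is exactly the step where the symmetry of $\difm$ and the relation $e(-k)e(k)=1$ defining $\cF(\Bragg)$ (the ``real process'' conditions) are used; without them $N_a$ would define only a complex process.
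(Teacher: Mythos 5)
Your proposal is correct and follows the paper's own route essentially verbatim: the construction \eqref{def-na}--\eqref{normEquation} for the process structure and diffraction, identification of the auxiliary isometry with the diffraction-to-dynamics map via the uniqueness clause of Proposition~\ref{diffractiontodynamics}, the resulting formula $f_k = a(k)\difm(k)^{1/2}\chi_k$, and fullness from the products of the $\chi_k$ forming an orthonormal basis; your explicit check of $N_a(\overline{F})=\overline{N_a(F)}$ is a worthwhile detail the paper leaves implicit. One local slip: being a character on $\cF(\Bragg)$ does \emph{not} give $a(k_1)\cdots a(k_m)=1$ whenever $[{\bf k}]=0$ (the kernel of $\cF(\Bragg)\to\ev_d$ is $\cZ$, not trivial), but rather $a(k_1)\cdots a(k_m)=a^*({\bf k})$ on relators since $a$ restricts to $a^*$; it is exactly this equality, substituted into \eqref{phaseCreated} together with $\chi_{k_1}\cdots\chi_{k_m}=\chi_0=1$, that identifies the phase form of the process as $a^*$ rather than the trivial one, which is the conclusion you in fact state.
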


\smallskip

The pure point process constructed in Prop.~\ref{inverseProblem} depends
on the choice of the elementary phase form $a$ that we choose to represent
the phase form $a^*$.
What happens if we choose another representing elementary phase form, $b\in\widehat{\cF(\Bragg)}$? We know already by Thm.~\ref{uniquenesstheorem}
%and Remark~\ref{automorphism}
that the resulting process will be isomorphic, and since both use the
same constructed dynamical system $X = \TT$, this difference between the two
processes is in effect an automorphism. Also, Proposition \ref{characterCondition} (ii) says that the ratio $u = b / a$ is an element of $\TT$. From the perspective of $N_a$ and $N_b$, hten $N_b$ is in effect a translation of $N_a$ ba $u\in \TT$ i.e.
\begin{eqnarray}\label{shifting TT}
N_b(F) &=& \sum_{k\in \Bragg} \widehat F(k) b(k) \omega^{1/2} \chi_k
=  \sum_{k\in \Bragg} \widehat F(k) a(k) \omega^{1/2} \chi_k(u) \chi_k\\
&=& \sum_{k\in \Bragg} \widehat F(k) a(k) \omega^{1/2} T_{-u}(\chi_k)
= (T_{-u} N_a)(F) \,. \nonumber
\end{eqnarray}

\medskip

\begin{remark} \label{continuousCase}
If $u = b/a$ above is actually continuous with respect to the original topology on $\ev$
then $u$, being a character on $\ev$, lifts to a character on $\overline{\ev}$
and then by general character theory to all of $\Gdual$. Thus $u$ can be identified with an element of $\GG$. Now
\eqref{shifting TT} becomes
\begin{eqnarray}\label{shifting G}
N_b(F) &=& \sum_{k\in \Bragg} \widehat F(k) b(k) \omega^{1/2} \chi_k
=  \sum_{k\in \Bragg} \widehat F(k) a(k) \omega^{1/2} \chi_k(u) \chi_k\\
&=& \sum_{k\in \Bragg} \widehat{T_{-u} F}(k) a(k) \omega^{1/2} \chi_k
= N_a(T_{-u} F) \,. \nonumber
\end{eqnarray}
This time the difference between $N_a$ and $N_b$ is a translation on $\GG$.
This situation occurs in the periodic situation, when $\GG$ is compact
and $\Gdual$ is discrete.
\end{remark}

\begin{coro} \label{momentsCondition2} Let $\omega$ be a positive backward transformable pure point measure on $\Gdual $ and $\cZ$ the associated group of relators.
If $\langle \cZ_n\rangle = \cZ$, then   two mappings $a,b:\Bragg \longrightarrow U(1)$
define isomorphic stationary pure point processes if and only if their first through $n$th
moments are equal.
\end{coro}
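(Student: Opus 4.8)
The plan is to obtain this as a direct corollary of the isomorphism classification already in hand, by translating everything into the language of elementary phase forms. First I would regard each of the mappings $a,b:\Bragg\longrightarrow U(1)$ as a character on $\cF(\Bragg)$; this is legitimate precisely because, in order to define a (real) process at all, they must satisfy the first and second moment conditions, which by Proposition~\ref{characterCondition}(i) is exactly what it means to be an elementary phase form. Proposition~\ref{inverseProblem} then produces from $a$ and $b$ ergodic full stationary spatial processes $N_a$ and $N_b$, both with diffraction measure $\omega$, and with associated phase forms $a^*$ and $b^*$ obtained by restriction to $\cZ$.

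Next I would apply Corollary~\ref{momentsCondition1}. Since $N_a$ and $N_b$ are processes with the common diffraction $\omega$ and we have assumed $\langle\cZ_n\rangle=\cZ$, that corollary tells us that $N_a$ and $N_b$ are spatially isomorphic if and only if the first $n$ moments of the phase forms $a^*$ and $b^*$ agree.

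The only remaining step is to identify the moments of the phase forms $a^*,b^*$ with the moments of the mappings $a,b$ themselves. By definition the $m$th moment of $a^*$ is its restriction to $Z_m$, and on a relator $(k_1,\dots,k_m)$ the value of $a^*$ is $a(k_1)\cdots a(k_m)$, i.e.\ exactly the $m$th moment of the elementary phase form $a$ (compare Proposition~\ref{mthMomentCharacterization}); the same holds for $b$. Hence the first $n$ moments of $a^*$ and $b^*$ coincide if and only if those of $a$ and $b$ do, and chaining this with the previous equivalence yields the corollary. I do not anticipate a real obstacle here: the statement is essentially bookkeeping layered on top of Corollary~\ref{momentsCondition1}, and the only subtlety to watch is that the two a priori different notions of `$m$th moment'---one for the mappings $a,b$ and one for the restricted phase forms $a^*,b^*$---genuinely coincide on the relators of length $m$.
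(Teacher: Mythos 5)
Your proposal is correct and follows essentially the same route as the paper, whose entire proof is that the statement is a direct consequence of Corollary~\ref{momentsCondition1}; your additional bookkeeping (viewing $a,b$ as elementary phase forms via Proposition~\ref{characterCondition}, invoking Proposition~\ref{inverseProblem} to realize the processes, and identifying the moments of $a^*,b^*$ with those of $a,b$ on relators of length $m$) just makes explicit the steps the paper leaves implicit.
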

\begin{proof} This is a direct consequence of
Corollary \ref{momentsCondition1}.
% This is a consequence of Lemma~\ref{momentsCondition} and  Theorem \ref{uniquenesstheorem}.
\end{proof}

\section{The homometry problem}\label{Homometry}
In this section we discuss a main result of the paper viz our solution to the homometry problem for pure point diffraction.

\bigskip

\begin{definition}\label{diffractionClass}
Given $\GG$ and a measure $\difm$ on $\Gdual$, we let $\cN(\GG, \difm)$ denote
the set of all spatial  isomorphism classes of ergodic  full stationary spatial processes with diffraction
$\difm$ satisfying Assumption \ref{0-eigenfunction assumption}.
\end{definition}

The considerations of the previous sections rather directly prove  the following result, which is in effect a solution of the homometry problem
for pure point diffraction.

\begin{theorem} \label{homometrytheorem}  Let $\GG$ be a locally compact abelian group. Let $\difm$ be a positive backward transformable pure point measure on $\Gdual$ and $\cZ = \cZ (\difm)$ the associated group of relators. Then, the map
$$\widehat{\cZ}\longrightarrow \cN(\GG, \difm),$$
$$ a^*\mapsto [(N_a,\TT, l_\TT, T)],$$
where $a$ is any elementary phase form representing $a^*$,
is a bijection.
\end{theorem}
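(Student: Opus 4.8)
The plan is to build the bijection directly out of three results already in hand: the construction in Proposition~\ref{inverseProblem}, the uniqueness statement in Theorem~\ref{uniquenesstheorem}, and the extraction of invariants in Proposition~\ref{fromsptopf}. Concretely I would verify in turn that the indicated map is well defined, surjective, and injective.

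I expect the one step requiring care to be well-definedness, i.e.\ that the class $[(N_a,\TT,l_\TT,T)]$ is independent of the elementary phase form $a$ chosen to represent $a^*$. If $b$ is a second representative, then by Proposition~\ref{characterCondition}(ii) the ratio $u=b/a$ lies in $\TT$, and the computation \eqref{shifting TT} gives $N_b = T_{-u}N_a$. Thus $N_b$ arises from $N_a$ by translation of the compact group $\TT$ by the element $u$. This translation preserves the Haar measure $l_\TT$, commutes with the $\GG$-action because $\TT$ is abelian, and is multiplicative on $L^\infty(\TT,l_\TT)$; hence the induced operator on $L^2(\TT,l_\TT)$ is a spatial isomorphism in the sense of Definition~\ref{definition-isomorphism}, and $[(N_a,\TT,l_\TT,T)]=[(N_b,\TT,l_\TT,T)]$.

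Surjectivity and injectivity are then two applications of Theorem~\ref{uniquenesstheorem} in opposite directions. For surjectivity, take a class in $\cN(\GG,\difm)$ represented by an ergodic full spatial process $(N,X,\mu,T)$ with diffraction $\difm$; Proposition~\ref{fromsptopf} attaches to it a phase form $a^*\in\widehat{\cZ}$, Proposition~\ref{inverseProblem} produces $(N_a,\TT,l_\TT,T)$ with the same diffraction $\difm$ and the same phase form $a^*$, and Theorem~\ref{uniquenesstheorem} then identifies the two processes up to spatial isomorphism, so the given class is the image of $a^*$. For injectivity, if $a^*$ and $b^*$ have isomorphic images, then $(N_a,\TT,l_\TT,T)\cong(N_b,\TT,l_\TT,T)$; by Proposition~\ref{inverseProblem} their phase forms are $a^*$ and $b^*$, while Theorem~\ref{uniquenesstheorem} forces isomorphic pure point ergodic processes to share one phase form, whence $a^*=b^*$.

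The genuine mathematical content is thus already absorbed into the construction and the uniqueness theorem; the remaining labour is bookkeeping. The only points to watch are the multiplicativity clause in the verification that $T_{-u}$ is a \emph{spatial} (not merely spectral) isomorphism, and the compatibility of Assumption~\ref{0-eigenfunction assumption}, which is built into the definition of $\cN(\GG,\difm)$ and is respected by the normalization of $N_a$, so that no clash of conventions occurs between the two sides of the map.
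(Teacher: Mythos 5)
Your proposal is correct and takes essentially the same route as the paper, whose proof likewise just combines Proposition~\ref{inverseProblem} (construction), Theorem~\ref{uniquenesstheorem} (well-definedness and injectivity), and Proposition~\ref{fromsptopf} together with Theorem~\ref{uniquenesstheorem} (surjectivity). Your more explicit verification of well-definedness via the translation $N_b = T_{-u}N_a$ is just the computation \eqref{shifting TT} already recorded in the paper after Proposition~\ref{inverseProblem}; the paper instead subsumes this step under the uniqueness theorem, with no substantive difference.
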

\begin{proof}
By Proposition \ref{inverseProblem}, $(N_a,\TT,l_\TT, T)$ is an ergodic spatial stationary process with diffraction $\omega$ and phase $a^\ast$. Moreover, by Theorem \ref{uniquenesstheorem}, the map is well-defined (i.e. independent of the choice of the elementary phase form representing $a^\ast$) and one-to-one.   It is surjective by Proposition \ref{fromsptopf} and Theorem \ref{uniquenesstheorem}.
\end{proof}

We note the following immediate consequence of the previous theorem and Corollary \ref{momentsCondition1} (see Corollary \ref{momentsCondition2} as well).

\begin{coro}\label{finite-coro} Assume the situation of the theorem. If $\langle \cZ_n\rangle = \cZ$, then two  ergodic spatial stationary processes
over $\GG$ with diffraction measure $\difm$ are spatially isomorphic  if and only if their phase forms  have the same $m$-th moments for $m = 1, \ldots, n$.
\end{coro}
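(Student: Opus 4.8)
The plan is to read this off as a specialization of Corollary~\ref{momentsCondition1} to the backward transformable setting of Theorem~\ref{homometrytheorem}; no genuinely new argument is needed, so the work lies entirely in checking that the hypotheses of the two cited results are in force and in chaining their conclusions. I do not anticipate a real obstacle here, only a couple of bookkeeping points.

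First I would note that, since $\difm$ is a positive backward transformable pure point measure, Theorem~\ref{homometrytheorem} applies and yields a bijection $\widehat{\cZ} \longrightarrow \cN(\GG,\difm)$, $a^* \mapsto [(N_a,\TT,l_\TT,T)]$. Consequently every ergodic full spatial process with diffraction $\difm$ is spatially isomorphic to exactly one of the model processes, and by Proposition~\ref{fromsptopf} each such process carries a well-defined phase form in $\widehat{\cZ}$. Injectivity of the correspondence (equivalently, Theorem~\ref{uniquenesstheorem} with the common diffraction measure $\difm$ fixed) then gives the first equivalence: two processes with diffraction $\difm$ are spatially isomorphic if and only if their associated phase forms coincide as elements of $\widehat{\cZ}$. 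At this point I would invoke the standing hypothesis $\langle \cZ_n \rangle = \cZ$ and apply Lemma~\ref{momentsCondition}, which under precisely this assumption asserts that two phase forms are equal if and only if their $m$th moments agree for $m = 1, \dots, n$. Composing the two equivalences finishes the proof: spatial isomorphism holds exactly when the phase forms are equal, which holds exactly when the $m$th moments agree for $m = 1, \dots, n$.

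The only points that require a moment's care — and the closest thing to an obstacle — are, first, that the moments being compared are the restrictions to $Z_m$ which by the discussion after Proposition~\ref{mthMomentCharacterization} descend to well-defined quantities on phase forms independent of the choice of representing elementary phase form, so the statement is meaningful; and second, that the processes under comparison are full and ergodic, placing them within the scope of Theorem~\ref{homometrytheorem} and Theorem~\ref{uniquenesstheorem}. The latter is automatic, since it is built into the definition of the class $\cN(\GG,\difm)$. Thus the corollary follows immediately by combining Theorem~\ref{homometrytheorem} with Corollary~\ref{momentsCondition1} (whose moment characterization is exactly Lemma~\ref{momentsCondition}).
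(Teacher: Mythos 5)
Your proposal is correct and follows essentially the same route as the paper, which derives Corollary~\ref{finite-coro} directly from Theorem~\ref{homometrytheorem} together with Corollary~\ref{momentsCondition1} (itself a consequence of Theorem~\ref{uniquenesstheorem} and Lemma~\ref{momentsCondition}). Your bookkeeping remarks on well-definedness of moments and on fullness being built into $\cN(\GG,\difm)$ are accurate and consistent with the paper's implicit assumptions.
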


\begin{remark} Although it is obvious from the theory of pure point spatial processes
that we have developed, it is worthwhile noting explicitly that the problem
of classification of pure point spatial processes with a given $\difm$ depends on
only the positions of the Bragg peaks, namely $\Bragg$. Once one has $\Bragg$, one has
the group $\cZ$, which depends only on the relationship of $\Bragg$ to the subgroup
$\ev$ of $\Gdual$ that it generates. And in fact this purely algebraic problem is in reality the entire homometry problem!  See in \S~\ref{compactG}.2 for a $1D$ case when $\GG$ is finite.
\end{remark}

\section{Compact Groups}\label{compactG}
Cases when $\GG$ is compact (or even finite) are important since they arise in the study of periodic and limit periodic cases of diffractive structures, and their relative simplicity allows us to see more clearly the nature of the spatial processes that we have introduced.
In this section we first look at what happens in general when we are given that the
group $\GG$ is compact.

We then move to some examples. These examples revolve
around the case that $\GG$ is finite and/or around one of the most basic of all diffraction measures, namely the diffraction of $\ZZ$, $\delta_\ZZ$. We also look at an
interesting result of Gr\"unbaum and Moore that involves rational periodic diffraction
on the line. The fact that there is anything to say in this remarkably simple case
and that it seems to be very difficult to extend their results to a two dimensional setting is a bit of
a testimony to the difficulties that are inherent in the homometry problem.

\subsection{The compact setting}\label{The compact setting}
We now look at the method of \S\ref{CyclestoStochasticprocesses} in this compact case.
We assume that $\difm$ is positive, pure point,
centrally symmetric ($\difm(-k) = \difm(k)$, for all $k\in \Bragg$), and backward transformable. Let $\Bragg$ be the support of $\difm$. It is convenient to assume
that $\Bragg$ generates $\Gdual$ as a group, for otherwise we may use $\langle \Bragg\rangle$ and $\widehat{\langle \Bragg\rangle}$ instead. Then $\ev = \Gdual = {\ev}_d$
and the carrying space for our spatial process is
$$X= \TT := \widehat{\ev_d} = \GG.$$
 When we wish to distinguish two roles of $\GG$, first as a group of translations and second as the set of states of dynamical system, we shall use notation
like $t$ and $\xi = \xi_t$ respectively. Moreover, for any character $k$ on $\TT$ (i.e. $k\in \Gdual$) we will denote the map
$$L^2 (X)\longrightarrow \CC, \; F\mapsto \widehat{F} (k)$$
by $k$ as well.

When we come to the transformability into the autocorrelation measure $\gamma$, we require that  for all $F\in C_c(\GG) = C(\GG)$,
\begin{equation}
\label{transformability-assumption} \gamma(F*\tilde F) = \int_{\Gdual} |\hat F|^2 d\omega =
\sum_{k \in \Gdual} \omega(k) |\hat F|^2(k)
= \sum_{k \in \Gdual} \omega(k) \widehat{F*\tilde F}(k).
\end{equation}
Thus the autocorrelation Fourier dual to $\difm$ is given by
\begin{equation} \label{definition-gamma}
\gamma := \sum_{k\in \Bragg} \difm(k)\;k = \sum_{k\in \Gdual} \difm(k)\;k \,
\end{equation}
provided that this is indeed  a measure. Since $\GG$ is compact, and $\gamma$
is supposed to be a Borel measure, this measure must be finite.

Assuming transformability we can, for any character  $a:\cF(\Bragg) \longrightarrow U(1)$, form the associated
stationary process
\[N_a:C_c(\GG) \longrightarrow L^2(X,\mu) \, ,\]
namely,
\[N_a(F) = \sum_{k\in\Gdual} \hat F(k) a(k) \difm^{1/2}(k) \, k \]
for all $F\in C(\GG)$. Note that $N_a (F)$ belongs indeed to $L^2 (X,\mu)$ due to
$\eqref{transformability-assumption}$.

\smallskip

Our interpretation is that $N_a$ represents some sort of density on the space
$\GG$. Each point $\xi_t \in X$ represents an instance of the yet unspecified
structure which can be paired with elements of $C_c (\GG)$ to give

\begin{equation}
\label{measureInterpretation}
 \langle \xi_t,  F\rangle = N_a(F)(\xi_t) = N_a (F) (t),
 \end{equation}
where the left hand side needs to be given an interpretation.  It turns out that such an interpretation can be given easily if we make the  finiteness assumption that
\begin{equation}\label{finiteness-assumption}
\omega \in L^1(\Gdual,l_{\Gdual}).
\end{equation}
Note that this assumption immediately implies that $\omega$ is backward transformable as it yields that $\gamma$ given in $\eqref{definition-gamma}$ is indeed a measure.

 Given $\eqref{finiteness-assumption}$, we can
define the function $\rho_a \in L^2 (X)$ by
\begin{equation}\label{densityDefined}
\rho_a:= \sum_{k\in \Gdual} a(k) \omega(k)^{1/2} \overline k
\end{equation}
by standard theory of Fourier series.  Then, a short calculation invoking unitarity of the  Fourier transform  (see e.g. \eqref{def-na} as well)  gives
\begin{eqnarray}\label{processAsMeasure}
N_a(F)(t) &=&\sum_{k\in \Gdual} \hat F(k) a(k) \omega(k)^{1/2} (k,t )
\nonumber \\
&=&   \int_\GG T_t\rho_a(x) F(x) d\haarG(x).
\end{eqnarray}
Note that we have indeed pointwise (in t) existence of $N_a (F) (t)$ as both $\widehat F$ and $\omega$ are square summable i.e. belong to $\ell^2 (\ev_d)$.  Comparing with $\eqref{measureInterpretation}$, we find that we can identify
$\xi_t$   with the $L^2$ function  $T_t \rho_a$ and the pairing between $ \xi_t$ and $C_c (\GG)$ with  ordinary integration.

When can this `density'
be interpreted as a measure $\rho_a$ on $X = \GG$?

The idea is that for some $\xi_0 \in X$ (which we can take to be $0 \in \GG$) the structure is $\rho_a$ and that as we translate around the translate $T_t\rho_a$ represents the structure at $\xi_t$. Usually one would have to assume by ergodicity the denseness of this orbit, but in the compact case as we have it here, the orbit is the entire space $X$. This entails that
\begin{equation}
\langle T_t \rho_a, F\rangle = N_a(F)(t).
\end{equation}
where
and then
$T_t\rho_a = \sum_{k\in \Gdual}  (k,t) a(k) \omega(k)^{1/2} \overline k$.
This gives \eqref{measureInterpretation} when we treat $\rho_a$ as a measure.

As $\GG$ is compact the $L^2$-function $T_t \rho_a$ belongs to $L^1$ as well and we can consider it to be the measure
$$ T_t \rho_a d\haarG.$$
Thus, in this situation we can realize the process as a measure process.

If $\sum \omega(k)^{1/2} <\infty$, then $\rho_a$ will even  be a continuous function.
If $\sum \omega(k)^{1/2} =\infty$, the situation becomes different. Such a situation can easily arise:
it suffices to  find $a$ and $\omega$ such that $\rho_a$ does not belong to $L^\infty  (X)$.
Let for example $\GG$ be  compact  admitting  a function $\rho\in L^2 (\GG)\setminus L^\infty (\GG)$.  As $\rho$ belongs to $L^2$, we
   can expand $\rho$ in a Fourier series
 $$\rho = \sum_ k a(k) \omega(k)^{1/2} \overline{k}$$
 with $a(k)\in U(1) $ and $\omega (k)\geq 0$ with $ \sum \omega (k) < \infty$.
The diffraction of $\rho$ is seen directly to be equal to $\omega$.

For more on interpreting $N_a$ as some sort of measure induced density
on $\GG$ see \S\ref{ssp-measures}.

%%%%%%%%%%%%%%%%%%%%%%%%%%%%%%%%%
\section{Examples around the diffraction measure $\difm =\delta_\ZZ$}
%%%%%%%%%%%%%%%%%%%%%%%%%%%%%%%%%

In general the number of solutions to the inverse problem is vast, even for
the most basic diffraction patterns. In this section we consider the most famous
example of a diffraction pattern, the diffraction of the integers. More properly
this is the diffraction of the Dirac comb $\delta_\ZZ= \sum_{x\in\ZZ} \delta_x$
which represents a point density of one at each of the integers. Its diffraction is also
$\difm =\delta_\ZZ$.

Crystallographers have long relied on additional information (like the periodicity of crystals
and their known constituents)
to find solutions to the inverse problem.
We shall see that by imposing periodicity and/or imposing arithmetic conditions on the
solutions, the inverse problem we can do the same. One of the advantages
of our approach, that applies to all locally compact Abelian groups, is that it also applies
to compact and to finite groups. Although our object of attention is
$\delta_\ZZ$, much of this section is devoted to cases in which $G$ is finite
and to showing what the theory then looks like and tells us. We also deal with the case $G=U(1)$.
\medskip

In the case of diffraction
from a periodic set or periodic crystal, it is customary to compute the diffraction directly from
the originating measure describing the density $\rho$, by taking the square-absolute value
of the Fourier transform of $\rho$. In the case of $\delta_\ZZ$ this approach gives the answer
directly as a consequence of the Poisson summation formula. This also works directly for
finite groups.

We start with  the pair of dual groups $\frac{1}{M}\ZZ/\ZZ$ and $\ZZ/M\ZZ$
where $M$ is a positive integer. The duality is most conveniently written down in the form
\[ (k,x) = e^{2 \pi i kx} \]
for $x \in \frac{1}{M}\ZZ/\ZZ$ and $k \in \ZZ/M\ZZ$. In the context of this paper, one of these
groups plays the role of $\GG$ and the other of $\Gdual$. Most of the time
it will be $\GG = \frac{1}{M}\ZZ/\ZZ$ and $\Gdual=\ZZ/M\ZZ$. To help clarify things, elements of $\Gdual$ will be written in the form $\chi_x$ or $\chi_k$ accordingly.

The idea here is to use the fact that the diffraction $\delta_\ZZ$ is periodic and then
represent it in $\ZZ/M\ZZ$ and look for all solutions for an originating distribution $\rho$
in $\frac{1}{M}\ZZ/\ZZ$. The resulting distribution will then be interpreted as a periodic
(modulo $\ZZ$) measure
on $\frac{1}{M}\ZZ$. In this way we pick up periodic solutions to the inverse problem. Alternatively
we could do things the other way around and consider $ \delta_\ZZ$ represented in
$\frac{1}{M}\ZZ/\ZZ$
and find solutions to the inverse problem in $\ZZ/M\ZZ$ which we could interpret as periodic
solutions in $\ZZ$. However, we shall see that does not lead to new solutions
to the inverse problem.

We note that for finite $\GG$ and $\Gdual$ the Haar measure we want to use
is counting measure
normalized to total measure $1$ by dividing by the order of the group. For
$\chi_k \in \ZZ/M\ZZ$, simple calculation gives the expected result that
$\widehat{\chi_k} = \delta_k$ (note that $\chi_k$ is a function on $\GG$ and
its Fourier transform is a function on $\Gdual$).

\smallskip

Since the groups are finite, there is no complication in computing diffraction.
For a density or distribution of density $\rho$ on $\GG$, the autocorrelation
is $\gamma = \rho*\widetilde \rho$ and the diffraction is $\difm = (\rho*\widetilde{\rho})\widehat{\phantom{x}}= | \widehat \rho |^2$.

\begin{remark} We should note that in general the approach via the square-absolute value of the Fourier
transform is not applicable to aperiodic structures.
 As A. Hof has shown \cite{Hof1} and \cite{Hof2}, in the study of
aperiodic crystals one usually meets distributions of density that are not Fourier transformable.
\end{remark}

\medskip
Our approach is to follow the lines laid out above and for given diffraction $\difm$
and elementary phase form $a$ define the associated density via
\eqref{densityDefined}:
\begin{equation}
\rho_a:= \sum_{k\in \Gdual} a(k) \omega(k)^{1/2} \overline k \,,
\end{equation}
This is a finite sum and there are no questions about convergence or ambiguities
about how to interpret it.

\subsection{Specific values of $M$}

\subsubsection{The case $M=1$}
This seems too trivial to consider, but it is interesting nonetheless.
$\GG= \frac{1}{1}\ZZ/\ZZ$, $\Gdual =\ZZ/\ZZ$,
$\difm = \delta_0$. Here $\Bragg = \{0\} \in \ZZ/\ZZ $ and
$\cF$ is generated by $e(0)=1$. So
$\cF$ is the trivial group and its only character is the trivial character. Also $\cZ = \cF$.
The density measure $\rho$ defined by the trivial character $a$ is
$a(0)\difm(0)^{1/2}\overline{\chi_0} = \chi_0$ which is simply the identity function on $\GG$.
Thus we obtain the expected solution to the inverse problem: $\widehat{\chi_0} = \delta_0$
whose absolute square is again $\difm = \delta_0$. The function $\chi_0$ is the identity function
on $\frac{1}{1}\ZZ/\ZZ$, which is also the measure  $\delta_0$ on $\GG$.

\subsubsection{The case $M=2$}
(1) \quad $\GG =\frac{1}{2}\ZZ/\ZZ$, $\Gdual=\ZZ/2\ZZ$,
$\difm = \delta_0 + \delta_1$. The two elements of
$\Gdual$ are $\chi_0$ and $\chi_{1}$. We have $\Bragg = \{0, 1\} =\Gdual$ and $\cF$
is generated by $e(0) =1 $ and $e(1)$, with $e(1)^2=1$ since $-1 \equiv 1 \mod 2\ZZ$.
Thus $\cF$ is cyclic of order $2$. There are only two characters on $\cF$, i.e. elementary
phase forms,
$a_+:e(1) \mapsto 1$ and $a_-:e(1) \mapsto -1$.
 The exact sequence
 \[1 \longrightarrow \cZ \longrightarrow \cF \longrightarrow \ev = \Gdual \longrightarrow 1\, ,\]
comes from the mapping $e(k) \mapsto k$ which here reads
$e(0) \mapsto 0$ and $e(1) \mapsto 1$. Thus the kernel $\cZ$ is trivial. The significance
of this is that there is only one phase form (trivial!) and thus the two solutions we derive
from the two elementary phase forms will be the same
up to translation. In detail, we have, using $a_+$
\[\rho_+ = a_+(0) \difm(0)^{1/2}\overline{\chi_0} + a_+(1)\difm(1)^{1/2}\overline{\chi_{1}}
= \chi_0 + \chi_{1}\,.\]
Note that this is a function on $\GG = \frac{1}{2}\ZZ/\ZZ$. Its values
are $\rho_+(0) =2, \rho_+(1/2) = 0$.
This is a nice transformable measure and its Fourier transform is the measure
is $\delta_0 + \delta_1$ on $\Gdual = \ZZ/2\ZZ$.
Its absolute square is $\difm$, as we expected.

Similarly, using $a_-$ we obtain
\[\rho_- = a_-(0) \difm(0)^{1/2}\overline{\chi_0} + a_-(1)\difm(1)^{1/2}\overline{\chi_{1}}
=  \chi_0 - \chi_{1} \,.\]
Its values are $\rho_-(0) =0, \rho_-(1/2) =2$,
so the density distribution $\rho_-$ on $\GG$ is just translation by $1/2$ of $\rho_+$
found in the case of $a_+$.
The Fourier transform is $\delta_0 - \delta_1$
and again the absolute square of this is $\difm$.

\bigskip

(2) \quad Next we reverse the roles of $\GG$ and $\Gdual$: $\GG =\ZZ/2\ZZ$, $\Gdual = \frac{1}{2}\ZZ/\ZZ$,
$\difm = \delta_0$. Notice that $\Gdual$ represents the reduction
of the integers and the half-integers modulo $\ZZ$. The reduction of
the diffraction $\delta_\ZZ$ modulo $\ZZ$ is thus supported only on the
$\ZZ$ part which is represented by the class of $0$ modulo $\ZZ$.
Following the general discussion of compact groups we should,
at this point, replace $\Gdual$ by the subgroup $\langle \Bragg \rangle = \{0\}$.
This then reduces us to the case $M=1$ already discussed.

 \bigskip
(3) \quad It is interesting to consider
$\GG =\ZZ/2\ZZ$ and $\Gdual = \frac{1}{2}\ZZ/\ZZ$ with the diffraction
$\difm = \delta_0 + \delta_{1/2}$. This may be considered as the reduction
modulo $\ZZ$ of $\delta_\ZZ + \delta_{\frac{1}{2} + \ZZ}$. The two elements of
$\Gdual$ are $\chi_0$ and $\chi_{1/2}$. The discussion parallels that in (1). We have $\Bragg = \{0, 1/2\}$ and $\cF$
is generated by $e(0) =1 $ and $e(1/2)$ with $e(1/2)^2=1$ since $-1/2 \equiv 1/2 \mod \ZZ$.
Thus $\cF$ is cyclic of order $2$. There are only two characters on $\cF$, i.e. elementary
phase forms,
$a_+0:e(1/2) \mapsto 1$ and $a_-:e(1/2) \mapsto -1$.
 The exact sequence
 \[1 \longrightarrow \cZ \longrightarrow \cF \longrightarrow \ev = \Gdual \longrightarrow 1\, ,\]
comes from the mappinng $e(k) \mapsto k$ which here reads
$e(0) \mapsto 0$ and $e(1/2) \mapsto 1/2$. Thus again the kernel $\cZ$ is trivial
and the two solutions we derive will
be the same
up to translation.  Using $a_+$
\[\rho_+ = a_+(0) \difm(0)^{1/2}\overline{\chi_0} + a_+(1/2)\difm(1/2)^{1/2}\overline{\chi_{1/2}}
= \chi_0 + \chi_{1/2} \,.\]
Thus the distribution on $\ZZ/2\ZZ$ is $\rho_+(0) =2, \rho_+(1)=0$.
Its Fourier transform is the measure $\delta_0 + \delta_{1/2}$, whose absolute square
is $\difm$.

Going back to $\ZZ$ we have $2\delta_{2\ZZ}$ on $\ZZ$ whose diffraction is
$\frac{1}{2}(2 \delta_{\frac{1}{2}\ZZ}) = \delta_\ZZ + \delta_{\frac{1}{2} +\ZZ}$, which
agrees with what we obtained after removing all periodicity.

Similarly, using $a_-$ we obtain
\[\rho_- =  a_-(0) \difm(0)^{1/2}\overline{\chi_0} + a_-(1/2)\difm(1/2)^{1/2}\overline{\chi_{1/2}}
= \chi_0 - \chi_{1/2} \,,\]
and the distribution on $\ZZ/2\ZZ$ is
$\rho_-(0) = 0, \rho_-(1) = 2$.
Then $\widehat{\rho_-} = \delta_0 - \delta_{1/2}$. Taking absolute squares we again
obtain the diffraction $\difm$.

Again going back to $\ZZ$, this time we have  $2\delta_{2\ZZ +1}$ on $\ZZ$ whose diffraction is
$\delta_\ZZ -  \delta_{\frac{1}{2} +\ZZ}$, which
agrees with what we obtained after removing all periodicity.

 \bigskip

\subsubsection{The case  $M=3$}
Here something more interesting happens. $\GG = \frac{1}{3}\ZZ/\ZZ$, $\Gdual = \ZZ/3\ZZ, \difm=\delta_0 + \delta_1+\delta_2$.
The three elements of
$\Gdual$ are $\chi_0,\chi_{1},\chi_{2}$, with $\chi_k(j/3) = (k,j/3) = \zeta_3^{kj}$,
where $\zeta_3 =e^{2 \pi i /3}$. We have $\Bragg = \{0, 1,2\} =\Gdual$ and $\cF$
is generated by $e(0) =1,e(1),e(2) $, with $e(2)= e(1)^{-1}$
since $-1 \equiv 2 \mod 3\ZZ$.
Thus $\cF \simeq \ZZ$ with $e(1)$ as a (multiplicative) generator.
There are infinitely many characters on $\cF$, i.e. elementary
phase forms, one for each $u \in U(1)$:
$a_u: e(1)^n \mapsto u^n \in U(1)$ and $e(2)^n\mapsto u^{-n}$ for all $n \in \ZZ$.
 The exact sequence
 \[1 \longrightarrow \cZ \longrightarrow \cF \longrightarrow \ev = \Gdual \longrightarrow 1\, ,\]
comes from the homomorphism that maps $e(1)^k \mapsto k \mod 3$ (which includes $
e(1)^{-k} = e(-1)^k\mapsto -k \mod 3$).
The kernel $\cZ \simeq 3\ZZ$ via the isomorphism $\cF \simeq \ZZ$.

For each $u\in U(1)$ we obtain

\begin{eqnarray}
\rho_u &=& a_u(0) \difm(0)^{1/2}\overline{\chi_0} + a_u(1)\difm(1)^{1/2}\overline{\chi_{1}} +
a_u(2)\difm(2)^{1/2}\overline{\chi_2}\\
&=& 1 \chi_0 + u\chi_{-1} + \overline u \chi_1\,.\nonumber
\end{eqnarray}
Thus the distribution on $\frac{1}{3}\ZZ/\ZZ$ is
$\rho_u(0) = 1+u+\overline u$,
$\rho_u(1/3) = 1+ u\zeta_3^2 + \overline u \zeta_3$, and
$\rho_u(2/3) =1 + u\zeta_3 + \overline u \zeta_3^2\,.$
Since $\widehat{\chi_k} = \delta_k$
we obtain $\widehat{\rho_u} = \delta_0 + u \delta_2 + \overline u \delta_1$.
Its absolute square is $\difm$ as required.

We note that replacing $u$ by $\zeta_3u$ or $\zeta_3^2u$ changes
the distribution $\rho$ by translation, which corresponds to the fact
that the elementary phase forms map $3:1$ to phase forms.

Thus there is a one parameter family of solutions to the inverse problem
for $\difm$, which are supported on $\ZZ$ and periodic of period $3$.

\smallskip

\subsubsection{Other cases of $M$}

If we look at the inverse problem to $\delta_0 + \cdots + \delta_{M-1}$
in $\frac{1}{M}\ZZ/\ZZ$ the situation evolves in a natural way.
At $M=4$ the value $2$ leads to
$e(2)^2=1$, and $e(1) = e(3)^{-1} \mapsto u \in U(1)$ is arbitrary, so we end up with solutions
$\chi_0 +  u\chi_{1/4} \pm \chi_{1/2} + \overline u\chi_{3/4}$.
At $M=5$ we are get two free parameters $u,v \in U(1)$, one relating to the pair
$1,4$ and one to the pair $2,3$ (integers taken modulo $5$), and the solutions are
$\chi_0 +  u\chi_{1/5} +v \chi_{2/5} + \overline v\chi_{3/5} +\overline u \chi_{4/5}$.
The results for higher  $M$ follow a similar pattern.

\subsection{What happens when $\GG = U(1), \Gdual = \ZZ$?}\label{G=U(1)}

We again consider the diffraction $\delta_\ZZ$, but now use $\ZZ$ as $\Gdual$
and $U(1)$ as $\GG$.
Then $\Bragg = \ZZ$ and $\cF$ is generated by $\{e(k) :k \in \ZZ \}$ with the single
set of relations $e(-k) = \overline{e(k)}$ (along with $e(0)=1$). Then we have elementary
phase forms $a:\cF \longrightarrow U(1)$ with $a(k) \in U(1)$ arbitrary except that
$a(0) =1, a(-k) = \overline{a(k)}$.
Each such $a$ leads to a formal solution
\[ \rho_a=\sum_{k\in\ZZ} a(k) \difm(k)^{1/2} \overline{\chi_k}
= \sum_{k\in\ZZ} a(k)  e^{-2 \pi i k.(\cdot)} \, ,\]
This is simply a formal Fourier series on $\ZZ$. However, given that for all $k$,
$|a(k)|=1$, there is no convergence as a Fourier series. This is expected
since in the simplest case when $a(k)=1$ for all $k$ its Fourier transform is the
unbounded measure $\delta_\ZZ$.

However, the solution for $N_a$:
\[N_a(F) = \sum_{k\in\ZZ} \hat F(k) a(k) \difm(k)^{1/2} \, k
= \sum_{k\in\ZZ} \hat F(k) a(k) \, k \]
for all $F\in C(U(1))$ is well-defined as an $L^2$-function, and we can learn something from it.
In the case that $a$ is the trivial elementary phase form we obtain
\[N_a(F)(t) = \sum_{k\in\ZZ} \hat F(k) (k,t)  = F(-t) = \delta_0(T_t F)\,.\]
Thus the associated density is $\delta_0$, which is what we expect. Recall
\S\ref{The compact setting} here.
The compact group $\TT$, which is $U(1)$ here, is viewed both as a dynamical system
with states $\xi_t$
and as a group of translations $t\in U(1)$ which translate the states about. $N_a(F)(t)$ is something that gives the effect of the process on the function $F$ at the state $\xi_t$.

Now let $K=-K$ be a finite symmetric subset of $\ZZ\backslash \{0\}$ and let
$a$ be an elementary phase form which satisfies $a(k) =1$ for all $k \notin K$
and $a(k) = \overline{a(-k)} \in U(1)$ arbitrary for all $k\in K\cap \ZZ_+$. Now for
$F\in C(U(1))$ we have
\[N_a(F)(t) =  \sum_{k\in\ZZ} \hat F(k) (k,t) + \sum_{k\in K} (a(k) -1) \hat F(k) (k,t) \,,\]
for all $t\in U(1)$. Following the same reasoning as in \eqref{processAsMeasure}
we arrive at $N_a(F)(t) = \delta_0(T_t F) + \rho(T_tF)$ where
\[\rho = \sum_{k\in K} (a(k)-1) \overline{\chi_k}\,.\]
Remarkably the density $\delta_0 +  \sum_{k\in K} (a(k)-1) \overline{\chi_k}$ on $U(1)$
is indeed a solution to the inverse
problem for $\delta_0$, as can be seen by directly computing its autocorrelation
and taking its Fourier transform. This shows the ubiquity of quite reasonable looking solutions
to the problem.

\subsection{A simple periodic case with imposed arithmetic conditions} \label{simplePeriodic}
In this section we
consider the very simple situation of
the diffraction from the integers when they are weighted periodically with
period $M$ by a set $w_0, \dots, w_{M-1}$ of real numbers. In this case the
diffraction is also periodic, and we will assume that the ambient space
is $\GG = \ZZ/M\ZZ$ and the diffraction occurs in its dual, $\Gdual \simeq \frac{1}{M}\ZZ/\ZZ$,
say
\begin{equation}\label{finiteDiffraction}
\difm = \sum_{k\in \ZZ/M\ZZ} \difm(k) \delta_k \,.
\end{equation}

As we noted above, making the assumption that $\GG= \ZZ/M\ZZ$ is not entirely innocent
since it puts considerable limitations on the kinds of solutions available -- namely
that they are periodic of period $M$ and the distribution of intensity is entirely confined to integers,
neither of which is mandatory. Taking a larger ambient space will in general produce more
solutions (many more!).

Equation \eqref{densityDefined} defines the density on $\GG$ arising from
$\difm$ and the elementary phase form $a$. Since its diffraction is to be
$\difm$, this leads to the $M$ equations
\begin{equation}\label{weightEquations}
\rho_a(x) =\sum_{k\in \Bragg} a(k) \difm(k)^{1/2} \overline{ (k,x)} = w_x, \qquad x\in \ZZ/M\ZZ
\end{equation}
where $\Bragg \subset \Gdual =\frac{1}{M}\ZZ/\ZZ$ is the set of
non-vanishing points of $\difm$ and $ (k,x) = \exp(2 \pi i kx) = \zeta_M^{Mkx} \in
\QQ[\zeta_M]$, where $\zeta_M=\exp{2 \pi i /M}$ (the values of
$k$ are of the form $k_0/M \mod \ZZ$ where $k_0\in\ZZ$). Notice that for finite groups, the situation
we are in here, the densities $\rho_a$ are functions and also measures, since the distinction
between them disappears.

Gr\"unbaum and Moore \cite{GM} consider the case in which
$\rho_a$ is rational valued, i.e.
all of the weights $w_x$ are rational numbers. In this case, solving the system
of equations \eqref{weightEquations} for the `unknowns' $a(k) \difm(k)^{1/2}$
we see that they must all lie in $\QQ[\zeta_M]$, and then
$a(k) \difm(k)^{1/2}\overline{a(k) \difm(k)^{1/2}} = \difm(k) \in \QQ[\zeta_M]$ also.

Let $H:= {\rm Gal}(\QQ[\zeta_M]/\QQ) \simeq (\ZZ/M\ZZ)^\times$, the group of units
of the ring $\ZZ/M\ZZ$. We can view $H$ as both a group of automorphisms
permuting the $M$ roots of unity, or as a group of automorphisms of $\frac{1}{M}\ZZ/\ZZ$;
namely $s\in (\ZZ/M\ZZ)^\times$ acts as $\alpha_s:\exp(2 \pi i k) \mapsto \exp(2 \pi i sk)$
and as $\alpha_s:k \mapsto sk$ respectively. In the rational case we are now in, it is shown in
\cite{GM} that
$\alpha(\Bragg) = \Bragg$ for all $\alpha \in H$. We show  next why this happens:
We take $a(k)\difm(k)^{1/2} = 0$ if $k\notin\Bragg$.

If $k_1 \in \Gdual$ then
\begin{equation}\label{orthogonalSum}
\frac{1}{|\GG|} \sum_{x\in \Gdual}  k_1(x) \rho_a(x) =
\sum_{k\in \Bragg} a(k) \difm(k)^{1/2} \frac{1}{|\GG|} \sum_{x\in \Gdual}  k_1(x) \overline{k(x)} = a(k_1) \difm(k_1)^{1/2}\, ,
\end{equation}
which is zero if $k_1\notin\Bragg$.

\smallskip
Let $\alpha = \alpha_s \in H$. Since $\rho_a$ is rational valued,
$\rho_a = \alpha(\rho_a)
= \sum_{k\in \Bragg} \alpha(a(k) \difm(k)^{1/2})  \alpha(\overline k)$.
Applying the process of summation of equation \eqref{orthogonalSum} to the element $\alpha(k_1)$ where $k_1 \in \Bragg$ we obtain $a(\alpha(k_1))\difm(\alpha(k_1))$. But applying $\alpha$ dircectly
to \eqref{orthogonalSum} we obtain
$\frac{1}{|\GG|} \sum_{x\in \Gdual}  \alpha(k_1(x)) \alpha(\rho_a(x)) = \alpha(a(k_1)\difm(k_1)^{1/2})$. These two are equal, and this shows, in particular, that
$a(\alpha(k_1))\difm(\alpha(k_1))^{1/2} \ne 0$, and hence that $\alpha(k_1) \in \Bragg$,
which is what we wished to prove.
In short, if we identify
 $\Bragg$ as a subset of $\frac{1}{M}\ZZ/\ZZ$, then $k \in \Bragg$ and $s\in (\ZZ/M\ZZ)^\times$
 implies $sk \in \Bragg$. In effect $\Bragg$ is a union of orbits of $\frac{1}{M}\ZZ/\ZZ$
 under the action of the multiplicative group $(\ZZ/M\ZZ)^\times$.

 This condition
 appears in an equivalent form in the following key result of \cite{GM}. For
 $h\in \ZZ$, let $\mbox{\rm ord}(h)$ denote its order as a group element modulo $M$
 (under addition).

 \begin{prop} \label{GMThm}If the density $\rho_a$ of \eqref{weightEquations} is rational
 valued then for all $k\in \Bragg$ and for all $j \in \ZZ$ with $\gcd(j, \mbox{\rm ord}(Mk)) =1$
 one has $jk \in \Bragg$. In this case any mapping $a:\Bragg \longrightarrow U(1)$
 satisfying the $m$-moment condition for all $m \le 6$ if $M$ is even (respectively
 $m\le 4$ if $M$ is odd) extends to a character on $\frac{1}{M}\ZZ/\ZZ$.  \qed
 \end{prop}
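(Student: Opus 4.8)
The plan is to prove the two assertions separately: the orbit condition on $\Bragg$ by a Galois descent (already foreshadowed in the paragraphs preceding \eqref{orthogonalSum}), and the moment bound by reducing it, through Lemma~\ref{charChar}, to the statement that the relator group $\cZ$ is generated by short relators.

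For the first assertion, write $c(k) := a(k)\difm(k)^{1/2}$, so that $k\in\Bragg$ precisely when $c(k)\neq 0$. By the inversion formula \eqref{orthogonalSum}, $c(k)$ is a $\QQ$-linear combination of the pairing values $(k,x)$, each of which lies in $\QQ[\zeta_M]$; since $\rho_a$ is rational valued by hypothesis, every $c(k)$ lies in $\QQ[\zeta_M]$. Fix $k\in\Bragg$ and set $d:=\mbox{\rm ord}(Mk)$. The character $x\mapsto(k,x)$ takes values among the $d$-th roots of unity, and $(jk,x)=(k,x)^{j}$ for every $j$. Given $j$ with $\gcd(j,d)=1$, choose $j'\in\ZZ$ with $j'\equiv j \pmod d$ and $\gcd(j',M)=1$; this is possible by the Chinese remainder theorem, since at each prime $p\mid d$ the congruence $j'\equiv j$ already forces $p\nmid j'$, while at the remaining primes dividing $M$ the residue of $j'$ may be chosen nonzero. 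Then $j'k=jk$ in $\Gdual$, and the Galois automorphism $\sigma_{j'}:\zeta_M\mapsto\zeta_M^{j'}$ of $\QQ[\zeta_M]$ fixes the rational numbers $\rho_a(x)$ and carries $(k,x)$ to $(j'k,x)$, so that $\sigma_{j'}(c(k))=c(j'k)=c(jk)$. As $\sigma_{j'}$ is injective, $c(k)\neq 0$ forces $c(jk)\neq 0$, i.e. $jk\in\Bragg$. Consequently $\Bragg$ is a union of complete classes of elements of fixed order in $\ZZ/M\ZZ$.

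For the second assertion, recall from Prop.~\ref{characterCondition}(iv) and Lemma~\ref{charChar} that $a$ extends to a character on $\ev=\langle\Bragg\rangle$ (and hence, since characters of subgroups of finite Abelian groups extend, on all of $\frac{1}{M}\ZZ/\ZZ$) as soon as $\langle\cZ_n\rangle=\cZ$; thus it suffices to show that $\cZ$ is generated by relators of length at most $6$ when $M$ is even and at most $4$ when $M$ is odd. I would first use the second moment condition $a(-k)=\overline{a(k)}$ to rewrite the third moment condition as the additive law $a(k)a(l)=a(k+l)$, valid whenever $k,l,k+l\in\Bragg$, and the fourth moment condition as the compatibility $a(k)a(l)=a(k')a(l')$ whenever $k+l=k'+l'$ with all four terms in $\Bragg$. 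Splitting $\ZZ/M\ZZ\cong\prod_p\ZZ/p^{e_p}\ZZ$ by the Chinese remainder theorem and invoking the orbit structure of the first part reduces matters to a single prime power. For odd $p$ the top order class consists of units, suitable sums of units remain in $\Bragg$, and the additive law determines $a$ on that class from its value at one generator $g$; the vanishing relation $p^{e_p}\cdot g=0$, accessible through relators of length at most $4$, then forces $a(g)$ to be a $p^{e_p}$-th root of unity, producing the character.

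The main obstacle is the prime $p=2$. The units of $\ZZ/2^{e}\ZZ$ are odd, so a sum of two of them is never again of top order; the three-term additive law is therefore vacuous on the generators and $a$ cannot be propagated directly. One must instead route through the even elements of $\Bragg$, and the shortest relators expressing the required compatibilities between an odd generator and the $2$-primary part of the group turn out to have length $6$ rather than $4$. Making this routing precise — showing that every relator supported on the $2$-primary part is a product of relators of length at most $6$, and that no smaller bound suffices — is the delicate combinatorial core, and is exactly the point established in \cite{GM}. Reassembling the prime-power cases by the Chinese remainder theorem then yields the stated bounds and completes the construction of the extending character.
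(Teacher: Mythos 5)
Your proposal lines up with the paper's treatment almost exactly in substance, and in one place improves on it. Note first that the paper does not actually prove this proposition: the \qed{} attached to the statement marks it as a quoted ``key result'' of Gr\"unbaum and Moore \cite{GM}. The only part argued in the text is the first assertion, and only in the form $\alpha_s(\Bragg)=\Bragg$ for all $s\in(\ZZ/M\ZZ)^\times$, obtained by applying the Galois automorphism to the finite Fourier inversion \eqref{orthogonalSum} and using rationality of $\rho_a$ --- precisely your first paragraph. The paper then merely asserts that this unit-orbit invariance is ``an equivalent form'' of the stated $\gcd(j,\mathrm{ord}(Mk))=1$ condition; your CRT lifting of $j$ (coprime to $d=\mathrm{ord}(Mk)$) to a unit $j'$ modulo $M$ with $j'\equiv j\pmod d$ supplies exactly the equivalence the paper leaves implicit, and it is correct: at primes $p\mid d$ the congruence already forces $p\nmid j'$, at the remaining primes of $M$ you may prescribe a unit residue, $j'k=jk$ in $\frac{1}{M}\ZZ/\ZZ$, and injectivity of $\sigma_{j'}$ carries $c(k)\neq 0$ to $c(jk)\neq 0$. (The fact that same-order elements form a single unit orbit uses that $\frac{1}{M}\ZZ/\ZZ$ is cyclic, which holds here.)

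For the moment bounds, your translation via Lemma~\ref{charChar} into $\langle\cZ_6\rangle=\cZ$ for even $M$ (resp.\ $\langle\cZ_4\rangle=\cZ$ for odd $M$), together with the remark that characters of the subgroup $\ev$ extend to the finite group $\frac{1}{M}\ZZ/\ZZ$, is a faithful and correct reformulation, consistent with how the paper itself exploits the result in the corollary that follows. However, your intermediate sketch would not survive as a standalone proof, for two reasons you should be aware of: the Chinese remainder splitting does not cleanly reduce the relator problem to a single prime power, since $\Bragg$ need not factor as a product of Bragg sets in the prime components and relators can mix primes; and the claim that the vanishing relation $p^{e_p}\cdot g=0$ is ``accessible through relators of length at most $4$'' for odd $p$ is precisely the nontrivial combinatorial content of \cite{GM}, not a routine consequence of the additive law. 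Since you, like the paper, ultimately anchor this core to \cite{GM} (and the paper offers no proof at all of the numeric bounds $6$ and $4$), your proposal is on par with --- indeed slightly more explicit than --- the paper's treatment; just present the second half frankly as a cited result of \cite{GM} rather than as a proof outline, and drop the sharpness claim, which is not part of the statement.
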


 Using Proposition~\ref{charChar}, and Corollary~\ref{finite-coro} and the
 corresponding notation, we obtain
  \begin{coro} Under the conditions of Proposition~\ref{GMThm}, the
  set of distributions $\rho_a$ whose diffraction is $\difm$ of \eqref{finiteDiffraction}
  is classified by the first $6$ moments (resp. first $4$ moments)  if $M$  is even
  (respectively if $M$ is odd). \qed
 \end{coro}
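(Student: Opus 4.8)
The plan is to recognize that Proposition~\ref{GMThm} is nothing but the moment-theoretic criterion of Lemma~\ref{charChar} in disguise, and then to feed the resulting generation statement into the classification machinery already developed. Concretely, the second assertion of Proposition~\ref{GMThm} says that, under the stated rationality hypothesis, every mapping $a:\Bragg\longrightarrow U(1)$ satisfying the $m$-moment condition for $1\le m\le 6$ (when $M$ is even) respectively for $1\le m\le 4$ (when $M$ is odd) extends to a character on $\frac1M\ZZ/\ZZ=\Gdual=\ev_d$. This is precisely condition (ii) of Lemma~\ref{charChar} with $n=6$ (resp. $n=4$). Invoking the equivalence (i)$\Longleftrightarrow$(ii) of that lemma, I would conclude that $\langle\cZ_n\rangle=\cZ$ for the associated relator group, with $n=6$ if $M$ is even and $n=4$ if $M$ is odd.

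With this generation statement in hand the classification is immediate from the homometry theorem. First, applying Lemma~\ref{momentsCondition} for this $n$, two elementary phase forms restrict to the same phase form $a^*\in\widehat\cZ$ exactly when their first $n$ moments coincide. Next, by Theorem~\ref{homometrytheorem} the map $a^*\mapsto[(N_a,\TT,l_\TT,T)]$ is a bijection between $\widehat\cZ$ and the spatial isomorphism classes of processes with diffraction $\difm$; equivalently, by Corollary~\ref{finite-coro} applied with this $n$, two such processes are spatially isomorphic if and only if their phase forms agree in the first $n$ moments. Thus the processes with diffraction $\difm$ are classified by the first $6$ moments (resp. the first $4$ moments) according to the parity of $M$.

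The only remaining point is to pass from processes back to the concrete densities $\rho_a$, and this is where a small amount of care is needed. In the present finite setting $\Gdual$ is discrete, so every $u\in\TT=\widehat{\ev_d}$ is automatically continuous on $\ev$ and therefore, by Remark~\ref{continuousCase}, realizes the passage from $N_a$ to $N_b$ (for $b=ua$) as an honest translation on $\GG=\ZZ/M\ZZ$. Consequently $\rho_a$ and $\rho_b$ are translates of one another precisely when $a$ and $b$ restrict to the same phase form, i.e. precisely when their first $n$ moments coincide. This identifies the translation classes of densities $\rho_a$ with diffraction $\difm$ with the equivalence classes detected by the first $n$ moments, which is the assertion of the corollary. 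No step here is genuinely hard --- the substance lies entirely in Proposition~\ref{GMThm}, which we are assuming --- so the main thing to get right is the bookkeeping that matches the moment bounds $6$ and $4$ to the generation level $n$ of $\cZ$ and confirms that the phase-form ambiguity in this finite case is always witnessed by a translation.
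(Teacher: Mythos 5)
Your proposal is correct and follows essentially the same route as the paper, whose own proof is simply the citation of Lemma~\ref{charChar} (to convert the extension statement of Proposition~\ref{GMThm} into $\langle\cZ_n\rangle=\cZ$ with $n=6$ or $4$) together with Corollary~\ref{finite-coro}. Your additional final paragraph, observing via Remark~\ref{continuousCase} that in this finite setting the phase-form ambiguity among elementary phase forms is always realized by a translation of the densities $\rho_a$ on $\GG=\ZZ/M\ZZ$, is a careful spelling-out of a point the paper leaves implicit, not a different argument.
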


We do not know of any extensions of this result into higher dimensional periodic
situations, and indeed it seems very difficult to do.

\begin{example} Here we give an example based on an example given in \cite{GM}
(see also \cite{XR3}).
We begin with $\GG:= \ZZ/6\ZZ$ and the two weighted Dirac combs
\begin{eqnarray}\label{Z6 density}
 11 \delta_0 + 25 \delta_1 + 42\delta_2 +45 \delta_3  + 31 \delta_4 +  14\delta_5\\
10 \delta_0 + 17 \delta_1 + 35 \delta_2 + 46 \delta_3 + 39 \delta_4 + 21 \delta_5\nonumber
 \end{eqnarray}
 on $\GG$, which represent two density distributions  for which we would like to find all possible
 homometric density distributions. What is remarkable is that not only are these two density
 distributions homometric, they actually have all of their first $5$ moments in common.

 Let us begin with the first of these two densities. The Fourier transform of this density is a function on $\Gdual \simeq \frac{1}{6}\ZZ/\ZZ$. The latter is most conveniently thought of as the $6$th roots of unity.
 We write the Fourier transform in the form
 \begin{eqnarray}
(5 +  25 w + 42 w^2&+&45 w^3 + 31 w^4 + 14 w^5 + 6 w^6)\\
&=&
(w + 1)(w^2 + w + 1)(2 w^2 + 5)(3 w + 1) =:P(w) \nonumber
 \end{eqnarray}
 where $w$ varies over the $6$th roots of $1$,
 $U_6:= \{e^{2 \pi ij/6}: j=0, \dots, 5\}$. Note that $w^6= 1$, so the coefficient
 of $w^0$ is $11/6$ as it is supposed to be.  The diffraction at $k \in \Gdual$ is $\difm(k) = P(w^k)P(\overline{w}^k)$, which works out
 to give the following values for $\difm(k)^{1/2}:$ $  28$, $  \sqrt{247/3}$, $ 0$ ,$0$ ,$0$ ,$ \sqrt{247/3}$
 at $k=0,1/6,\dots, 5/6$.
 Thus in this example $\Bragg = \{0,1/6,5/6\}$. In fact Gr\"unbaum and Moore constructed
 the polynomial $P$ precisely to make the Bragg spectrum have extinctions at
 $2/6,3/6,4/6$. In the rest of this subsection we shall make the notation a little
 simpler by suppressing  the ubiquitous denominators
 $6$ which occur in the values of $k$. Alternatively we multiply everything by
 $6$ and work in $\ZZ/6\ZZ$ instead of $\frac{1}{6}\ZZ/\ZZ$.

  In any $m$-tuple ${\bf k} = (k_1,\dots, k_m)$ in $\Bragg^m$, we can, using the equivalence
  relation $\sim$, drop any occurrences of $0$ and drop any pair $1,5$ that occurs amongst
 the $k_j$. This leads us to conclude the only equivalence classes of $\bbS$ are of the form
 $0^\sim, (1,\dots, 1)^\sim, (5, \dots, 5)^\sim$, where there are any positive integer
 $m$ entries in the two latter types. Since $(1,\dots, 1)^\sim$ and $(5, \dots, 5)^\sim$
 with $m$ entries are inverses of each other in $\bbS$, we conclude that
$\bbS^\sim \simeq \ZZ$. Since the mapping $\phi: \bbS^\sim\longrightarrow \ev_d \simeq
U_6$, we have $\varphi:\cF(\Bragg) \longrightarrow U_6$ (Prop.~\ref{F(S) homomorphism}).

Using the identification of $\bbS^\sim$ and $\cF(\Bragg)$ we have $\cF(\Bragg) \simeq \ZZ$
and we can put together the exact sequence \eqref{exactSeq1}:

\[1 \longrightarrow 6\ZZ \simeq \cZ  \longrightarrow \cF(\Bragg) \simeq \ZZ
\longrightarrow U_6 \simeq \ev_d  = \Gdual \longrightarrow 1.\]

Dualization then gives the exact sequence \eqref{exactSeq2}
\[1 \longleftarrow  \widehat{\cZ} \simeq U(1)/U_6\simeq \frac{1}{6}\ZZ/\ZZ \longleftarrow \widehat{\cF(\Bragg) }\simeq U(1)
\longleftarrow  \ZZ/6\ZZ \simeq \TT \longleftarrow 1 .\]
Each element of $U(1)$ can be viewed as a character of $\frac{1}{6}\ZZ$,
and when restricted to $\ZZ$
characters $a(\cdot)$ and $e^{2\pi ij(\cdot)} a(\cdot)$, $j=0,\dots, 5/6$, are the same.

We now write down the densities
\[\rho_a(x)  =  \sum_{k\in \Gdual} a(k) \omega(k)^{1/2} \overline{k}(x) =
 28 + \sqrt{247/3} \, a(1)  e^{-2\pi i x/6}+ \sqrt{247/3} \,a(5) e^{2 \pi i x/6} \,.\]
Here $x\in \GG \simeq \ZZ/6\ZZ$. The character $a$ of
$\cF(\Bragg)$ is a character of $\ZZ$ and hence is an element of $U(1)$. Recall
that $\cF(\Bragg)$ is generated by the tuples $(1,1,\dots)$ and their inverses
$(5,5,\dots,5)$ where $1,5$ represent classes modulo $6$ in $\ZZ$. There are
no further relations. Thus $a(k)$ here really means $a((k)^\sim)$, that is, the value
of $a$ on the the equivalence class of $(k)$ under $\sim$. Thus
by $a(1)$ we mean $a((1)^\sim)$ and by $a(5)$ we mean
$a((5)^\sim) = \overline{a(1)}$. There is no restriction on what
$a(1)$ might be in $U(1)$, so simply writing it as $a$ we may write $\rho$ as
\[\rho_a(x)  =
 28 + \sqrt{247/3} \, a  e^{-2\pi i x/6}+ \sqrt{247/3} \,\overline a e^{2 \pi i x/6} \,\]
 where $a\in U(1)$ is arbitrary. Each value of $a$ gives a solution to the
 diffraction problem.

 The following figure illustrates the graphs of $\rho_a(x)$.
 The situation that we began with occurs when $a \sim 0.443099$.
 Another solution indicated in \cite{GM} is $10, 17, 35, 46, 39, 21$ which occurs
 at $a\sim 0.520310 $.

 \begin{figure}
\centering
\includegraphics[width=7cm]{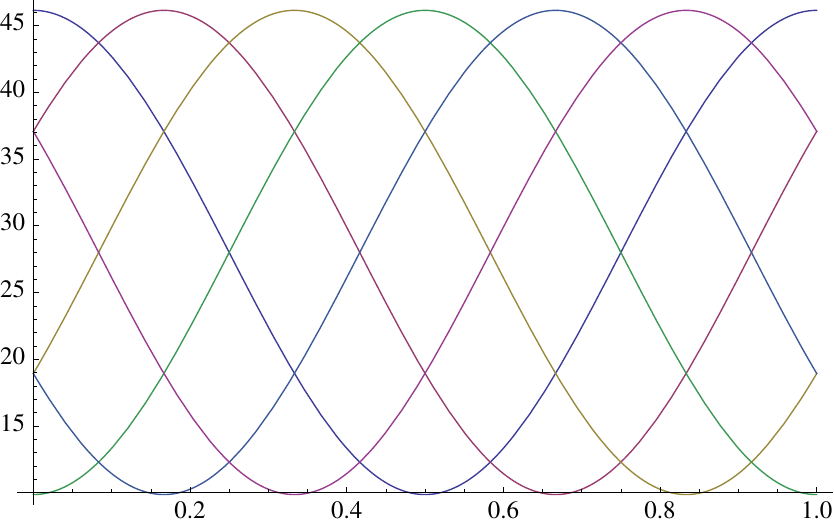}
\caption{The graphs of
 $\rho_a(x)$ as $a$ varies for each of the six values $0,1,2,3,4,5 \mod 6$ of $x$.
 Here $a$ varies over the unit circle $\{e^{2 \pi i t}: t\in [0,1)\}$, indicated here by the interval $[0,1)$
 and the six graphs give the corresponding values of the coefficients or weights of the deltas
 that make up the distribution on $\ZZ/6\ZZ$. The original weighting distribution
$11, 25 ,42, 45 ,31 ,14$ occurs when $a \sim 0.443099$. The second
density of \eqref{Z6 density} with coefficients $10, 17, 35, 46, 39, 21$ also occurs,
when $a \sim 0.520310$. }
\label{Z6Graph}
\end{figure}

Now we are going to show that in this example, $\langle \cZ_6 \rangle = \cZ$,
but $\langle \cZ_5 \rangle \subsetneq \cZ$. The argument is quite simple. First note
that $(1,1,1,1,1,1) \in \cZ_6$ but cannot be reduced, so its reduced length is $6$ (note
that denominators are still being suppressed).
So $\langle \cZ_5 \rangle \ne \cZ$. Now consider any  ${\bf k} = (k_1,\dots, k_n)\in Z_n$
with $n \ge 6$, which is of minimal length and can not be split into  a product of two (non trivial) elements from $\cZ$.  There can be no occurrences of $0$ since those can be discarded. Since $\{1,5\}$ and $\{2,4\}$ are annihilating pairs modulo $6$, we cannot
have both $1$ and $5$ in ${\bf k}$ and neither can we have both $2$ and $4$. Neither
can we have $2$ appearing three or more times since $(2,2,2) \in Z_3$ and similarly
for $4$. Likewise $3$ can appear at most once. Suppose $2$ or $3$ appears in ${\bf k}$.
The possibilities are $2,22, 3, 23, 223$. The remaining symbols are just $1$s or
$5$s. But $21111, 2211, 3111,255, 3555$ then show that ${\bf k}$ is not reduced. The
argument with $4$ is much the same. So we can assume that none of $0,2,3,4$ appear
and not both $1$ and $5$ appear. This leaves just $(1,\dots,1)$ and $(5,\dots,5)$
as the possibilities for reduced words, and in either case the number of entries must be
a multiple of $6$.

The consequence of this is that any phase form is uniquely determined by its
first through sixth moments. For if two have these moments in common then
taking corresponding elementary  phase forms $a,b$ we find that $ba^{-1} \in \cF(\Bragg)$
and satisfies the first $6$ moment conditions. Since $\langle \cZ_6 \rangle = \cZ$, $ba^{-1}$
lies in $\TT$ and so the corresponding phase forms were equal.

It is a surprising fact, pointed out in \cite{GM}, that the two densities of \eqref{Z6 density},
$11 \delta_0 + 25 \delta_1 + 42\delta_2 +45 \delta_3  + 31 \delta_4 +  14\delta_5$ and
$10 \delta_0 + 17 \delta_1 + 35 \delta_2 + 46 \delta_3 + 39 \delta_4 + 21 \delta_5$
indicated above, have all of their first through fifth moments in common, and it is
only at the sixth moments that they look different.

 \end{example}

 %%%%%%%%%%%%%%%%%%%%%%%%%%%%%%%%%%%%%%%%%%%%%
%%%%%% Alternative construction %%%%%%%%%%%%%
%%%%%%%%%%%%%%%%%%%%%%%%%%%%%%%%%%%%%%%%%%%%%
%%%%%%%%%%%%%%%%%%%%%%%%%%%%%%%%%%%%%%%%%%%%%

\section{An alternative construction of spatial processes from pure point diffraction}\label{altConstruction}

In \S\ref{CyclestoStochasticprocesses} we offered a direct method of constructing a spatial process from the original data consisting of a positive  backward  transformable point measure
$\difm$ on $\Ghat$ and a phase form $a^*$ (or equivalently an elementary
phase form $a$). This approach can in a certain sense be understood as an elaboration of  the theorem of Halmos - von Neumann which
tells us in advance that since the required process will have an associated
dynamical system with pure point spectrum, it can be modelled (measure theoretically) on
a compact Abelian group.

In \S\ref{pc:split} we have used the Gelfand method to produce pure point spatial processes in the splitting of spatial processes into their pure and continuous parts.

In this present section we sketch out how we can also use the Gelfand method to produce a pure point spatial process out of the given pure point measure $\difm$ and the elementary phase form $a$. The method ultimately reverses many of the steps that we saw earlier in which, starting
from a process we constructed the diffraction and the diffraction-to-dynamics map. For this reason we do not give proofs but rather provide an outline of intermediate steps.

\bigskip

We use the notation of \S\ref{Cycles}. For simplicity of presentation we shall assume
that $0\in \Bragg$.
Define $c :\cS \longrightarrow \CC$ by
\[ c(k_1, \dots, k_n) = a(k_1, \dots, k_n )\difm(k_1)^\frac{1}{2}\dots \difm(k_n)^\frac{1}{2} \, , \]
or more compactly,
$c({\bf k}) = a({\bf k})\difm({\bf k})^\frac{1}{2}$,
where for each square root we take the non-negative root.  We define
$c(\emptyset) = 1$. Note that $c({\bf k}({-\bf k})) = \omega({\bf k})$.

Let
$\CC[F_k : k \in \Bragg]$
be the free associative algebra in
the variables $F_k$. For elements ${\bf k} = (k_1, \dots , k_m) $
where the $k_j \in \Bragg$, we will write
$F_{\bf k} := F_{k_1} \dots F_{k_m}$.
We define an
algebra $\ev$-grading on $\CC[F_k : k \in \Bragg]$ by assigning degree $k$ to
each $F_k$, so $F_{\bk}$ gets degree $[\bk]$.

Let $I = I(a)$ denote the ideal of $\CC[F_k : k \in \Bragg]$ generated by
all the elements of the form
$F_{\bf k} - c({\bf k})$
for all ${\bf k} = (k_1, \dots k_m) \in Z$, i.e. for
all ${\bf k}$ with $[{\bf k}] = 0$ (this includes
$F_0 - \difm(0)^{1/2} 1,\quad F_k F_{-k} - \difm(k)$)
and let
\[ \alg = \alg(a) := \CC[F_k : k \in \Bragg]/I \,. \]
We let \[\alpha: \CC[F_k : k \in \Bragg] \longrightarrow \alg\]
be the natural homomorphism and set
\[ f_k := \alpha(F_k), \quad f_{\bk} := \alpha(F_{\bk}) = f_{k_1} \dots f_{k_m} \]
for $\bk = k_1 + \dots + k_m$.

Since $c(k_1, \dots, k_m)$ is independent of the order of the $k_j$,
we see that $\alg$ is commutative.
Furthermore, since the relations imposed on $\CC[f_k : k \in \Bragg]$ are homogeneous of degree $0$, the resulting algebra $\alg(a)$ is also graded by $\ev$. We denote its space of elements of degree $\kappa \in \ev$ by $\alg_\kappa$.

The space $\CC[f_k : k \in \Bragg]_\kappa$ of elements of degree $\kappa \in \ev$
is the linear span of all the elements $f_{\bk}$ with $[\bk] = \kappa$. If
$[\bk] = [\bl] = \kappa$ then $\bk(-\bl) \in Z$ and the relations defining $I$
lead to
\begin{equation} \label{invertibility}
f_{\bk} f_{-\bl} =  f_{k_1} \dots  f_{k_m} f_{-l_1} \dots  f_{-l_n}
=a(\bk (-\bl)) \difm(\bk)^{1/2} \difm(-\bl)^{1/2} \in \CC^\times 1  \,.
 \end{equation}
From this we see first that
$ \alg_\kappa  = \CC f_\kappa$
where $f_\kappa$ is any one of the elements $f_{\bk}$
with $[\bk] =\kappa$.

We shall always assume that we are working with a phase form $a$ for which $\alg(a)$ is not reduced to the trivial ring $\{0\}$.
Such cycles always exist, for example the trivial cycle, which is $1$ everywhere.

%\textcolor{red}{Are there elementary phase forms for which the algebra does
%reduce to $0$?}

Since the algebra is $\ev$-graded, non-zero elements of different degrees are linearly independent
and

\begin{equation*}
\alg = \alg(a) = \bigoplus_{\kappa \in \ev} \CC f_\kappa \, .
\end{equation*}
Define an conjugate linear form on $\CC[f_k : k \in \Bragg]$ by sesqui-linear extension (with conjugation on the second variable) of
\[\langle f_{\bf k} \mid f_{\bf l} \rangle =
\langle f_{k_1}  \dots  f_{k_m}  \mid f_{l_1} \dots  f_{l_n}\rangle
= \left \{ \begin{array}{ll}
c(({\bf k}) (-{\bf l}))  \; \mbox{if \; $[{\bf k}] = [{\bf l}]$} \, ,\\
0   \quad \mbox{otherwise} \, .
\end{array}
\right.
\]

It is a straightforward exercise to show the identity
\[ \langle f \mid g \rangle = \overline{\langle g \mid f \rangle} \]
for all $f,g \in \CC[f_k : k \in \Bragg]$. We shall write $||f||$ for $\langle f \mid f \rangle^{1/2}$.

\begin{prop} Let ${\bf k}, {\bf l}, {\bf m}$ be tuples of elements of
$\Bragg$.
\begin{itemize}
\item[{(i)}] $\langle f_{\bf k} \mid  f_{\bf k}\rangle = c({\bf k}(-{\bf k})) = \omega({\bf k}) >0$;
\item[{(ii)}] $\langle 1 \mid 1\rangle = 1$; $\langle f_\bl\,| \,1  \rangle = c(\bl)$
 \mbox{ when} $ [\bf l] = 0$;
\item[{(iii)}] $\langle f_{\bf k} f_{\bf m} \mid f_{\bf l} \rangle
= \langle f_{\bf k} \mid f_{-\bf m}f_{\bf l} \rangle$;
\item[{(iv)}]
$\langle f_{\bf k} - c({\bf k})1 \mid f_{\bf l} \rangle = 0 $ \; when $[\bk] =0$;
\item [{(v)}] The kernel of $\langle \cdot \mid \cdot \rangle $ is the ideal of relations
that we have imposed on $\CC[f_k : k \in \Bragg]$;
\item[{(vi)}] $\langle \cdot \mid \cdot \rangle $ induces an inner product (for which we use
the same notation) on $\alg$.
\end{itemize}
\end{prop}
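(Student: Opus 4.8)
The plan is to exploit the $\ev$-grading that the form respects, reducing all six assertions to statements within a single degree. First I would record the two elementary facts that drive everything: since $a$ is a character of $\cF(\Bragg)$ and $\difm$ is centrally symmetric and strictly positive on $\Bragg$, the function $c$ is multiplicative under concatenation, $c({\bf k}{\bf l}) = c({\bf k})c({\bf l})$, and satisfies $c(-{\bf k}) = \overline{c({\bf k})}$ together with $c({\bf k}(-{\bf k})) = \difm({\bf k}) > 0$. With these in hand, (i) and (ii) are immediate from the definition of the form, because $[{\bf k}] = [{\bf k}]$ and $[\emptyset] = 0$, so $\langle f_{\bf k} \mid f_{\bf k}\rangle = c({\bf k}(-{\bf k})) = \difm({\bf k})$ and $\langle f_{\bf l} \mid 1\rangle = c({\bf l})$ when $[{\bf l}] = 0$.

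For (iii) I would note $f_{\bf k} f_{\bf m} = f_{{\bf k}{\bf m}}$ and $f_{-{\bf m}} f_{\bf l} = f_{(-{\bf m}){\bf l}}$, so both sides vanish unless $[{\bf k}] + [{\bf m}] = [{\bf l}]$, and when this holds both equal $c({\bf k}{\bf m}(-{\bf l}))$ --- literally the same tuple, using $-((-{\bf m}){\bf l}) = {\bf m}(-{\bf l})$. Assertion (iv) splits into the cases $[{\bf l}] = 0$ and $[{\bf l}] \neq 0$: in the first, multiplicativity gives $c({\bf k}(-{\bf l})) = c({\bf k})c(-{\bf l}) = c({\bf k})\langle 1 \mid f_{\bf l}\rangle$, so the difference is zero; in the second both inner products vanish by the grading. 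The Hermitian symmetry $\langle f \mid g\rangle = \overline{\langle g \mid f\rangle}$ needed for (vi) is the stated straightforward computation, reducing to $c({\bf k}(-{\bf l})) = \overline{c({\bf l}(-{\bf k}))}$.

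The crux is (v), and the main obstacle is to match the analytically defined radical of the form with the algebraically defined ideal $I$; the point of the degree-wise approach is precisely to identify them directly, so that one never has to argue separately that the radical is an ideal. Writing $A_\kappa$ for the span of the monomials $F_{\bf k}$ with $[{\bf k}] = \kappa$, the form is block-diagonal, $\langle A_\kappa \mid A_{\kappa'}\rangle = 0$ for $\kappa \neq \kappa'$, so both the radical and $I$ split as direct sums over $\kappa$, and it suffices to work in a fixed $A_\kappa$. For $x = \sum_{\bf k} x_{\bf k} F_{\bf k} \in A_\kappa$ and any ${\bf l}$ with $[{\bf l}] = \kappa$, multiplicativity gives $\langle x \mid F_{\bf l}\rangle = \big(\sum_{\bf k} x_{\bf k} c({\bf k})\big)\overline{c({\bf l})}$; since $c({\bf l}) \neq 0$, the radical in degree $\kappa$ is exactly the hyperplane $\{\sum_{\bf k} x_{\bf k} c({\bf k}) = 0\}$. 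On the other hand, fixing a reference tuple ${\bf k}_0$ of degree $\kappa$ and using the invertibility relation \eqref{invertibility} (which gives $\alg_\kappa = \CC f_{{\bf k}_0}$ one-dimensional and $f_{-{\bf k}_0}$ invertible with $f_{{\bf k}_0}f_{-{\bf k}_0} = \difm({\bf k}_0)1$), one computes $f_{\bf k} = \big(c({\bf k})/c({\bf k}_0)\big) f_{{\bf k}_0}$ in $\alg$; hence the kernel of the quotient map restricted to $A_\kappa$ is the same hyperplane. Therefore $I \cap A_\kappa$ coincides with the radical in each degree, proving (v).

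Finally, (vi) follows formally: by (v) the form descends to a nondegenerate Hermitian form on $\alg = \bigoplus_\kappa \alg_\kappa$; the graded pieces remain mutually orthogonal and each $\alg_\kappa = \CC f_\kappa$ is one-dimensional with $\langle f_\kappa \mid f_\kappa\rangle = \difm({\bf k}) > 0$ by (i), where $[{\bf k}] = \kappa$. Hence a nonzero element $\sum_\kappa \lambda_\kappa f_\kappa$ has norm-square $\sum_\kappa |\lambda_\kappa|^2 \langle f_\kappa \mid f_\kappa\rangle > 0$, so the induced form is positive-definite, i.e. an inner product. The only place demanding real care is (v), and specifically the bookkeeping identifying the analytic radical with $I\cap A_\kappa$; everything else is a formal consequence of the multiplicativity of $c$ and the one-dimensionality of the graded components already recorded in \eqref{invertibility}.
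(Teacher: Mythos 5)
Your proof is correct, but note that the paper itself offers no proof to compare against: Section \ref{altConstruction} explicitly states that ``we do not give proofs but rather provide an outline of intermediate steps,'' and this proposition is among the unproved assertions there. Your degree-wise argument is the natural way to fill that gap, and it checks out: multiplicativity $c(\bk\bl)=c(\bk)c(\bl)$ and $c(-\bk)=\overline{c(\bk)}$ do drive (i)--(iv) and the Hermitian symmetry exactly as you say, and the block-diagonality of the form over the $\ev$-grading (together with the fact that $I$ is generated by the degree-zero homogeneous elements $F_{\bk}-c(\bk)$, hence is a graded ideal) legitimately reduces (v) to a single degree $\kappa$. The one step where you lean on something unproved is the identification $I\cap A_\kappa=\ker\alpha\cap A_\kappa=\{\sum x_{\bk}c(\bk)=0\}$: this needs $f_{\bk_0}\neq 0$ in $\alg$, which you inherit from the paper's standing assumption $\alg(a)\neq\{0\}$ combined with \eqref{invertibility}. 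You could make your argument self-contained by splitting (v) into two inclusions: $I\subseteq\mathrm{rad}\langle\cdot\mid\cdot\rangle$ is verified directly on spanning elements $F_{\mathbf u}(F_{\mathbf m}-c(\mathbf m))F_{\mathbf v}$ with $[\mathbf m]=0$, since the relevant functional gives $c(\mathbf{u}\mathbf{m}\mathbf{v})-c(\mathbf m)c(\mathbf{u}\mathbf{v})=0$ by multiplicativity; and conversely any $x$ in your hyperplane satisfies $x=\sum x_{\bk}\bigl(F_{\bk}-(c(\bk)/c(\bk_0))F_{\bk_0}\bigr)\in I$, since each bracketed element lies in $I$ by the computation you already did with $f_{-\bk_0}$. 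This version needs no nontriviality hypothesis (and in fact shows $\alg(a)\neq\{0\}$ automatically when $a$ is a genuine character, since the quotient by the radical is visibly nonzero). With that caveat addressed, (vi) follows formally as you state: the graded pieces of $\alg$ are mutually orthogonal one-dimensional lines with strictly positive self-pairing, so the descended form is positive definite.
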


Now $\alg$ is an inner product space. We wish next to consider $\alg$
acting on itself by multiplication and to make it into a normed algebra
by use of the operator norm. The details are as follows.
 A typical element of $\alg$ can be written as a finite sum
$x = \sum b_\bl f_\bl$  with $b_\bl \in \CC$, where there is only one
summand of any particular degree (i.e. $[\bl] \ne [\bl'] $ if
$\bl \ne \bl' $).
Then we have
\[ ||x||^2 =\sum |b_\bl|^2 ||f_\bl||^2 = \sum |b_\bl |^2 \omega(\bl) \,.\]
Let $\bk = (k_1, \dots, k_m) \in \Bragg^m$. Then $f_\bk x =  \sum b_\bl f_\bk f_\bl = \sum b_\bl f_{\bk\bl}$ with degrees
$[\bk] + [\bl]$ and
\begin{eqnarray*}
||f_\bk x||^2  &=& \langle \sum b_\bl f_{\bk\bl} \mid \sum b_\bl f_{\bk\bl}\rangle^{1/2}\\
&=& \sum |b_\bl|^2 ||f_{\bk \bl}||^2 =  \sum |b_\bl|^2 \omega(\bk \bl)
=\omega(\bk ) \sum |b_\bl|^2 \omega(\bl)|| =  ||f_\bk||^2||x||^2\,.
\end{eqnarray*}
Thus
\begin{equation} \label{norm}
||f_\bk x|| =  ||f_\bk|| \,||x|| \,
\end{equation}
which shows that for each tuple $\bk$, left multiplication by $f_\bk$ is a linear operator of norm $||f_\bk||$.

Now we see that the entire left representation
of $\alg$ on itself is a representation by bounded linear operators, and this provides
us with an operator norm $\nu$ on $\alg$ with $\nu(f_\bk) = ||f_\bk||$ and
$\nu(\sum b_\bl f_\bl) \le \sum |b_\bl| \nu(f_\bl)$ in general. In view of \eqref{norm}
we always have
\[ \nu(f_\bk) = \difm(\{\bk\})^{1/2} \;\mbox{and} \; ||x|| \le \nu(x)\;  \mbox{ for all } \; x \in \alg \,.\]
Being an operator norm, $\nu$ automatically satisfies
\[\nu(xy) \le \nu(x) \, \nu(y) \,.\]

\smallskip

We let $\cA$ denote the completion of $\alg$ with respect to $\nu$. This is a commutative Banach algebra. The norm will still be called $\nu$. We identify
$\alg$ with its image in $\cA$. Since $\nu(f) \ge ||f||$ for $f\in \alg$, the embedding
of $\alg$ in $\cA$ is faithful.

We define an $\GG$ action on $\alg(a)$ by
\[ T_t f_{k_1}  \dots  f_{k_m} = (k_1 + \dots + k_m)(t) f_{k_1}  \dots  f_{k_m} \, ,\]
or in abbreviated form
$T_t f_\bk = [\bk](t) f_\bk$,
for all $t\in \GG$. We note that $T_t$ acts as a unitary transformation relative
to the inner product and
we obtain a unitary representation $T$ of $\GG$ on $\alg$.
Each $T_t$ is an automorphism of $\cA$ as a Banach
algebra.

At this point we can apply Gelfand theory to produce the space $X = {\rm Spec}(\cA)$.
which is the set of algebra
homomorphisms  $\pi: \cA \longrightarrow \CC$.
Then $X$ is non-empty and compact in the Gelfand topology \cite{Lax}.
This is the weakest topology on $X$ making continuous the evaluation mappings
$ \eta_f: \pi \mapsto \pi(f)$
for each $f\in \cA$. This means that the open sets are generated
from the open sets
\[U(f,V) := \{ \pi \in X: \pi(f) \in V \}\, ,\]
where $f \in \cA$, $V$ open in $\CC$.
This is the way in which $\cA$ may be viewed as a space of continuous mappings on
$X$, namely $f(\pi) := \pi(f)$,  and this is the point of view that we take from now on.

One has for all $\pi \in X$, $f\in \cA$,
$ |\pi(f)| \le \nu(f)$
(see \cite{Lax}, Ch. 18, Prop.~1) from which for the sup norm we have
$||f||_\infty \le \nu(f)$.

For all  tuples $\bk$ from $\Bragg$ and for all $\pi \in X$,
\[ |f_\bk (\pi)| = ||f_\bk||   \,.\]

Since $T_t$ acts as an algebra automorphism of $\cA$, $T_t \pi =\pi \circ T_{-t}$
is another homomorphism of $\cA$ into $\CC$, and so is in $X$.
So $\GG$ acts on $X$. Since $T_t$ just permutes the defining open sets of the topology of $X$, this action is continuous.

\begin{lemma} \label{continuity}
For each $g\in \cA$ the mapping
\[ \GG \longrightarrow \CC, \quad \quad t \mapsto \nu(T_t g - g) \]
is uniformly continuous.
\end{lemma}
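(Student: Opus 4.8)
The plan is to reduce the asserted \emph{uniform} continuity to ordinary continuity at the identity $t=0$, and then to prove continuity at $0$ first on the dense subalgebra $\alg(a)$ and propagate it to all of $\cA$ by approximation. The whole argument rests on one structural fact: each $T_t$ is an isometry for the operator norm $\nu$. This is not quite immediate from $T_t$ being a Banach-algebra automorphism, but it follows from the two properties already recorded. Since $T_t$ is an algebra automorphism one checks on the generators $f_{\bk}$ that $L_{T_t x} = T_t\, L_x\, T_{-t}$, where $L_x$ denotes left multiplication by $x$; and since $T_t$ is unitary for the inner product $\langle\cdot\mid\cdot\rangle$, conjugation by $T_t$ preserves the operator norm, whence
\[ \nu(T_t x) = \nu(x) \qquad (x\in\cA,\ t\in\GG). \]

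Granting this, the reduction to $t=0$ is a short computation. For $t,s\in\GG$ write $T_t g - T_s g = T_s(T_{t-s} g - g)$, using $T_t = T_s T_{t-s}$. The reverse triangle inequality together with the isometry property then gives
\[ \bigl|\,\nu(T_t g - g) - \nu(T_s g - g)\,\bigr| \le \nu\bigl((T_t g - g)-(T_s g - g)\bigr) = \nu\bigl(T_s(T_{t-s}g-g)\bigr) = \nu(T_{t-s}g - g). \]
Hence, once I know that $\nu(T_u g - g)\to 0$ as $u\to 0$, the map $t\mapsto\nu(T_t g - g)$ is uniformly continuous: given $\varepsilon>0$, I pick a neighbourhood $V$ of $0$ with $\nu(T_u g - g)<\varepsilon$ for $u\in V$, and then $t-s\in V$ forces the displayed difference below $\varepsilon$.

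It remains to prove continuity at $0$. I would first treat $g\in\alg(a)$, which is a \emph{finite} sum $g=\sum_{\bl} b_{\bl} f_{\bl}$. By the definition of the action,
\[ T_u g - g = \sum_{\bl} b_{\bl}\bigl([\bl](u)-1\bigr) f_{\bl}, \qquad \nu(T_u g - g) \le \sum_{\bl} |b_{\bl}|\,\bigl|[\bl](u)-1\bigr|\,\nu(f_{\bl}). \]
Each $[\bl]\in\ev\subset\Gdual$ is a character of $\GG$, so $u\mapsto[\bl](u)$ is continuous with $[\bl](u)\to[\bl](0)=1$; the sum being finite, $\nu(T_u g - g)\to0$. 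For general $g\in\cA$ I would use that $\alg(a)$ is dense in $(\cA,\nu)$ together with a $3\varepsilon$-argument: choose $h\in\alg(a)$ with $\nu(g-h)<\varepsilon/3$ and estimate
\[ \nu(T_u g - g) \le \nu\bigl(T_u(g-h)\bigr) + \nu(T_u h - h) + \nu(h-g) = \nu(g-h) + \nu(T_u h - h) + \nu(h-g), \]
the isometry property being used once more. The outer terms are each $<\varepsilon/3$, and the middle term is $<\varepsilon/3$ for $u$ in a suitable neighbourhood of $0$ by the $\alg(a)$-case, giving $\nu(T_u g-g)<\varepsilon$.

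The argument is essentially routine once the identity $\nu\circ T_t=\nu$ is in hand; this verification, rather than any of the limiting steps, is the one point that deserves care, since ``automorphism of a Banach algebra'' does not by itself force $\|T_t\|=1$. Everything else follows the standard pattern for strong continuity of an isometric group representation: establish it on a dense set where the action is given by an explicit, manifestly continuous formula, and then extend by uniform boundedness.
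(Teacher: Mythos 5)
The paper gives no proof of this lemma to compare against: Section~\ref{altConstruction} explicitly states that proofs there are omitted and only intermediate steps are outlined, so your argument fills a genuine gap, and it is correct and surely the intended one. You rightly isolate the only non-routine point, namely $\nu\circ T_t=\nu$: since $T_t$ is multiplicative one has $L_{T_t x}=T_t\,L_x\,T_{-t}$, and $T_{\pm t}$ are isometries of $(\alg,\|\cdot\|)$ (they act by unimodular scalars on the mutually orthogonal graded pieces $\CC f_\kappa$), so conjugation preserves the operator norm; after that, the reduction to continuity at $0$ via $T_tg-T_sg=T_s(T_{t-s}g-g)$ and the reverse triangle inequality, the explicit estimate on the finite sums in $\alg(a)$ using continuity of the characters $[\bl]\in\Gdual$, and the $3\varepsilon$ passage to all of $\cA$ are standard and correctly executed. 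One cosmetic remark: your closing phrase ``extend by uniform boundedness'' is a misnomer --- the extension step is an equicontinuity argument for a family of isometries on a dense subalgebra, not an appeal to the uniform boundedness principle --- but the argument you actually run is the right one.
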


\begin{prop} \label{bicontinuity}
The mapping
\[ \GG \times X  \longrightarrow X \]
defined by the action of $\GG$ on $X$ is continuous.
$(X,\GG)$ is a topological dynamical system.
\end{prop}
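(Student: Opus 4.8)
The plan is to reduce joint continuity of the action to a purely scalar statement and then split the estimate at a base point, controlling the ``$t$-direction'' by the operator norm $\nu$ and the ``$\pi$-direction'' by the very definition of the Gelfand topology. First I would recall that the Gelfand topology on $X$ is the initial topology induced by the evaluation maps $\eta_f:\pi\mapsto\pi(f)$, $f\in\cA$; by the universal property of initial topologies a map into $X$ is continuous precisely when its composition with each $\eta_f$ is continuous. Hence it suffices to show that for every fixed $f\in\cA$ the scalar function
\[ \GG\times X\longrightarrow\CC,\qquad (t,\pi)\longmapsto (T_t\pi)(f)=\pi(T_{-t}f) \]
is continuous. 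Fixing $(t_0,\pi_0)$, I would estimate
\[ |\pi(T_{-t}f)-\pi_0(T_{-t_0}f)|\le |\pi(T_{-t}f-T_{-t_0}f)| + |\pi(T_{-t_0}f)-\pi_0(T_{-t_0}f)|. \]

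For the second summand, $T_{-t_0}f$ is a fixed element of $\cA$, so it equals $|\eta_{T_{-t_0}f}(\pi)-\eta_{T_{-t_0}f}(\pi_0)|$, which tends to $0$ as $\pi\to\pi_0$ by the definition of the Gelfand topology. For the first summand I would invoke the uniform bound $|\pi(g)|\le\nu(g)$, valid for \emph{all} $\pi\in X$, to get
\[ |\pi(T_{-t}f-T_{-t_0}f)|\le \nu(T_{-t}f-T_{-t_0}f), \]
an estimate that is independent of $\pi$. The decisive point is that each $T_s$ is a $\nu$-isometry: since $T_s$ is an algebra automorphism and is unitary for the inner product, the left-multiplication operators obey $L_{T_s x}=T_s L_x T_{-s}$, whence $\nu(T_s x)=\|L_{T_s x}\|=\|L_x\|=\nu(x)$, conjugation by a unitary preserving the operator norm. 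Writing $T_{-t}f-T_{-t_0}f=T_{-t_0}(T_{t_0-t}f-f)$ and applying this isometry reduces the bound to $\nu(T_{t_0-t}f-f)$, which tends to $0$ as $t\to t_0$ by Lemma~\ref{continuity}, its value at $s=0$ being $\nu(0)=0$.

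Combining the two parts, given $\varepsilon>0$ I would first choose a neighbourhood of $t_0$ making $\nu(T_{t_0-t}f-f)<\varepsilon/2$, then a neighbourhood of $\pi_0$ making the second summand $<\varepsilon/2$; on the product neighbourhood the total is $<\varepsilon$. This yields continuity of $(t,\pi)\mapsto\pi(T_{-t}f)$ for each $f$, hence of the action into $X$. Finally, since $T_0=\mathrm{id}$ and $T_{t+s}=T_tT_s$ exhibit a genuine group action and $X$ is the compact Hausdorff spectrum of the unital commutative Banach algebra $\cA$, the pair $(X,\GG)$ is a topological dynamical system. I expect the main obstacle to be exactly the $\nu$-isometry of $T_s$ together with Lemma~\ref{continuity}: it is what makes the $t$-estimate uniform in $\pi$, and thereby upgrades mere separate continuity in each variable to the required joint continuity.
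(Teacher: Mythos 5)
Your proof is correct. Note that the paper itself gives no proof of this proposition: Section~\ref{altConstruction} explicitly states that proofs are omitted and only intermediate steps are outlined, so there is nothing in the text for your argument to diverge from. What you have written is precisely the argument the paper's setup is designed to support: the reduction via the initial (Gelfand) topology, the split at $(t_0,\pi_0)$, the uniform spectral bound $|\pi(g)|\le\nu(g)$ quoted from Lax to make the $t$-estimate independent of $\pi$, and Lemma~\ref{continuity} (with $\nu(T_0 g - g)=0$) to kill that term as $t\to t_0$ --- exactly the combination that upgrades the separate continuity the paper already observes (``$T_t$ just permutes the defining open sets'') to joint continuity. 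Your verification that each $T_s$ is a $\nu$-isometry via $L_{T_s x}=T_s L_x T_{-s}$ is a nice touch the paper leaves unstated; the only step you leave implicit is that this isometry, established on $\alg$, extends to the completion $\cA$ by density, which is immediate since $\nu$ on $\cA$ is defined by continuous extension from $\alg$. (Strictly speaking one could even bypass the isometry by applying Lemma~\ref{continuity} directly to $t\mapsto \nu(T_t f - f)$ and using uniform continuity, but your route is equally clean.)
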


Since $\GG$ acts on $X$, it acts naturally on $C(X)$ too. $\alg$ already has an action of  $\GG$ and acquires another one when treated as a subalgbra of $C(X)$; however, not surprisingly, the two actions of $\GG$ on $\alg$ the same.

  \begin{prop} $\alg$ is dense in the space $C(X)$ of all continuous functions under the sup-norm.
\end{prop}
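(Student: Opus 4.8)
The plan is to obtain the statement from the complex Stone--Weierstrass theorem applied to the image of $\alg$ in $C(X)$ under the Gelfand transform $f \mapsto (\pi \mapsto \pi(f))$. The space $X = {\rm Spec}(\cA)$ is compact by the discussion preceding the proposition, and it is Hausdorff since two distinct characters of $\cA$ are, by definition, separated by the evaluation $\eta_f$ of some $f\in\cA$, and each $\eta_f$ is continuous in the Gelfand topology. Because the Gelfand transform is an algebra homomorphism, the image of $\alg$ is a subalgebra of $C(X)$, and it contains the constant functions since $1\in\alg$. It therefore remains to verify the two substantive hypotheses of Stone--Weierstrass: that this subalgebra separates the points of $X$, and that it is closed under complex conjugation.

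Point separation is inherited from $\cA$. If $\pi_1\neq\pi_2$ in $X$, choose $g\in\cA$ with $\delta := |\pi_1(g)-\pi_2(g)|>0$. Since $\alg$ is dense in $\cA$ with respect to $\nu$, pick $f\in\alg$ with $\nu(f-g)<\delta/2$. Using the bound $|\pi(h)|\le\nu(h)$ valid for all $h\in\cA$ and all $\pi\in X$, we get $|\pi_i(f)-\pi_i(g)| = |\pi_i(f-g)| \le \nu(f-g) < \delta/2$ for $i=1,2$, whence $|\pi_1(f)-\pi_2(f)| \ge \delta - \delta/2 - \delta/2 > 0$. Thus $\alg$ already separates the points of $X$.

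The crucial step, and the only real obstacle, is closure under conjugation, which rests on the fact (established above) that $|f_k(\pi)| = \|f_k\|$ is independent of $\pi$. Fix a generator $k\in\Bragg$. For every $\pi\in X$ the defining relation $f_k f_{-k} = \difm(k)\,1$ in $\alg$ gives $f_k(\pi)\,f_{-k}(\pi) = \pi(f_k f_{-k}) = \difm(k)$, while $|f_k(\pi)|^2 = \|f_k\|^2 = \difm(k) > 0$. In particular $f_k(\pi)\neq 0$, and combining the two identities yields $\overline{f_k(\pi)} = \difm(k)/f_k(\pi) = f_{-k}(\pi)$. Hence, as functions on $X$, $\overline{f_k} = f_{-k}\in\alg$. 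Since complex conjugation of functions is multiplicative and conjugate-linear, it follows that $\overline{f_{\bf l}} = \overline{f_{l_1}\cdots f_{l_n}} = f_{-l_1}\cdots f_{-l_n} = f_{-\bf l}\in\alg$, and therefore $\overline{\sum_{\bf l} b_{\bf l} f_{\bf l}} = \sum_{\bf l}\overline{b_{\bf l}}\,f_{-\bf l}\in\alg$ for any finite sum. Thus $\alg$ is stable under conjugation. All hypotheses of the complex Stone--Weierstrass theorem are met, so $\alg$ is dense in $C(X)$ in the sup-norm, as claimed.
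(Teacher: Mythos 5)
Your proof is correct and follows what is evidently the intended route: the paper states this proposition without proof (Section~\ref{altConstruction} is explicitly an outline), but the facts it records just beforehand --- the bound $|\pi(f)|\le\nu(f)$ for characters, the identity $|f_{\bf k}(\pi)|=\|f_{\bf k}\|$, and the conjugation relation $\overline{f_{\bf k}}=f_{-{\bf k}}$ of Proposition~\ref{cc} --- are precisely the ingredients of the Stone--Weierstrass argument you assemble. Your two substantive verifications, point separation (via density of $\alg$ in $\cA$ and continuity of characters) and stability under conjugation (via $f_k f_{-k}=\difm(k)\,1$ combined with $|f_k(\pi)|^2=\difm(k)$, the latter being essential since invertibility alone would not identify $f_{-k}(\pi)$ with $\overline{f_k(\pi)}$), are both sound.
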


\medskip
The next objective is to create an ergodic invariant measure for $(X,\GG)$.
We define a linear functional $\mu = \mu_{a}$ on $\alg$ by linear extension of
\begin{equation*}
\mu( f_\bk) :=
\begin{cases} c(\bk), &\text{if $[\bk] = 0$}\,;\\
0 &\text{otherwise}
\end{cases}\end{equation*}
for all tuples $\bk$ from $\ev$. Note that this is well-defined -- the relations amongst
the $f_\bk$ for equal values of $[\bk]$ are consistent with the definition --
and one can see immediately that $\mu$ is $\GG$-invariant since it picks up only the
$0$-eigenspace of $\alg$.

\smallskip

We need to extend this to a linear functional defined on all of $C(X)$.
Let $\{A_n\}_{n=1}^\infty$ be a van Hove sequence for $\GG$. For each
$f \in C(X)$ define a new function $M(f)$ on $X$ by
\[ M(f)(\pi)  := \lim_{n\to\infty} \frac{1}{\vol A_n} \int_{A_n} (T_t f)(\pi) \, dt \,, \]
where integration is with some prefixed Haar measure on $\GG$ and
$\vol(A_n)$ is the value of this measure on $A_n$.

\begin{lemma} \label{mean1}
For all $f \in \alg$, $M(f)$ is a constant function. Moreover,
\[M(f_\bk)  = \begin{cases}
c({\bk}),  &\text{if \; $[\bk] =  0$} \,;\\
0 \ &\text{otherwise} \,.
\end{cases}\]
\end{lemma}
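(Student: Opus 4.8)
The plan is to reduce everything to the computation of the mean of a single character over the van Hove sequence. Since $M$ is linear and $\alg = \bigoplus_{\kappa \in \ev} \CC f_\kappa$ is $\ev$-graded, it suffices to evaluate $M(f_\bk)$ on each homogeneous generator $f_\bk$. First I would record that, since the intrinsic $\GG$-action on $\alg$ coincides with the one inherited from $C(X)$ (as noted just above), the eigenfunction property $T_t f_\bk = [\bk](t) f_\bk$ gives $(T_t f_\bk)(\pi) = ([\bk], t)\, f_\bk(\pi)$ for every $\pi \in X$ and $t \in \GG$. Substituting this into the defining formula for $M$ and pulling the factor $f_\bk(\pi)$, which is independent of $t$, out of the integral yields
\[
M(f_\bk)(\pi) = f_\bk(\pi) \cdot \lim_{n\to\infty} \frac{1}{\vol A_n}\int_{A_n} ([\bk], t)\, dt \, .
\]

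The crux is thus the value of the mean of the character $[\bk] \in \ev$. Here I would invoke the material of Section~\ref{Almost}: the function $t \mapsto ([\bk], t)$ is positive definite, hence weakly almost periodic, so its mean over a van Hove sequence exists and is independent of the chosen sequence. When $[\bk] = 0$ the integrand is identically $1$ and the mean equals $1$. When $[\bk] \neq 0$ one may choose $s \in \GG$ with $([\bk], s) \neq 1$; translation invariance of the mean of a weakly almost periodic function then gives $\mathrm{mean} = ([\bk],s)\cdot \mathrm{mean}$, which forces the mean to be $0$. In short, the inner limit equals $1$ if $[\bk]=0$ and $0$ otherwise.

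It remains to combine these. If $[\bk] \neq 0$, then $M(f_\bk)(\pi) = 0$ for all $\pi$, so $M(f_\bk)$ is the constant function $0$. If $[\bk] = 0$, then $\bk \in Z$ and the relation $F_\bk - c(\bk) \in I$ defining $\alg$ gives $f_\bk = c(\bk)\, 1$; since $\pi(1)=1$ for every $\pi \in X$, we get $f_\bk(\pi) = c(\bk)$, while the inner limit is $1$, whence $M(f_\bk)(\pi) = c(\bk)$ is constant. (Equivalently, when $[\bk]=0$ the element $f_\bk$ is already the constant $c(\bk)$, is fixed by every $T_t$, and its mean is trivially $c(\bk)$.) This proves the displayed formula for $M(f_\bk)$, and constancy of $M(f)$ for general $f \in \alg$ follows by linearity.

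The only genuinely nontrivial ingredient is the vanishing of the mean of a nontrivial character, which is precisely the null weak almost periodicity supplied by the earlier sections; everything else is bookkeeping with the grading and the defining relations of $\alg$. I do not expect a real obstacle, though one should be mildly careful that the limit in the definition of $M$ is taken pointwise in $\pi$ and that the factor $f_\bk(\pi)$ is genuinely constant in $t$, which is exactly what the eigenfunction property of $f_\bk$ under $T$ guarantees.
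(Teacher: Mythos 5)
Your proof is correct, and since the paper explicitly gives no proofs in \S\ref{altConstruction} (it is declared an outline of intermediate steps), yours supplies exactly the intended argument. The three ingredients — the eigenfunction identity $(T_t f_\bk)(\pi) = ([\bk],t)\, f_\bk(\pi)$ licensed by the remark that the two $\GG$-actions on $\alg$ agree, the vanishing of the van Hove mean of a nontrivial character via the weak almost periodicity facts of \S\ref{Almost} (translation invariance of the mean forcing it to be $0$), and the defining relation $f_\bk = c(\bk)\,1$ when $[\bk]=0$ — are precisely what the surrounding text anticipates, and the reduction to homogeneous generators by linearity and the $\ev$-grading is sound.
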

This shows that $\mu$ is the same as the mean $M$ on $\alg$. But, as we now see,
 $M$ is defined on all of $C(X)$ and is a bounded linear operator on $C(X)$:

\begin{lemma} \label{mean2}
For all $g \in C(X)$, $||M(g)||_\infty \le ||g||_\infty$. For all $g \in C(X)$,
$M(g)$ is a constant function.
\end{lemma}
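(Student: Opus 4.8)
The plan is to extend the mean $M$ from $\alg$, where it is already understood through Lemma~\ref{mean1}, to all of $C(X)$ by a density-and-contraction argument, and then to read off both assertions by passing to the limit. For $g \in C(X)$ and $n \in \NN$ write $M_n(g)(\pi) := \frac{1}{\vol A_n}\int_{A_n}(T_t g)(\pi)\,\dd t$, so that $M(g) = \lim_n M_n(g)$ whenever this limit exists. Since the action of $\GG$ on $X$ is by homeomorphisms (Proposition~\ref{bicontinuity}), each $T_t$ preserves the sup-norm on $C(X)$, whence $|M_n(g)(\pi)| \le \frac{1}{\vol A_n}\int_{A_n}\|g\|_\infty\,\dd t = \|g\|_\infty$. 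Thus every $M_n$ is a linear contraction of $C(X)$ for the sup-norm, and moreover each $M_n(g)$ is itself continuous, because $t \mapsto T_t g$ is a continuous $C(X)$-valued map by the joint continuity of the action.

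First I would show that $M_n(f) \to M(f)$ \emph{uniformly} for $f \in \alg$, which is the essential input. Writing $f = \sum_\bl b_\bl f_\bl$ as a finite sum with one summand per degree and using $T_t f_\bl = [\bl](t) f_\bl$, one gets $M_n(f_\bl) = \bigl(\frac{1}{\vol A_n}\int_{A_n}[\bl](t)\,\dd t\bigr) f_\bl$. For $[\bl]=0$ the character is trivial and $M_n(f_\bl) = f_\bl = c(\bl)$ is already the constant predicted by Lemma~\ref{mean1}; for $[\bl]\ne 0$ the coefficient $\frac{1}{\vol A_n}\int_{A_n}[\bl](t)\,\dd t$ tends to $0$, since a nontrivial character has mean zero along any van Hove (F\o lner) sequence, while $|f_\bl(\pi)| = \|f_\bl\| = \omega(\bl)^{1/2}$ is uniformly bounded in $\pi$. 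Summing the finitely many terms yields $\|M_n(f) - M(f)\|_\infty \to 0$.

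The main point is then the existence of $M(g)$ for arbitrary $g \in C(X)$; this is the step I expect to carry the weight, with everything afterwards being formal. Given $\varepsilon > 0$, I would use the density of $\alg$ in $C(X)$ to pick $f \in \alg$ with $\|g-f\|_\infty < \varepsilon$, and estimate $\|M_n(g) - M_m(g)\|_\infty \le \|M_n(g-f)\|_\infty + \|M_n(f)-M_m(f)\|_\infty + \|M_m(f-g)\|_\infty \le 2\varepsilon + \|M_n(f)-M_m(f)\|_\infty$ using the contraction property. Since $M_n(f)$ converges uniformly by the previous paragraph, the middle term is $< \varepsilon$ for $n,m$ large, so $\{M_n(g)\}$ is uniformly Cauchy and converges uniformly to a function $M(g)$, which is continuous as a uniform limit of the continuous $M_n(g)$.

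Finally I would obtain the two assertions by passing to the limit. The contraction bound $\|M_n(g)\|_\infty \le \|g\|_\infty$ survives the uniform limit, giving $\|M(g)\|_\infty \le \|g\|_\infty$. For constancy, with the same $f$ as above one has $\|M(g)-M(f)\|_\infty = \lim_n \|M_n(g-f)\|_\infty \le \varepsilon$; since $M(f)$ is a constant function by Lemma~\ref{mean1}, $M(g)$ lies within $\varepsilon$ of a constant for every $\varepsilon > 0$, and as the constant functions form a closed subspace of $C(X)$ for the sup-norm, $M(g)$ is constant.
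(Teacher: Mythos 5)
Your proof is correct, and it follows exactly the route the paper intends: the paper gives no written proof of this lemma (Section~\ref{altConstruction} explicitly only outlines the intermediate steps), but your argument --- contraction of the averaging operators $M_n$, uniform convergence on the dense subalgebra $\alg$ via the standard fact that a nontrivial character has mean zero along any van Hove (F\o lner) sequence together with $|f_{\bl}(\pi)|=\|f_{\bl}\|$, and a $3\varepsilon$ Cauchy argument --- is precisely what the surrounding results (Lemma~\ref{mean1}, the denseness of $\alg$ in $C(X)$, and Proposition~\ref{bicontinuity}) are set up to deliver. The only detail worth making explicit is the mean-zero claim itself, which you assert but can be proved in one line: for $\chi\ne 0$ pick $s$ with $\chi(s)\ne 1$ and compare the averages over $A_n$ and $A_n+s$, whose difference vanishes by the van Hove property.
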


In this way, $M$ produces a measure on $X$, which coincides
on $\alg$ with $\mu$ defined above. We also call this measure $\mu$ and note
the fact that we have established, namely for all $g\in C(X)$, and for all $\pi \in X$,

\[ \mu(g) =\lim_{n\to\infty}  \frac{1}{\vol A_n} \int_{A_n} (T_t g)(\pi) \, dt \, , \]
which is a form of ergodic theorem.

\begin{prop} \label{cc}
For each $\kappa \in \ev$, the space of $\kappa$-eigenfuctions
in $C(X)$ (for the action of $T$ on $X$) is one dimensional and it is spanned by
any of the functions $f_\bk$ for which $[\bk] = \kappa$. In particular the $0$-eigenfunctions are constant functions. The complex conjugate of $f_\bk$
as a function on $X$ is $f_{-\bk} = c(\bk(-\bk))f_\bk^{-1}$.
\end{prop}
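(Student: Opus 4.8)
The plan is to deduce all three assertions from two facts already established: the $\ev$-grading $\alg = \bigoplus_{\kappa\in\ev}\CC f_\kappa$ together with the invertibility relation \eqref{invertibility}, and the pointwise ergodic theorem recorded after Lemma~\ref{mean2}. First I would exhibit each $f_\bk$ as a $\kappa$-eigenfunction: since the action of $\GG$ on $\alg$ viewed inside $C(X)$ agrees with the algebraic action (as noted just before the proposition), any $\bk$ with $[\bk]=\kappa$ satisfies $T_t f_\bk = [\bk](t) f_\bk = (\kappa,t) f_\bk$ for all $t\in\GG$. Moreover $\|f_\bk\|^2 = \difm(\bk) > 0$, so $f_\bk \neq 0$, and by \eqref{invertibility} any two tuples $\bk,\bk'$ with $[\bk]=[\bk']=\kappa$ both lie in the one-dimensional space $\alg_\kappa$, hence differ by a scalar in $\CC^\times$. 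Thus every admissible $f_\bk$ spans the same candidate eigenline.

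The heart of the argument is to reduce one-dimensionality to the constancy of $T$-invariant functions. The key observation is that $|f_\bk(\pi)| = \|f_\bk\| = \difm(\bk)^{1/2} > 0$ for every $\pi\in X$, so $f_\bk$ is nowhere vanishing on the compact space $X$ and therefore invertible in $C(X)$, with $f_\bk^{-1} = \difm(\bk)^{-1} f_{-\bk}$. Given any $\kappa$-eigenfunction $g\in C(X)$, i.e. $T_t g = (\kappa,t) g$, the function $h := g\, f_\bk^{-1}$ satisfies $T_t h = (\kappa,t)\,\overline{(\kappa,t)}\, h = h$ because $(\kappa,t)\in U(1)$; hence $h$ is $T$-invariant. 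Granting that invariant continuous functions are constant, $h$ is a scalar and $g\in\CC f_\bk$, so the $\kappa$-eigenspace is exactly $\CC f_\bk$.

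To prove invariant functions are constant I would invoke the ergodic theorem following Lemma~\ref{mean2}, namely $\mu(g) = \lim_{n\to\infty}\frac{1}{\vol A_n}\int_{A_n}(T_t g)(\pi)\,dt$ for every $g\in C(X)$ and every $\pi\in X$. When $h$ is $T$-invariant the integrand is the constant $h(\pi)$, so the right-hand side equals $h(\pi)$; hence $h(\pi) = \mu(h)$ for all $\pi$, so $h$ is constant. The special case $\kappa=0$ then records that the $0$-eigenfunctions are the constant functions. For the conjugation formula, the relation $f_\bk f_{-\bk} = c(\bk(-\bk)) = \difm(\bk)$, a positive real, is immediate from \eqref{invertibility} and yields $f_{-\bk} = c(\bk(-\bk)) f_\bk^{-1}$ at once; evaluating at $\pi\in X$ and combining $f_\bk(\pi) f_{-\bk}(\pi) = \difm(\bk)$ with $f_\bk(\pi)\overline{f_\bk(\pi)} = |f_\bk(\pi)|^2 = \difm(\bk)$ and $f_\bk(\pi)\neq 0$ gives $\overline{f_\bk(\pi)} = f_{-\bk}(\pi)$, i.e. $\overline{f_\bk} = f_{-\bk}$ as functions on $X$.

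The step needing the most care is the passage from $\alg$ to all of $C(X)$: the eigenfunction $g$ is an arbitrary continuous function rather than an algebra element, so it is essential both that $f_\bk$ be invertible \emph{within} $C(X)$ and that the ergodic theorem of Lemma~\ref{mean2} apply to every $g\in C(X)$, not merely on the dense subalgebra $\alg$. Once these two points are in place the remainder is routine bookkeeping with the grading and the constant-modulus property $|f_\bk(\pi)|=\difm(\bk)^{1/2}$.
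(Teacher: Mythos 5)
Your proof is correct, and it assembles exactly the intended argument: the paper explicitly gives no proofs in \S\ref{altConstruction}, and your route through the grading $\alg_\kappa=\CC f_\kappa$ with \eqref{invertibility}, the constant-modulus fact $|f_\bk(\pi)|=\|f_\bk\|=\difm(\bk)^{1/2}>0$ (giving invertibility of $f_\bk$ in $C(X)$), and the pointwise mean/ergodic statement following Lemma~\ref{mean2} (forcing $T$-invariant continuous functions to be constant) is precisely the chain of intermediate steps the paper sets up for this purpose. In particular your reduction $g\mapsto h=g\,f_\bk^{-1}$ and the derivation of $\overline{f_\bk}=f_{-\bk}=c(\bk(-\bk))f_\bk^{-1}$ from $f_\bk f_{-\bk}=\difm(\bk)$ are sound, so nothing is missing.
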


\begin{prop} \label{mu=measure}
$\mu$ is $\GG$-invariant probability measure on $X$ and
$(X,\GG,\mu)$ is a pure point ergodic dynamical system.
\end{prop}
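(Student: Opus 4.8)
The plan is to assemble the pieces already in place: Lemmas~\ref{mean1} and \ref{mean2} show that the mean $M$ is defined on all of $C(X)$, takes constant values, and agrees with $\mu$ on the sup-norm-dense subalgebra $\alg$, while Proposition~\ref{cc} describes the eigenfunctions. Four things then need to be checked: that $\mu$ is positive with total mass one, that it is $\GG$-invariant, that $T$ acts on $L^2(X,\mu)$ with pure point spectrum, and that the system is ergodic.

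First, for the probability-measure claim I would use the ergodic-average representation
\[ \mu(g) = \lim_{n\to\infty} \frac{1}{\vol A_n}\int_{A_n} (T_t g)(\pi)\,dt \]
valid for every $g\in C(X)$ and every $\pi\in X$. If $g\ge 0$ then each translate $T_t g$ is again nonnegative, since the $\GG$-action merely permutes the points of $X$; hence the averages, and so their limit, are nonnegative, and $\mu$ is a positive linear functional. As $\mu(1)=c(\emptyset)=1$, the Riesz representation theorem yields a probability (Radon) measure on the compact Hausdorff space $X$, still denoted $\mu$. Invariance is immediate from the same formula: $\mu(T_s g)$ only replaces the averaging region $A_n$ by $A_n-s$, and the van Hove (F\o lner) property of $\{A_n\}$ renders this change irrelevant in the limit; alternatively one combines the invariance of $\mu$ on $\alg$ with the density of $\alg$ in $C(X)$ and the bound $\|M(g)\|_\infty\le\|g\|_\infty$ from Lemma~\ref{mean2}.

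Next I would establish pure point spectrum. Because $\mu$ is Radon on a compact space, $C(X)$ is dense in $L^2(X,\mu)$, and $\alg=\bigoplus_{\kappa\in\ev}\CC f_\kappa$ is sup-norm dense in $C(X)$; hence the eigenfunctions $f_\kappa$ span a dense subspace of $L^2(X,\mu)$. To see they form an orthonormal basis I would match the $L^2(\mu)$-inner product with the abstract form on $\alg$: by Proposition~\ref{cc} the complex conjugate of $f_\bk$ as a function on $X$ is $f_{-\bk}$, so $f_\bk\overline{f_\bl}=f_{\bk(-\bl)}$ has degree $[\bk]-[\bl]$, and therefore $\int_X f_\bk\overline{f_\bl}\,d\mu=\mu(f_{\bk(-\bl)})$ equals $c(\bk(-\bl))$ when $[\bk]=[\bl]$ and vanishes otherwise, which is exactly $\langle f_\bk\mid f_\bl\rangle$. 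Thus distinct $f_\kappa$ are orthogonal and $\{f_\kappa/\|f_\kappa\|\}$ is an orthonormal basis of eigenvectors (recall $T_t f_\kappa=\kappa(t)f_\kappa$), giving pure point spectrum.

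Finally, for ergodicity I would show multiplicity one and isolate the invariant functions. Each $f_\kappa$ is an eigenvector for the character $t\mapsto\kappa(t)$, and since $\ev$ is a subgroup of $\Gdual=\widehat{\GG}$, distinct $\kappa\in\ev$ are genuinely distinct characters of $\GG$; in particular $f_\kappa$ is invariant only for $\kappa=0$. Hence the invariant elements of $L^2(X,\mu)$ form the one-dimensional space of constant functions, which is precisely ergodicity. The step requiring the most care --- the \emph{main obstacle} --- is the passage from the abstract inner product on $\alg$ to the genuine $L^2(X,\mu)$-inner product: one must verify both that $\alg$ is dense in $L^2(X,\mu)$ (so the abstractly constructed norm really is the $L^2$-norm) and that conjugation of functions on $X$ sends $f_\bk$ exactly to $f_{-\bk}$, together with the character-separation of $\ev\subseteq\Gdual$ needed for multiplicity one. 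Everything else follows by routine assembly of the preceding lemmas.
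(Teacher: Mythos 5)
Your proof is correct and follows essentially the route the paper intends: the paper deliberately states Proposition~\ref{mu=measure} without proof in this sketch section, and your argument is exactly the assembly of the stated ingredients --- positivity and normalization of $\mu$ via the ergodic-average formula and the Riesz representation theorem, invariance from the van Hove averaging (or from Lemma~\ref{mean2} plus density of $\alg$ in $C(X)$), pure point spectrum from matching the abstract inner product on $\alg$ with the $L^2(X,\mu)$-inner product using Proposition~\ref{cc} and the identity $\mu(f_{\bk(-\bl)})=c(\bk(-\bl))$, and ergodicity from the fact that distinct $\kappa\in\ev\subset\Gdual$ are distinct characters so the $0$-eigenspace consists of constants. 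You also correctly identify the one step needing care (transferring the abstract form to the genuine $L^2(\mu)$-inner product, including $\overline{f_\bk}=f_{-\bk}$ on $X$), and you supply it from the paper's preceding propositions, so nothing is missing.
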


\medskip

We now wish to define the ergodic spatial process $N$. To do this
we first define the diffraction-to-dynamics map.
For each $k \in \Bragg$, let $1_k : \Gdual \longrightarrow \CC$
be the function which is $1$ at $k$ and $0$ everywhere else.
Since the measure $\difm$ is only non-zero on the points of $\Bragg$,
it follows that every function in $L^2(\Gdual, \difm)$ can be represented
in the form
\[ \sum_{k \in \Bragg} x_k \,1_k, \; \mbox{where} \;
\sum_{k \in \Bragg} |x_k|^2 \difm(k) <\infty \, . \]

The diffraction-to-dynamics embedding is the mapping
\begin{eqnarray*}
\theta = \theta_{\tilde a}\,:\,L^2(\Gdual, \omega) &\longrightarrow& L^2(X,\mu)\\
\sum_{k \in \Bragg} x_k 1_k &\mapsto& \sum_{k \in \Bragg} x_k f_k \,.
\end{eqnarray*}

Comparing these two $L^2$-spaces we have
\[ \langle \sum x_k 1_k \, \mid \sum x_k 1_k\rangle =
\int_{\Gdual} \sum_{k,l} x_k \overline{x_l }\, 1_k \, 1_l d \,\difm=
 \sum_k |x_k|^2 \difm(k)
= \langle  \sum x_k f_k \mid \sum x_k f_k \rangle \, ,\]
which shows that $\theta$ is an isometric embedding.

Define an action $U$ of $\GG$ on $L^2(\Gdual, \difm)$ by setting, for each
$g\in L^2(\Gdual,\difm)$,  $U_t(g)$ to be the function
\begin{equation*}
U_t(g)(k) = k(t)g(k) \quad \mbox{for all} \, k\in \Gdual \, .
\end{equation*}
Under this action, $1_k \in L^2(\Gdual, \difm)$ is a $k$-eigenfunction for the action
and $\theta$ becomes $\GG$-equivariant.

Since $\difm$ is a positive, translation bounded, and backward transformable measure on $\Gdual$ there is a
unique positive definite measure $\gamma$ on $\GG$ for which $\difm
= \hat\gamma$. This measure is also translation bounded (and hence
automatically transformable (with Fourier transform equal to $\difm$), \cite{BF}). The basic relationship between $\gamma$ and $\difm$ is expressed by \eqref{omega-gamma}
for all $F \in C_c(\GG)$, see \cite{BF}, Ch.~1.4, which in particular gives $\hat F \in L^2(\Gdual, \difm)$. Direct calculation shows that for $F\in C_c(\GG)$,
$U_t \hat F = \widehat{T_{-t} F}$.

We define the process by
\[ N=N_a : C_c(\GG) \longrightarrow L^2(X, \mu), \quad N(F) = \theta(\hat F) \, .\]

Also $N$ is {\bf real}, linear, continuous, real, and stationary ($\GG$-equivariant). It is
also full.

\begin{prop} \label{stochasticProcess}
$\cN = (N,X,\mu,T)$ is a pure point ergodic spatial process with diffraction $\difm$
and associated phase form $a^*$.
\end{prop}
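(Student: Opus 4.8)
The plan is to verify the four assertions packaged into the statement — that $\cN=(N,X,\mu,T)$ is a spatial stationary process, that it has a second moment with diffraction $\difm$, that it is ergodic and full, and that its associated phase form is $a^*$ — by assembling the preceding propositions with the correct bookkeeping. First I would record that the underlying system is an ergodic measure-preserving dynamical system: Proposition \ref{bicontinuity} makes $(X,\GG)$ a topological dynamical system, and Proposition \ref{mu=measure} supplies the $\GG$-invariant probability measure $\mu$ together with ergodicity. Linearity of $N=\theta\circ(\,\widehat{\cdot}\,)$ is immediate, and the fact that $N$ intertwines the $\GG$-actions follows from the compatibility $U_t\widehat F=\widehat{T_{-t}F}$ recorded above and the $\GG$-equivariance of $\theta$, with the action on $X$ fixed as in the construction. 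The conjugation identity $N(\overline F)=\overline{N(F)}$ reduces, via $\widetilde{\widehat F}=\widehat{\overline F}$, to $\theta(\widetilde D)=\overline{\theta(D)}$, which in turn rests on $\overline{f_k}=f_{-k}$; this last equality is exactly the content of Proposition \ref{cc}. Hence $\cN$ is a spatial process on $\GG$.

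Next, since $\theta$ is an isometric embedding of $L^2(\Gdual,\difm)$, one computes for $F,G\in C_c(\GG)$
\[
\langle N(F)\mid N(G)\rangle=\langle\theta(\widehat F)\mid\theta(\widehat G)\rangle=\int_{\Gdual}\widehat F\,\overline{\widehat G}\dd\difm=\int_\GG F*\widetilde G\dd\gamma,
\]
the last step being the linearized form of \eqref{omega-gamma} and the backward transformability of $\difm$ (which is what produces $\gamma$ with $\widehat\gamma=\difm$). By Proposition \ref{def:gamma} this exhibits a second moment with autocorrelation $\gamma$, whose Fourier transform $\difm$ is the diffraction and is pure point. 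The ergodicity already noted says precisely that the $0$-eigenspace of $T$ is one-dimensional, so $\cN$ is ergodic.

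For fullness I would argue as follows. By Proposition \ref{diffractiontodynamics} the set $\{\widehat F:F\in C_c(\GG)\}$ is dense in $L^2(\Gdual,\difm)$, so for each $k\in\Bragg$ there exist $F_n$ with $\widehat{F_n}\to 1_k$, whence $N(F_n)=\theta(\widehat{F_n})\to\theta(1_k)=f_k$ in $L^2(X,\mu)$. Since $|f_k|$ is the constant $\difm(k)^{1/2}$, the range of $f_k$ lies in a compact subset of $\CC$, so I may choose $\psi\in C_c(\CC)$ agreeing with the identity there; Proposition \ref{continuity-of-phi} then gives $\psi\circ N(F_n)\to\psi\circ f_k=f_k$, placing each $f_k$ in the closed algebra generated by the functions $\psi\circ N(F)$. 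Because the $f_k$ are bounded, products remain inside this closure, so the closed algebra contains every $f_\bk$, hence contains $\alg$, which is dense in $C(X)$ and therefore in $L^2(X,\mu)$; thus $\cN$ is full. Finally, for a relator $\bk=(k_1,\dots,k_m)$ (that is $[\bk]=0$) the defining relations of $\alg$ give $f_{k_1}\cdots f_{k_m}=c(\bk)=a(\bk)\difm(k_1)^{1/2}\cdots\difm(k_m)^{1/2}$, which upon comparison with \eqref{phaseCreated}--\eqref{phaseFormCreated} identifies the phase form of $\cN$ as the restriction of $a$, namely $a^*$.

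The individual steps are short, since most of the analytic substance lives in the earlier propositions. The step needing the most care is fullness, where one must bridge the abstract density of $\alg$ in $C(X)$ and the functional-calculus definition of fullness in terms of $\psi\circ N(F)$ — the bridge being the boundedness of the $f_k$ (so that composition with a suitable $\psi\in C_c(\CC)$ recovers $f_k$ exactly, and so that products stay inside the $L^2$-closure) together with the continuity of composition from Proposition \ref{continuity-of-phi}. One should also take care to keep the sign conventions of the $\GG$-actions on $L^2(\Gdual,\difm)$ and on $X$ consistent throughout the equivariance check.
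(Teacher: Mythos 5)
Your proof is correct and follows exactly the route the paper intends: Section \ref{altConstruction} explicitly states that no proofs are given there, only an outline of intermediate steps, and your argument is precisely the assembly of those steps (Propositions \ref{bicontinuity}, \ref{mu=measure}, \ref{cc} and \ref{diffractiontodynamics}, the isometry of $\theta$ together with \eqref{omega-gamma} and Proposition \ref{def:gamma} for the second moment, and the defining relations of $\alg$ for the identification of the phase form). Where the paper merely asserts that $N$ is real, continuous, stationary and full, you correctly supply the missing details --- notably fullness via the boundedness $|f_k|\equiv\difm(k)^{1/2}$, a cut-off $\psi\in C_c(\CC)$ and Proposition \ref{continuity-of-phi}, and the conjugation property via $\overline{f_k}=f_{-k}$ from Proposition \ref{cc} --- and your closing caveat on sign conventions is apt, since the paper's $U_t g(k)=k(t)g(k)$ in this section differs by a conjugate from the action $\hat T$ used in \S 3 and must be normalized consistently for the equivariance check.
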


Thus, based on the positive measure $\difm$ on $\Gdual$
and a phase form $a^*$ we have constructed a pure point spacial process on $\GG$.

\section{Various concluding remarks}\label{conclusion}

In this section we collect together various small items that are of interest.

\subsection{The case that $N$ maps into $L^\infty$}\label{Special}
A spatial stationary process $\cN= (N,X,\mu,T)$ is said to have an $m$th moment if
$N(F_1)\ldots N (F_m)$ belongs to $L^1 (X,\mu)$ for any $F_1,\ldots, F_m\in C_c (\GG)$. In this case, the $m$th moment is defined to be the unique  linear map $\mu^{(m)}$ on  $C_c (\GG) \otimes \cdots \otimes C_c (\GG)$  ($m$ factors) with
$$\mu^{(m)} (F_1\otimes \cdots   \otimes F_m) := \int N(F_1)\cdots N(F_m) d\mu. $$

It it not hard to see that the measure $\mu$  determined by its moments $\mu^{(m)}$, $m\in \NN$, if all these moments exist and the process if full.  In fact, this is more or less the definition of fullness. In order to have a meaningful diffraction
 theory our running assumption is that the second moment exists and is a  measure. Then, also the first moment is a measure (as can easily be seen).

We say that  $\cN$ is {\it bounded} if $N(F)\in L^\infty (X,\mu)$ for every $F\in C_c (\GG)$. In this case, for every  $m\in \NN$ the $m$th moment exists.

We are now going to discuss how for bounded processes  this  notion of moment is related to the notion of moment of a phase form introduced above.

Thus, let $(N, X,\mu,T)$ be an bounded,  ergodic  full stationary process  with pure point spectrum and elementary phase form $a$. Then, the $m$th moment of $\mu$ and the $m$-moment of $a$ uniquely determine each other. More precisely, the following holds in this situation.

\begin{lemma} (a) Let $n\in \NN$  and  $F_1,\ldots, F_n \in C_c (\GG)$ be given. Then,
$$\int N ( F_1) \dots N (F_n) d\mu = \sum_{k_1\in
    \Bragg}\ldots\sum_{k_n\in \Bragg} \widehat{F_1} (k_1)\dots \widehat{F_n}
  (k_n) \int_X (f_{k_1}\dots f_{k_n}) d\mu$$
where the sums exist if taken  one after the other, and
for any $k_1, \dots, k_n \in \Bragg$,
$$
\int_X f_{k_1} \dots f_{k_m} d\mu
= \begin{cases} a(k_1, \dots, k_m) \difm(k_1)^{1/2} \dots \difm(k_m)^{1/2} &\mbox{if} \; k_1 + \cdots +k_m = 0;\\
0 &\mbox{if} \; k_1 + \cdots +k_m \ne 0 \,.
\end{cases}
$$

(b) For
  each $j=1,\ldots,n$, let $\{F_j^{(m)}\}$ be a sequence  in $C_c (\GG)$ whose
  Fourier transforms converge to  $\charF_{k_j}$ in $L^2 (\widehat{\GG},\difm)$.
Then,  $$f_{k_1}\dots f_{k_n} = \lim_{m_1\to \infty}  \lim_{m_2\to \infty}
  \ldots \lim_{m_n\to \infty} N(F_1^{(m_1)}) \dots N(F_n^{(m_n)}),$$
where the limits are taken in $L^2$ and one after the other.
In particular,  $$\int_X f_{k_1}\dots f_{k_n} d\mu  = \lim_{m_1\to \infty}
  \lim_{m_2\to \infty} \ldots \lim_{m_n\to \infty} \int_X  N (F_1^{(m_1)}) \dots N(F_n^{(m_n)}) d\mu  \,.$$
\end{lemma}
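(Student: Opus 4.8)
The plan is to treat the two parts in turn, in each case reducing products to the eigenexpansion $N(F)=\sum_{k\in\Bragg}\widehat F(k)f_k$ of Proposition~\ref{eigenExpansionOfN(F)} and exploiting the hypothesis that the process is bounded, so that $N(F)\in L^\infty(X,\mu)$ for every $F$; this keeps all the products that occur bounded and in particular in $L^2\subset L^1$. First I would settle the explicit evaluation of $\int_X f_{k_1}\dots f_{k_m}\,d\mu$ in part (a). Since each $f_{k_j}$ is an eigenfunction for $k_j$ and the spatial action satisfies $T_t(fg)=T_t(f)T_t(g)$, the product $f_{k_1}\dots f_{k_m}$ is an eigenfunction for $k_1+\dots+k_m$. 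If $k_1+\dots+k_m\neq 0$ then by ergodicity it is orthogonal to the constant function spanning the $0$-eigenspace, so its integral vanishes. If $k_1+\dots+k_m=0$, then \eqref{phaseCreated} shows $f_{k_1}\dots f_{k_m}=a^*(k_1,\dots,k_m)\difm(k_1)^{1/2}\dots\difm(k_m)^{1/2}$ is constant, and integrating against the probability measure $\mu$, together with $a^*(k_1,\dots,k_m)=a(k_1)\dots a(k_m)$ (valid since $a$ restricts to $a^*$), gives the stated value.

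For the first identity of part (a) I would peel off one factor at a time from the right. Writing $g:=N(F_1)\dots N(F_{n-1})\in L^\infty$ and using that $N(F_n)=\sum_{k_n}\widehat{F_n}(k_n)f_{k_n}$ converges in $L^2$, the bound $\bigl|\int_X g\,h\,d\mu\bigr|\le \|g\|_\infty\|h\|_1\le\|g\|_\infty\|h\|_2$ (legitimate since $\mu$ is a probability measure) makes $h\mapsto\int_X g\,h\,d\mu$ an $L^2$-continuous functional; applying it to the convergent expansion of $N(F_n)$ and using $\overline{f_{-k_n}}=f_{k_n}$ from Lemma~\ref{fminusk} simultaneously proves convergence of the series and the equality
\[
\int_X N(F_1)\dots N(F_n)\,d\mu=\sum_{k_n}\widehat{F_n}(k_n)\int_X N(F_1)\dots N(F_{n-1})f_{k_n}\,d\mu .
\]
Iterating this peeling, with the accumulated bounded factor now absorbing $f_{k_n}$, then $f_{k_{n-1}}$, and so on, produces the iterated sums in the stated order (innermost over $k_n$); combined with the previous paragraph this yields the full formula.

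For part (b) the starting point is $N(F)=\theta(\widehat F)$ with $\theta$ an isometry (Proposition~\ref{diffractiontodynamics}). Since $\widehat{F_j^{(m)}}\to \charF_{k_j}$ in $L^2(\Gdual,\difm)$ and $\theta(\charF_{k_j})=f_{k_j}$, continuity of $\theta$ gives $N(F_j^{(m)})\to f_{k_j}$ in $L^2(X,\mu)$. I would then take the iterated limits from the inside out: for fixed $m_1,\dots,m_{n-1}$ the factor $N(F_1^{(m_1)})\dots N(F_{n-1}^{(m_{n-1})})$ is bounded, and multiplication by a fixed $L^\infty$ function is $L^2$-continuous, so letting $m_n\to\infty$ yields $N(F_1^{(m_1)})\dots N(F_{n-1}^{(m_{n-1})})f_{k_n}$ in $L^2$. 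As $f_{k_n}$ is itself bounded, the remaining limits are handled identically, giving $f_{k_1}\dots f_{k_n}$ after all $n$ limits. Applying the $L^2$-continuous functional $g\mapsto\int_X g\,d\mu$ and passing it through each successive limit then gives the ``in particular'' statement.

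The main obstacle is entirely one of justifying these interchanges, and it is resolved by the boundedness hypothesis together with the finiteness of $\mu$: boundedness keeps every partial product in $L^\infty$, so that multiplying an $L^2$-convergent sequence by the accumulated factor remains $L^2$-continuous, and $\mu(X)=1$ embeds $L^2$ into $L^1$ so that integration is $L^2$-continuous. No estimate beyond Cauchy--Schwarz is needed; the only point demanding care is to keep the order of the iterated sums and limits consistent, innermost index first, throughout the peeling.
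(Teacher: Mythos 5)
Your argument is sound in substance, and it is worth knowing that the paper itself gives no proof of this lemma: it simply cites Lemmas 2.1 and 2.3 of \cite{LM}, which treat special uniformly discrete point processes, and asserts that the proof carries over. Your proposal supplies exactly the direct verification that this citation gestures at: evaluating $\int_X f_{k_1}\cdots f_{k_m}\,d\mu$ by noting that a product of eigenfunctions is an eigenfunction for the sum of the eigenvalues (hence orthogonal to the constants when the sum is nonzero, and constant by ergodicity and \eqref{phaseCreated} when it is zero); peeling one factor of the eigenexpansion $N(F)=\sum_{k\in\Bragg}\widehat F(k)f_k$ at a time, using boundedness of the process, boundedness of the $f_k$ (valid since $|f_k|=\difm(k)^{1/2}$ a.e.\ by ergodicity), and Cauchy--Schwarz on the probability space; and, for (b), the isometry $\theta$ together with $L^2$-continuity of multiplication by the accumulated $L^\infty$ factor. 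These are precisely the ingredients that make the transfer from \cite{LM} work, so your write-up usefully fills in what the paper leaves implicit.

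One bookkeeping slip in part (a) should be corrected. You peel off $N(F_n)$ first, which makes $k_n$ the \emph{outermost} summation index, whereas the lemma's iterated sum $\sum_{k_1}\cdots\sum_{k_n}$ has $k_n$ innermost (evaluated first); since the convergence is only iterated, not absolute, the two orders cannot simply be declared equal, and your parenthetical claim that right-peeling yields ``innermost over $k_n$'' is internally inconsistent. The cure is immediate and uses your estimate verbatim: peel from the left, expanding $N(F_1)$ against the bounded factor $N(F_2)\cdots N(F_n)$, then absorb $f_{k_1}$ into the bounded factor and expand $N(F_2)$, and so on; each single sum converges because the orthogonal series for $N(F_j)$ converges unconditionally in $L^2$ and you apply an $L^2$-continuous functional to it. In part (b) your order of limits ($m_n\to\infty$ first) already matches the statement, so nothing needs to change there.
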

\begin{proof} This is essentially proven in Lemma 2.1 and Lemma 2.3 of \cite{LM}  for  special uniformly discrete point processes. The proof given there carries  over to our situation.
\end{proof}

\begin{coro}
Let two bounded spatial processes with pure point diffraction and the same diffraction measure $\omega$ be given. Then, these processes  have the same $m$-moment  if and only if their associated phase forms have the same $m$ moment. In particular, if furthermore  $\langle \cZ_n\rangle  = \cZ$, then  the two processes are isomorphic if and only if their $m$ moments agree for $m= 1,\ldots, n$.
\end{coro}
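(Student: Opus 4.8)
The plan is to reduce the entire statement to the preceding Lemma together with Lemma~\ref{momentsCondition} and Theorem~\ref{uniquenesstheorem}. The preceding Lemma already expresses the process moments $\mu^{(m)}$ entirely in terms of the common diffraction $\difm$ and the integrals $\int_X f_{k_1}\cdots f_{k_m}\,d\mu$, and identifies the latter with the values of the phase form on relators. Thus both directions of the first equivalence are really statements about transferring information between these two descriptions.

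First I would prove that equal $m$th moments of the phase forms imply equal $m$-moments of the processes. By part (a) of the preceding Lemma,
\[
\mu^{(m)}(F_1\otimes\cdots\otimes F_m)
= \sum_{k_1\in\Bragg}\cdots\sum_{k_m\in\Bragg}
\widehat{F_1}(k_1)\cdots\widehat{F_m}(k_m)\,
\int_X f_{k_1}\cdots f_{k_m}\,d\mu ,
\]
and the inner integral vanishes unless $k_1+\cdots+k_m=0$, in which case it equals $a^*(k_1,\dots,k_m)\,\difm(k_1)^{1/2}\cdots\difm(k_m)^{1/2}$. Hence $\mu^{(m)}$ is determined by $\difm$ and by the restriction $a^*|_{Z_m}$, i.e. by the $m$th moment of the phase form. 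Since the two processes share $\difm$, equal $m$th moments of the phase forms force equal inner integrals term by term, hence $\mu^{(m)}=\mu'^{(m)}$.

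For the reverse implication I would use part (b). Fix a relator $(k_1,\dots,k_m)\in Z_m$ and choose, once and for all, sequences $\{F_j^{(l)}\}_l$ in $C_c(\GG)$ with $\widehat{F_j^{(l)}}\to 1_{k_j}$ in $L^2(\Gdual,\difm)$; such sequences exist by the denseness statement of Proposition~\ref{diffractiontodynamics} and depend only on $\difm$, so the same family may be fed into both processes. Part (b) then gives
\[
\int_X f_{k_1}\cdots f_{k_m}\,d\mu
= \lim_{l_1\to\infty}\cdots\lim_{l_m\to\infty}
\mu^{(m)}\!\left(F_1^{(l_1)}\otimes\cdots\otimes F_m^{(l_m)}\right),
\]
together with the identical formula for $\mu'^{(m)}$. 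If $\mu^{(m)}=\mu'^{(m)}$ the right-hand sides coincide, so the two processes assign the same value to $\int_X f_{k_1}\cdots f_{k_m}\,d\mu$; by part (a) this value equals $a^*(k_1,\dots,k_m)\,\difm(k_1)^{1/2}\cdots\difm(k_m)^{1/2}$, and since $\difm(k_j)>0$ for $k_j\in\Bragg$ we may divide through to conclude that $a^*(k_1,\dots,k_m)$ agrees. As the relator was arbitrary in $Z_m$, the $m$th moments of the two phase forms coincide.

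Finally, for the ``in particular'' clause I would chain the three tools. Assuming $\langle\cZ_n\rangle=\cZ$, Lemma~\ref{momentsCondition} says two phase forms are equal precisely when their $m$th moments agree for $m=1,\dots,n$; by the equivalence just established this is the same as the two processes having equal $m$-moments for $m=1,\dots,n$; and since the processes share $\difm$, Theorem~\ref{uniquenesstheorem} identifies equality of phase forms with spatial isomorphism. Combining these gives the stated equivalence. The one step requiring care --- and the only real obstacle --- is the reverse direction of the first equivalence: one must ensure that the iterated $L^2$-limits of part (b) genuinely transfer across the two processes, which is exactly why it is essential that a single approximating family $\{F_j^{(l)}\}$, determined solely by the shared $\difm$, be used for both, and that the strict positivity $\difm(k_j)>0$ allow cancellation of the square-root factors.
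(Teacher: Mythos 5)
Your proposal is correct and follows exactly the route the paper intends: the paper states this corollary without a written proof, as an immediate consequence of parts (a) and (b) of the preceding lemma together with Lemma~\ref{momentsCondition} and Theorem~\ref{uniquenesstheorem} (the same chain packaged in Corollary~\ref{momentsCondition1}). Your two refinements --- using a single approximating family $\{F_j^{(l)}\}$ determined only by the shared $\difm$ in the part~(b) direction, and invoking $\difm(k_j)>0$ to cancel the square-root factors --- are precisely the details the paper leaves implicit, and you have filled them in correctly.
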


If every element of $\ev$ can be expressed as a sum of $m$ or less elements
of $\Bragg$, then $\langle \cZ_{2m+1}\rangle = \cZ$, and so $2m+1$ moments
suffice to determine the isomorphism class of a process, see Remark~\ref{2m+1 theorem}.

\subsection{A topological situation.}
In this section we look at  a pure point stationary process
$N: C_c(\GG) \longrightarrow L^2(X,\mu)$ and assume that $X$ is a topological space and all eigenfunctions are continuous. Here, continuity means that
for each eigenvektor $k$, we have an continuous  nontrivial function  $g_k$ with
$$g_k (T_t x) = (k,t)g_k (x)$$
for all $x\in X$ and $t\in \GG$. Without loss of generality we can then normalize the $g_k$ to satisfy
$$ g_k (x_0) = 1$$
for some fixed $x_0 \in X$. These normalized $g_k$ will then obey
\begin{equation}\label{character}
 g_{k_1} g_{k_2} = g_{k_1 + k_2}\;\:\mbox{and}\:\: g_{-k_1} = \overline{g_{k_1}}.
 \end{equation}
Of course, as discussed in Section \ref{StochasticprocessestoDiffraction}, the pure point process comes with a diffraction measure
$\difm$ a set of eigenvalues $\ev$ and a phase form $a$. Set $\TT:= \widehat{\ev_d}$. Then the considerations of Section \ref{StochasticprocessestoDiffraction} provide a torus system $(N_a, \TT, l_{\TT}, T)$. This system can be directly calculated as follows:

\begin{theorem} Assume the situation outlined above. Then,
$$\pi : X\longrightarrow \TT, x\mapsto (k \mapsto g_k (x)),$$
is a surjective $\GG$-map (and hence $\TT$ is a factor of $X$) and
$$ M : L^2 (\TT)\longrightarrow L^2 (X), g\mapsto g\circ \pi,$$
is an isomorphism of point processes.

\end{theorem}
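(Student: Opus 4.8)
The plan is to check, in order, that $\pi$ takes values in $\TT$, is a continuous $\GG$-map, and pushes $\mu$ forward to normalized Haar measure $l_\TT$, and then to read off every assertion about $M$ from these three facts. First I would verify that $\pi(x)$ is genuinely a character of $\ev_d$. The modulus $|g_k|$ is a continuous $\GG$-invariant function (because $|g_k(T_t x)| = |(k,t)|\,|g_k(x)| = |g_k(x)|$), hence constant, and the normalization $g_k(x_0)=1$ together with \eqref{character} (which gives $g_0\equiv 1$ and $g_k g_{-k}=g_0$) forces $|g_k|\equiv 1$. The multiplicativity $g_{k_1}g_{k_2}=g_{k_1+k_2}$ from \eqref{character} then says $k\mapsto g_k(x)$ is a homomorphism $\ev\longrightarrow U(1)$, which is automatically a character since $\ev_d$ is discrete; thus $\pi(x)\in\widehat{\ev_d}=\TT$. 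Continuity of $\pi$ is immediate, as the topology of $\TT$ is pointwise convergence on $\ev_d$ and each $g_k$ is continuous, and equivariance is the one-line check $\pi(T_t x)(k)=g_k(T_t x)=(k,t)g_k(x)=(t.\pi(x))(k)$ using the $\GG$-action on $\TT$ from \S\ref{CyclestoStochasticprocesses}.

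The analytic heart of the argument is the identity $\pi_*\mu=l_\TT$. Since $\pi$ is equivariant and $\mu$ is $\GG$-invariant, $\pi_*\mu$ is a $\GG$-invariant probability measure on $\TT$. The $\GG$-action on $\TT$ is translation through the homomorphism $t\mapsto(k\mapsto(k,t))$, whose image is dense because $0$ is the only element of $\ev\subseteq\Gdual$ annihilated by all of $\GG$. Testing invariance against the characters $\chi_k$ of $\TT$ yields $(k,t)\,\widehat{\pi_*\mu}(k)=\widehat{\pi_*\mu}(k)$ for every $t\in\GG$, so all nonzero Fourier coefficients vanish and $\pi_*\mu=l_\TT$. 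In particular the closure of $\pi(X)$ supports Haar measure and is therefore all of $\TT$, so $\pi(X)$ is dense; when $X$ is compact, as in the constructions of this paper, $\pi(X)$ is closed and hence $\pi$ is onto, exhibiting $\TT$ as a factor of $(X,\GG)$.

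Finally I would deduce that $M$ is a unitary, multiplicative intertwiner carrying $N_a$ to $N$. Because $\pi_*\mu=l_\TT$, the map $M(g)=g\circ\pi$ is well defined on $L^2$-classes and is isometric: $\int_X|g\circ\pi|^2\,d\mu=\int_\TT|g|^2\,dl_\TT$. Evaluating on characters gives $M(\chi_k)(x)=\chi_k(\pi(x))=\pi(x)(k)=g_k(x)$, so $M(\chi_k)=g_k$. The $\{\chi_k\}_{k\in\ev}$ are an orthonormal basis of $L^2(\TT)$, while the $\{g_k\}$ are unit-norm eigenfunctions for distinct eigenvalues (orthonormal since $\int g_{k-k'}\,d\mu=\delta_{k,k'}$) which, by ergodicity and Theorem \ref{ergodic+pure point}, exhaust an eigenbasis of $L^2(X,\mu)$; hence $M$ is unitary. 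Equivariance of $\pi$ yields $M\circ T_t=T_t\circ M$, and $M$ is multiplicative because composition with $\pi$ is. Writing the canonical eigenfunction as $f_k=\theta(1_k)=a(k)\difm(k)^{1/2}g_k$, where $a$ is precisely the elementary phase form of the process defining $N_a$, Proposition \ref{eigenExpansionOfN(F)} and \eqref{def-na} give
$$M(N_a(F))=\sum_{k\in\Bragg}\widehat F(k)\,a(k)\,\difm(k)^{1/2}\,M(\chi_k)=\sum_{k\in\Bragg}\widehat F(k)\,f_k=N(F),$$
so $M$ is an isomorphism of spatial processes.

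I expect the genuine surjectivity of $\pi$ to be the delicate point. The pushforward computation only delivers a dense image carrying full Haar measure, and this already suffices for $M$ to be a unitary isomorphism of processes; upgrading density to onto requires compactness of $X$ (or minimality of the action), and I would make that hypothesis explicit if it is not otherwise in force. Everything else reduces to the bookkeeping of identifying $M(\chi_k)$ with $g_k$ and matching $f_k$ with the phase form $a$.
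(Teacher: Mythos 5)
Your proof is correct, and it takes a genuinely different route from the paper's. The paper never computes the pushforward $\pi_*\mu$: it observes that $M$ carries the orthonormal basis of characters $\{\chi_k\}_{k\in\ev}$ of $L^2(\TT)$ onto the eigenfunctions $g_k$, which by ergodicity and Theorem~\ref{ergodic+pure point} form an orthonormal basis of $L^2(X,\mu)$, concludes unitarity from that, and only afterwards deduces that $\pi$ has dense image (a nonzero continuous function vanishing on $\overline{\pi(X)}$ would be annihilated by the isometry $M$); multiplicativity of $M$ and of $M^{-1}$ is then obtained from $M(g_{k_1}g_{k_2})=g_{k_1}g_{k_2}$ by limiting arguments, and the identification of the process is done indirectly, by setting $N_{\TT}:=M^{-1}N$ and invoking the uniqueness Theorem~\ref{uniquenesstheorem} to get $[N_a]=[N_{\TT}]$. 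You instead make $\pi_*\mu=l_\TT$ the key lemma, proved by testing against characters and using the density of $\GG$ in $\TT$. This buys several things at once: the isometry of $M$ is immediate, density of $\pi(X)$ falls out of the support of Haar measure, and --- although you do not say so --- it also closes the one step you left implicit: Definition~\ref{definition-isomorphism} requires multiplicativity of $M^{-1}$ as well as of $M$, and since $\pi_*\mu=l_\TT$ makes composition with $\pi$ distribution-preserving, one gets $\|g\circ\pi\|_\infty=\|g\|_\infty$, so $M^{-1}$ maps $L^\infty(X,\mu)$ into $L^\infty(\TT,l_\TT)$ and the identity $M(fg)=M(f)M(g)$ transfers to $M^{-1}$ with no approximation needed. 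Your direct computation $M(N_a(F))=N(F)$ via $f_k=a(k)\difm(k)^{1/2}g_k$ and Proposition~\ref{eigenExpansionOfN(F)} likewise avoids the appeal to the uniqueness theorem. Your caution about surjectivity is also well placed and in fact improves on the paper: the paper's step ``dense image plus $\GG$-map implies onto'' is not valid for a general topological space $X$, and your explicit compactness (or closed-image) hypothesis is exactly what is needed. One small caveat on your side: ``continuous $\GG$-invariant hence constant'' is unjustified without minimality or full support, but the pointwise identities $|g_k|^2=g_kg_{-k}=g_0$ and $g_0^2=g_0$ from \eqref{character}, which you also cite, carry the verification that $\pi(x)\in\TT$ at the same level of rigor as the paper's own (equally glossed) treatment of $g_0$.
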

\begin{proof}
Due to the normalization and  \eqref{character}, $\pi (x)$ belongs indeed to $\TT$. Obviously, $\pi$ is continuous (as $\ev$ is given the discrete topology).
Moreover, $M$ can easily be seen to map the character $k\in \ev$ of $\TT$ to the eigenfunction $g_k$. Thus, $M$ is unitary (as it maps an orthonormal basis to an orthonormal basis).
As $M$ is an isometry, $\pi$ must have dense image. As it it a $\GG$ map it is then onto.    Moreover, a short calculation
invoking \eqref{character} again shows that
$$M (g_{k_1} g_{k_2}) = g_{k_1} g_{k_2}.$$
By usual limiting arguments, this gives that $M (f g) = M (f) M (g)$ for all  $f,g\in L^\infty$.

As $M$ maps the character $k\in \ev$ to the eigenfunction $g_k$, its inverse $M^{-1}$  must map the eigenfunction $g_k$ to the character $k$. Combined with the usual limiting arguments this shows that $M^{-1}$ also satisfies $M^{-1} (u v ) = M^{-1} (u) M^{-1} (v)$ for bounded $u,v$.

So, if we now define the $N$-function $N_{\TT}$ on the torus by
$$N_{\TT} := M^{-1} N$$
we have indeed an isomorphism of processes. Thus, the phase form belonging to $N_{\TT}$ must then be given by $a$ as well. By uniqueness then $[N_a] = [N_{\TT}]$.
\end{proof}

Another issue around continuity is the possible continuity of
the mappings
$a:\Bragg \longrightarrow U(1)$ in the topology on $\Bragg$
induced from $\Gdual$. For these one uses this same topology to make
$\cF(\Bragg)$ into a topological group. We do not know if continuity here is
physically relevant but we do note that
in this case the classification results for elementary phase forms
compared to actual phase forms change
from $\TT$ to $\GG$ (see the discussion around and including Remark~\ref{continuousCase}).

\subsection{Spatial processes arising from measures on $\GG$}\label{ssp-measures}
Our underlying understanding of a spatial process $\cN=(N,X,\mu,T)$ is that it
represents some sort of `density' on the space
$\GG$. Each point $\xi \in X$ represents an instance of the yet unspecified
structure $\rho_\xi$ which can be paired with elements of $C_c (\GG)$ to give

\begin{equation}
\label{measureInterpretation2}
 \langle \rho_\xi,  F\rangle = N(F)(\xi)\,.
 \end{equation}
 Thus $C_c(\GG)$ provides a set of test functions for some sort of distribution $\rho_\xi$
 at $\xi \in X$. Of course $N(F)$ is not necessarily defined at all points of $X$
 (it is only an $L^2$-function on $X$), and even if it is defined it is not clear whether
 or not $\rho_\xi$ could be interpreted as any particular kind of distribution.
 We do note however, that from $\GG$-invariance we have
 $N(F)(T_t \rho_\xi) = (T_{-t} N(F))(\xi) = N(T_{-t} F)(\xi)$, so
 \[\langle T_t \rho_\xi,  F\rangle = \langle  \rho_\xi,  T_{-t}F\rangle\,\]
 for all $t\in \GG$.

An important question, and one for which we cannot give a satisfactory answer,
is when can this `density' be interpreted as a measure $\rho_\xi$ on $X = \GG$?
In the theory of Delone point sets $\Lambda$ this actually happens; typically
$X$ is the orbit closure of $\Lambda$ under the local topology and $N$ is
the process that arises from the point measures $\delta_\xi :=
\sum_{x\in \xi} \delta_x$ for all $\xi \in X$. Thus for $F \in C_c(\GG)$,
$N(F)(\xi) := \sum_{x\in \Lambda} F(x) = \langle \delta_\xi, F\rangle$.

\medskip
We have already seen one way in which we can see from general principles that measures arise
in \S\ref{compactG}. But this
is unnecessarily  restrictive since in many basic situations $\difm$ is not summable. For example,
in \S~\ref{G=U(1)} we looked at
the diffraction of the point set $\delta_\ZZ$, when it is
modeled on $U(1)$ as $\delta_0$. Its diffraction is $\difm = \delta_\ZZ$ on $\ZZ$.
This is not $L^1$ with respect to the Haar measure of $\ZZ$ (counting measure).
Here $X$ can be identified with $U(1)$ (since it is compact) and for all $t \in U(1)$,
$F\mapsto N(F)(t)$ is just the measure $\delta_t$. So in this case there is a suitable
measure in spite of the lack of summability.

The exact conditions that $\rho_\xi$ defined by \eqref{measureInterpretation2} should be a
measure at $\xi$ are that $N(F)$ actually be defined at $\xi$ for all $F\in C_c(\GG)$ and
that for each compact subset $K$ of $\GG$ there is a constant $c_K$ so that for all $F\in C_c(\GG)$ with support inside $K$
we have $N(F)(\xi) \le c_K ||F||_\infty$. In fact this is precisely
the definition of measure via the Riesz representation theorem.

If for some $\xi\in X$, $N(F)(\xi)$ is defined, {\em positive}, and finite for each positive
$F\in C_c(\GG)$, then the right-hand side of \eqref{measureInterpretation2} provides a positive linear functional on $C_c(\GG)$ and so defines a positive measure.

\subsection{Some open questions}

There are numerous questions that remain to be investigated. Amongst these
we can point out:

\begin{itemize}
\item Are there simple process theoretical conditions on pure point measures $\difm$
that allow one to conclude that the processes atttached to $\difm$ can be
interpreted as measure valued processes?

\item Our homometry theorem shows that the stationary processes with fixed pure point diffraction form a group. Does this have any deeper meaning? Can we directly realize multiplication / inversion of stationary processes?

\item When there are moments of all orders, these moments characterize the corresponding
processes. There are various scenarios under which not all these moments are necessary to solve the inverse problem. For instance in \cite{XR} it is seen that real model
sets with real internal spaces are recoverable from their second and third moments. In \cite{LM}
it is seen that if $\Bragg + \cdots +\Bragg = \ev$ ($m$ summands) then one needs only
the first $2m+1$ moments to solve the inverse problem. This suggests that solutions to the
inverse problem can be organized into hierarchies depending on the number of moments
required to specify them.

\item The sequence of moments suggests that there is some sort of cohomology theory
that would allow one to organize the increasing information that is added as successive
moments are included in the picture. Such a cohomology theory would be a most welcome
addition to this study.
\end{itemize}

\end{document}